\newcommand{\Square}{\Box}
\newcommand{\bw}{{\bar{w}}}
\newcommand{\bell}{\overline{\ell}}
\newcommand{\bz}{{\bar{z}}}
\newcommand{\mT}{\mathcal{T}}
\newcommand{\bt}{\mathbf{t}}
\newcommand{\bT}{\mathbf{T}}
\newcommand{\bu}{{\bar{u}}}
\newcommand{\mR}{\mathcal{R}}
\newcommand{\Bell}{\bar{\ell}}
\newcommand{\vep}{\varepsilon}
\newcommand{\by}{\bar{y}}
\newcommand{\bj}{{\bar{\jmath}}}
\newcommand{\kappat}{\tilde{\kappa}}
\newcommand{\ph}{\phantom{-}}
\newcommand{\tphi}{\tilde{\phi}}
\newcommand{\secret}{{\color{white}{.}}}
\newcommand{\mycdots}{.\:\!.\:\!.}
\newcommand{\mD}{\mathcal{D}}
\DeclareFontFamily{U}{jkpmia}{}
\DeclareFontShape{U}{jkpmia}{m}{it}{<->s*jkpmia}{}
\DeclareFontShape{U}{jkpmia}{bx}{it}{<->s*jkpbmia}{}
\DeclareMathAlphabet{\mathfrakalt}{U}{jkpmia}{m}{it}
\SetMathAlphabet{\mathfrakalt}{bold}{U}{jkpmia}{bx}{it}
\newcommand{\myw}{\mathfrakalt{w}}
\DeclareFontFamily{U}{MnSymbolC}{}
\DeclareSymbolFont{MnSyC}{U}{MnSymbolC}{m}{n}
\DeclareFontShape{U}{MnSymbolC}{m}{n}{
    <-6>  MnSymbolC5
   <6-7>  MnSymbolC6
   <7-8>  MnSymbolC7
   <8-9>  MnSymbolC8
   <9-10> MnSymbolC9
  <10-12> MnSymbolC10
  <12->   MnSymbolC12}{}
\DeclareMathSymbol{\meddiamond}{\mathbin}{MnSyC}{110}
\newcommand{\Lw}{\text{L}\myw_{1+\infty}}
\newcommand{\bigPhi}{\raisebox{-0.1\baselineskip}{\Large\ensuremath{\Phi}}}
\newcommand{\bigphi}{\raisebox{-0\baselineskip}{\Large\ensuremath{\phi}}}
\newtheorem{thm}{Theorem}
\theoremstyle{definition}
\newtheorem*{ex}{Example}
\newtheorem*{exs}{Examples}
\newtheorem{definition}{Definition}
\newtheorem{claim}{Claim}
\numberwithin{thm}{section}
\numberwithin{claim}{section}
\numberwithin{corollary}{section}
\title{Proof of the graviton MHV formula using Plebański's second heavenly equation}
\author{Noah Miller}
\affiliation{Center for the Fundamental Laws of Nature,
Harvard University, Cambridge, MA, USA}
\emailAdd{noahmiller@g.harvard.edu}
\preprint{}
\abstract{

Self-dual spacetimes can be thought of as spacetimes containing only positive helicity gravitons. In this work we give a perturbiner expansion for self-dual spacetimes based on Plebański's second heavenly equation. The expansion is naturally organized as a sum over ``marked tree graphs'' where each node corresponds to a positive helicity graviton and can have an arbitrary number of edges. Negative helicity gravitons must be added in by hand.        

We then use this perturbiner expansion to give a first principles derivation of the NSVW tree formula for the MHV amplitude in Einstein gravity. A unique feature of this proof is that it does not use BCFW recursion or twistor theory. It works by plugging the spacetime with arbitrarily many $+$ gravitons and two $-$ gravitons into the on-shell gravitational action and evaluating it. The action we use is the self-dual Plebański action plus an additional boundary term, and the amplitude itself comes entirely from the boundary term. Along the way, we also find an interesting new generalization of the NSVW formula which has not previously appeared in the literature.

In the appendix we give another way to express the perturbiner expansion using binary tree graphs instead of marked tree graphs, and prove the equivalence of these two expansions diagrammatically. We also provide an introduction to self-dual gravity aimed at non-experts, as well as a proof of the Parke-Taylor formula in Yang Mills theory analogous to our proof of the NSVW formula in gravity.

}
\begin{document}

\maketitle 

\newpage

\section{Introduction}

In pure Einstein gravity, the amplitude for an arbitrary number of positive helicity gravitons to scatter is zero at tree-level. If a single negative helicity graviton is added, the amplitude is still zero. If two negative helicity gravitons are added, however, the amplitude is non-zero. This is the so-called gravitational maximal helicity violating (MHV) amplitude, and it is the simplest non-vanishing amplitude in Einstein gravity.

While gravitational Feynman diagrammatics are notoriously complex even at tree-level, there are a number of miraculous closed form expressions for the MHV amplitude with arbitrarily many positive helicity gravitons \cite{Nguyen:2009jk, Berends:1988zp, Elvang:2007sg, Hodges:2012ym}. 

The first was given by Berends, Giele, and Kuijf (BGK) in 1988 \cite{Berends:1988zp}, although their formula did not make manifest the permutation symmetry of the positive helicity gravitons. A nicer formula was later found by Nguyen, Spradlin, Volovich, and Wen (NSVW) \cite{Nguyen:2009jk} in 2009. Their formula expressed the MHV amplitude as a sum over ``marked tree graphs'' (which are not the usual tree-level Feynman diagrams) where each node of the graph corresponds to a positive helicity graviton. In some sense, their formula was anticipated 10 years prior by Bern, Dixon, Perelstein, and Rozowsky \cite{Bern:1998sv}, who discovered a set of recursive rules to compute the MHV amplitude. While the NSVW formula does not appear explicitly in that earlier work, it can be obtained from the rules laid out in that paper \cite{Krasnov:2013wsa}. In 2012 \cite{Hodges:2012ym},  Hodges provided a more compact expression for the MHV amplitude as the determinant of a certain matrix which depends on two arbitrary reference spinors. Once the reference spinors are set to the negative helicity gravitons, it can be shown that Hodges' formula is equivalent to NSVW's through the use of a combinatorial ``matrix tree theorem,'' in which Hodges' determinant can be rewritten as a sum over NSVW's trees \cite{Feng:2012sy}. See \cite{Trnka:2020dxl} for a review of the various formulae.

It is not understood how the NSVW formula or Hodges' formula arise via the explicit summation of tree-level Feynman diagrams. Instead, they were both originally proven with a different method, by showing that they satisfied the BCFW recursion relation \cite{Britto:2004ap, Britto:2005fq}.

However, there do exist ``direct'' (i.e., non BCFW) proofs of the NSVW/Hodges' formulae. These proofs exploit the existence of an integrable subsector of Einstein gravity known as ``self-dual gravity'' (SDG), which is the main object of study of twistor theory \cite{penrose1976nonlinear, penrose1976nonlinear2}. These twistorial proofs of the MHV formulae include Mason and Skinner's 2008 proof of the original BGK formula \cite{Mason:2009afn} and Adamo, Mason, and Sharma's separate 2021 proof of the NSVW/Hodges' formula \cite{Adamo:2021bej}. 

Roughly speaking, both of these proofs operate by computing a classical gravitational action on-shell, whose value equals the tree-level MHV scattering amplitude itself. There is no fundamental reason that this action could not have been computed without the use of twistor theory, although in those works it was an indispensable tool.

We now summarize the two main results of this present paper:

\begin{enumerate}
    \item We explain how to perturbatively generate self-dual solutions to Einstein's equation using a perturbative expansion written in terms of ``marked tree graphs'' instead of Feynman diagrams. These self-dual metrics correspond to spacetimes containing an arbitrary number of positive helicity gravitons.
    
    \item Using the tree formula for self-dual metrics (and adding in two ASD perturbations by hand), we will provide a new first-principles proof of the NSVW formula by plugging this spacetime into the on-shell gravitational action. The only non-zero part of the action will come from a boundary term. This proof of NSVW is the first which does not make use of BCFW recursion or twistor theory.
\end{enumerate}

Throughout this paper, we will use a formulation of SDG due to Jerzy Plebański \cite{plebanski1975some}, who showed that any SD metric can be written in the form
\begin{equation}\label{eq2}
    ds^2 = 4 \big( d u \, d \bu - d w \, d \bw + (\partial_{w}^2 \phi ) \, d \bu^2 + (2 \partial_u \partial_w \phi) \, d \bu d \bw + (\partial_u^2 \phi) \, d \bw^2  \big)
\end{equation}
where $\phi$ is a scalar field that satisfies an equation of motion known as ``Plebański's second heavenly equation''
\begin{equation}\label{pleb_1}
    \Box \phi - \{ \partial_u \phi, \partial_w \phi \} = 0.
\end{equation}
Here, $\Box$ is the flat space wave operator and $\{ \cdot, \cdot \}$ is a spacetime Poisson bracket defined with the two coordinates $u, w$, as
\begin{equation}\label{pois_def_1}
    \Box \equiv \partial_u \partial_\bu - \partial_w \partial_\bw, \hspace{1 cm} \{ f, g \} \equiv \pdv{f}{u} \, \pdv{g}{w} - \pdv{f}{w} \, \pdv{g}{u}.
\end{equation}
Classical solutions to \eqref{pleb_1} can be generated with Feynman diagrams using standard methods, see for instance \cite{Monteiro:2011pc}, and \cite{ Monteiro:2022lwm, Armstrong-Williams:2022apo}. Having said that, we emphasize that we will not prove the validity of the marked tree expansion through the use of Feynman diagrams, even though it certainly must somehow be possible.

This paper is organized as follows.

\begin{itemize}
    \item In section \ref{sec2}, we present and prove the tree formula to perturbatively calculate self-dual metrics using marked tree diagrams.
    \item In section \ref{sec3}, we study linear SD and ASD perturbations to self-dual metrics using our tree formula. We then show how the tree formula can be used to explicitly calculate the action of a certain ``recursion operator’’ $\mR$, familiar to the study of integrability in SDG, on these linear perturbations. This will come in handy at multiple points when it comes time to prove the NSVW formula.
    \item In section \ref{sec4}, we briefly review the relationship between tree-level scattering amplitudes and on-shell actions.
    \item In section \ref{sec5} we provide the proof of the NSVW formula using the marked tree formula for the Plebański scalar.
    \item In section \ref{sec6} we discuss our results.
    \item In appendix \ref{appA}, we give a second way to build up self-dual metrics using binary tree graphs instead of the marked tree graphs from section \ref{sec2}. This binary tree expansion provides a recursive way to compute the self-dual metric containing $N+1$ positive helicity gravitons if one already has an expression for the metric containing $N$ positive helicity gravitons. We give a diagrammatic proof that this binary tree expansion is equivalent to our earlier marked tree expansion.
    \item In appendix \ref{appB} we provide the first principles derivations of many key equations of SDG used in this paper that are spread throughout the literature. This will hopefully be useful to readers new to the subject, and will culminate in the proofs of Plebański's first and second heavenly equations. 
    \item In appendix \ref{appD} we provide an analogous proof of the Parke-Taylor formula for the Yang Mills MHV amplitude.
\end{itemize}

In the language of Berends-Giele currents, we note that our calculation of the perturbiner expansion for Plebański's second scalar is essentially equivalent to the computation of the tree-level all-plus graviton current with one off-shell leg \cite{Bern:1998sv,Krasnov:2013wsa,Krasnov:2016emc}.

There are two possible advantages the proof we give here has over the BCFW proof. The first is that our perturbiner expansion can potentially be generalized to non-flat backgrounds \cite{Lipstein:2023pih,Bittleston:2024rqe,Adamo:2024xpc}, although we leave this for future work. The second is that it produces an interesting new generalization of the NSVW formula, equation \eqref{MHV_to_eval}, which is distinct from Hodges formula and has not previously appeared in the literature. (It would also be interesting if some connection could be made between our perturbiner expansion and Hodges formula, although we are unsure how this could be done.) 

In a partner paper \cite{mypaper}, we also use the marked tree expansion of the Plebański scalar to write down an explicit formula for the action of the so-called $\Lw$ algebra on self-dual spacetimes.

\section{A perturbiner expansion for Plebański's second heavenly equation}\label{sec2}

We are interested in solutions to the equation \eqref{pleb_1}. Once we have $\phi$, we can plug it into \eqref{eq2} to get a self-dual metric, so knowing $\phi$ is tantamount to knowing the metric. When $\phi = 0$, the metric reduces to flat space in lightcone coordinates
\begin{equation}\label{lc_coords}
\begin{aligned}
    u &= (X^0 - X^3)/2 \, ,\\
    \bu &= (X^0 + X^3)/2 \, ,\\
    w &= (X^1 + i X^2)/2 \, ,\\
    \bw &= (X^1 - i X^2)/2\, .
\end{aligned}
\end{equation}
Note that $\bu \neq (u)^*$, as both $u$ and $\bu$ are real independent variables. We have adopted this convention in the pursuit of notational symmetry.

It is straightforward to check that $\phi = A e^{i p \cdot X}$ solves \eqref{pleb_1} for any $A$ as long as $p^2$ = 0. This solution corresponds to a finite size (complex) positive helicity gravitational wave $h_{\mu \nu} = A \, \vep^+_{\mu \nu} e^{i p \cdot X}$ where $\vep^+_{\mu \nu}$ is given below. Because \eqref{pleb_1} is non-linear, the sum of two plane waves will in general not also satisfy \eqref{pleb_1}. One could imagine, however, taking a sum of plane waves and then recursively solving for higher order terms in the coefficients ``$A$'' in order to arrive at a full solution.

This motivates the idea of a ``perturbiner expansion'' \cite{Rosly:1997ap} which we now define. First, note that the equation of motion \eqref{pleb_1} can be split up into a free part (which is just the wave equation) and a quadratic interacting part. Imagine starting with a solution to the free equation 
\begin{equation}\label{linearpiece}
    \phi = \sum_{i = 1}^N \epsilon_i \, e^{i p_i \cdot X}
\end{equation}
where $p_i^2 = 0$, and the $\epsilon_i$'s are infinitesimal parameters. We will define the square of any particular infinitesimal parameter to be 0, but the product of distinct infinitesimal parameters to \emph{not} be zero:
\begin{equation}\label{epsilon_manual}
    (\epsilon_i)^2 = 0, \hspace{1 cm} \epsilon_i \epsilon_j \neq 0 \hspace{0.2 cm} \text{ if } i \neq j.
\end{equation}
With this in mind, we define the plane wave ``seed'' functions to be
\begin{equation}
    \phi_i \equiv \epsilon_i \, e^{i p_i \cdot X}.
\end{equation}

A perturbiner expansion is then defined as a solution to the full equation of motion which contains \eqref{linearpiece} as its linear piece. The linear piece is the boundary condition which specifies the rest of the solution. For example, when there are two seed functions, the perturbiner expansion will take the form
\begin{equation}
    \phi = \phi_1 + \phi_2 + \epsilon_1 \epsilon_2 (\ldots)
\end{equation}
and when there are three seed functions the expansion will take the form
\begin{equation}
    \phi = \phi_1 + \phi_2 + \phi_3 + \epsilon_1 \epsilon_2 (\ldots) + \epsilon_1 \epsilon_3 (\ldots) + \epsilon_2 \epsilon_3(\ldots) + \epsilon_1 \epsilon_2 \epsilon_3 (\ldots).
\end{equation}
The exact expression for the $(\ldots)$'s will be given soon. Note however that the expansion for $N$ seed functions terminates with a highest order term containing all $N$ infinitesimal parameters $\epsilon_1 \ldots \epsilon_N$.

We now define some useful conventions. We parameterize the momenta with an energy scale $\omega$ and two stereographic coordinates $\bz, z \in \mathbb{C}$ as
\begin{equation}\label{p_z_zb}
    p_i^\mu \equiv \omega_i q^\mu(\bz_i, z_i) \equiv \frac{\omega_i}{2}(1 + z_i \bz_i, z_i + \bz_i, - i(z_i - \bz_i), 1 - z_i \bz_i).
\end{equation}
In lightcone coordinates,
\begin{equation}\label{pdotX1}
    p_i \cdot X = \omega_i (u + z_i \bz_i \bu - \bz_i w - z_i \bw ).
\end{equation}
We also define spinor variables
\begin{equation}
    p^{A \dot A} = p^a \sigma_a^{A \dot A} = \begin{pmatrix} p^0 + p^3 & p^1 - i p^2 \\ p^1 + i p^2 & p^0 - p^3 \end{pmatrix}
\end{equation}
where $a = 0, 1, 2, 3$ are flat space indices. We then decompose the momenta as
\begin{equation}
    p^{A \dot A}_i = \kappa_i^A \kappat_i^{\dot A}
\end{equation}
with the explicit choice of spinor parameterization
\begin{equation}\label{kappa_def}
    \kappa_i^A = \begin{pmatrix} 1 \\ z_i \end{pmatrix}, \hspace{1 cm} \kappat_i^{\dot A} = \omega_i \begin{pmatrix} 1 \\ \bz_i \end{pmatrix}, \hspace{1 cm} \kappa^i_A = \begin{pmatrix} -z_i \\ 1 \end{pmatrix}, \hspace{1 cm} \kappat^i_{\dot A} = \omega_i \begin{pmatrix} -\bz_i \\ 1 \end{pmatrix}.
\end{equation}
We have the angle and square brackets
\begin{equation}\label{angsqa}
\begin{aligned}
    \langle i j \rangle &= \vep_{AB} \kappa_i^A \kappa_j^B = z_{ij}, \\
    [ i j ] &= \vep^{\dot A \dot B} \kappat^i_{\dot A} \kappat^j_{\dot B} = -\omega_i \omega_j \bz_{ij}.
\end{aligned}
\end{equation}
A full list of spinor conventions can be found in appendix \ref{app_spin}. We also define the spin-2 polarization tensors for the positive and negative helicity gravitons as
\begin{equation}\label{polarization}
    \vep^{\mu \nu}_{\pm,i} = \vep^\mu_{\pm,i} \vep^\nu_{\pm,i}, \hspace{1 cm} \vep^\mu_{+,i} = - 2 i \, \partial_z p^\mu_i,   \hspace{1 cm} \vep^\mu_{-,i} = \frac{2 i}{\omega^2} \partial_\bz p^\mu_i ,
\end{equation}
which when necessary are to be raised and lowered with the flat metric. These satisfy $\vep_{+,i}^{A\dot A} = 2 i \, r^A \kappat_i^{\dot A}/\langle \kappa_i r \rangle$ and $\vep_{-,i}^{A\dot A} = 2 i \, \kappa_i^A \tilde{r}^{\dot A}/[ \kappat_i \tilde{r} ]$ for $r^A = (0,1)$, $\tilde{r}^{\dot A} = (0,1)$.

Let us now return to our study of the perturbiner expansion for Plebański's second heavenly equation.

\begin{definition}
We denote the perturbiner expansion with
\begin{equation}\label{phi_pert_def}
    \bigphi ( \phi_1, \ldots, \phi_N ) \equiv \;\; \begin{matrix}\text{full perturbiner expansion of Plebański} \\ \text{scalar $\phi$ with seed functions }\phi_1, \ldots, \phi_N. \end{matrix} 
\end{equation}
Note the use of the large font.
\end{definition}

We will soon show how to write this perturbiner expansion as a sum over marked tree graphs. We now define these graphs.

\begin{definition}
    Let $\mT_N$ denote the set of marked, connected, tree graphs with $N$ or fewer nodes where each note has a distinct label contained in $\{1, \ldots, N\}$.
\end{definition}

\begin{ex} \secret
\begin{figure}[H]
    \centering
    \tikzset{every picture/.style={line width=0.75pt}} 

\begin{tikzpicture}[x=0.75pt,y=0.75pt,yscale=-1,xscale=1]

\draw    (295.4,83.7) -- (344.6,83.7) ;
\draw    (344.6,83.7) -- (387.4,55.7) ;
\draw    (24.1,133.5) -- (62.1,120.5) ;
\draw    (62.1,120.5) -- (97.8,135.6) ;
\draw  [fill={rgb, 255:red, 255; green, 255; blue, 255 }  ,fill opacity=1 ] (28.2,37.7) .. controls (28.2,30.8) and (33.8,25.2) .. (40.7,25.2) .. controls (47.6,25.2) and (53.2,30.8) .. (53.2,37.7) .. controls (53.2,44.6) and (47.6,50.2) .. (40.7,50.2) .. controls (33.8,50.2) and (28.2,44.6) .. (28.2,37.7) -- cycle ;
\draw  [fill={rgb, 255:red, 255; green, 255; blue, 255 }  ,fill opacity=1 ] (11.6,133.5) .. controls (11.6,126.6) and (17.2,121) .. (24.1,121) .. controls (31,121) and (36.6,126.6) .. (36.6,133.5) .. controls (36.6,140.4) and (31,146) .. (24.1,146) .. controls (17.2,146) and (11.6,140.4) .. (11.6,133.5) -- cycle ;
\draw  [fill={rgb, 255:red, 255; green, 255; blue, 255 }  ,fill opacity=1 ] (49.6,120.5) .. controls (49.6,113.6) and (55.2,108) .. (62.1,108) .. controls (69,108) and (74.6,113.6) .. (74.6,120.5) .. controls (74.6,127.4) and (69,133) .. (62.1,133) .. controls (55.2,133) and (49.6,127.4) .. (49.6,120.5) -- cycle ;
\draw  [fill={rgb, 255:red, 255; green, 255; blue, 255 }  ,fill opacity=1 ] (85.3,135.6) .. controls (85.3,128.7) and (90.9,123.1) .. (97.8,123.1) .. controls (104.7,123.1) and (110.3,128.7) .. (110.3,135.6) .. controls (110.3,142.5) and (104.7,148.1) .. (97.8,148.1) .. controls (90.9,148.1) and (85.3,142.5) .. (85.3,135.6) -- cycle ;
\draw    (295.4,83.7) -- (250.5,83.7) ;
\draw    (250.5,83.7) -- (221.4,110.1) ;
\draw    (221.4,51.2) -- (250.5,83.7) ;
\draw  [fill={rgb, 255:red, 255; green, 255; blue, 255 }  ,fill opacity=1 ] (208.9,51.2) .. controls (208.9,44.3) and (214.5,38.7) .. (221.4,38.7) .. controls (228.3,38.7) and (233.9,44.3) .. (233.9,51.2) .. controls (233.9,58.1) and (228.3,63.7) .. (221.4,63.7) .. controls (214.5,63.7) and (208.9,58.1) .. (208.9,51.2) -- cycle ;
\draw  [fill={rgb, 255:red, 255; green, 255; blue, 255 }  ,fill opacity=1 ] (238,83.7) .. controls (238,76.8) and (243.6,71.2) .. (250.5,71.2) .. controls (257.4,71.2) and (263,76.8) .. (263,83.7) .. controls (263,90.6) and (257.4,96.2) .. (250.5,96.2) .. controls (243.6,96.2) and (238,90.6) .. (238,83.7) -- cycle ;
\draw  [fill={rgb, 255:red, 255; green, 255; blue, 255 }  ,fill opacity=1 ] (208.9,110.1) .. controls (208.9,103.2) and (214.5,97.6) .. (221.4,97.6) .. controls (228.3,97.6) and (233.9,103.2) .. (233.9,110.1) .. controls (233.9,117) and (228.3,122.6) .. (221.4,122.6) .. controls (214.5,122.6) and (208.9,117) .. (208.9,110.1) -- cycle ;
\draw    (295.4,83.7) -- (326.2,118) ;
\draw    (327.3,46.9) -- (295.4,83.7) ;
\draw  [fill={rgb, 255:red, 255; green, 255; blue, 255 }  ,fill opacity=1 ] (282.9,83.7) .. controls (282.9,76.8) and (288.5,71.2) .. (295.4,71.2) .. controls (302.3,71.2) and (307.9,76.8) .. (307.9,83.7) .. controls (307.9,90.6) and (302.3,96.2) .. (295.4,96.2) .. controls (288.5,96.2) and (282.9,90.6) .. (282.9,83.7) -- cycle ;
\draw  [fill={rgb, 255:red, 255; green, 255; blue, 255 }  ,fill opacity=1 ] (314.8,46.9) .. controls (314.8,40) and (320.4,34.4) .. (327.3,34.4) .. controls (334.2,34.4) and (339.8,40) .. (339.8,46.9) .. controls (339.8,53.8) and (334.2,59.4) .. (327.3,59.4) .. controls (320.4,59.4) and (314.8,53.8) .. (314.8,46.9) -- cycle ;
\draw  [fill={rgb, 255:red, 255; green, 255; blue, 255 }  ,fill opacity=1 ] (332.1,83.7) .. controls (332.1,76.8) and (337.7,71.2) .. (344.6,71.2) .. controls (351.5,71.2) and (357.1,76.8) .. (357.1,83.7) .. controls (357.1,90.6) and (351.5,96.2) .. (344.6,96.2) .. controls (337.7,96.2) and (332.1,90.6) .. (332.1,83.7) -- cycle ;
\draw  [fill={rgb, 255:red, 255; green, 255; blue, 255 }  ,fill opacity=1 ] (313.7,118) .. controls (313.7,111.1) and (319.3,105.5) .. (326.2,105.5) .. controls (333.1,105.5) and (338.7,111.1) .. (338.7,118) .. controls (338.7,124.9) and (333.1,130.5) .. (326.2,130.5) .. controls (319.3,130.5) and (313.7,124.9) .. (313.7,118) -- cycle ;
\draw  [fill={rgb, 255:red, 255; green, 255; blue, 255 }  ,fill opacity=1 ] (374.9,55.7) .. controls (374.9,48.8) and (380.5,43.2) .. (387.4,43.2) .. controls (394.3,43.2) and (399.9,48.8) .. (399.9,55.7) .. controls (399.9,62.6) and (394.3,68.2) .. (387.4,68.2) .. controls (380.5,68.2) and (374.9,62.6) .. (374.9,55.7) -- cycle ;

\draw (55.2,37.7) node [anchor=west] [inner sep=0.75pt]    {$\ \in \mathcal{T}_{9}$};
\draw (40.7,37.7) node    {$3$};
\draw (62.1,120.5) node    {$1$};
\draw (97.8,135.6) node    {$9$};
\draw (24.1,133.5) node    {$5$};
\draw (108.4,128.1) node [anchor=west] [inner sep=0.75pt]    {$\ \in \mathcal{T}_{9}$};
\draw (326.2,118) node    {$9$};
\draw (327.3,46.9) node    {$2$};
\draw (221.4,51.2) node    {$7$};
\draw (250.5,83.7) node    {$3$};
\draw (221.4,110.1) node    {$5$};
\draw (295.4,83.7) node    {$8$};
\draw (344.6,83.7) node    {$6$};
\draw (387.4,55.7) node    {$4$};
\draw (396,85.7) node [anchor=west] [inner sep=0.75pt]    {$\ \in \mathcal{T}_{9}$};

\end{tikzpicture}
    \caption{Examples of graphs in $\mT_9$.}
    \label{fig:63y452t3}
\end{figure}

\end{ex}

Next we define a differential operator $\partial_\mu^{(i)}$ which only acts on the seed $\phi_i$, but not on $\phi_j$ for $i \neq j$.
\begin{equation}
    \partial_\mu^{(i)} \phi_i \equiv \partial_\mu \phi_i, \hspace{1 cm} \partial_\mu^{(i)} \phi_j \equiv 0 \hspace{0.5 cm} \text{ for } i \neq j
\end{equation}
For example
\begin{equation}
    \partial_\mu^{(1)} ( \phi_1 \phi_2 \phi_3 ) = (\partial_\mu \phi_1) \phi_2 \phi_3.
\end{equation}
We then define the differential operator $D^{ij}$ as
\begin{equation}
    D^{ij} \equiv \partial_u^{(i)} \partial_w^{(j)} - \partial_u^{(j)} \partial_w^{(i)}.
\end{equation}
Note, for instance,
\begin{equation}
    D^{12}(\phi_1 \phi_2 \phi_3) = \{ \phi_1, \phi_2 \} \phi_3 
\end{equation}
and
\begin{equation}\label{D12D12}
    D^{12} D^{12}(\phi_1 \phi_2 \phi_3) = \{ \partial_u \phi_1, \partial_w \phi_2 \} \phi_3 -\{ \partial_w \phi_1, \partial_u \phi_2 \} \phi_3.
\end{equation}

Using the operator $D^{ij}$, we will now explain what an ``edge'' in one of these graphs corresponds to. Two nodes, labelled $i$ and $j$, correspond to the two seed functions $\phi_i$ and $\phi_j$, while an edge between them corresponds to $D^{ij}/z_{ij}$, where $z_i$ and $z_j$ are just the same $z$'s which parameterized the momenta in \eqref{p_z_zb}.
\begin{figure}[H]
    \centering
    \tikzset{every picture/.style={line width=0.75pt}} 

\begin{tikzpicture}[x=0.75pt,y=0.75pt,yscale=-1,xscale=1]

\draw    (47,43.2) -- (84,43.2) ;
\draw   (22,43.2) .. controls (22,36.3) and (27.6,30.7) .. (34.5,30.7) .. controls (41.4,30.7) and (47,36.3) .. (47,43.2) .. controls (47,50.1) and (41.4,55.7) .. (34.5,55.7) .. controls (27.6,55.7) and (22,50.1) .. (22,43.2) -- cycle ;
\draw   (84,43.2) .. controls (84,36.3) and (89.6,30.7) .. (96.5,30.7) .. controls (103.4,30.7) and (109,36.3) .. (109,43.2) .. controls (109,50.1) and (103.4,55.7) .. (96.5,55.7) .. controls (89.6,55.7) and (84,50.1) .. (84,43.2) -- cycle ;

\draw (111,43.2) node [anchor=west] [inner sep=0.75pt]    {$\ =\ \dfrac{D^{ij}}{z_{ij}} \phi _{i} \phi _{j}$};
\draw (34.5,43.2) node    {$i$};
\draw (96.5,43.2) node    {$j$};

\end{tikzpicture}
    \caption{A edge in a graph corresponds to $D^{ij}/z_{ij}$.}
    \label{fig:7452634}
\end{figure}
More generally, we can express a term corresponding to an arbitrary graph in the following way.

\begin{definition}
For each graph $\bt \in \mathcal{T}_N$ we define the associated term $\phi_{\bt}$ via

\begin{equation}
    \phi_\bt \equiv \left( \prod_{e_{ij} \, \in \, \text{edges of } \bt} \frac{D^{ij}}{z_{ij}}\right) \left( \prod_{k \, \in \, \text{nodes of } \bt } \phi_k \right) \hspace{0.5 cm} \text{ for } \bt \in \mT_N.
\end{equation}
\end{definition}

\begin{ex} \secret
\begin{figure}[H]
    \centering
    \tikzset{every picture/.style={line width=0.75pt}} 

\begin{tikzpicture}[x=0.75pt,y=0.75pt,yscale=-1,xscale=1]

\draw    (221.2,111.81) -- (256.7,89.71) ;
\draw    (185.7,89.71) -- (221.2,111.81) ;
\draw    (185.7,89.71) -- (185.7,48.61) ;
\draw    (150.2,111.81) -- (185.7,89.71) ;
\draw  [fill={rgb, 255:red, 255; green, 255; blue, 255 }  ,fill opacity=1 ] (208.7,111.81) .. controls (208.7,104.91) and (214.3,99.31) .. (221.2,99.31) .. controls (228.1,99.31) and (233.7,104.91) .. (233.7,111.81) .. controls (233.7,118.71) and (228.1,124.31) .. (221.2,124.31) .. controls (214.3,124.31) and (208.7,118.71) .. (208.7,111.81) -- cycle ;
\draw  [fill={rgb, 255:red, 255; green, 255; blue, 255 }  ,fill opacity=1 ] (173.2,89.71) .. controls (173.2,82.81) and (178.8,77.21) .. (185.7,77.21) .. controls (192.6,77.21) and (198.2,82.81) .. (198.2,89.71) .. controls (198.2,96.61) and (192.6,102.21) .. (185.7,102.21) .. controls (178.8,102.21) and (173.2,96.61) .. (173.2,89.71) -- cycle ;
\draw  [fill={rgb, 255:red, 255; green, 255; blue, 255 }  ,fill opacity=1 ] (137.7,111.81) .. controls (137.7,104.91) and (143.3,99.31) .. (150.2,99.31) .. controls (157.1,99.31) and (162.7,104.91) .. (162.7,111.81) .. controls (162.7,118.71) and (157.1,124.31) .. (150.2,124.31) .. controls (143.3,124.31) and (137.7,118.71) .. (137.7,111.81) -- cycle ;
\draw  [fill={rgb, 255:red, 255; green, 255; blue, 255 }  ,fill opacity=1 ] (173.2,48.61) .. controls (173.2,41.71) and (178.8,36.11) .. (185.7,36.11) .. controls (192.6,36.11) and (198.2,41.71) .. (198.2,48.61) .. controls (198.2,55.51) and (192.6,61.11) .. (185.7,61.11) .. controls (178.8,61.11) and (173.2,55.51) .. (173.2,48.61) -- cycle ;
\draw  [fill={rgb, 255:red, 255; green, 255; blue, 255 }  ,fill opacity=1 ] (244.2,89.71) .. controls (244.2,82.81) and (249.8,77.21) .. (256.7,77.21) .. controls (263.6,77.21) and (269.2,82.81) .. (269.2,89.71) .. controls (269.2,96.61) and (263.6,102.21) .. (256.7,102.21) .. controls (249.8,102.21) and (244.2,96.61) .. (244.2,89.71) -- cycle ;

\draw (185.7,89.71) node    {$1$};
\draw (107,84.5) node [anchor=west] [inner sep=0.75pt]    {$\mathbf{t} =$};
\draw (150.2,111.81) node    {$2$};
\draw (185.7,48.61) node    {$3$};
\draw (221.2,111.81) node    {$4$};
\draw (358.5,84.5) node [anchor=west] [inner sep=0.75pt]    {$\phi _{\mathbf{t}} =\dfrac{D^{12}}{z_{12}}\dfrac{D^{13}}{z_{13}}\dfrac{D^{14}}{z_{14}}\dfrac{D^{45}}{z_{45}} \phi _{1} \phi _{2} \phi _{3} \phi _{4} \phi _{5}$};
\draw (256.7,89.71) node    {$5$};

\end{tikzpicture}
    \caption{An example of a graph $\bt$ and its corresponding $\phi_\bt$.}
    \label{fig:24t3}
\end{figure}
\end{ex}

\begin{thm}\label{thm_21}
    The complete perturbiner expansion of the Plebański scalar is given by a sum over all marked tree graphs in $\mT_N$ via

    \begin{equation}\label{theorem_tree_sum}
        \tcboxmath{ \bigphi ( \phi_1, \ldots, \phi_N ) = \sum^{}_{\bt \in \mT_N} \phi_\bt. }
    \end{equation} 
\end{thm}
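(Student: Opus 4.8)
The plan is to verify that the right-hand side of \eqref{theorem_tree_sum} solves Plebański's equation \eqref{pleb_1} order by order in the number of distinct seeds, and then to invoke uniqueness of the perturbiner expansion. Writing $\Phi = \sum_{\emptyset \neq S \subseteq \{1,\dots,N\}} \Psi_S$, where $\Psi_S \equiv \sum_{\bt \in \mT_N,\ \mathrm{nodes}(\bt) = S} \phi_\bt$ collects all trees whose node set is exactly $S$, the linear piece $\Psi_{\{i\}} = \phi_i$ is immediate since the only graph on a single node is the bare seed (and $\Box \phi_i = 0$ on-shell). For $|S| \geq 2$, isolating the order that carries precisely the parameters $\{\epsilon_k\}_{k \in S}$ (using that $\epsilon_k^2 = 0$, so only factors with disjoint node sets survive in any product) turns \eqref{pleb_1} into the recursion
\begin{equation}
\Box \Psi_S = \sum_{S = S_1 \sqcup S_2} \{\partial_u \Psi_{S_1}, \partial_w \Psi_{S_2}\},
\end{equation}
the sum running over ordered partitions of $S$ into two nonempty disjoint blocks. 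Since $\Psi_S$ lives entirely in the single momentum sector $P_S \equiv \sum_{k \in S} p_k$, which is off-shell for generic momenta, $\Box$ is invertible there and the recursion fixes the whole expansion uniquely from the seeds; it therefore suffices to check that the tree sum obeys it.

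Two elementary computations feed the check. First, the full product of edge operators acting on a product of distinct plane waves collapses to a product of scalar brackets: orienting each edge selects which endpoint receives $\partial_u$ and which receives $\partial_w$, and summing the two orientations of each edge with its relative sign factorizes over edges (each edge is one independent binary choice), giving $\big(\prod_{e_{ij}} D^{ij}\big) \prod_k \phi_k = \big(\prod_{e_{ij}} [ij]\big)\prod_k \phi_k$. Hence
\begin{equation}
\phi_\bt = \Bigg( \prod_{e_{ij} \in \bt} \frac{[ij]}{z_{ij}} \Bigg) \prod_{k \in S} \phi_k \equiv C_\bt\, e^{i P_S \cdot X},
\end{equation}
and since $C_\bt$ is constant, $\Box \phi_\bt = -P_S^2\, \phi_\bt$ with eigenvalue $P_S^2 = P_{S,u}P_{S,\bu} - P_{S,w}P_{S,\bw} = -\sum_{i<j\in S} \langle ij \rangle [ij]$. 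Second, writing the bracket in Monge–Ampère form $\{\partial_u \phi, \partial_w \phi\} = \partial_u^2\phi\,\partial_w^2\phi - (\partial_u\partial_w\phi)^2$ and evaluating on the exponentials $\Psi_{S_1}, \Psi_{S_2}$, the two orderings of each unordered partition combine into a perfect square,
\begin{equation}
\{\partial_u \Psi_{S_1}, \partial_w \Psi_{S_2}\} + \{\partial_u \Psi_{S_2}, \partial_w \Psi_{S_1}\} = B(S_1,S_2)^2\, \Psi_{S_1}\Psi_{S_2}, \qquad B(S_1,S_2) \equiv \sum_{a \in S_1,\, b \in S_2} [ab].
\end{equation}

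Substituting these facts into the recursion and stripping the common factor $\prod_{k \in S}\phi_k$ reduces everything to a purely combinatorial identity among weighted spanning trees. With edge weights $w_{ij} \equiv [ij]/z_{ij}$ and $f(S) \equiv \sum_{\mathrm{trees\ on\ } S} \prod_{e_{ij} \in \bt} w_{ij}$, I must establish
\begin{equation}\label{eq:star}
\Big(\sum_{i<j \in S} \langle ij \rangle [ij]\Big)\, f(S) = \sum_{\{S_1,S_2\}} B(S_1,S_2)^2\, f(S_1)\, f(S_2),
\end{equation}
summed over unordered partitions $S = S_1 \sqcup S_2$. I would prove \eqref{eq:star} via the weighted matrix-tree theorem, writing $f(S)$ as a cofactor of the weighted Laplacian $L_{ij} = -w_{ij}$, $L_{ii} = \sum_{k \neq i} w_{ik}$, and recognizing both sides as natural quadratic contractions built from $P_S$ and this Laplacian; deletion-contraction induction on $|S|$ is an alternative route. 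Establishing \eqref{eq:star} is the crux and the main obstacle: the \emph{squared} bi-bracket $B(S_1,S_2)^2$—the fingerprint of gravity's Monge–Ampère nonlinearity, in contrast to the single power one would find in the Yang–Mills analogue—means the identity genuinely requires matrix-tree machinery rather than a one-line bridge-insertion bijection. Once \eqref{eq:star} holds for every $S$, the recursion is satisfied throughout, and uniqueness upgrades the order-by-order match to the full statement \eqref{theorem_tree_sum}.
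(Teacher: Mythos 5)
Your setup is sound and your three computational claims all check out: the product of edge operators acting on distinct plane waves does collapse to $\prod_{e_{ij}}[ij]$, so each $\phi_\bt$ is a constant multiple of $e^{iP_S\cdot X}$; the $\Box$ eigenvalue on the sector $S$ is $\sum_{i<j\in S}\langle ij\rangle[ij]$; and the symmetrized bracket of two exponential sectors is indeed the perfect square $B(S_1,S_2)^2\,\Psi_{S_1}\Psi_{S_2}$. With these, you have correctly reduced Theorem \ref{thm_21} to your weighted spanning-tree identity \eqref{eq:star} (your packaging is algebraic where the paper's is diagrammatic, but the reduction is the same in content). The problem is that \eqref{eq:star} is not a reduction of the theorem to something simpler --- it \emph{is} the theorem, with all of its content intact, and you do not prove it. ``I would prove it via the weighted matrix-tree theorem'' and ``deletion-contraction is an alternative route'' are declarations of intent, not arguments; as you yourself note, this is the crux, and it is left open. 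That is a genuine gap.

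The missing idea is more elementary than the matrix-tree theorem, and it is exactly what the paper supplies. Expand $B(S_1,S_2)^2$ into ordered pairs of crossing edges and organize both sides of \eqref{eq:star} by the unicyclic graph they build. On the left one has a spanning tree of $S$ plus one distinguished edge $\{i,j\}$ carrying $z_{ij}^2 w_{ij}$. On the right, the diagonal terms of the square (the two crossing edges coincide) reproduce precisely the left-hand terms in which the distinguished edge lies on the tree, while the off-diagonal terms give a two-component forest joined by two distinct crossing edges, i.e.\ a unicyclic graph with both of those edges on its cycle. After this matching, \eqref{eq:star} holds if and only if, for every cycle with consecutively oriented edge-differences $\zeta_1,\dots,\zeta_M$ (so each $\zeta_m$ is a $z_{ij}$ along the cycle and $\sum_m\zeta_m=0$ telescopically), one has $\sum_m\zeta_m^2=-2\sum_{m<m'}\zeta_m\zeta_{m'}$; the relative sign and the factor of $2$ come from the opposite crossing orientations of the two edges and from the ordered-pair expansion of the square. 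That identity is nothing but $\bigl(\sum_m\zeta_m\bigr)^2=0$, which is \eqref{cycle_to_check} in the paper. In particular, your assertion that the squared bi-bracket ``genuinely requires matrix-tree machinery'' is mistaken in an instructive way: the square of gravity's bracket is tamed precisely by the square of the telescoping cycle relation, and nothing heavier is needed.
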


\begin{proof}
To prove this statement, we will plug the above expression into Plebański's second heavenly equation \eqref{pleb_1} and check that it is a solution. This boils down to checking
\begin{equation}\label{to_prove_grav_1}
    \Box \left( \sum_{\bt \in \mT_N} \phi_{\bt} \right) \overset{?}{=}  \frac{1}{2}  \sum_{\bt \in \mT_N} \sum_{\bt' \in \mT_N} \big( \left\{ \partial_u \phi_{\bt}, \partial_w \phi_{\bt'}\right\} - \left\{ \partial_w \phi_{\bt}, \partial_u \phi_{\bt'}\right\}  \big).
\end{equation}
Let us start by evaluating the LHS of \eqref{to_prove_grav_1}. Recall that
\begin{equation}
    \Box = \partial_u \partial_\bu - \partial_w \partial_\bw
\end{equation}
and note the relations
\begin{equation}\label{z_diff_relation}
    \partial_\bu \phi_i = - z_i \partial_w \phi_i, \hspace{1 cm} \partial_\bw \phi_i = - z_i \partial_u \phi_i,
\end{equation}
which hold for our $\phi_i$ defined as plane waves (see \eqref{pdotX1}).

Let us think about what happens when we act $\Box$ on a string of $\phi_i$'s. Of course, for each $\phi_i$ we have $\Box \, \phi_i = 0$. Furthermore, using \eqref{z_diff_relation}, a single line of algebra shows that
\begin{equation}\label{boxphiiphij}
    \Box (\phi_i \phi_j) = z_{ij} D^{ij}( \phi_i \phi_j).
\end{equation}
From the product rule of differentiation, if we act $\Box$ on a longer string of seed functions, say $\phi_i \phi_j \phi_k$, we must sum over all pairs of the seed functions in the following way:
\begin{equation}
    \Box( \phi_i \phi_j \phi_k) = (z_{ij} D^{ij} + z_{jk} D^{jk} + z_{ik} D^{ik})(\phi_i \phi_j \phi_k).
\end{equation}
Therefore, in order to express the action of $\Box$ using our diagrammatic language, we shall define a new kind of edge, now denoted with a double line, that corresponds to the operator $z_{ij} D^{ij}$ instead of $D^{ij}/z_{ij}$.
\begin{figure}[H]
    \centering
    \tikzset{every picture/.style={line width=0.75pt}} 

\begin{tikzpicture}[x=0.75pt,y=0.75pt,yscale=-1,xscale=1]

\draw   (23,41.2) .. controls (23,34.3) and (28.6,28.7) .. (35.5,28.7) .. controls (42.4,28.7) and (48,34.3) .. (48,41.2) .. controls (48,48.1) and (42.4,53.7) .. (35.5,53.7) .. controls (28.6,53.7) and (23,48.1) .. (23,41.2) -- cycle ;
\draw   (85,41.2) .. controls (85,34.3) and (90.6,28.7) .. (97.5,28.7) .. controls (104.4,28.7) and (110,34.3) .. (110,41.2) .. controls (110,48.1) and (104.4,53.7) .. (97.5,53.7) .. controls (90.6,53.7) and (85,48.1) .. (85,41.2) -- cycle ;
\draw    (48,39.7) -- (85,39.7)(48,42.7) -- (85,42.7) ;

\draw (112,41.2) node [anchor=west] [inner sep=0.75pt]    {$\ =\ z_{ij} D^{ij} \phi _{i} \phi _{j}$};
\draw (35.5,41.2) node    {$i$};
\draw (97.5,41.2) node    {$j$};

\end{tikzpicture}
    \caption{A edge corresponding to $\Box(\phi_i \phi_j)$.}
    \label{fig:5698}
\end{figure}
If we act $\Box$ on a sum of graphs (like what we have on the LHS of \eqref{to_prove_grav_1}) we must therefore sum over a different set of marked graphs which we shall now define.

\begin{definition}
    Let $\mT^\Box_N$ denote the set of all graphs which can be constructed by taking a marked tree graph in $\mT_N$ and placing a single extra double line between two of the nodes.
\end{definition}
That is, the ``data'' of a graph $\bt \in \mT^\Box_N$ is identical to the ``data'' of a graph $\bt \in \mT_N$ save for the addition of a single double-lined edge. The term $\phi_\bt$ corresponding to these graphs is given by
\begin{equation}
    \phi_\bt \equiv\left( z_{ab} D^{ab}\right)\Bigg\rvert_{ \substack{e_{ab} \text{ is} \\ \text{double} \\ \text{edge of } \bt  }} \left( \prod_{e_{ij} \, \in \, \substack{\text{single-lined} \\ \text{edges of } \bt}} \frac{D^{ij}}{z_{ij}}\right) \left( \prod_{k \, \in \, \text{nodes of } \bt } \phi_k \right) \hspace{0.5 cm} \text{ for } \bt \in \mT_N^{\Box}.
\end{equation}
We can now write $\Box$ acting on our sum of tree graphs as
\begin{equation}\label{to_prove_LHS}
    \Box\left( \sum_{\bt \in \mT_N} \phi_{\bt} \right) = \sum_{\bt \in \mT^\Box_N} \phi_\bt.
\end{equation}

We now turn to the RHS of \eqref{to_prove_grav_1}. Let us begin by exploring what happens when two graphs are placed in two different slots of the Poisson bracket. To represent two seed functions in the Poisson bracket $\{ \phi_i, \phi_j\}$, we would need to create an edge corresponding to $D^{ij}(\phi_i \phi_j)$. However, due to the antisymmetry of the Poisson bracket ($D^{ij} = - D^{ji})$, this edge must be a bit different from the other edges because it must be given an orientation. We display this orientation with an arrow, and draw this particular edge with a wiggly line.
\begin{figure}[H]
    \centering
    \tikzset{every picture/.style={line width=0.75pt}} 

\begin{tikzpicture}[x=0.75pt,y=0.75pt,yscale=-1,xscale=1]

\draw   (37,35.2) .. controls (37,28.3) and (42.6,22.7) .. (49.5,22.7) .. controls (56.4,22.7) and (62,28.3) .. (62,35.2) .. controls (62,42.1) and (56.4,47.7) .. (49.5,47.7) .. controls (42.6,47.7) and (37,42.1) .. (37,35.2) -- cycle ;
\draw   (99,35.2) .. controls (99,28.3) and (104.6,22.7) .. (111.5,22.7) .. controls (118.4,22.7) and (124,28.3) .. (124,35.2) .. controls (124,42.1) and (118.4,47.7) .. (111.5,47.7) .. controls (104.6,47.7) and (99,42.1) .. (99,35.2) -- cycle ;
\draw    (62,35.2) .. controls (63.67,33.53) and (65.33,33.53) .. (67,35.2) .. controls (68.67,36.87) and (70.33,36.87) .. (72,35.2) .. controls (73.67,33.53) and (75.33,33.53) .. (77,35.2) .. controls (78.67,36.87) and (80.33,36.87) .. (82,35.2) .. controls (83.67,33.53) and (85.33,33.53) .. (87,35.2) .. controls (88.67,36.87) and (90.33,36.87) .. (92,35.2) .. controls (93.67,33.53) and (95.33,33.53) .. (97,35.2) -- (99,35.2) -- (99,35.2) ;
\draw [shift={(85.5,35.2)}, rotate = 180] [fill={rgb, 255:red, 0; green, 0; blue, 0 }  ][line width=0.08]  [draw opacity=0] (10.72,-5.15) -- (0,0) -- (10.72,5.15) -- (7.12,0) -- cycle    ;

\draw (126,35.2) node [anchor=west] [inner sep=0.75pt]    {$\ =\ D^{ij} \phi _{i} \phi _{j}$};
\draw (49.5,35.2) node    {$i$};
\draw (111.5,35.2) node    {$j$};

\end{tikzpicture}
    \caption{A edge corresponding to $D^{ij} \phi_i \phi_j$.}
    \label{fig:45635}
\end{figure}

Now, say we have two graphs $\bt_\alpha$ and $\bt_\beta$ which could look like the following.
\begin{center}
    \tikzset{every picture/.style={line width=0.75pt}} 

\begin{tikzpicture}[x=0.75pt,y=0.75pt,yscale=-1,xscale=1]

\draw    (141.88,60.92) -- (141.88,103.13) ;
\draw    (141.88,18.71) -- (141.88,60.92) ;
\draw  [fill={rgb, 255:red, 255; green, 255; blue, 255 }  ,fill opacity=1 ] (129.38,18.71) .. controls (129.38,11.81) and (134.98,6.21) .. (141.88,6.21) .. controls (148.79,6.21) and (154.38,11.81) .. (154.38,18.71) .. controls (154.38,25.61) and (148.79,31.21) .. (141.88,31.21) .. controls (134.98,31.21) and (129.38,25.61) .. (129.38,18.71) -- cycle ;
\draw  [fill={rgb, 255:red, 255; green, 255; blue, 255 }  ,fill opacity=1 ] (129.38,60.92) .. controls (129.38,54.02) and (134.98,48.42) .. (141.88,48.42) .. controls (148.79,48.42) and (154.38,54.02) .. (154.38,60.92) .. controls (154.38,67.83) and (148.79,73.42) .. (141.88,73.42) .. controls (134.98,73.42) and (129.38,67.83) .. (129.38,60.92) -- cycle ;
\draw    (283,39.82) -- (283,82.03) ;
\draw  [fill={rgb, 255:red, 255; green, 255; blue, 255 }  ,fill opacity=1 ] (129.38,103.13) .. controls (129.38,96.23) and (134.98,90.63) .. (141.88,90.63) .. controls (148.79,90.63) and (154.38,96.23) .. (154.38,103.13) .. controls (154.38,110.04) and (148.79,115.63) .. (141.88,115.63) .. controls (134.98,115.63) and (129.38,110.04) .. (129.38,103.13) -- cycle ;
\draw  [fill={rgb, 255:red, 255; green, 255; blue, 255 }  ,fill opacity=1 ] (270.5,39.82) .. controls (270.5,32.91) and (276.1,27.32) .. (283,27.32) .. controls (289.9,27.32) and (295.5,32.91) .. (295.5,39.82) .. controls (295.5,46.72) and (289.9,52.32) .. (283,52.32) .. controls (276.1,52.32) and (270.5,46.72) .. (270.5,39.82) -- cycle ;
\draw  [fill={rgb, 255:red, 255; green, 255; blue, 255 }  ,fill opacity=1 ] (270.5,82.03) .. controls (270.5,75.12) and (276.1,69.53) .. (283,69.53) .. controls (289.9,69.53) and (295.5,75.12) .. (295.5,82.03) .. controls (295.5,88.93) and (289.9,94.53) .. (283,94.53) .. controls (276.1,94.53) and (270.5,88.93) .. (270.5,82.03) -- cycle ;

\draw (141.88,18.71) node    {$1$};
\draw (141.88,60.92) node    {$2$};
\draw (141.88,103.13) node    {$3$};
\draw (283,82.03) node    {$5$};
\draw (283,39.82) node    {$4$};
\draw (127.38,60.92) node [anchor=east] [inner sep=0.75pt]    {$\mathbf{t}_{\alpha } =\ $};
\draw (281,60.92) node [anchor=east] [inner sep=0.75pt]    {$\mathbf{t}_{\beta } =\ \ \ \ $};

\end{tikzpicture}
\end{center}
What sum of graphs will the expression $\{ \phi_{\mathbf{t}_\alpha} ,\phi_{\mathbf{t}_\beta} \}$ be? From the product rule of differentiation, the Poisson bracket will be equal to a sum of terms which will correspond to all possible ways a wiggly line could be drawn from a node on $\bt_\alpha$ to a node on $\bt_\beta$. Let us denote such a graph with a wiggly line starting at node $i$ and ending on node $j$ as $\bt_{\{\alpha, \beta\}_{ij}}$.
\begin{center}
    \input{figures/gravity_graph_7}
\end{center}
\begin{equation}
    \{ \phi_{\bt_\alpha}, \phi_{\bt_\beta} \} = \sum_{i \, \in \, \text{nodes of } \bt_\alpha } \sum_{j \, \in \, \text{nodes of } \bt_\beta } \phi_{ \bt_{\{\alpha, \beta \}_{ij} } }
\end{equation}

Of course, because $(\phi_i)^2 = 0$ due to the infinitesimal nature of $\epsilon_i$, if there are any shared nodes between the two graphs, the Poisson bracket will actually be zero.
\begin{equation}
    \{ \phi_{\bt_\alpha}, \phi_{\bt_\beta} \} = 0 \hspace{0.5 cm} \text{if } \bt_\alpha \text{ and } \bt_{\beta} \text{ both have a node with the same label}.
\end{equation}

Now we return to the RHS of \eqref{to_prove_grav_1} as written. We can see that this term contains not just a Poisson bracket, which would correspond to a single wiggly line being drawn between two graphs, but also contains an extra set of $\partial_u$, $\partial_w$ derivatives which have the effect of requiring that \emph{two} wiggly lines be drawn between the two graphs! (See for instance \eqref{D12D12}.) Let us now give a name to the set of all such graphs.

\begin{definition}
    Let $\mT^{P^2}_N$ denote the set of all graphs which can be constructed by taking two graphs in $\mT_N$ that share no common nodes and drawing \emph{two} wiggly lines between them, with each wiggly line connecting one node in one graph to another node in the other graph. The orientation of these two wiggly lines must be arranged such that they both point in the same direction (i.e., from the same one of the original graphs to the other).
\end{definition}

We write the term corresponding to a graph in $\mT^{P^2}_N$ as
\begin{equation}
    \phi_\bt \equiv \left( \prod_{\vec{e}_{ab} \, \in \, \substack{\text{wiggly} \\ \text{edges of }\bt } } D^{ab} \right) \left( \prod_{e_{ij} \, \in \, \substack{\text{single-lined} \\ \text{edges of } \bt} } \frac{D^{ij}}{z_{ij}}\right) \left( \prod_{k \, \in \, \text{nodes of } \bt } \phi_k \right) \hspace{0.5 cm} \text{ for } \bt \in \mT_N^{P^2}
\end{equation}
(where the vector notation $\vec{e}_{ab}$ is used to denote the fact that the wiggly edges have an orientation pointing from $a$ to $b$) such that
\begin{equation}\label{to_prove_RHS}
    \frac{1}{2}  \sum_{\bt \in \mT_N} \sum_{\bt' \in \mT_N} \big( \left\{ \partial_u \phi_{\bt}, \partial_w \phi_{\bt'}\right\} - \left\{ \partial_w \phi_{\bt}, \partial_u \phi_{\bt'}\right\}  \big) = \sum_{\bt \in \mT^{P^2}_N} \phi_{\bt}.
\end{equation}
The factor of $1/2$ accounts for the symmetry factor of 2 associated with the indistinguishability of the two wiggly lines.

Therefore, in order to prove \eqref{to_prove_grav_1}, using \eqref{to_prove_LHS} and \eqref{to_prove_RHS} we have now shown that it is equivalent to check
\begin{equation}\label{qmarkboxP2}
    \sum_{\bt \in \mT_N^{\Box}} \phi_{\bt} \overset{?}{=} \sum_{\bt \in \mT_N^{P^2}} \phi_{\bt}.
\end{equation}
Let's now look at some examples of graphs in $\mT_N^{\Box}$ and $\mT_N^{P^2}$ and show how we can re-organize the above sums in a way that enables us to prove the above equality.

\begin{exs} \secret
\begin{figure}[H]
    \centering
    \input{figures/gravity_graph_4}
    \caption{An example graph in $\mT_9^{\Box}$, as well as an example graph in $\mT_9^{P^2}$.}
    \label{fig:64523}
\end{figure}
There are also exceptional graphs in $\mT_N^\Box$ where the double-lined edge connects two nodes which were already connected by a single-lined edge, and exceptional graphs in $\mT^{P^2}_N$ where the two wiggly edges both attach to the same two nodes:
\begin{figure}[H]
    \centering
    \input{figures/gravity_graph_5}
    \caption{An exceptional case of graphs in $\mT^\Box_9$ and $\mT^{P^2}_9$.}
    \label{fig:546234}
\end{figure}
\end{exs}

The above examples in Figures \ref{fig:64523} and \ref{fig:546234} were chosen to emphasize a point. Notice that the graphs in both $\mT^\Box_N$ and $\mT^{P^2}_N$ all contain exactly one \emph{cycle}. In $\mT^\Box_N$, the double line is always contained in this cycle, and in $\mT^{P^2}_N$ the two wiggly lines are also contained in this cycle. 

Let us consider a group of graphs in $\mT_N^{\Box}$ and $\mT_N^{P^2}$ for which all the graphs in the group would look identical if all the double lines and wiggly lines were to be replaced with ordinary single lines. If the cycle in this group is comprised of $M$ nodes, then there will be $M$ such graphs from $\mT_N^{\Box}$ (as there are $M$ locations to place the double line in the cycle) and $M(M-1)/2$ such graphs from $\mT_N^{P^2}$ (as there are $M(M-1)/2$ ways to place the two wiggly lines in the cycle).

In Figures \ref{fig8} and \ref{fig9} we draw one such collection where $M = 5$. We have labelled the nodes in the cycle as $i,j,k,\ell, m$. These cycles should be understood as being contained in a larger graph, where the exact details of the larger graph will not impact the coming argument. 

In Figure \ref{fig8} we sum 5 double-lined graphs in $\mT^\Box_N$. \emph{Notice that a double line connecting nodes $i$ and $j$ is equal to a single line multiplied by $z_{ij}^2$.} (See Figures \ref{fig:7452634} and \ref{fig:5698}.) Therefore, upon summing up the graphs, we note that the sum is equal to a modified version of the graph containing only single lines in the cycle, given that we multiply the whole graph by the sum of squares of all $z_{ij}$'s in the cycle.

\begin{figure}
    \begin{center}
        \input{figures/gravity_graph_8}
    \end{center} \vspace{-0.6 cm}
        \caption{ \label{fig8}  Here we sum 5 graphs with a cycle of size 5 which could be in $\mT^\Box_N$. The graphs are identical except for the placement of the double line in the cycle. Notice the sum equals a graph with no double-line times a certain factor.}
        \vspace{0.4 cm}
    \begin{center}
        \input{figures/gravity_graph_9}
    \end{center} \vspace{-0.6 cm}
    \caption{ \label{fig9}  Here we sum 20 different graphs with a certain cycle of size 5, which could be contained in $\mT^{P^2}_N$. The graphs are identical except for the placement of two wiggly lines. Notice the sum equals a graph with no wiggly lines times a certain factor.} 
\end{figure}

In Figure \ref{fig9} we sum the corresponding graphs in $\mT^{P^2}_{N}$, which contain the exact same cycle as the previous group. There will be $5 (5-1)/2 = 10$ such graphs (or actually 20 if we account for both allowed orientations of the wiggly lines). \emph{Notice that a wiggly line connecting nodes $i$ and $j$ is can be replaced with a straight line if we multiply the graph by $z_{ij}$.} (See Figures \ref{fig:7452634} and \ref{fig:45635}.) Therefore, summing up the graphs, we note that the sum is equal to a version of the graph containing only straight lines in the cycle if we multiply the whole graph by $2$ times the sum of all products of distinct $z_{ij}$ edges in the cycle, with signs accounting for orientation.

If we now inspect Figures \ref{fig8} and \ref{fig9}, we see that the sum of graphs in both groups end up becoming the \emph{exact} same graph with seemingly different numerical prefactors. We will then complete the proof of our theorem once we show that these two numerical factors agree:
\begin{align}\label{cycle_to_check}
    & (z_{ij}^2 + z_{jk}^2 + z_{k \ell}^2 + z_{\ell i}^2 + z_{m i}^2) \overset{?}{=} \\
    & - 2 (z_{ij}(z_{jk} + z_{k\ell} + z_{\ell m} + z_{mi} ) + z_{jk}(z_{k \ell} + z_{\ell m} + z_{mi} ) + z_{k \ell}(z_{\ell m} + z_{mi}) + z_{\ell m} z_{mi}) . \nonumber
\end{align}
This equality can be proven as follows. Notice that if we sum all of the $z_{ij}$'s corresponding to the edges in the cycle, we get 0:
\begin{equation}
    z_{ij} + z_{jk} + z_{k \ell} + z_{\ell m} + z_{mi} = 0.
\end{equation}
Squaring the above equation gives exactly \eqref{cycle_to_check}. It is straightforward to see that the above argument works for not just cycles of size 5, but cycles of any size, completing the proof of \eqref{qmarkboxP2}, and therefore \eqref{theorem_tree_sum}.

\end{proof}

\begin{exs}
    Let us give some examples of perturbiner expansions of Plebański's second scalar for $N = 1, 2, 3$ using \eqref{theorem_tree_sum}. 

    \vspace{0.5 cm}

    \begin{minipage}{0.25 \textwidth}
    
    \begin{equation*}
        \bigphi(\phi_1) = \phi_1 \vphantom{\frac{1}{z_{12}}}
    \end{equation*}

    \begin{center}
        \tikzset{every picture/.style={line width=0.75pt}} 

\begin{tikzpicture}[x=0.75pt,y=0.75pt,yscale=-1,xscale=1]

\draw  [fill={rgb, 255:red, 255; green, 255; blue, 255 }  ,fill opacity=1 ] (141.44,32.21) .. controls (141.44,25.31) and (147.04,19.71) .. (153.94,19.71) .. controls (160.85,19.71) and (166.44,25.31) .. (166.44,32.21) .. controls (166.44,39.11) and (160.85,44.71) .. (153.94,44.71) .. controls (147.04,44.71) and (141.44,39.11) .. (141.44,32.21) -- cycle ;

\draw (153.94,32.21) node    {$1$};

\end{tikzpicture}
    \end{center}
    \end{minipage}
    \begin{minipage}{0.65 \textwidth}    
    \begin{equation*}
        \bigphi(\phi_1, \phi_2) = \phi_1 + \phi_2 + \frac{1}{z_{12}} \{ \phi_1, \phi_2\}
    \end{equation*}

    \begin{center}
        \tikzset{every picture/.style={line width=0.75pt}} 

\begin{tikzpicture}[x=0.75pt,y=0.75pt,yscale=-1,xscale=1]

\draw    (283.94,42.21) -- (333,42.21) ;
\draw  [fill={rgb, 255:red, 255; green, 255; blue, 255 }  ,fill opacity=1 ] (139.44,43.21) .. controls (139.44,36.31) and (145.04,30.71) .. (151.94,30.71) .. controls (158.85,30.71) and (164.44,36.31) .. (164.44,43.21) .. controls (164.44,50.11) and (158.85,55.71) .. (151.94,55.71) .. controls (145.04,55.71) and (139.44,50.11) .. (139.44,43.21) -- cycle ;
\draw  [fill={rgb, 255:red, 255; green, 255; blue, 255 }  ,fill opacity=1 ] (204.44,43.21) .. controls (204.44,36.31) and (210.04,30.71) .. (216.94,30.71) .. controls (223.85,30.71) and (229.44,36.31) .. (229.44,43.21) .. controls (229.44,50.11) and (223.85,55.71) .. (216.94,55.71) .. controls (210.04,55.71) and (204.44,50.11) .. (204.44,43.21) -- cycle ;
\draw  [fill={rgb, 255:red, 255; green, 255; blue, 255 }  ,fill opacity=1 ] (320.5,42.21) .. controls (320.5,35.31) and (326.1,29.71) .. (333,29.71) .. controls (339.9,29.71) and (345.5,35.31) .. (345.5,42.21) .. controls (345.5,49.11) and (339.9,54.71) .. (333,54.71) .. controls (326.1,54.71) and (320.5,49.11) .. (320.5,42.21) -- cycle ;
\draw  [fill={rgb, 255:red, 255; green, 255; blue, 255 }  ,fill opacity=1 ] (271.44,42.21) .. controls (271.44,35.31) and (277.04,29.71) .. (283.94,29.71) .. controls (290.85,29.71) and (296.44,35.31) .. (296.44,42.21) .. controls (296.44,49.11) and (290.85,54.71) .. (283.94,54.71) .. controls (277.04,54.71) and (271.44,49.11) .. (271.44,42.21) -- cycle ;

\draw (151.94,43.21) node    {$1$};
\draw (216.94,43.21) node    {$2$};
\draw (333,42.21) node    {$2$};
\draw (283.94,42.21) node    {$1$};

\end{tikzpicture}
    \end{center}
    \end{minipage}

    \begin{align} \nonumber
        \bigphi(\phi_1, \phi_2, \phi_3)&= \phi_1 + \phi_2 + \phi_3 \\ 
        &+ \frac{1}{z_{12}} \{ \phi_1, \phi_2\} + \frac{1}{z_{13}} \{ \phi_1, \phi_3\} + \frac{1}{z_{23}} \{ \phi_2, \phi_3\} \\ \nonumber
        &+ \frac{D^{12}}{z_{12}}\frac{D^{23}}{z_{23}} \phi_1 \phi_2 \phi_3 + \frac{D^{13}}{z_{13}}\frac{D^{32}}{z_{32}} \phi_1 \phi_2 \phi_3 +  \frac{D^{21}}{z_{21}}\frac{D^{13}}{z_{13}} \phi_1 \phi_2 \phi_3
    \end{align}

    \begin{center}
        \tikzset{every picture/.style={line width=0.75pt}} 

\begin{tikzpicture}[x=0.75pt,y=0.75pt,yscale=-1,xscale=1]

\draw    (215,126.71) -- (264.06,126.71) ;
\draw  [fill={rgb, 255:red, 255; green, 255; blue, 255 }  ,fill opacity=1 ] (152.44,37.71) .. controls (152.44,30.81) and (158.04,25.21) .. (164.94,25.21) .. controls (171.85,25.21) and (177.44,30.81) .. (177.44,37.71) .. controls (177.44,44.61) and (171.85,50.21) .. (164.94,50.21) .. controls (158.04,50.21) and (152.44,44.61) .. (152.44,37.71) -- cycle ;
\draw  [fill={rgb, 255:red, 255; green, 255; blue, 255 }  ,fill opacity=1 ] (218.44,37.71) .. controls (218.44,30.81) and (224.04,25.21) .. (230.94,25.21) .. controls (237.85,25.21) and (243.44,30.81) .. (243.44,37.71) .. controls (243.44,44.61) and (237.85,50.21) .. (230.94,50.21) .. controls (224.04,50.21) and (218.44,44.61) .. (218.44,37.71) -- cycle ;
\draw  [fill={rgb, 255:red, 255; green, 255; blue, 255 }  ,fill opacity=1 ] (282.44,37.71) .. controls (282.44,30.81) and (288.04,25.21) .. (294.94,25.21) .. controls (301.85,25.21) and (307.44,30.81) .. (307.44,37.71) .. controls (307.44,44.61) and (301.85,50.21) .. (294.94,50.21) .. controls (288.04,50.21) and (282.44,44.61) .. (282.44,37.71) -- cycle ;
\draw    (165.94,126.71) -- (215,126.71) ;
\draw  [fill={rgb, 255:red, 255; green, 255; blue, 255 }  ,fill opacity=1 ] (202.5,126.71) .. controls (202.5,119.81) and (208.1,114.21) .. (215,114.21) .. controls (221.9,114.21) and (227.5,119.81) .. (227.5,126.71) .. controls (227.5,133.61) and (221.9,139.21) .. (215,139.21) .. controls (208.1,139.21) and (202.5,133.61) .. (202.5,126.71) -- cycle ;
\draw  [fill={rgb, 255:red, 255; green, 255; blue, 255 }  ,fill opacity=1 ] (153.44,126.71) .. controls (153.44,119.81) and (159.04,114.21) .. (165.94,114.21) .. controls (172.85,114.21) and (178.44,119.81) .. (178.44,126.71) .. controls (178.44,133.61) and (172.85,139.21) .. (165.94,139.21) .. controls (159.04,139.21) and (153.44,133.61) .. (153.44,126.71) -- cycle ;
\draw  [fill={rgb, 255:red, 255; green, 255; blue, 255 }  ,fill opacity=1 ] (251.56,126.71) .. controls (251.56,119.81) and (257.15,114.21) .. (264.06,114.21) .. controls (270.96,114.21) and (276.56,119.81) .. (276.56,126.71) .. controls (276.56,133.61) and (270.96,139.21) .. (264.06,139.21) .. controls (257.15,139.21) and (251.56,133.61) .. (251.56,126.71) -- cycle ;
\draw    (363,126.71) -- (412.06,126.71) ;
\draw    (313.94,126.71) -- (363,126.71) ;
\draw  [fill={rgb, 255:red, 255; green, 255; blue, 255 }  ,fill opacity=1 ] (349.5,126.71) .. controls (349.5,119.81) and (355.1,114.21) .. (362,114.21) .. controls (368.9,114.21) and (374.5,119.81) .. (374.5,126.71) .. controls (374.5,133.61) and (368.9,139.21) .. (362,139.21) .. controls (355.1,139.21) and (349.5,133.61) .. (349.5,126.71) -- cycle ;
\draw  [fill={rgb, 255:red, 255; green, 255; blue, 255 }  ,fill opacity=1 ] (300.44,126.71) .. controls (300.44,119.81) and (306.04,114.21) .. (312.94,114.21) .. controls (319.85,114.21) and (325.44,119.81) .. (325.44,126.71) .. controls (325.44,133.61) and (319.85,139.21) .. (312.94,139.21) .. controls (306.04,139.21) and (300.44,133.61) .. (300.44,126.71) -- cycle ;
\draw  [fill={rgb, 255:red, 255; green, 255; blue, 255 }  ,fill opacity=1 ] (398.56,126.71) .. controls (398.56,119.81) and (404.15,114.21) .. (411.06,114.21) .. controls (417.96,114.21) and (423.56,119.81) .. (423.56,126.71) .. controls (423.56,133.61) and (417.96,139.21) .. (411.06,139.21) .. controls (404.15,139.21) and (398.56,133.61) .. (398.56,126.71) -- cycle ;
\draw    (510,126.71) -- (559.06,126.71) ;
\draw    (460.94,126.71) -- (510,126.71) ;
\draw  [fill={rgb, 255:red, 255; green, 255; blue, 255 }  ,fill opacity=1 ] (497.5,126.71) .. controls (497.5,119.81) and (503.1,114.21) .. (510,114.21) .. controls (516.9,114.21) and (522.5,119.81) .. (522.5,126.71) .. controls (522.5,133.61) and (516.9,139.21) .. (510,139.21) .. controls (503.1,139.21) and (497.5,133.61) .. (497.5,126.71) -- cycle ;
\draw  [fill={rgb, 255:red, 255; green, 255; blue, 255 }  ,fill opacity=1 ] (448.44,126.71) .. controls (448.44,119.81) and (454.04,114.21) .. (460.94,114.21) .. controls (467.85,114.21) and (473.44,119.81) .. (473.44,126.71) .. controls (473.44,133.61) and (467.85,139.21) .. (460.94,139.21) .. controls (454.04,139.21) and (448.44,133.61) .. (448.44,126.71) -- cycle ;
\draw  [fill={rgb, 255:red, 255; green, 255; blue, 255 }  ,fill opacity=1 ] (546.56,126.71) .. controls (546.56,119.81) and (552.15,114.21) .. (559.06,114.21) .. controls (565.96,114.21) and (571.56,119.81) .. (571.56,126.71) .. controls (571.56,133.61) and (565.96,139.21) .. (559.06,139.21) .. controls (552.15,139.21) and (546.56,133.61) .. (546.56,126.71) -- cycle ;
\draw    (164.94,81.71) -- (214,81.71) ;
\draw  [fill={rgb, 255:red, 255; green, 255; blue, 255 }  ,fill opacity=1 ] (201.5,81.71) .. controls (201.5,74.81) and (207.1,69.21) .. (214,69.21) .. controls (220.9,69.21) and (226.5,74.81) .. (226.5,81.71) .. controls (226.5,88.61) and (220.9,94.21) .. (214,94.21) .. controls (207.1,94.21) and (201.5,88.61) .. (201.5,81.71) -- cycle ;
\draw  [fill={rgb, 255:red, 255; green, 255; blue, 255 }  ,fill opacity=1 ] (152.44,81.71) .. controls (152.44,74.81) and (158.04,69.21) .. (164.94,69.21) .. controls (171.85,69.21) and (177.44,74.81) .. (177.44,81.71) .. controls (177.44,88.61) and (171.85,94.21) .. (164.94,94.21) .. controls (158.04,94.21) and (152.44,88.61) .. (152.44,81.71) -- cycle ;
\draw    (265.94,81.71) -- (315,81.71) ;
\draw  [fill={rgb, 255:red, 255; green, 255; blue, 255 }  ,fill opacity=1 ] (302.5,81.71) .. controls (302.5,74.81) and (308.1,69.21) .. (315,69.21) .. controls (321.9,69.21) and (327.5,74.81) .. (327.5,81.71) .. controls (327.5,88.61) and (321.9,94.21) .. (315,94.21) .. controls (308.1,94.21) and (302.5,88.61) .. (302.5,81.71) -- cycle ;
\draw  [fill={rgb, 255:red, 255; green, 255; blue, 255 }  ,fill opacity=1 ] (253.44,81.71) .. controls (253.44,74.81) and (259.04,69.21) .. (265.94,69.21) .. controls (272.85,69.21) and (278.44,74.81) .. (278.44,81.71) .. controls (278.44,88.61) and (272.85,94.21) .. (265.94,94.21) .. controls (259.04,94.21) and (253.44,88.61) .. (253.44,81.71) -- cycle ;
\draw    (368.94,80.71) -- (418,80.71) ;
\draw  [fill={rgb, 255:red, 255; green, 255; blue, 255 }  ,fill opacity=1 ] (403.5,80.71) .. controls (403.5,73.81) and (409.1,68.21) .. (416,68.21) .. controls (422.9,68.21) and (428.5,73.81) .. (428.5,80.71) .. controls (428.5,87.61) and (422.9,93.21) .. (416,93.21) .. controls (409.1,93.21) and (403.5,87.61) .. (403.5,80.71) -- cycle ;
\draw  [fill={rgb, 255:red, 255; green, 255; blue, 255 }  ,fill opacity=1 ] (356.44,80.71) .. controls (356.44,73.81) and (362.04,68.21) .. (368.94,68.21) .. controls (375.85,68.21) and (381.44,73.81) .. (381.44,80.71) .. controls (381.44,87.61) and (375.85,93.21) .. (368.94,93.21) .. controls (362.04,93.21) and (356.44,87.61) .. (356.44,80.71) -- cycle ;

\draw (164.94,37.71) node    {$1$};
\draw (230.94,37.71) node    {$2$};
\draw (294.94,37.71) node    {$3$};
\draw (215,126.71) node    {$2$};
\draw (165.94,126.71) node    {$1$};
\draw (264.06,126.71) node    {$3$};
\draw (362,126.71) node    {$3$};
\draw (312.94,126.71) node    {$1$};
\draw (411.06,126.71) node    {$2$};
\draw (510,126.71) node    {$1$};
\draw (460.94,126.71) node    {$2$};
\draw (559.06,126.71) node    {$3$};
\draw (214,81.71) node    {$2$};
\draw (164.94,81.71) node    {$1$};
\draw (315,81.71) node    {$3$};
\draw (265.94,81.71) node    {$1$};
\draw (416,80.71) node    {$3$};
\draw (368.94,80.71) node    {$2$};

\end{tikzpicture}
    \end{center}

\end{exs}

\section{Linear perturbations on a SD background}\label{sec3}

\subsection{Linear SD perturbations and the recursion operator}\label{sec3_3}

Here we study the set of \emph{linear perturbations} to Plebański's second scalar. A linear perturbation $\phi + \delta \phi$ satisfies the linearized e.o.m.

\begin{equation}\label{linearized_2nd_eq}
    \Box \delta \phi - \{ \partial_u \delta \phi, \partial_w \phi \} - \{ \partial_u \phi, \partial_w \delta  \phi \} = 0.
\end{equation}

\noindent If $\phi$ can be expressed as a perturbiner expansion of some set of seed functions, then we can always construct a $\delta \phi$ which is created by adding in \emph{one extra seed function} to the set. As a useful piece of notation, let us denote the change in the perturbiner associated to the addition of this single seed function with a vertical line by

\begin{equation}\label{lin_pert_def}
    \bigphi( \phi_1, \ldots, \phi_N \rvert \phi_{I} ) \equiv \bigphi(\phi_1, \ldots, \phi_N, \phi_{I} ) - \bigphi(\phi_1, \ldots, \phi_N)
\end{equation}
where $I$ is a new index not contained in $\{ 1, \ldots, N\}$. 

\begin{definition}
    Let $\mT_{N|I}$ denote the set of all marked tree graphs, necessarily containing the node $I$, with distinct node labels in the set $\{1, \ldots, N\} \cup \{I\}$.
\end{definition}

It is straightforward to see that
\begin{equation}\label{lin_graphs}
    \bigphi( \phi_1, \ldots, \phi_N \rvert \phi_{I} ) = \sum_{\bt \, \in \, \mT_{N | I}} \phi_{\bt} .
\end{equation}
Furthermore, as stated above, we know that
\begin{equation}\label{phi_deltaphi}
    \phi = \bigphi(\phi_1, \ldots, \phi_N) \, , \hspace{1 cm} \delta \phi = \bigphi(\phi_1, \ldots, \phi_N \rvert \phi_I) \, ,
\end{equation}
satisfies \eqref{linearized_2nd_eq}.

\begin{ex}
Let us explicitly write out $\delta \phi$ above when $N = 2$, $I = 3$, and draw all the corresponding graphs in $\mT_{2|3}$.
\begin{equation}
\begin{aligned}
    \bigphi(\phi_1, \phi_2 | \phi_3) &= \phi_3 + \frac{1}{z_{13}} \{ \phi_1, \phi_3 \} + \frac{1}{z_{23}} \{ \phi_2, \phi_3 \} \\
    &+ \frac{D^{12}}{z_{12}}\frac{D^{23}}{z_{23}} \phi_1 \phi_2 \phi_3 + \frac{D^{13}}{z_{13}}\frac{D^{32}}{z_{32}} \phi_1 \phi_2 \phi_3 +  \frac{D^{21}}{z_{21}}\frac{D^{13}}{z_{13}} \phi_1 \phi_2 \phi_3
\end{aligned}
\end{equation}

    \begin{center}
        \tikzset{every picture/.style={line width=0.75pt}} 

\begin{tikzpicture}[x=0.75pt,y=0.75pt,yscale=-1,xscale=1]

\draw    (230.94,37.71) -- (280,37.71) ;
\draw    (215,81.71) -- (264.06,81.71) ;
\draw  [fill={rgb, 255:red, 255; green, 255; blue, 255 }  ,fill opacity=1 ] (152.44,37.71) .. controls (152.44,30.81) and (158.04,25.21) .. (164.94,25.21) .. controls (171.85,25.21) and (177.44,30.81) .. (177.44,37.71) .. controls (177.44,44.61) and (171.85,50.21) .. (164.94,50.21) .. controls (158.04,50.21) and (152.44,44.61) .. (152.44,37.71) -- cycle ;
\draw  [fill={rgb, 255:red, 255; green, 255; blue, 255 }  ,fill opacity=1 ] (218.44,37.71) .. controls (218.44,30.81) and (224.04,25.21) .. (230.94,25.21) .. controls (237.85,25.21) and (243.44,30.81) .. (243.44,37.71) .. controls (243.44,44.61) and (237.85,50.21) .. (230.94,50.21) .. controls (224.04,50.21) and (218.44,44.61) .. (218.44,37.71) -- cycle ;
\draw    (165.94,81.71) -- (215,81.71) ;
\draw  [fill={rgb, 255:red, 255; green, 255; blue, 255 }  ,fill opacity=1 ] (202.5,81.71) .. controls (202.5,74.81) and (208.1,69.21) .. (215,69.21) .. controls (221.9,69.21) and (227.5,74.81) .. (227.5,81.71) .. controls (227.5,88.61) and (221.9,94.21) .. (215,94.21) .. controls (208.1,94.21) and (202.5,88.61) .. (202.5,81.71) -- cycle ;
\draw  [fill={rgb, 255:red, 255; green, 255; blue, 255 }  ,fill opacity=1 ] (153.44,81.71) .. controls (153.44,74.81) and (159.04,69.21) .. (165.94,69.21) .. controls (172.85,69.21) and (178.44,74.81) .. (178.44,81.71) .. controls (178.44,88.61) and (172.85,94.21) .. (165.94,94.21) .. controls (159.04,94.21) and (153.44,88.61) .. (153.44,81.71) -- cycle ;
\draw  [fill={rgb, 255:red, 255; green, 255; blue, 255 }  ,fill opacity=1 ] (251.56,81.71) .. controls (251.56,74.81) and (257.15,69.21) .. (264.06,69.21) .. controls (270.96,69.21) and (276.56,74.81) .. (276.56,81.71) .. controls (276.56,88.61) and (270.96,94.21) .. (264.06,94.21) .. controls (257.15,94.21) and (251.56,88.61) .. (251.56,81.71) -- cycle ;
\draw    (363,81.71) -- (412.06,81.71) ;
\draw    (313.94,81.71) -- (363,81.71) ;
\draw  [fill={rgb, 255:red, 255; green, 255; blue, 255 }  ,fill opacity=1 ] (349.5,81.71) .. controls (349.5,74.81) and (355.1,69.21) .. (362,69.21) .. controls (368.9,69.21) and (374.5,74.81) .. (374.5,81.71) .. controls (374.5,88.61) and (368.9,94.21) .. (362,94.21) .. controls (355.1,94.21) and (349.5,88.61) .. (349.5,81.71) -- cycle ;
\draw  [fill={rgb, 255:red, 255; green, 255; blue, 255 }  ,fill opacity=1 ] (300.44,81.71) .. controls (300.44,74.81) and (306.04,69.21) .. (312.94,69.21) .. controls (319.85,69.21) and (325.44,74.81) .. (325.44,81.71) .. controls (325.44,88.61) and (319.85,94.21) .. (312.94,94.21) .. controls (306.04,94.21) and (300.44,88.61) .. (300.44,81.71) -- cycle ;
\draw  [fill={rgb, 255:red, 255; green, 255; blue, 255 }  ,fill opacity=1 ] (398.56,81.71) .. controls (398.56,74.81) and (404.15,69.21) .. (411.06,69.21) .. controls (417.96,69.21) and (423.56,74.81) .. (423.56,81.71) .. controls (423.56,88.61) and (417.96,94.21) .. (411.06,94.21) .. controls (404.15,94.21) and (398.56,88.61) .. (398.56,81.71) -- cycle ;
\draw    (510,81.71) -- (559.06,81.71) ;
\draw    (460.94,81.71) -- (510,81.71) ;
\draw  [fill={rgb, 255:red, 255; green, 255; blue, 255 }  ,fill opacity=1 ] (497.5,81.71) .. controls (497.5,74.81) and (503.1,69.21) .. (510,69.21) .. controls (516.9,69.21) and (522.5,74.81) .. (522.5,81.71) .. controls (522.5,88.61) and (516.9,94.21) .. (510,94.21) .. controls (503.1,94.21) and (497.5,88.61) .. (497.5,81.71) -- cycle ;
\draw  [fill={rgb, 255:red, 255; green, 255; blue, 255 }  ,fill opacity=1 ] (448.44,81.71) .. controls (448.44,74.81) and (454.04,69.21) .. (460.94,69.21) .. controls (467.85,69.21) and (473.44,74.81) .. (473.44,81.71) .. controls (473.44,88.61) and (467.85,94.21) .. (460.94,94.21) .. controls (454.04,94.21) and (448.44,88.61) .. (448.44,81.71) -- cycle ;
\draw  [fill={rgb, 255:red, 255; green, 255; blue, 255 }  ,fill opacity=1 ] (546.56,81.71) .. controls (546.56,74.81) and (552.15,69.21) .. (559.06,69.21) .. controls (565.96,69.21) and (571.56,74.81) .. (571.56,81.71) .. controls (571.56,88.61) and (565.96,94.21) .. (559.06,94.21) .. controls (552.15,94.21) and (546.56,88.61) .. (546.56,81.71) -- cycle ;
\draw  [fill={rgb, 255:red, 255; green, 255; blue, 255 }  ,fill opacity=1 ] (267.5,37.71) .. controls (267.5,30.81) and (273.1,25.21) .. (280,25.21) .. controls (286.9,25.21) and (292.5,30.81) .. (292.5,37.71) .. controls (292.5,44.61) and (286.9,50.21) .. (280,50.21) .. controls (273.1,50.21) and (267.5,44.61) .. (267.5,37.71) -- cycle ;
\draw    (333.94,36.71) -- (383,36.71) ;
\draw  [fill={rgb, 255:red, 255; green, 255; blue, 255 }  ,fill opacity=1 ] (370.5,36.71) .. controls (370.5,29.81) and (376.1,24.21) .. (383,24.21) .. controls (389.9,24.21) and (395.5,29.81) .. (395.5,36.71) .. controls (395.5,43.61) and (389.9,49.21) .. (383,49.21) .. controls (376.1,49.21) and (370.5,43.61) .. (370.5,36.71) -- cycle ;
\draw  [fill={rgb, 255:red, 255; green, 255; blue, 255 }  ,fill opacity=1 ] (321.44,36.71) .. controls (321.44,29.81) and (327.04,24.21) .. (333.94,24.21) .. controls (340.85,24.21) and (346.44,29.81) .. (346.44,36.71) .. controls (346.44,43.61) and (340.85,49.21) .. (333.94,49.21) .. controls (327.04,49.21) and (321.44,43.61) .. (321.44,36.71) -- cycle ;

\draw (164.94,37.71) node    {$3$};
\draw (230.94,37.71) node    {$1$};
\draw (215,81.71) node    {$2$};
\draw (165.94,81.71) node    {$1$};
\draw (264.06,81.71) node    {$3$};
\draw (362,81.71) node    {$3$};
\draw (312.94,81.71) node    {$1$};
\draw (411.06,81.71) node    {$2$};
\draw (510,81.71) node    {$1$};
\draw (460.94,81.71) node    {$2$};
\draw (559.06,81.71) node    {$3$};
\draw (280,37.71) node    {$3$};
\draw (383,36.71) node    {$3$};
\draw (333.94,36.71) node    {$2$};

\end{tikzpicture}
    \end{center}

\end{ex}

In SDG there exists a certain ``recursion operator'' $\mR$ which is related to the integrability of the theory \cite{dunajski2009solitons}. This operator maps the space of linear perturbations of the Plebański scalar into itself, meaning if $\delta \phi$ satisfies \eqref{linearized_2nd_eq} then $\mR \delta \phi$ will also satisfy \eqref{linearized_2nd_eq}.

In particular, $\mR \delta \phi$ is defined as the solution to the pair of differential equations
\begin{equation}\label{R_grav}
\begin{aligned}
    \partial_u ( \mR \delta \phi) &= \partial_\bw \delta \phi + \{ \partial_u \phi, \delta \phi \}, \\
    \partial_w ( \mR \delta \phi) &= \partial_\bu \delta \phi + \{ \partial_w \phi, \delta \phi \}.
\end{aligned}
\end{equation}

The compatibility of the two equations \eqref{R_grav} follows from the fact that $\delta \phi$ is a linear perturbation of $\phi$:
\begin{equation}\label{Eq38}
    \partial_u ( \partial_w ( \mR \delta \phi) ) - \partial_w ( \partial_u ( \mR \delta \phi) ) = \Box \delta \phi - \{ \partial_u \delta \phi, \partial_w \phi \} - \{ \partial_u \phi, \partial_w \delta \phi \} = 0.
\end{equation}
Furthermore, $\mR \delta \phi$ is guaranteed to solve the linearized e.o.m.\ by 
\begin{equation}\label{Eq39}
    \Box \, \mR \delta \phi  - \{ \partial_u \mR \delta \phi, \partial_w \phi\} -\{ \partial_u \phi, \partial_w \mR \delta \phi \}= \{\Box \phi - \{\partial_u \phi, \partial_w \phi\}, \delta \phi\} = 0.
\end{equation}

A explicit set of solutions $\phi$ and $\delta \phi$ was provided by our perturbiner expansion in \eqref{phi_deltaphi}. One might wonder how $\mR$ acts on such a $\delta \phi$. The result is given by the following theorem.

\begin{thm}\label{thmRz}
    \begin{equation}
        \tcboxmath{\mR\,  \bigphi(\phi_1, \ldots, \phi_N \rvert \phi_I ) = - z_I \times  \bigphi(\phi_1, \ldots, \phi_N \rvert \phi_I ) } 
    \end{equation}
\end{thm}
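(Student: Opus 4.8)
The plan is to verify directly that $-z_I\,\bigphi(\phi_1,\ldots,\phi_N\rvert\phi_I)$ satisfies the two defining equations \eqref{R_grav} of the recursion operator, with $\delta\phi = \bigphi(\phi_1,\ldots,\phi_N\rvert\phi_I) = \sum_{\bt\in\mT_{N|I}}\phi_{\bt}$ from \eqref{lin_graphs} and $\phi = \sum_{\mathbf{s}\in\mT_N}\phi_{\mathbf{s}}$. Since $z_I$ is a constant, $\partial_u(-z_I\delta\phi) = -z_I\partial_u\delta\phi$, so the first equation of \eqref{R_grav} amounts to checking $-z_I\,\partial_u\delta\phi = \partial_{\bar w}\delta\phi + \{\partial_u\phi,\delta\phi\}$, and the second is the same statement with $\partial_u\leftrightarrow\partial_w$ and $\partial_{\bar w}\leftrightarrow\partial_{\bar u}$. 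First I would treat the first equation in full and remark that the second is verbatim identical.

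The first step is to rewrite the two left-hand terms node by node. Because the edge operators $D^{ij}/z_{ij}$ are tagged constant-coefficient derivatives, they commute with $\partial_u$ and $\partial_{\bar w}$, and each seed obeys $\partial_{\bar w}\phi_i = -z_i\partial_u\phi_i$ from \eqref{z_diff_relation}. Writing $\partial_u^{(k)}$ for the derivative acting only on node $k$, this gives for each graph
\[ \partial_u\phi_{\bt} = \sum_{k}\partial_u^{(k)}\phi_{\bt}, \qquad \partial_{\bar w}\phi_{\bt} = -\sum_{k} z_k\,\partial_u^{(k)}\phi_{\bt}. \]
Combining, $-z_I\partial_u\delta\phi - \partial_{\bar w}\delta\phi = \sum_{\bt}\sum_{k}(z_k-z_I)\,\partial_u^{(k)}\phi_{\bt}$. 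The crucial simplification is that the $k=I$ term vanishes identically, so only background nodes $k\neq I$ survive, each weighted by $z_{kI}=z_k-z_I$. The proof thus reduces to the combinatorial identity
\[ \sum_{\bt\in\mT_{N|I}}\sum_{k\neq I} z_{kI}\,\partial_u^{(k)}\phi_{\bt} = \{\partial_u\phi,\delta\phi\}. \]

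To match the right-hand side I would expand $\{\partial_u\phi,\delta\phi\}=\sum_{\mathbf{s}\in\mT_N}\sum_{\bt'\in\mT_{N|I}}\{\partial_u\phi_{\mathbf{s}},\phi_{\bt'}\}$ with the diagrammatic rule from the proof of Theorem~\ref{thm_21}: the bracket joins a node $a$ of $\mathbf{s}$ to a node $b$ of $\bt'$ by one wiggly edge $D^{ab}=z_{ab}(D^{ab}/z_{ab})$, while the extra $\partial_u$ spreads as $\partial_u^{\mathbf{s}}=\sum_{c\in\mathbf{s}}\partial_u^{(c)}$ over the $\mathbf{s}$ side. Joining two trees by one edge yields a tree in $\mT_{N|I}$, so I can reorganize the double sum as a single sum over the resulting tree $\bt$ and a choice of edge $e$, where removing $e$ splits $\bt$ into the $I$-containing piece and the remaining piece $\mathbf{s}_e$, giving
\[ \{\partial_u\phi,\delta\phi\} = \sum_{\bt\in\mT_{N|I}}\;\sum_{e=\{a,b\}\in\,\text{edges}(\bt)} z_{ab}\,\partial_u^{\mathbf{s}_e}\phi_{\bt}, \]
with $a$ the endpoint on the non-$I$ side. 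I would then compare coefficients of $\partial_u^{(k)}\phi_{\bt}$ for each fixed $\bt$ and $k\neq I$: node $k$ receives $z_{ab}=z_a-z_b$ from exactly those edges with $k$ on the non-$I$ side, i.e.\ precisely the edges on the unique path $k=v_0,v_1,\ldots,v_m=I$; orienting each from $v_{r-1}$ toward $I$, the contribution telescopes to $\sum_{r=1}^m(z_{v_{r-1}}-z_{v_r})=z_k-z_I=z_{kI}$, matching the left-hand coefficient. This establishes the identity and hence the first equation of \eqref{R_grav}, with the second following by the symmetric substitution.

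I expect the main obstacle to be pinning down the orientation and sign conventions of the wiggly edges—in particular that $D^{ab}$ carries $a$ from the first ($\partial_u\phi$) slot and $b$ from the second ($\delta\phi$) slot, so that $z_{ab}=z_a-z_b$ is oriented \emph{away} from $I$ along the path. Once this edge-to-path correspondence is fixed, the telescoping collapses cleanly to $z_{kI}$ rather than its negative, and the remaining algebra is routine.
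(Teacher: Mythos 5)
Your proposal is correct and takes essentially the same route as the paper's proof: both verify the defining equations \eqref{R_grav} directly for $\mR\,\delta\phi=-z_I\,\delta\phi$, expand the bracket $\{\partial_u\phi,\delta\phi\}$ diagrammatically so that each wiggly edge contributes a factor $z_{ab}$, and exploit the uniqueness of the path from the decorated node to $I$ in a tree so that these factors telescope to $-z_I$ (equivalently to $z_{kI}$ once you move the $\partial_{\bw}$ term to the left-hand side). The only difference is presentational—you compare coefficients of $\partial_u^{(k)}\phi_{\bt}$ via an explicit edge-to-path correspondence, while the paper groups graphs into equivalence classes and sums their numerical prefactors pictorially.
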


\begin{proof} Proving this statement is equivalent to checking the following equality,
\begin{equation}\label{R_grav_toprove}
    \partial_u \left( - z_I \sum_{\bt \in \mT_{N | I}} \phi_{\bt} \right) \overset{?}{=} \partial_{\bw} \left( \sum_{\bt \in \mT_{N | I}} \phi_{\bt} \right) + \Big\{ \partial_u \left( \sum_{\bt \in \mT_{N}} \phi_{\bt}\right) , \sum_{\bt \in \mT_{N  | I}} \phi_{\bt}\Big\}
\end{equation}
which is the first equation of \eqref{R_grav}. The second equation will follow by an analogous argument.

Let us think about which graphs appear in each of the three terms in the above equation. In the first term, by the product rule of differentiation, we simply sum over all graphs in $\mT_{N|I}$ where $-z_I \partial_u$ is to ``decorate'' any of the nodes in the graph. (If $-z_I \partial_u$ decorates the node $i$, this amounts to acting with $-z_I \partial_u^{(i)}$ on the corresponding graph in the perturbiner expansion.) In the second term, we similarly sum over all graphs in $\mT_{N|I}$ but decorate one node with $\partial_{\bw}$ instead of $-z_I \partial_u$. For the third term, we sum over all graphs which consist of a wiggly line pointing from a graph in $\mT_N$ with a node decorated by $\partial_u$ to a graph containing the node $I$. 

Notice that \emph{all} of the graphs described above have one thing in common: they all contain the node $I$, and they all contain one ``decorated'' node, whether that be by $-z_I \partial_u$, $\partial_\bw$, or $\partial_u$. Of course, some of the graphs also contain a wiggly line. Let us therefore partition the graphs in groups where all the graphs in each group would look the same if we were to convert any wiggly line into a straight line, and if we were to regard all possible ``decorations'' as the same. If we can check that all of the graphs contained in such a group satisfy \eqref{R_grav_toprove}, then we are done. An example for a typical set of graphs in such a group is given below in the first line.
\begin{center}
    \input{figures/gravity_graph_16}
\end{center}
Because all of our graphs are tree graphs, there is always exactly one path between any two nodes in the graph. Therefore, there is always one path between the node $I$ and the decorated node. When it comes to the graphs containing a wiggly line, this wiggly line will always lie somewhere on this path. 

We then simplify the RHS of the above graphical equation using two facts. The first fact is that $\partial_{\bw} \phi_i = - z_i \partial_u \phi_i$, \eqref{z_diff_relation}. The second fact is that a wiggly line going from $i$ to $j$ can be replaced with a straight line if we multiply by $z_{ij}$. (See Figures \ref{fig:7452634} and \ref{fig:45635}.) The RHS then becomes a numerical factor times the very graph on the LHS. 
However, if we sum up the numerical factors along the chain from $i$ to $I$, we see they equal $-z_I$, matching precisely with the LHS and completing the proof.
\end{proof}

\subsection{Linear ASD perturbations}

An incredible fact is that linearized \emph{anti-self-dual} perturbations on a self-dual background satisfy the exact same equation of motion as the linearized \emph{self-dual} perturbations! This fact will become very important when it comes time to add two negative helicity gravitons to our spacetime and compute the MHV amplitude.

In appendix \ref{appB} we review many basic facts about SDG, and we refer the reader there to find the proofs of the statements below. If we define $\Psi_{ABCD}$ to be the ASD part of the Riemann curvature tensor, then this object satisfies the Bianchi identity
\begin{equation}
\begin{aligned}
    0 = \nabla^A_{\;\;\dot A} \Psi_{ABCD} = (e^A_{\;\; \dot A})^\mu \partial_\mu \Psi_{ABCD} &-  (\Gamma_{A}^{\;\;\; E})^A_{\;\;\dot A} \Psi_{EBCD} -  (\Gamma_{B}^{\;\;\; E})^A_{\;\; \dot A} \Psi_{AECD} \\
    &-(\Gamma_{C}^{\;\;\; E})^A_{\;\;\dot A} \Psi_{ABED} -  (\Gamma_{D}^{\;\;\; E})^A_{\;\; \dot A} \Psi_{ABCE}
    \end{aligned}
\end{equation}
where $\Gamma_{AB}$ is the ASD part of the spin connection 1-form and $e_{A \dot A}$ denotes a basis of four vierbein vector fields.  If $\theta^{A \dot A} = \theta^{A \dot A}_\mu d x^\mu$ are a basis of tetrad 1-forms satisfying $ds^2 = \tfrac{1}{2} \vep_{AB} \vep_{\dot A \dot B} \theta^{A \dot A} \theta^{B \dot B}$, then the vierbein is given by $e_{A \dot A}^\mu = g^{\mu \nu} \vep_{A B} \vep_{\dot A \dot B} \theta^{B \dot B}_\nu$. For a SD metric of the form \eqref{eq2},  a convenient choice of tetrads is given by
\begin{equation}\label{conv_tetrad}
\begin{aligned}
    \theta^{1 \dot 1} = 2 d \bu \, , & & & & & & & & &\theta^{2 \dot 1} = 2( d w - (\partial_u \partial_w \phi ) d \bu - (\partial_u^2 \phi) d \bw ) \, ,\\
    \theta^{1 \dot 2} =2 d \bw \, ,& & & & & & & & &\theta^{2 \dot 2} = 2 (d u + ( \partial_w^2 \phi) d \bu + (\partial_u \partial_w \phi) d \bw ) \, ,
\end{aligned}
\end{equation}
with the associated vierbein being
\begin{equation}\label{conv_vierbein}
\begin{aligned}
    e^\mu_{1 \dot 1}\partial_\mu &=  \partial_{\bu} +  (\partial_u \partial_w \phi) \partial_w - (\partial_w^2 \phi) \partial_u \, , & & & & & & & & & e^\mu_{2 \dot 1}\partial_\mu &= \partial_w \, ,\\
    e^\mu_{1 \dot 2}\partial_\mu &=  \partial_{\bw} + (\partial_u^2 \phi)\partial_w - (\partial_u \partial_w \phi)\partial_u \, ,& & & & & & & & & e^\mu_{2 \dot 2}\partial_\mu &= \partial_u \, . 
\end{aligned}
\end{equation}

In a SD background, $(\Psi_0)_{ABCD} = 0$ and $(\Gamma_0)_{AB} = 0$ for our above choice of vierbein. If we then denote $\psi_{ABCD}$ to be a linearized perturbation around $(\Psi_0)_{ABCD}$, it therefore satisfies
\begin{equation}
     0 = (e^A_{\;\;\dot A})^\mu \partial_\mu \psi_{ABCD}
\end{equation}
where $e_{A\dot A}$ above is the vierbein of the SD background.

Plugging in the formula for $e_{A \dot A}$, this reduces to the two equations, for $\dot A = \dot 1, \dot 2$,
\begin{equation}\label{dotA12}
\begin{aligned}
    \dot A = \dot 1:& \hspace{1 cm} 0 = -\partial_w \psi_{1BCD} + \partial_\bu \psi_{2BCD} + \{ \partial_w \phi, \psi_{2BCD} \} \, ,\\
    \dot A = \dot 2:& \hspace{1 cm} 0 = -\partial_u \psi_{1BCD} + \partial_\bw \psi_{2BCD} + \{ \partial_u \phi, \psi_{2BCD} \} \, .
\end{aligned}
\end{equation}
After a comparison with \eqref{R_grav}, we identify that \eqref{dotA12} is simply the equation
\begin{equation}\label{psiR}
    \psi_{1BCD} = \mR \, \psi_{2BCD}.
\end{equation}
The two equations \eqref{dotA12} can be rearranged to show that both $\psi_{1BCD}$ and $\psi_{2BCD}$ satisfy \eqref{linearized_2nd_eq} (see \eqref{Eq38} and \eqref{Eq39}):
\begin{equation}\label{eq316}
    \Box \psi_{ABCD} - \{ \partial_u \psi_{ABCD}, \partial_w \phi \} - \{ \partial_u \phi, \partial_w \psi_{ABCD} \} = 0.
\end{equation}

Furthermore, Theorem \ref{thmRz} tells us that \eqref{psiR} implies $\psi_{1 BCD} = - z_I \psi_{2BCD}$, and the index symmetry $\psi_{ABCD} = \psi_{(ABCD)}$ implies
\begin{equation}\label{psi_I}
    \psi^{I}_{ABCD} =  -\frac{1}{8} \, \kappa^{I}_A \kappa^{I}_B \kappa^{I}_C \kappa^{I}_D \, \bigphi(\phi_1, \ldots, \phi_N \rvert \phi_I )
\end{equation}
will always be a solution to \eqref{dotA12}, where $\kappa^I_A = (-z_I, 1)$. In fact, this is the solution that corresponds to adding one ASD graviton $h_{\mu \nu} = \epsilon_I \vep_{\mu \nu}^{-,I} e^{i p_I \cdot X}$ to our SD background! The prefactor of $-1/8$ can be fixed by matching with the trivial $N = 0$ case where one ASD graviton is added to flat space.

\subsection{Writing linear perturbations as an exponential}

In this section we will show that a linear perturbation to the Plebański scalar can be written as the exponential of a certain set of graphs.

We begin by defining a new node shaped like a diamond. The difference between the diamond node and the circular node is that, for a diamond node, one is not instructed to multiply the overall expression by the corresponding seed function. Here we give two examples of the use of the diamond node, one with a straight edge and one with a wiggly edge. Notice the absence of the seed function $\phi_I$ in the following expressions.

\begin{center}
    \tikzset{every picture/.style={line width=0.75pt}} 

\begin{tikzpicture}[x=0.75pt,y=0.75pt,yscale=-1,xscale=1]

\draw    (118,104.7) -- (191,104.7) ;
\draw  [fill={rgb, 255:red, 255; green, 255; blue, 255 }  ,fill opacity=1 ] (118,87.02) -- (135.68,104.7) -- (118,122.38) -- (100.32,104.7) -- cycle ;
\draw    (363.83,104.7) .. controls (365.5,103.03) and (367.16,103.03) .. (368.83,104.7) .. controls (370.5,106.37) and (372.16,106.37) .. (373.83,104.7) .. controls (375.5,103.03) and (377.16,103.03) .. (378.83,104.7) .. controls (380.5,106.37) and (382.16,106.37) .. (383.83,104.7) .. controls (385.5,103.03) and (387.16,103.03) .. (388.83,104.7) .. controls (390.5,106.37) and (392.16,106.37) .. (393.83,104.7) .. controls (395.5,103.03) and (397.16,103.03) .. (398.83,104.7) .. controls (400.5,106.37) and (402.16,106.37) .. (403.83,104.7) .. controls (405.5,103.03) and (407.16,103.03) .. (408.83,104.7) .. controls (410.5,106.37) and (412.16,106.37) .. (413.83,104.7) .. controls (415.5,103.03) and (417.16,103.03) .. (418.83,104.7) .. controls (420.5,106.37) and (422.16,106.37) .. (423.83,104.7) .. controls (425.5,103.03) and (427.16,103.03) .. (428.83,104.7) .. controls (430.5,106.37) and (432.16,106.37) .. (433.83,104.7) -- (436.83,104.7) -- (436.83,104.7) ;
\draw [shift={(405.33,104.7)}, rotate = 180] [fill={rgb, 255:red, 0; green, 0; blue, 0 }  ][line width=0.08]  [draw opacity=0] (10.72,-5.15) -- (0,0) -- (10.72,5.15) -- (7.12,0) -- cycle    ;
\draw  [fill={rgb, 255:red, 255; green, 255; blue, 255 }  ,fill opacity=1 ] (424.33,104.7) .. controls (424.33,97.8) and (429.93,92.2) .. (436.83,92.2) .. controls (443.74,92.2) and (449.33,97.8) .. (449.33,104.7) .. controls (449.33,111.6) and (443.74,117.2) .. (436.83,117.2) .. controls (429.93,117.2) and (424.33,111.6) .. (424.33,104.7) -- cycle ;
\draw  [fill={rgb, 255:red, 255; green, 255; blue, 255 }  ,fill opacity=1 ] (178.5,104.7) .. controls (178.5,97.8) and (184.1,92.2) .. (191,92.2) .. controls (197.9,92.2) and (203.5,97.8) .. (203.5,104.7) .. controls (203.5,111.6) and (197.9,117.2) .. (191,117.2) .. controls (184.1,117.2) and (178.5,111.6) .. (178.5,104.7) -- cycle ;
\draw  [fill={rgb, 255:red, 255; green, 255; blue, 255 }  ,fill opacity=1 ] (363.83,87.02) -- (381.51,104.7) -- (363.83,122.38) -- (346.16,104.7) -- cycle ;

\draw (363.83,104.7) node    {$I$};
\draw (436.83,104.7) node    {$i$};
\draw (451.33,104.7) node [anchor=west] [inner sep=0.75pt]    {$\ =\ [ Ii] \  \phi _{i}$};
\draw (118,104.7) node    {$I$};
\draw (191,104.7) node    {$i$};
\draw (205.5,104.7) node [anchor=west] [inner sep=0.75pt]    {$\ =\ \dfrac{[ Ii]}{\langle Ii\rangle } \phi _{i}$};

\end{tikzpicture}
\end{center}
Note that we are now writing things in terms of the angle and square brackets from \eqref{angsqa}.  To do this, we've used
\begin{equation}
D^{ij} \phi_i \phi_j = [ i j ] \phi_i \phi_j.
\end{equation}

For future convenience, we now define a new set of graphs using this diamond node.

\begin{definition}\label{def7}
    Let $\mT^{\meddiamond}_{N| I}$ denote the set of all graphs which can be formed by taking a graph in $\mT_{N}$ and appending one extra diamond node labelled $I$ to the graph. Note that the diamond node will always have just one edge.
\end{definition}

\begin{ex} All of the elements of $\mT^{ \meddiamond}_{2 | I}$, and their corresponding terms, are
\begin{center}
\tikzset{every picture/.style={line width=0.75pt}} 

\begin{tikzpicture}[x=0.75pt,y=0.75pt,yscale=-1,xscale=1]

\draw    (150.33,38.67) -- (150.33,93.98) ;
\draw  [fill={rgb, 255:red, 255; green, 255; blue, 255 }  ,fill opacity=1 ] (150.33,20.99) -- (168.01,38.67) -- (150.33,56.34) -- (132.66,38.67) -- cycle ;
\draw  [fill={rgb, 255:red, 255; green, 255; blue, 255 }  ,fill opacity=1 ] (137.83,93.98) .. controls (137.83,87.08) and (143.43,81.48) .. (150.33,81.48) .. controls (157.24,81.48) and (162.83,87.08) .. (162.83,93.98) .. controls (162.83,100.88) and (157.24,106.48) .. (150.33,106.48) .. controls (143.43,106.48) and (137.83,100.88) .. (137.83,93.98) -- cycle ;
\draw    (424.25,38.67) -- (424.25,93.98) ;
\draw  [fill={rgb, 255:red, 255; green, 255; blue, 255 }  ,fill opacity=1 ] (424.25,20.99) -- (441.93,38.67) -- (424.25,56.34) -- (406.57,38.67) -- cycle ;
\draw  [fill={rgb, 255:red, 255; green, 255; blue, 255 }  ,fill opacity=1 ] (411.75,93.98) .. controls (411.75,87.08) and (417.35,81.48) .. (424.25,81.48) .. controls (431.15,81.48) and (436.75,87.08) .. (436.75,93.98) .. controls (436.75,100.88) and (431.15,106.48) .. (424.25,106.48) .. controls (417.35,106.48) and (411.75,100.88) .. (411.75,93.98) -- cycle ;
\draw    (368.93,195.25) -- (423.53,195.25) ;
\draw    (95.67,139.93) -- (95.67,195.25) ;
\draw    (95.67,195.25) -- (150.27,195.25) ;
\draw  [fill={rgb, 255:red, 255; green, 255; blue, 255 }  ,fill opacity=1 ] (83.17,195.25) .. controls (83.17,188.34) and (88.76,182.75) .. (95.67,182.75) .. controls (102.57,182.75) and (108.17,188.34) .. (108.17,195.25) .. controls (108.17,202.15) and (102.57,207.75) .. (95.67,207.75) .. controls (88.76,207.75) and (83.17,202.15) .. (83.17,195.25) -- cycle ;
\draw  [fill={rgb, 255:red, 255; green, 255; blue, 255 }  ,fill opacity=1 ] (137.77,195.25) .. controls (137.77,188.34) and (143.36,182.75) .. (150.27,182.75) .. controls (157.17,182.75) and (162.77,188.34) .. (162.77,195.25) .. controls (162.77,202.15) and (157.17,207.75) .. (150.27,207.75) .. controls (143.36,207.75) and (137.77,202.15) .. (137.77,195.25) -- cycle ;
\draw  [fill={rgb, 255:red, 255; green, 255; blue, 255 }  ,fill opacity=1 ] (95.67,122.26) -- (113.34,139.93) -- (95.67,157.61) -- (77.99,139.93) -- cycle ;
\draw    (423.53,139.93) -- (423.53,195.25) ;
\draw  [fill={rgb, 255:red, 255; green, 255; blue, 255 }  ,fill opacity=1 ] (411.03,195.25) .. controls (411.03,188.34) and (416.63,182.75) .. (423.53,182.75) .. controls (430.44,182.75) and (436.03,188.34) .. (436.03,195.25) .. controls (436.03,202.15) and (430.44,207.75) .. (423.53,207.75) .. controls (416.63,207.75) and (411.03,202.15) .. (411.03,195.25) -- cycle ;
\draw  [fill={rgb, 255:red, 255; green, 255; blue, 255 }  ,fill opacity=1 ] (423.53,122.26) -- (441.21,139.93) -- (423.53,157.61) -- (405.86,139.93) -- cycle ;
\draw  [fill={rgb, 255:red, 255; green, 255; blue, 255 }  ,fill opacity=1 ] (356.43,195.25) .. controls (356.43,188.34) and (362.03,182.75) .. (368.93,182.75) .. controls (375.84,182.75) and (381.43,188.34) .. (381.43,195.25) .. controls (381.43,202.15) and (375.84,207.75) .. (368.93,207.75) .. controls (362.03,207.75) and (356.43,202.15) .. (356.43,195.25) -- cycle ;

\draw (150.33,38.67) node    {$I$};
\draw (150.33,93.98) node    {$1$};
\draw (164.83,65.98) node [anchor=west] [inner sep=0.75pt]    {$=\ \dfrac{[ Ii]}{\langle 1i\rangle } \phi _{1} ,$};
\draw (424.25,38.67) node    {$I$};
\draw (424.25,93.98) node    {$2$};
\draw (438.75,65.48) node [anchor=west] [inner sep=0.75pt]    {$=\ \dfrac{[ Ii]}{\langle 2i\rangle } \phi _{2} ,$};
\draw (164.77,170.25) node [anchor=west] [inner sep=0.75pt]    {$=\ \dfrac{[ I1]}{\langle I1\rangle }\dfrac{[ 12]}{\langle 12\rangle } \phi _{1} \phi _{2} ,$};
\draw (438.03,170.25) node [anchor=west] [inner sep=0.75pt]    {$=\ \ \dfrac{[ I2]}{\langle I2\rangle }\dfrac{[ 12]}{\langle 12\rangle } \phi _{1} \phi _{2} .$};
\draw (95.67,195.25) node    {$1$};
\draw (150.27,195.25) node    {$2$};
\draw (95.67,139.93) node    {$I$};
\draw (368.93,195.25) node    {$1$};
\draw (423.53,195.25) node    {$2$};
\draw (423.53,139.93) node    {$I$};

\end{tikzpicture}
\end{center}

\end{ex}

We now show that if we exponentiate the graphs in $\mT^{\meddiamond}_{N| I}$ and multiply the result by $\phi_I$, we will get a sum of graphs which equals the linear perturbation to the Plebański scalar from \eqref{lin_graphs}.

\begin{claim}\label{claimexp}
\begin{equation}\label{expI}
    \sum_{\bt \, \in \, \mT_{N | I}} \phi_{\bt} = \phi_I \; \exp( \sum_{\bt \, \in \, \mT^{\meddiamond}_{N | I}} \phi_{\bt} )
\end{equation}
\end{claim}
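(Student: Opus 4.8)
The plan is to treat the node $I$ as a \emph{root} and to read the right-hand side as the statement that every marked tree containing $I$ is assembled from the root seed $\phi_I$ together with an unordered collection of disjoint ``branches'' hanging off it, each branch being exactly a graph in $\mT^{\meddiamond}_{N|I}$. The exponential is then nothing but the generating function for such disjoint collections, with the $1/m!$ supplying the correct symmetry factor.

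First I would record the algebraic fact that makes both sides collapse to numerical coefficients times monomials in the seeds. Since each $\phi_k$ is an eigenfunction of $\partial_u$ and $\partial_w$, every operator $D^{ij}$ acting on a product of seeds merely extracts eigenvalues, and the net effect of the full operator on a tree is edge-multiplicative,
\begin{equation*}
  \phi_\bt = \Bigl(\prod_{e_{ij}\,\in\,\text{edges of }\bt}\frac{[ij]}{\langle ij\rangle}\Bigr)\prod_{k\,\in\,\text{nodes of }\bt}\phi_k .
\end{equation*}
The only delicate point is a node of high degree: when several edge operators $D^{Ii_1},\dots,D^{Ii_m}$ all differentiate the \emph{same} seed $\phi_I$, one must verify that the eigenvalues still factor edge by edge, i.e.\ that $\bigl(\prod_a D^{Ii_a}\bigr)\phi_I\phi_{i_1}\cdots = \bigl(\prod_a [Ii_a]\bigr)\phi_I\phi_{i_1}\cdots$. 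This holds because the two eigenvalues $\partial_u\phi_I/\phi_I$ and $\partial_w\phi_I/\phi_I$ are common to all edges incident to $I$, so summing the two terms of each $D^{Ii_a}$ is precisely the expansion of the product $\prod_a(\cdots)$. Consequently a genuine node $I$ of degree $m$ contributes exactly $\prod_a [Ii_a]/\langle Ii_a\rangle$, the same factor as $m$ independent diamond edges, while the single seed $\phi_I$ is exactly what a diamond node omits.

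Then I would expand the exponential: its $m$th term is $\tfrac{1}{m!}$ times a sum over ordered $m$-tuples of graphs in $\mT^{\meddiamond}_{N|I}$, the nilpotency $(\phi_i)^2=0$ forces the branches to occupy pairwise disjoint subsets of $\{1,\dots,N\}$, and the $1/m!$ symmetrizes the ordering, leaving a sum over unordered collections of disjoint diamond-trees. Multiplying by $\phi_I$ promotes the $m$ diamond copies of $I$ into one genuine node $I$ carrying $\phi_I$, and each diamond edge $[Ii_a]/\langle Ii_a\rangle$ becomes an ordinary edge of the assembled tree. The core step is the bijection: deleting $I$ from any tree in $\mT_{N|I}$ disconnects it into its $\deg(I)$ branches, each a connected tree in $\mT_N$ attached to $I$ by a single edge, hence an element of $\mT^{\meddiamond}_{N|I}$; conversely any unordered disjoint collection glues back to a unique tree in $\mT_{N|I}$. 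Under this bijection the edge factors match term by term, since the diamond edges together with the internal branch edges exhaust the edges of the assembled tree, and the $m=0$ term reproduces the bare root $\phi_I$. The main obstacle is the combined bookkeeping of the $1/m!$ symmetry factor and the degree-$>1$ factorization above; once both are secured, the identity is purely the root-decomposition of trees.
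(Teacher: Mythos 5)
Your proof is correct and follows essentially the same route as the paper: the paper justifies the claim via the $N=2$ example, the pictorial gluing identity at the node $I$, and the observation that the $n$th term of the exponential generates all graphs with diamond $I$ of degree $n$, which is exactly your root-decomposition argument. Your write-up simply makes explicit the three points the paper leaves implicit — the eigenvalue factorization at a high-degree node, the role of nilpotency in enforcing disjoint branches, and the $1/m!$ cancelling the branch ordering — so it is a more rigorous rendering of the same idea rather than a different proof.
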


\begin{proof}
    The above claim is perhaps best justified by seeing how it holds in a particular example, say when $N=2$, as its generalization to all $N$ follows in a relatively straightforward manner.
    
    We'll expand the RHS of \eqref{expI} and show it equals LHS:
    \begin{center}
        \input{figures/gravity_graph_25.tex}
    \end{center}
    The final expression is indeed a sum over $\mT_{2 | I}$. Note the use of the identity below.
    \begin{center}
        \tikzset{every picture/.style={line width=0.75pt}} 

\begin{tikzpicture}[x=0.75pt,y=0.75pt,yscale=-1,xscale=1]

\draw    (114.67,31.69) -- (114.67,87) ;
\draw  [fill={rgb, 255:red, 255; green, 255; blue, 255 }  ,fill opacity=1 ] (114.67,14.01) -- (132.34,31.69) -- (114.67,49.36) -- (96.99,31.69) -- cycle ;
\draw  [fill={rgb, 255:red, 255; green, 255; blue, 255 }  ,fill opacity=1 ] (102.17,87) .. controls (102.17,80.1) and (107.76,74.5) .. (114.67,74.5) .. controls (121.57,74.5) and (127.17,80.1) .. (127.17,87) .. controls (127.17,93.9) and (121.57,99.5) .. (114.67,99.5) .. controls (107.76,99.5) and (102.17,93.9) .. (102.17,87) -- cycle ;
\draw    (184.2,31.69) -- (184.2,87) ;
\draw  [fill={rgb, 255:red, 255; green, 255; blue, 255 }  ,fill opacity=1 ] (184.2,14.01) -- (201.88,31.69) -- (184.2,49.36) -- (166.52,31.69) -- cycle ;
\draw  [fill={rgb, 255:red, 255; green, 255; blue, 255 }  ,fill opacity=1 ] (171.7,87) .. controls (171.7,80.1) and (177.3,74.5) .. (184.2,74.5) .. controls (191.1,74.5) and (196.7,80.1) .. (196.7,87) .. controls (196.7,93.9) and (191.1,99.5) .. (184.2,99.5) .. controls (177.3,99.5) and (171.7,93.9) .. (171.7,87) -- cycle ;
\draw    (324.58,30.69) -- (289.67,86) ;
\draw  [fill={rgb, 255:red, 255; green, 255; blue, 255 }  ,fill opacity=1 ] (277.17,86) .. controls (277.17,79.1) and (282.76,73.5) .. (289.67,73.5) .. controls (296.57,73.5) and (302.17,79.1) .. (302.17,86) .. controls (302.17,92.9) and (296.57,98.5) .. (289.67,98.5) .. controls (282.76,98.5) and (277.17,92.9) .. (277.17,86) -- cycle ;
\draw    (324.58,30.69) -- (359.5,86) ;
\draw  [fill={rgb, 255:red, 255; green, 255; blue, 255 }  ,fill opacity=1 ] (347,86) .. controls (347,79.1) and (352.6,73.5) .. (359.5,73.5) .. controls (366.4,73.5) and (372,79.1) .. (372,86) .. controls (372,92.9) and (366.4,98.5) .. (359.5,98.5) .. controls (352.6,98.5) and (347,92.9) .. (347,86) -- cycle ;
\draw  [fill={rgb, 255:red, 255; green, 255; blue, 255 }  ,fill opacity=1 ] (312.08,30.69) .. controls (312.08,23.78) and (317.68,18.19) .. (324.58,18.19) .. controls (331.49,18.19) and (337.08,23.78) .. (337.08,30.69) .. controls (337.08,37.59) and (331.49,43.19) .. (324.58,43.19) .. controls (317.68,43.19) and (312.08,37.59) .. (312.08,30.69) -- cycle ;
\draw  [draw opacity=0] (207.33,14.34) .. controls (217.08,25.95) and (222.95,40.91) .. (222.95,57.25) .. controls (222.95,73.59) and (217.08,88.55) .. (207.33,100.16) -- (156.2,57.25) -- cycle ; \draw   (207.33,14.34) .. controls (217.08,25.95) and (222.95,40.91) .. (222.95,57.25) .. controls (222.95,73.59) and (217.08,88.55) .. (207.33,100.16) ;  
\draw  [draw opacity=0] (90.62,100.16) .. controls (80.87,88.55) and (75,73.59) .. (75,57.25) .. controls (75,40.91) and (80.87,25.95) .. (90.62,14.34) -- (141.75,57.25) -- cycle ; \draw   (90.62,100.16) .. controls (80.87,88.55) and (75,73.59) .. (75,57.25) .. controls (75,40.91) and (80.87,25.95) .. (90.62,14.34) ;  
\draw  [fill={rgb, 255:red, 255; green, 255; blue, 255 }  ,fill opacity=1 ] (16.17,57.91) .. controls (16.17,51.01) and (21.76,45.41) .. (28.67,45.41) .. controls (35.57,45.41) and (41.17,51.01) .. (41.17,57.91) .. controls (41.17,64.82) and (35.57,70.41) .. (28.67,70.41) .. controls (21.76,70.41) and (16.17,64.82) .. (16.17,57.91) -- cycle ;

\draw (114.67,31.69) node    {$I$};
\draw (114.67,87) node    {$1$};
\draw (184.2,31.69) node    {$I$};
\draw (184.2,87) node    {$2$};
\draw (149.82,57.34) node  [font=\Large]  {$\times $};
\draw (254.82,58.34) node  [font=\Large]  {$=$};
\draw (324.58,30.69) node    {$I$};
\draw (289.67,86) node    {$1$};
\draw (359.5,86) node    {$2$};
\draw (28.67,57.91) node    {$I$};
\draw (46.17,57.91) node [anchor=west] [inner sep=0.75pt]  [font=\Large]  {$\times $};

\end{tikzpicture}
    \end{center}
        
    In order to see how this computation generalizes to all $N$, note that as $\exp$ is Taylor expanded to the $n^{\rm th}$ power, all graphs for which the diamond node $I$ is connected to $n$ other nodes are generated. The final multiplication by $\phi_I$ turns the diamond node into a circular node. 
    
\end{proof}

Using Claim \ref{claimexp}, we can see that we can rewrite a linearized perturbation as
\begin{equation}\label{phi_pert_FI}
    \bigphi( \phi_1, \ldots, \phi_N \rvert \phi_{I} ) = \epsilon_I \exp( i F_I)
\end{equation}
where we have defined $F_I$ as
\begin{equation}\label{FI_graph}
    F_I \equiv p_I \cdot X - i \sum_{\bt \in \mT^{\meddiamond}_{N | I}} \phi_{\bt}.
\end{equation}

It is possible to define an object $F_I^{\dot A}$ such that
\begin{equation}\label{FIindex}
    F_I = \kappat^I_{\dot A} F_I^{\dot A}.
\end{equation}
One can then contract $F_I^{\dot A}$ with an arbitrary reference spinor $\xi_{\dot A}$ instead of $\widetilde{\kappa}^I_{\dot A}$. The object $\xi_{\dot A} F_I^{\dot A}$ will become important later on, so we will discuss here how it can be expressed graphically. To do this, we'll need to define a new type of edge. This edge is labelled by the reference spinor $\xi$, and will only ever connect to the special node $I$ (or $J$, which will label a second ASD graviton). Here is how our new $\xi$-edge works: if it connects $I$ to $i$, then the numerator of the factor corresponding to this edge is $[\xi i]$ instead of $[I i]$. Four examples, with both the straight/wiggly edges and normal/diamond nodes are shown below.

\begin{center}
    \tikzset{every picture/.style={line width=0.75pt}} 

\begin{tikzpicture}[x=0.75pt,y=0.75pt,yscale=-1,xscale=1]

\draw    (72,157.25) -- (145,157.25) ;
\draw  [fill={rgb, 255:red, 255; green, 255; blue, 255 }  ,fill opacity=1 ] (72,139.57) -- (89.68,157.25) -- (72,174.93) -- (54.32,157.25) -- cycle ;
\draw    (327,77.7) .. controls (328.67,76.03) and (330.33,76.03) .. (332,77.7) .. controls (333.67,79.37) and (335.33,79.37) .. (337,77.7) .. controls (338.67,76.03) and (340.33,76.03) .. (342,77.7) .. controls (343.67,79.37) and (345.33,79.37) .. (347,77.7) .. controls (348.67,76.03) and (350.33,76.03) .. (352,77.7) .. controls (353.67,79.37) and (355.33,79.37) .. (357,77.7) .. controls (358.67,76.03) and (360.33,76.03) .. (362,77.7) .. controls (363.67,79.37) and (365.33,79.37) .. (367,77.7) .. controls (368.67,76.03) and (370.33,76.03) .. (372,77.7) .. controls (373.67,79.37) and (375.33,79.37) .. (377,77.7) .. controls (378.67,76.03) and (380.33,76.03) .. (382,77.7) .. controls (383.67,79.37) and (385.33,79.37) .. (387,77.7) .. controls (388.67,76.03) and (390.33,76.03) .. (392,77.7) .. controls (393.67,79.37) and (395.33,79.37) .. (397,77.7) -- (400,77.7) -- (400,77.7) ;
\draw [shift={(368.5,77.7)}, rotate = 180] [fill={rgb, 255:red, 0; green, 0; blue, 0 }  ][line width=0.08]  [draw opacity=0] (10.72,-5.15) -- (0,0) -- (10.72,5.15) -- (7.12,0) -- cycle    ;
\draw  [fill={rgb, 255:red, 255; green, 255; blue, 255 }  ,fill opacity=1 ] (387.5,77.7) .. controls (387.5,70.8) and (393.1,65.2) .. (400,65.2) .. controls (406.9,65.2) and (412.5,70.8) .. (412.5,77.7) .. controls (412.5,84.6) and (406.9,90.2) .. (400,90.2) .. controls (393.1,90.2) and (387.5,84.6) .. (387.5,77.7) -- cycle ;
\draw  [fill={rgb, 255:red, 255; green, 255; blue, 255 }  ,fill opacity=1 ] (314.5,77.7) .. controls (314.5,70.8) and (320.1,65.2) .. (327,65.2) .. controls (333.9,65.2) and (339.5,70.8) .. (339.5,77.7) .. controls (339.5,84.6) and (333.9,90.2) .. (327,90.2) .. controls (320.1,90.2) and (314.5,84.6) .. (314.5,77.7) -- cycle ;
\draw  [fill={rgb, 255:red, 255; green, 255; blue, 255 }  ,fill opacity=1 ] (132.5,157.25) .. controls (132.5,150.34) and (138.1,144.75) .. (145,144.75) .. controls (151.9,144.75) and (157.5,150.34) .. (157.5,157.25) .. controls (157.5,164.15) and (151.9,169.75) .. (145,169.75) .. controls (138.1,169.75) and (132.5,164.15) .. (132.5,157.25) -- cycle ;
\draw    (327,158.25) .. controls (328.67,156.58) and (330.33,156.58) .. (332,158.25) .. controls (333.67,159.92) and (335.33,159.92) .. (337,158.25) .. controls (338.67,156.58) and (340.33,156.58) .. (342,158.25) .. controls (343.67,159.92) and (345.33,159.92) .. (347,158.25) .. controls (348.67,156.58) and (350.33,156.58) .. (352,158.25) .. controls (353.67,159.92) and (355.33,159.92) .. (357,158.25) .. controls (358.67,156.58) and (360.33,156.58) .. (362,158.25) .. controls (363.67,159.92) and (365.33,159.92) .. (367,158.25) .. controls (368.67,156.58) and (370.33,156.58) .. (372,158.25) .. controls (373.67,159.92) and (375.33,159.92) .. (377,158.25) .. controls (378.67,156.58) and (380.33,156.58) .. (382,158.25) .. controls (383.67,159.92) and (385.33,159.92) .. (387,158.25) .. controls (388.67,156.58) and (390.33,156.58) .. (392,158.25) .. controls (393.67,159.92) and (395.33,159.92) .. (397,158.25) -- (400,158.25) -- (400,158.25) ;
\draw [shift={(368.5,158.25)}, rotate = 180] [fill={rgb, 255:red, 0; green, 0; blue, 0 }  ][line width=0.08]  [draw opacity=0] (10.72,-5.15) -- (0,0) -- (10.72,5.15) -- (7.12,0) -- cycle    ;
\draw  [fill={rgb, 255:red, 255; green, 255; blue, 255 }  ,fill opacity=1 ] (327,140.57) -- (344.68,158.25) -- (327,175.93) -- (309.32,158.25) -- cycle ;
\draw  [fill={rgb, 255:red, 255; green, 255; blue, 255 }  ,fill opacity=1 ] (387.5,158.25) .. controls (387.5,151.34) and (393.1,145.75) .. (400,145.75) .. controls (406.9,145.75) and (412.5,151.34) .. (412.5,158.25) .. controls (412.5,165.15) and (406.9,170.75) .. (400,170.75) .. controls (393.1,170.75) and (387.5,165.15) .. (387.5,158.25) -- cycle ;
\draw    (72,76.7) -- (145,76.7) ;
\draw  [fill={rgb, 255:red, 255; green, 255; blue, 255 }  ,fill opacity=1 ] (132.5,76.7) .. controls (132.5,69.8) and (138.1,64.2) .. (145,64.2) .. controls (151.9,64.2) and (157.5,69.8) .. (157.5,76.7) .. controls (157.5,83.6) and (151.9,89.2) .. (145,89.2) .. controls (138.1,89.2) and (132.5,83.6) .. (132.5,76.7) -- cycle ;
\draw  [fill={rgb, 255:red, 255; green, 255; blue, 255 }  ,fill opacity=1 ] (59.5,76.7) .. controls (59.5,69.8) and (65.1,64.2) .. (72,64.2) .. controls (78.9,64.2) and (84.5,69.8) .. (84.5,76.7) .. controls (84.5,83.6) and (78.9,89.2) .. (72,89.2) .. controls (65.1,89.2) and (59.5,83.6) .. (59.5,76.7) -- cycle ;

\draw (327,77.7) node    {$I$};
\draw (400,77.7) node    {$i$};
\draw (363.5,71.3) node [anchor=south] [inner sep=0.75pt]    {$\xi $};
\draw (414.5,77.7) node [anchor=west] [inner sep=0.75pt]    {$\ =\ [ \xi i] \phi _{I} \phi _{i}$};
\draw (72,157.25) node    {$I$};
\draw (145,157.25) node    {$i$};
\draw (108.5,151.85) node [anchor=south] [inner sep=0.75pt]    {$\xi $};
\draw (159.5,157.25) node [anchor=west] [inner sep=0.75pt]    {$\ =\ \dfrac{[ \xi i]}{\langle Ii\rangle } \phi _{i}$};
\draw (327,158.25) node    {$I$};
\draw (400,158.25) node    {$i$};
\draw (363.5,151.85) node [anchor=south] [inner sep=0.75pt]    {$\xi $};
\draw (414.5,158.25) node [anchor=west] [inner sep=0.75pt]    {$\ =\ [ \xi i] \phi _{i}$};
\draw (72,76.7) node    {$I$};
\draw (145,76.7) node    {$i$};
\draw (108.5,71.3) node [anchor=south] [inner sep=0.75pt]    {$\xi $};
\draw (159.5,76.7) node [anchor=west] [inner sep=0.75pt]    {$\ =\ \dfrac{[ \xi i]}{\langle Ii\rangle } \phi _{I} \phi _{i}$};

\end{tikzpicture}
\end{center}

\noindent Note that in the special case $\xi_{\dot A} = \kappat^I_{\dot A}$, this $\xi$-edge reduces to a normal edge.

\begin{center}
    \tikzset{every picture/.style={line width=0.75pt}} 

\begin{tikzpicture}[x=0.75pt,y=0.75pt,yscale=-1,xscale=1]

\draw    (214,75.7) -- (287,75.7) ;
\draw  [fill={rgb, 255:red, 255; green, 255; blue, 255 }  ,fill opacity=1 ] (274.5,75.7) .. controls (274.5,68.8) and (280.1,63.2) .. (287,63.2) .. controls (293.9,63.2) and (299.5,68.8) .. (299.5,75.7) .. controls (299.5,82.6) and (293.9,88.2) .. (287,88.2) .. controls (280.1,88.2) and (274.5,82.6) .. (274.5,75.7) -- cycle ;
\draw  [fill={rgb, 255:red, 255; green, 255; blue, 255 }  ,fill opacity=1 ] (201.5,75.7) .. controls (201.5,68.8) and (207.1,63.2) .. (214,63.2) .. controls (220.9,63.2) and (226.5,68.8) .. (226.5,75.7) .. controls (226.5,82.6) and (220.9,88.2) .. (214,88.2) .. controls (207.1,88.2) and (201.5,82.6) .. (201.5,75.7) -- cycle ;
\draw    (350,76.7) -- (423,76.7) ;
\draw  [fill={rgb, 255:red, 255; green, 255; blue, 255 }  ,fill opacity=1 ] (410.5,76.7) .. controls (410.5,69.8) and (416.1,64.2) .. (423,64.2) .. controls (429.9,64.2) and (435.5,69.8) .. (435.5,76.7) .. controls (435.5,83.6) and (429.9,89.2) .. (423,89.2) .. controls (416.1,89.2) and (410.5,83.6) .. (410.5,76.7) -- cycle ;
\draw  [fill={rgb, 255:red, 255; green, 255; blue, 255 }  ,fill opacity=1 ] (337.5,76.7) .. controls (337.5,69.8) and (343.1,64.2) .. (350,64.2) .. controls (356.9,64.2) and (362.5,69.8) .. (362.5,76.7) .. controls (362.5,83.6) and (356.9,89.2) .. (350,89.2) .. controls (343.1,89.2) and (337.5,83.6) .. (337.5,76.7) -- cycle ;

\draw (214,75.7) node    {$I$};
\draw (287,75.7) node    {$i$};
\draw (252.22,70) node [anchor=south] [inner sep=0.75pt]    {$\tilde{\kappa }^{I}$};
\draw (317.77,77.6) node    {$=$};
\draw (350,76.7) node    {$I$};
\draw (423,76.7) node    {$i$};

\end{tikzpicture}
\end{center}

\noindent With the $\xi$-edge, we will now define a new set of graphs which will let us compute $\xi_{\dot A} F_I^{\dot A}$.
\begin{definition}
    Let $\mT^{\meddiamond, \xi}_{N|I}$ be the set of all graphs which can be formed by taking a graph in $\mT_N$ and appending one diamond $I$ node to the graph with a $\xi$-edge.
\end{definition}
\begin{exs}
\begin{center} Here we give two examples of graphs in $\mT^{\meddiamond, \xi}_{N|I}$ for $N = 9$. 
    \tikzset{every picture/.style={line width=0.75pt}} 

\begin{tikzpicture}[x=0.75pt,y=0.75pt,yscale=-1,xscale=1]

\draw    (428.33,164.05) -- (389.93,197.51) ;
\draw    (445.93,209.11) -- (428.33,164.05) ;
\draw    (437,118.58) -- (428.33,164.05) ;
\draw    (437,118.58) -- (421.4,70.85) ;
\draw    (485,143.25) -- (437,118.58) ;
\draw    (476.07,84.05) -- (437,118.58) ;
\draw    (175.8,196.18) -- (145,157.25) ;
\draw    (145,157.25) -- (115.53,199.91) ;
\draw    (160.8,113.38) -- (145,157.25) ;
\draw    (206.27,92.58) -- (160.8,113.38) ;
\draw    (141.33,66.98) -- (160.8,113.38) ;
\draw    (77,157.25) -- (150,157.25) ;
\draw  [fill={rgb, 255:red, 255; green, 255; blue, 255 }  ,fill opacity=1 ] (77,139.57) -- (94.68,157.25) -- (77,174.93) -- (59.32,157.25) -- cycle ;
\draw  [fill={rgb, 255:red, 255; green, 255; blue, 255 }  ,fill opacity=1 ] (132.5,157.25) .. controls (132.5,150.34) and (138.1,144.75) .. (145,144.75) .. controls (151.9,144.75) and (157.5,150.34) .. (157.5,157.25) .. controls (157.5,164.15) and (151.9,169.75) .. (145,169.75) .. controls (138.1,169.75) and (132.5,164.15) .. (132.5,157.25) -- cycle ;
\draw  [fill={rgb, 255:red, 255; green, 255; blue, 255 }  ,fill opacity=1 ] (148.3,113.38) .. controls (148.3,106.48) and (153.9,100.88) .. (160.8,100.88) .. controls (167.7,100.88) and (173.3,106.48) .. (173.3,113.38) .. controls (173.3,120.28) and (167.7,125.88) .. (160.8,125.88) .. controls (153.9,125.88) and (148.3,120.28) .. (148.3,113.38) -- cycle ;
\draw  [fill={rgb, 255:red, 255; green, 255; blue, 255 }  ,fill opacity=1 ] (128.83,66.98) .. controls (128.83,60.08) and (134.43,54.48) .. (141.33,54.48) .. controls (148.24,54.48) and (153.83,60.08) .. (153.83,66.98) .. controls (153.83,73.88) and (148.24,79.48) .. (141.33,79.48) .. controls (134.43,79.48) and (128.83,73.88) .. (128.83,66.98) -- cycle ;
\draw  [fill={rgb, 255:red, 255; green, 255; blue, 255 }  ,fill opacity=1 ] (193.77,92.58) .. controls (193.77,85.68) and (199.36,80.08) .. (206.27,80.08) .. controls (213.17,80.08) and (218.77,85.68) .. (218.77,92.58) .. controls (218.77,99.48) and (213.17,105.08) .. (206.27,105.08) .. controls (199.36,105.08) and (193.77,99.48) .. (193.77,92.58) -- cycle ;
\draw  [fill={rgb, 255:red, 255; green, 255; blue, 255 }  ,fill opacity=1 ] (103.03,199.91) .. controls (103.03,193.01) and (108.63,187.41) .. (115.53,187.41) .. controls (122.44,187.41) and (128.03,193.01) .. (128.03,199.91) .. controls (128.03,206.82) and (122.44,212.41) .. (115.53,212.41) .. controls (108.63,212.41) and (103.03,206.82) .. (103.03,199.91) -- cycle ;
\draw  [fill={rgb, 255:red, 255; green, 255; blue, 255 }  ,fill opacity=1 ] (163.3,196.18) .. controls (163.3,189.28) and (168.9,183.68) .. (175.8,183.68) .. controls (182.7,183.68) and (188.3,189.28) .. (188.3,196.18) .. controls (188.3,203.08) and (182.7,208.68) .. (175.8,208.68) .. controls (168.9,208.68) and (163.3,203.08) .. (163.3,196.18) -- cycle ;
\draw    (371,118.58) -- (437,118.58) ;
\draw  [fill={rgb, 255:red, 255; green, 255; blue, 255 }  ,fill opacity=1 ] (371,100.9) -- (388.68,118.58) -- (371,136.26) -- (353.32,118.58) -- cycle ;
\draw  [fill={rgb, 255:red, 255; green, 255; blue, 255 }  ,fill opacity=1 ] (424.5,118.58) .. controls (424.5,111.68) and (430.1,106.08) .. (437,106.08) .. controls (443.9,106.08) and (449.5,111.68) .. (449.5,118.58) .. controls (449.5,125.48) and (443.9,131.08) .. (437,131.08) .. controls (430.1,131.08) and (424.5,125.48) .. (424.5,118.58) -- cycle ;
\draw  [fill={rgb, 255:red, 255; green, 255; blue, 255 }  ,fill opacity=1 ] (415.83,164.05) .. controls (415.83,157.14) and (421.43,151.55) .. (428.33,151.55) .. controls (435.24,151.55) and (440.83,157.14) .. (440.83,164.05) .. controls (440.83,170.95) and (435.24,176.55) .. (428.33,176.55) .. controls (421.43,176.55) and (415.83,170.95) .. (415.83,164.05) -- cycle ;
\draw  [fill={rgb, 255:red, 255; green, 255; blue, 255 }  ,fill opacity=1 ] (408.9,70.85) .. controls (408.9,63.94) and (414.5,58.35) .. (421.4,58.35) .. controls (428.3,58.35) and (433.9,63.94) .. (433.9,70.85) .. controls (433.9,77.75) and (428.3,83.35) .. (421.4,83.35) .. controls (414.5,83.35) and (408.9,77.75) .. (408.9,70.85) -- cycle ;
\draw  [fill={rgb, 255:red, 255; green, 255; blue, 255 }  ,fill opacity=1 ] (463.57,84.05) .. controls (463.57,77.14) and (469.16,71.55) .. (476.07,71.55) .. controls (482.97,71.55) and (488.57,77.14) .. (488.57,84.05) .. controls (488.57,90.95) and (482.97,96.55) .. (476.07,96.55) .. controls (469.16,96.55) and (463.57,90.95) .. (463.57,84.05) -- cycle ;
\draw  [fill={rgb, 255:red, 255; green, 255; blue, 255 }  ,fill opacity=1 ] (472.5,143.25) .. controls (472.5,136.34) and (478.1,130.75) .. (485,130.75) .. controls (491.9,130.75) and (497.5,136.34) .. (497.5,143.25) .. controls (497.5,150.15) and (491.9,155.75) .. (485,155.75) .. controls (478.1,155.75) and (472.5,150.15) .. (472.5,143.25) -- cycle ;
\draw  [fill={rgb, 255:red, 255; green, 255; blue, 255 }  ,fill opacity=1 ] (377.43,197.51) .. controls (377.43,190.61) and (383.03,185.01) .. (389.93,185.01) .. controls (396.84,185.01) and (402.43,190.61) .. (402.43,197.51) .. controls (402.43,204.42) and (396.84,210.01) .. (389.93,210.01) .. controls (383.03,210.01) and (377.43,204.42) .. (377.43,197.51) -- cycle ;
\draw  [fill={rgb, 255:red, 255; green, 255; blue, 255 }  ,fill opacity=1 ] (433.43,209.11) .. controls (433.43,202.21) and (439.03,196.61) .. (445.93,196.61) .. controls (452.84,196.61) and (458.43,202.21) .. (458.43,209.11) .. controls (458.43,216.02) and (452.84,221.61) .. (445.93,221.61) .. controls (439.03,221.61) and (433.43,216.02) .. (433.43,209.11) -- cycle ;

\draw (77,157.25) node    {$I$};
\draw (145,157.25) node    {$2$};
\draw (111.5,153.85) node [anchor=south] [inner sep=0.75pt]    {$\xi $};
\draw (371,118.58) node    {$I$};
\draw (404.5,116.18) node [anchor=south] [inner sep=0.75pt]    {$\xi $};
\draw (160.8,113.38) node    {$5$};
\draw (206.27,92.58) node    {$1$};
\draw (115.53,199.91) node    {$9$};
\draw (175.8,196.18) node    {$8$};
\draw (141.33,66.98) node    {$4$};
\draw (428.33,164.05) node    {$3$};
\draw (437,118.58) node    {$4$};
\draw (476.07,84.05) node    {$9$};
\draw (389.93,197.51) node    {$1$};
\draw (485,143.25) node    {$5$};
\draw (421.4,70.85) node    {$6$};
\draw (445.93,209.11) node    {$7$};
\draw (208.93,149.67) node [anchor=north west][inner sep=0.75pt]    {$\in \mathcal{T}_{9|I}^{\meddiamond ,\xi }$};
\draw (515.6,158.93) node [anchor=north west][inner sep=0.75pt]    {$\in \mathcal{T}_{9|I}^{\meddiamond ,\xi }$};

\end{tikzpicture}
\end{center}
\end{exs}
\noindent We can now write $\xi_{\dot A} F_I^{\dot A}$ as a sum over $\mT^{\meddiamond, \xi}_{N|I}$:
\begin{equation}\label{FIA}
    \xi_{\dot A} F_I^{\dot A} = \xi_{\dot 1} ( w - z_I \bu) + \xi_{\dot 2}( u - z_I \bw ) - i \sum_{\bt \in \mT^{\meddiamond , \xi}_{N | I}} \phi_{\bt}.
\end{equation}
Because the graphs in $\mT^{\meddiamond, \xi}_{N|I}$ each contain only one $\xi$-edge, the RHS really can be written in the form of $\xi_{\dot A} F^{\dot A}_I$ for some $F^{\dot A}_I$.

As a consistency check, if we plug $\xi = \kappat^I$ into the above equation, we recover $F_I = \kappat_{\dot A}^I F_I^{\dot A}$ as given in \eqref{FI_graph}, as we should. This is because
\begin{equation}
    \mT^{\meddiamond, (\xi = \kappat^I)}_{N|I} = \mT^{\meddiamond}_{N|I}
\end{equation}
and, using $\kappat^I_{\dot A} = \omega_I(-z_I, 1)$ and \eqref{pdotX1},
\begin{equation}
    p_I \cdot X = - \omega_I \bz_I(w - z_I \bu) + \omega_I (u - z_I \bw).
\end{equation}

\section{Scattering amplitudes and on-shell actions}\label{sec4}

In this section we briefly review the connection between tree-level scattering amplitudes and on-shell actions. The general idea is that ``in'' and ``out'' states can be thought of as boundary conditions for the path integral, and by the stationary phase approximation, if we plug the classical solution with said boundary conditions into the action we compute the transition amplitude to the leading order in $\hbar$.\footnote{There is another way that classical solutions can be used to compute scattering amplitudes in which one leg of a Feynman diagram is taken to be off shell \cite{Boulware:1968zz,Monteiro:2011pc}. } A formalism for computing amplitudes in this way was developed by Arefeva, Faddeev, and Slavnov \cite{Arefeva:1974jv} and was revisited recently in \cite{Kim:2023qbl, Kraus:2024gso}. Further pedagogical treatments can be found in \cite{Shrauner:1977sk,Balian:1976vq}. This formalism is often used to calculate amplitudes in non-trivial backgrounds \cite{Adamo:2017nia, Adamo:2021rfq, Gonzo:2022tjm, Adamo:2020yzi, Adamo:2023fbj, Adamo:2024xpc}. See also \cite{Fabbrichesi:1993kz}.

For simplicity, say we have a scalar field $\varphi$ which has some self-interaction.  A scattering amplitude is defined as
\begin{equation}\label{amplitude_usual}
    \mathcal{A}(p_1, \ldots, p_N) = \bra{0} \hat{a}_{p_1} \ldots \hat{a}_{p_M} \, \mathcal{S} \, \hat{a}^\dagger_{p_{M+1}} \ldots \hat{a}^\dagger_{p_{N}} \ket{0}.
\end{equation}

One can also define finite-amplitude ``coherent states''
\begin{equation}
\begin{aligned}
    \bra{A_1, \ldots, A_M} &= \bra{0} e^{A_1 \hat{a}_{p_1}} \ldots  e^{A_M \hat{a}_{p_M}} 
    \, , \\
    \ket{A_{M+1}, \ldots, A_N} &= e^{A_{M+1} \hat{a}_{p_{M+1}}^\dagger} \ldots  e^{A_N \hat{a}_{p_N}^\dagger} \ket{0} \, ,
\end{aligned}
\end{equation}
which correspond, classically, to fields that are sums of plane waves with finite amplitudes $A_i$.  In a free theory, these are the solutions
\begin{equation}\label{varphi_free}
    \varphi(X) = \sum_{i = 1}^N A_i e^{i p_i \cdot X}.
\end{equation}
Here we are using a convention in which the energy of $p_i^\mu$ is positive (negative) for ingoing (outgoing) particles. Therefore, in the above field, the positive frequency part of $\varphi$ which we label $\varphi^+$ is determined by the ingoing momenta while the negative frequency part $\varphi^-$ is determined by the outgoing momenta.

Using the coherent state path integral, one can in principle compute transition amplitudes between such states in the interacting theory. In the coherent state path integral, the field ``paths'' one is integrating over are naturally complexified, and it turns out that the initial state is given by specifying the positive-frequency part of the field at $t = -T/2$ and the final state is given by specifying the negative-frequency part of the field at $t = T/2$:
\begin{equation}\label{mixed_bdy}
\begin{aligned}
    \lim_{T \to \infty} \varphi^-(+T/2) &= \sum_{i=1}^{M} A_i e^{i p_i X} , \hspace{0.5 cm}
    \lim_{T \to \infty} \varphi^+(-T/2) &= \sum_{i=M+1}^N A_i e^{i p_i X}  .
\end{aligned}
\end{equation}
One must be conscientious about what boundary terms are included in the action $S$ because the fields in the path integral have non-trivial boundary conditions. In principle the boundary terms should be fixed by demanding that the solutions to the e.o.m.\ render the action stationary given these boundary conditions, but they can also be determined by other means, like by using time-slicing regularization.
\begin{equation}
    \begin{aligned}
        & \bra{A_1, \ldots, A_M} \mathcal{S}  \ket{A_{M+1}, \ldots, A_N} = \lim_{T \to \infty} \int_{     \varphi^+(-T/2) = \sum_{i = M+1}^N A_i e^{i p_i X}     }^{\varphi^-(T/2) = \sum_{i = 1}^M A_i e^{i p_i X}} \mathcal{D} \varphi \; e^{i S[\varphi]}
    \end{aligned}
\end{equation}

In the stationary phase approximation, one is instructed to find the $\varphi$ that solves the equation of motion with boundary conditions \eqref{mixed_bdy} and plug it into the action. Then, to compute a standard scattering amplitude of elementary quanta at tree-level, one can simply differentiate the on-shell integrand by all of the $A_i$'s as
\begin{equation}\label{A_exponential}
    \mathcal{A}(p_1, \ldots, p_N) = \frac{\partial^N}{\partial A_1 \ldots \partial A_N} e^{i S[ \varphi]} \Big\rvert_{A_i = 0 \; \forall A_i}.
\end{equation}
Note that because we differentiate by $A_i$ and then evaluate $A_i = 0$, for all intents and purposes we can replace $A_i \mapsto \epsilon_i$ such that $(\epsilon_i)^2 = 0$ when we solve the classical e.o.m.. This justifies our prescription \eqref{epsilon_manual}.

Furthermore, as $A_i \to 0$, the boundary conditions are dominated by the leading ``free'' solution \eqref{varphi_free}, as higher order terms in the solution necessary to solve the interacting e.o.m.\ will be higher order in the $\epsilon_i$'s. Therefore we are just instructed to find a perturbative solution of the form
\begin{equation}
\varphi = \sum_{i = 1}^N \epsilon_i e^{i p_i X} + \sum_{i < j} \epsilon_i \epsilon_j (\ldots) e^{i (p_i + p_j) \cdot X} + \ldots
\end{equation}
which is nothing but a perturbiner expansion, and plug it into the action.

Upon Taylor expanding the exponential in \eqref{A_exponential}, there will be a zeroth order term 1 which will not contribute to the amplitude, a first order term $iS$, and higher order terms $(iS)^n$. The higher order terms are clearly products of lower-point amplitudes, so if we're only interested in the connected part of the amplitude we only need to evaluate the action itself:
\begin{equation}\label{A_S_connected}
    \mathcal{A}_{\rm connected}(p_1, \ldots, p_N) = \frac{\partial^N}{\partial \epsilon_1 \ldots \partial \epsilon_N} i S[\varphi]  \Big\rvert_{\epsilon_i = 0 \; \forall \epsilon_i} .
\end{equation}
Henceforth, it will be implicitly understood we're only interested in connected amplitudes and will drop the explicit subscript.

\section{The Gravitational MHV Amplitude}\label{sec5}

\subsection{MHV scattering and the Plebański Action}\label{sec4_3}

From the last section, we know that we can compute the graviton MHV amplitude in two steps. First, we must solve for the classical spacetime containing $N$ positive helicity gravitons and 2 negative helicity gravitons. Second, we must plug this spacetime into the action. Actually, using a standard trick we can circumvent having to do some of step 1, which we'll explain shortly.

The commonly used Einstein-Hilbert (EH) action (with possible boundary terms included) is ill suited for our task because its dependence on the metric is very complicated. A better action for us is the so-called chiral Plebański action, which we'll now review.

If $\theta^{A \dot A} = \theta^{A \dot A}_\mu d x^\mu$ is a basis of tetrad 1-forms,  we can define a basis of three ASD 2-forms $\Sigma^{AB} = \Sigma^{(AB)}$ and three SD 2-forms $\widetilde{\Sigma}^{\dot A \dot B} = \widetilde{\Sigma}^{(\dot A \dot B)}$
\begin{equation}
    \Sigma^{AB} \equiv \theta^{A \dot A} \wedge \theta^B_{\;\; \dot A} \, , \hspace{1 cm} \widetilde{\Sigma}^{\dot A \dot B} \equiv \theta^{A \dot A} \wedge \theta_{A}^{\;\; \dot B}\, .
\end{equation}We'll also take the spin-connection 1-form $\Gamma_{ab} = \Gamma_{ab \mu} d x^\mu$, satisfying $\Gamma_{ab} = \Gamma_{[ab]}$, and break it up into ASD and SD parts as 
\begin{equation}\Gamma_{A\dot A B \dot B} = 2\, \vep_{\dot A \dot B} \Gamma_{A B} + 2\, \vep_{AB} \widetilde{\Gamma}_{\dot A \dot B}
\end{equation}
with $\Gamma_{AB} = \Gamma_{(AB)}$ and $\widetilde{\Gamma}_{\dot A \dot B} = \widetilde{\Gamma}_{(\dot A \dot B)}$.
From the spin connection one can define the Riemann curvature 2-form $R_{ab} = \dd \Gamma_{ab} + \Gamma_{a c} \wedge \Gamma^{c}_{\;\; b}$ and similarly break it into ASD and SD parts $R_{AB} = R_{(AB)}$ and $\widetilde{R}_{\dot A \dot B} = \widetilde{R}_{(\dot A \dot B)}$. These objects are related to the spin connection by 
\begin{equation}
\begin{aligned}
    R_{AB} &= \dd \Gamma_{AB} + \Gamma_{A C} \wedge \Gamma^{C}_{\;\;\;B} \, , \\
    \widetilde{R}_{\dot A\dot B} &= \dd \widetilde{\Gamma}_{\dot A \dot B} + \widetilde{\Gamma}_{\dot A \dot C} \wedge \widetilde{\Gamma}^{\dot C}_{\;\;\; \dot B}\,.
\end{aligned}
\end{equation}
If one uses the vacuum Einstein equation, $R_{AB}$ and $\widetilde{R}_{\dot A \dot B}$ can also be expressed as
\begin{equation}
\begin{aligned}
    R_{AB} &= \Psi_{ABCD} \Sigma^{AB} \, ,\\
    \widetilde{R}_{\dot A \dot B} &= \widetilde{\Psi}_{\dot A \dot B \dot C \dot D} \widetilde{\Sigma}^{\dot C \dot D}\, ,
\end{aligned}
\end{equation}
using the scalars $\Psi_{ABCD} = \Psi_{(ABCD)}$ and $\widetilde{\Psi}_{\dot A \dot B \dot C \dot D} = \widetilde{\Psi}_{(\dot A \dot B \dot C \dot D)}$. See appendix \ref{appB} for a more thorough review of these objects.

The EH Lagrangian can be expressed with
\begin{equation}
\begin{aligned}
    2 \, R \, \mathrm{d vol} &= \vep_{abcd} R^{ab} \wedge \theta^c \wedge \theta^d \\
    &= i \; R_{AB} \wedge \Sigma^{AB} - i\; \widetilde{R}_{\dot A \dot B} \wedge \widetilde{\Sigma}^{\dot A \dot B}
\end{aligned}
\end{equation}
where we have used the identity for the Levi-Civita pseudotensor
\begin{equation}
    \vep_{A\dot A B \dot B C \dot C D \dot D} = 4 i\, \vep_{AC} \vep_{BD} \vep_{\dot A \dot B} \vep_{\dot C \dot D} - 4 i \, \vep_{AB} \vep_{CD} \vep_{\dot A \dot C} \vep_{\dot B \dot D}.
\end{equation}

The other linear combination of the SD and ASD parts of the curvature 2-form comes from the so-called Holst term,
\begin{equation}
    R_{ab} \wedge \theta^a \wedge \theta^b = \frac{1}{2} R_{AB} \wedge \Sigma^{AB} + \frac{1}{2} \widetilde{R}_{\dot A \dot B} \wedge \widetilde{\Sigma}^{\dot A \dot B}.
\end{equation}
In fact, the Holst term is identically zero due to the Bianchi identity $R_{ab} \wedge \theta^a = 0$. This implies we can express the EH action as
\begin{equation}
    \frac{1}{2} \int R \, \mathrm{dvol} = \frac{i}{2} \int R_{AB} \wedge \Sigma^{AB} = \frac{i}{2} \int \big( \dd \Gamma_{AB}   +  \Gamma_{AC}\wedge \Gamma^C_{\;\; B} \big) \wedge \Sigma^{AB} 
\end{equation}
where we set $\kappa_G^2 = 8 \pi G = 1$.

If we want to use the above action on-shell to compute scattering amplitudes, we run into the issue that $R = 0$ when $R_{\mu \nu} = 0$. A boundary term must be appended to this action to make it non-zero.

While we would like to provide the reader with a principled derivation of the appropriate boundary term, it is not yet known how to extend the AFS formalism to Einstein gravity using tetrad variables and we will unfortunately not be solving that problem here. However, an eminently natural choice of boundary term is the one which arises by an ``integration by parts'' in the action, 
\begin{equation}\label{pleb_act}
\begin{aligned}
    S[\theta^{A \dot A}, \Gamma_{AB}] &= \frac{i}{2} \int \big( R_{AB} \wedge \Sigma^{AB} - \dd (\Gamma_{AB} \wedge \Sigma^{AB} ) \big) \\
    &= \frac{i}{2} \int \big( \Gamma_{AB} \wedge \dd \Sigma^{AB}  +  \Gamma_{AC}\wedge \Gamma^C_{\;\; B} \wedge \Sigma^{AB}  \big) \, .
\end{aligned}
\end{equation}
This boundary term has appeared previously in different contexts \cite{Corichi:2010ur, Corichi:2015cqa, Ashtekar:2008jw, Corichi:2013zza}, see also \cite{DePaoli:2018erh}.

The action \eqref{pleb_act} (without boundary term) is known as the Plebański action, and though it only depends on the ASD 2-forms and ASD spin connection it remarkably is still an action for full Einstein gravity \cite{Capovilla:1991qb, plebanski1977separation, Capovilla:1993cvm}. It is in fact the Lagrangian counterpart to Ashtekar's SD Hamiltonian formulation for GR \cite{Ashtekar:1987gu}. The equations of motion arising from the variations $\delta \theta^{A \dot A}$ and $\delta \Gamma_{AB}$ are
\begin{equation}\label{pleb_eom}
    R_{AB} \wedge \theta^{A \dot A} = 0, \hspace{1 cm} \dd \Sigma^{AB} = - 2 \Gamma^{(A}_{\;\;\;\; C} \wedge \Sigma^{B) C}.
\end{equation}
The first equation is equivalent to the vacuum Einstein equation and the second equation is the first Cartan structure equation. Plugging the second equation of motion \eqref{pleb_eom} into the action, we get the further on-shell expression
\begin{equation}\label{S_useful}
    S = - \frac{i}{2} \int \big(\Sigma^{AB} \wedge \Gamma_{AC}\wedge \Gamma^{C}_{\;\; B}  \big).
\end{equation}

On a self-dual background, which we'll denote with the subscript 0, we have the equations
\begin{equation}
    (R_0)_{AB} = 0, \hspace{1 cm} (\Gamma_0)_{AB} = 0, \hspace{1 cm} \dd (\Sigma_0)^{AB} = 0, \hspace{1 cm} (\Psi_0)_{ABCD} = 0.
\end{equation}

Consider a SD background made out of $N$ positive helicity gravitons, and let us define the integers
\begin{equation}
    \begin{aligned}
        I &\equiv N + 1 \\
        J &\equiv N + 2
    \end{aligned}
\end{equation}
to be the labels corresponding to the two negative helicity gravitons. A linear perturbation to a SD background
$(\gamma_{AB}, \psi_{ABCD}) \equiv (\delta \Gamma_{AB},\delta \Psi_{ABCD})$ satisfies
\begin{equation}\label{linear_asd_pert}
    \dd \gamma_{AB} = \psi_{ABCD}  \Sigma_0^{CD}, \hspace{1.5 cm} \dd \Sigma^{AB} =- 2 \gamma^{(A}_{\;\;\;\; C} \wedge \Sigma_0^{B) C}.
\end{equation}

Something nice happens if we have \emph{two} ASD perturbations and wish to evaluate the action. In general, we would expect that we would have to solve for second order, non-linear corrections in $\gamma$ proportional to $\epsilon_I \epsilon_J$ coming from the two ASD perturbations interacting with each other. However, because $S$ from \eqref{S_useful} is quadratic in $\Gamma_{AB}$, and $(\Gamma_0)_{AB} = 0$, these second order corrections will not affect the scattering amplitude.

If we define $\gamma^I$ and $\gamma^J$ to be the linearized ASD perturbations corresponding to the negative helicity gravitons labelled $I$ and $J$, then by \eqref{A_S_connected} we have
\begin{equation}\label{MHV_eq1}
\begin{aligned}
    \mathcal{A}_{\rm MHV}(1^+, \ldots, N^+, I^-, J^-) &= \eval{ \frac{\partial^{N+2}}{\partial \epsilon_1 \ldots \partial \epsilon_{N+2}}  \int (\Sigma_0)^{AB} \wedge (\gamma^{I})_{AC} \wedge (\gamma^{J})^C_{\;\;B} }_{\epsilon_i = 0 \,  \forall \epsilon_i}.
\end{aligned}
\end{equation}
This is the well-known MHV generating function \cite{Mason:2009afn, Adamo:2021bej}. We shall now evaluate it using the marked-tree perturbiner expansion for the Plebański scalar we developed in section \ref{sec2}.

\subsection{Massaging the MHV generating function}

If we wish to evalaute \eqref{MHV_eq1}, the first task is to compute $\gamma^I$ corresponding to the ASD graviton labelled $I$. From \eqref{linear_asd_pert}, we can do this if we know what $\psi^I_{ABCD}$ is.

Fortunately, we have already found the expression for $\psi^I_{ABCD}$ in \eqref{psi_I} when we noted that $\psi$ solved the same equation of motion as a SD perturbation to $\phi$ in a SD background. We showed that such perturbations can be written as the exponential of a function $F_I$ \eqref{phi_pert_FI}, where $F_I$ is a sum over a certain special set of graphs \eqref{FI_graph} which were given in definition \ref{def7}.
\begin{equation}\label{psiI_for_proof}
    \psi_{ABCD}^I = - \frac{1}{8} \, \epsilon_I \, \kappa^I_A \kappa^I_B \kappa^I_C \kappa^I_D \exp(i F_I).
\end{equation}
With $\psi^I_{ABCD}$ in hand, we now just need to find a $\gamma^I$ that solves \eqref{linear_asd_pert}. There is a known formula for this \cite{Adamo:2022mev}, which review here in the following claim.

\begin{claim}\label{claim_gamma}
    \begin{equation}\label{gammaIput}
    (\gamma^I)_{AB} = -i \epsilon_I \kappa^I_A \kappa^I_B \frac{ \kappa^I_C \xi_{\dot C} (H_I)_{\dot A}^{\;\;\; \dot C} \theta^{C \dot A}}{2 [\xi I]}  \exp(i F_I)
\end{equation}
where the tensor $(H^I)_{\dot A}^{\;\; \dot B}$ is defined by
\begin{equation}\label{H_def}
    e_{A \dot A}^\mu \partial_\mu F_I^{\dot B} = \kappa_A^I (H^I)_{\dot A}^{\;\; \dot B} \, ,
\end{equation}
$\xi_{\dot A}$ is an arbitrary spinor,\footnote{The choice of $\xi$ corresonds to the choice of Lorentz frame for the tetrad.  A change in frame sends $\gamma^I_{AB} \mapsto \gamma^I_{AB} + \dd f_{AB}$ for some zero-form $f_{AB}$, just as the change $\xi \mapsto a \xi + b \kappat^I$ does. } and $\theta^{A \dot A}$/$e_{A \dot A}$ is the tetrad/vierbein of the SD background.
\end{claim}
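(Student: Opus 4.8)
The plan is to verify that the proposed $(\gamma^I)_{AB}$ solves the first equation of \eqref{linear_asd_pert}, $\dd (\gamma^I)_{AB} = \psi^I_{ABCD}\,\Sigma_0^{CD}$, with $\psi^I_{ABCD}$ taken from \eqref{psiI_for_proof}; the second equation of \eqref{linear_asd_pert} fixes only the metric perturbation $\delta\theta^{A\dot A}$, which does not enter the on-shell action \eqref{S_useful} and so need not be solved. Since $\kappa^I_A=(-z_I,1)$ is a constant spinor, I would write $(\gamma^I)_{AB}=-i\,\epsilon_I\,\kappa^I_A\kappa^I_B\,\alpha\,\exp(iF_I)$, where $\alpha$ is the one-form of \eqref{gammaIput}, and cancel the common prefactor $\kappa^I_A\kappa^I_B$ against the same factor in $\psi^I_{ABCD}$. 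The claim then reduces to a single identity among forms built from the self-dual background tetrad.

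First I would show $\alpha$ is exact. Expanding a one-form in the tetrad basis with the duality relation $\theta^{A\dot A}(e_{B\dot B})=2\,\delta^A_B\,\delta^{\dot A}_{\dot B}$ (read off from \eqref{conv_tetrad}--\eqref{conv_vierbein}), the definition \eqref{H_def} gives $\dd F_I^{\dot B}=\tfrac12\,\kappa^I_P\,(H^I)_{\dot A}{}^{\dot B}\,\theta^{P\dot A}$, hence $\alpha=\dd G_I/[\xi I]$ with $G_I\equiv\xi_{\dot A}F_I^{\dot A}$ the graphically defined object of \eqref{FIA}. Because $\dd\,\dd G_I=0$, only the exponential is differentiated and $\dd(\gamma^I)_{AB}=-\epsilon_I\,\kappa^I_A\kappa^I_B\,[\xi I]^{-1}\,(\dd G_I\wedge\dd F_I)\,\exp(iF_I)$. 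I would evaluate the wedge using $\dd F_I=\kappat^I_{\dot A}\,\dd F_I^{\dot A}$ and the decomposition $\theta^{P\dot A}\wedge\theta^{Q\dot B}=\tfrac12\,\vep^{PQ}\,\widetilde{\Sigma}_0^{\dot A\dot B}+\tfrac12\,\vep^{\dot A\dot B}\,\Sigma_0^{PQ}$. The two exterior derivatives supply a factor $\kappa^I_P\kappa^I_Q$, so the self-dual ($\widetilde{\Sigma}_0$) piece dies by $\kappa^I_P\kappa^I_Q\vep^{PQ}=0$; in the surviving anti-self-dual piece the identities $\vep^{\dot A\dot B}(H^I)_{\dot A}{}^{\dot C}(H^I)_{\dot B}{}^{\dot D}=\det(H^I)\,\vep^{\dot C\dot D}$ and $\xi_{\dot C}\kappat^I_{\dot D}\vep^{\dot C\dot D}=[\xi I]$ collapse the result to $\dd G_I\wedge\dd F_I=\tfrac18\,[\xi I]\,\det(H^I)\,\kappa^I_P\kappa^I_Q\,\Sigma_0^{PQ}$. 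Comparing with $\psi^I_{ABCD}\Sigma_0^{CD}$ from \eqref{psiI_for_proof}, the $\xi$-dependence cancels and the identity holds precisely when $\det(H^I)=1$.

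The real content is therefore to establish $\det(H^I)=1$. Taking $A=2$ in \eqref{H_def} (where $\kappa^I_2=1$) together with the background vierbein $e_{2\dot1}=\partial_w$, $e_{2\dot2}=\partial_u$ of \eqref{conv_vierbein}, I get $(H^I)_{\dot1}{}^{\dot B}=\partial_w F_I^{\dot B}$ and $(H^I)_{\dot2}{}^{\dot B}=\partial_u F_I^{\dot B}$, so that $\det(H^I)=\{F_I^{\dot 2},F_I^{\dot 1}\}$ in the Poisson bracket of \eqref{pois_def_1}. To show this bracket equals $1$, note that the $\alpha$-plane vector fields $V_{\dot A}\equiv\kappa_I^A e_{A\dot A}$ annihilate each $F_I^{\dot B}$: this is \eqref{H_def} contracted with $\kappa_I^A$ (using $\kappa_I^A\kappa^I_A=0$), the same content as \eqref{dotA12}. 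Writing out \eqref{conv_vierbein}, the $(u,w)$-part of $V_{\dot1}$ is the Hamiltonian vector field of $h_{\dot1}=-\partial_w\phi-z_I u$ and that of $V_{\dot2}$ is the Hamiltonian vector field of $h_{\dot2}=-\partial_u\phi+z_I w$, while the leftover $\partial_{\bu}$, $\partial_{\bw}$ pieces commute with $\partial_u,\partial_w$; each $V_{\dot A}$ is therefore a derivation of the Poisson bracket, giving $V_{\dot A}\{F_I^{\dot 2},F_I^{\dot 1}\}=\{V_{\dot A}F_I^{\dot 2},F_I^{\dot 1}\}+\{F_I^{\dot 2},V_{\dot A}F_I^{\dot 1}\}=0$.

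I would close by a perturbative induction in the $\epsilon_i$. At zeroth order $\phi=0$, so $F_I^{\dot 1}=w-z_I\bu$ and $F_I^{\dot 2}=u-z_I\bw$ by \eqref{FIA}, giving $\{F_I^{\dot 2},F_I^{\dot 1}\}=1$. Writing $D\equiv\{F_I^{\dot 2},F_I^{\dot 1}\}$ and assuming all lower orders equal their flat values, the order-$n$ piece obeys the homogeneous equations $V_{\dot 1}^{(0)}D_{(n)}=0$ and $V_{\dot 2}^{(0)}D_{(n)}=0$ with $V_{\dot 1}^{(0)}=\partial_{\bu}+z_I\partial_w$ and $V_{\dot 2}^{(0)}=\partial_{\bw}+z_I\partial_u$; but every perturbiner monomial in $D_{(n)}$ carries a factor $\exp(iP\cdot X)$ whose eigenvalue under $V_{\dot 1}^{(0)}$ is $i\sum_k\omega_{i_k}\bz_{i_k}(z_{i_k}-z_I)$, nonzero for generic on-shell momenta (from \eqref{pdotX1}), which forces $D_{(n)}=0$. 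Hence $\det(H^I)\equiv1$ and the claim follows. The main obstacle is exactly this last step: the derivation argument only shows $D$ is constant along each $\alpha$-plane, and it is the perturbiner plane-wave structure --- rather than any purely local identity --- that pins the constant to its free-field value $1$, in the same spirit as the $-1/8$ normalization of \eqref{psiI_for_proof}.
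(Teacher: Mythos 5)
Your proposal is correct in substance, and its first half is essentially the paper's own argument: you note that the one-form multiplying the exponential is exact, $\alpha = \dd\bigl(\xi_{\dot A}F_I^{\dot A}\bigr)/[\xi I]$, so only $\exp(iF_I)$ gets differentiated; you then expand the wedge product in the $\Sigma_0/\widetilde{\Sigma}_0$ basis, kill the self-dual piece with $\kappa^I_P\kappa^I_Q\vep^{PQ}=0$, and reduce the claim to unimodularity of $H_I$ --- exactly the chain of \eqref{dgamma_work} and \eqref{HH_given}. Where you genuinely depart from the paper is in how unimodularity is established. The paper takes \eqref{HH_given} ``as a given'' inside this proof and only establishes it later: Claim \ref{claim_H_sl2c} reduces it to Claim \ref{claimFIFI}, which is in turn extracted from the graphical Claim \ref{claim_FIFJ} by the specialization $J\to I$, where only the exceptional graph survives. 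You instead prove the equivalent statement $\det H_I = \{F_I^{\dot 2},F_I^{\dot 1}\}=1$ on the spot: the $\alpha$-plane vector fields $V_{\dot A}=\kappa_I^A e_{A\dot A}$ annihilate the $F_I^{\dot B}$ (this is \eqref{H_def} contracted with $\kappa_I^A$), each $V_{\dot A}$ is a derivation of the $(u,w)$ Poisson bracket because its $(u,w)$-part is Hamiltonian and the $\partial_{\bu},\partial_{\bw}$ pieces commute with $\partial_u,\partial_w$, so the bracket is constant along $\alpha$-planes; a perturbative induction in the $\epsilon_i$, using the eigenvalue $i\sum_k\omega_{i_k}\bz_{i_k}(z_{i_k}-z_I)$ of $V_{\dot 1}^{(0)}$ on each plane-wave monomial, then pins the constant to its flat value. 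This makes the proof of the claim self-contained (the paper's version, as written, leans on machinery that only appears in section 5.3) and exposes the integrable, $\alpha$-plane origin of the $SL(2)$ property. What the paper's route buys in exchange: Claim \ref{claimFIFI} is needed later anyway for the MHV computation, so deferring it costs nothing, and the graphical proof is purely algebraic, requiring no appeal to genericity of the momenta --- an appeal your eigenvalue step does need, though, as you acknowledge, rationality of the perturbiner coefficients in the momenta makes it harmless.

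Two bookkeeping points you should fix when writing this up. First, with the paper's conventions ($\vep_{AB}\vep^{BC}=\delta_A^C$, $\Sigma^{AB}=\theta^{A\dot A}\wedge\theta^{B}_{\;\;\dot A}$) the tetrad decomposition carries an overall minus sign,
\begin{equation*}
\theta^{P\dot A}\wedge\theta^{Q\dot B} \;=\; -\tfrac{1}{2}\bigl(\vep^{PQ}\,\widetilde{\Sigma}_0^{\dot A\dot B}+\vep^{\dot A\dot B}\,\Sigma_0^{PQ}\bigr),
\end{equation*}
as used in \eqref{dgamma_work}, whereas you wrote it with $+\tfrac{1}{2}$; this does not change the structure of your argument (the self-dual piece still drops and the anti-self-dual piece is still proportional to $\det H_I$), but the sign must be tracked consistently against the $-1/8$ normalization of \eqref{psiI_for_proof} so that the condition you derive is $\det H_I=+1$ and not $-1$. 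Second, with the bracket convention \eqref{pois_def_1} the Hamiltonians are $h_{\dot 1}=\partial_w\phi+z_I u$ and $h_{\dot 2}=\partial_u\phi-z_I w$ (yours have the opposite sign, corresponding to the opposite Hamiltonian-vector-field convention); this is immaterial to the derivation property and hence to the proof.
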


\begin{proof}
To prove this claim, we must show that \eqref{gammaIput} and \eqref{psiI_for_proof} solve \eqref{linear_asd_pert}. 

We begin by expressing the exterior derivative $\dd$ as
    \begin{equation}
    \dd =  d x^\mu \wedge \partial_\mu  =  \theta^{a} \wedge e_{a}^\mu \partial_\mu= \frac{1}{2} \theta^{A \dot A} \wedge e_{A \dot A}^\mu \partial_\mu.
\end{equation}
The above expression can be used to rewrite \eqref{H_def} as
\begin{equation}
    2 \; \dd F_I^{\dot B} = \kappa^I_{A} (H^I)_{\dot A}^{\;\; \dot B} \theta^{A \dot A}
\end{equation}
which implies
\begin{equation}
    \dd \left( \kappa^I_A (H^I)_{\dot A}^{\;\; \dot B} \theta^{A \dot A} \right) = 0.
\end{equation}

We now act with the expression for $\gamma^I$ in \eqref{gammaIput} by $\dd$:
\begin{equation}\label{dgamma_work}
\begin{aligned}
    \dd (\gamma^I)_{AB} &= -i \epsilon_I  \kappa^I_A \kappa^I_B \frac{\kappa^I_C \xi_{\dot C} (H_I)_{\dot A}^{\;\;\; \dot C} \theta^{C \dot A}}{2 [\xi I]} \wedge  \dd \exp(i F_I) \\
    &= - i \epsilon_I \kappa^I_A \kappa^I_B \frac{\kappa^I_C \xi_{\dot C} (H_I)_{\dot A}^{\;\;\; \dot C} \theta^{C \dot A}}{2 [\xi I]} \wedge \left( \frac{i}{2} \kappat^I_{\dot B} \kappa^I_D (H_I)_{\dot D}^{\;\; \dot B} \theta^{D \dot D} \right) \exp(i F_I) \\
    &=\epsilon_I \kappa^I_A \kappa^I_B \kappa^I_C \kappa^I_D  \frac{\kappat^I_{\dot B} \xi_{\dot C}}{{4 [\xi I]}} \left(  (H_I)_{\dot A}^{\;\;\; \dot C}  (H_I)_{\dot D}^{\;\; \dot B} \theta^{C \dot A} \wedge \theta^{D \dot D} \right) \exp(i F_I)\, .
\end{aligned}
\end{equation}
It turns out that the tensor $(H_I)_{\dot A}^{\;\; \dot B}$ satisfies the property  
\begin{equation}\label{HH_given}
    (H_I)_{\dot A}^{\;\; \dot C} (H_I)_{\dot D}^{\; \; \dot B} \vep^{\dot A \dot D} = -\vep^{\dot B \dot C}.
\end{equation}
We will show the above equation is true later (see Claim \ref{claim_H_sl2c}) but for now use it as a given. \eqref{HH_given} implies
\begin{equation}
\begin{aligned}
    (H_I)_{\dot A}^{\;\;\; \dot C}  (H^I)_{\dot D}^{\;\; \dot B} \theta^{C \dot A} \wedge \theta^{D \dot D} &= - \frac{1}{2} (H_I)_{\dot A}^{\;\;\; \dot C}  (H^I)_{\dot D}^{\;\; \dot B} (\vep^{\dot A \dot D} \Sigma_0^{CD} + \vep^{CD} \widetilde{\Sigma}_0^{ \dot A \dot D} ) \\
    &= \frac{1}{2} \vep^{\dot B \dot C} \Sigma_0^{CD} - \frac{1}{2} (H_I)_{\dot A}^{\;\;\; \dot C}  (H^I)_{\dot D}^{\;\; \dot B} \vep^{CD} \widetilde{\Sigma}_0^{ \dot A \dot D} 
\end{aligned}
\end{equation}
which we can plug into \eqref{dgamma_work} to get
\begin{equation}
\begin{aligned}
    \dd (\gamma^I)_{AB} &=-\frac{1}{8} \epsilon_I \kappa^I_A \kappa^I_B \kappa^I_C \kappa^I_D   \exp(i F_I) \Sigma_0^{CD} \\
    &= \psi^I_{ABCD} \Sigma_0^{CD}
\end{aligned}
\end{equation}
as desired.

\end{proof}

The careful reader may have noticed that $(H_I)_{\dot A}^{\;\; \dot B}$, defined in \eqref{H_def}, was not yet proven to be well-defined. We rectify the situation below.

\begin{claim}\label{claim_H_def}
There exists some matrix $(H_I)_{\dot A}^{\;\; \dot B}$ such that \eqref{H_def} is satisfied.
    
\end{claim}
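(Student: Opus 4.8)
The plan is to recognize that \eqref{H_def} is solvable for $(H_I)_{\dot A}^{\;\;\dot B}$ precisely when its left-hand side $e_{A\dot A}^\mu \partial_\mu F_I^{\dot B}$, regarded as a function of the free index $A$, is proportional to $\kappa^I_A = (-z_I,1)$; the matrix is then recovered uniquely as the $A=2$ component, since $\kappa^I_2 = 1$. Because $\kappa_I^A \kappa^I_A = \langle II\rangle = 0$ with $\kappa_I^A = (1,z_I)$, this proportionality is equivalent to the single vanishing condition
\begin{equation}
    L_{\dot A}F_I^{\dot B} = 0, \qquad L_{\dot A} \equiv \kappa_I^A e_{A\dot A}^\mu \partial_\mu = e_{1\dot A}^\mu \partial_\mu + z_I\, e_{2\dot A}^\mu \partial_\mu,
\end{equation}
for each $\dot A, \dot B$. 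Inserting the vierbein \eqref{conv_vierbein}, $L_{\dot 1}$ and $L_{\dot 2}$ are exactly the Lax/recursion operators of the self-dual background at spectral value $z_I$, so the whole claim reduces to showing that these two first-order operators annihilate both components of $F_I^{\dot B}$.

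First I would dispatch the contraction $\kappat^I_{\dot B}F_I^{\dot B} = F_I$, which is immediate. A one-line computation with \eqref{conv_vierbein} and the Poisson bracket shows $L_{\dot 2}\delta\phi = \partial_u(\mR\delta\phi + z_I\delta\phi)$ and $L_{\dot 1}\delta\phi = \partial_w(\mR\delta\phi + z_I\delta\phi)$, where $\delta\phi = \epsilon_I\exp(iF_I)$ and I have used the recursion-operator relations \eqref{R_grav}. By Theorem \ref{thmRz} we have $\mR\delta\phi = -z_I\delta\phi$, so both arguments vanish and $L_{\dot A}\delta\phi = 0$; since $\partial_\mu\delta\phi = i\epsilon_I(\partial_\mu F_I)\exp(iF_I)$ this gives $L_{\dot A}F_I = 0$, i.e. $L_{\dot A}$ annihilates the particular combination $\kappat^I_{\dot B}F_I^{\dot B}$.

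To promote this to the vanishing of $L_{\dot A}F_I^{\dot B}$ for each $\dot B$ separately, I would establish the stronger statement $L_{\dot A}(\xi_{\dot B}F_I^{\dot B}) = 0$ for an \emph{arbitrary} constant spinor $\xi_{\dot A}$; as the $\kappat^I$ case is already in hand, verifying it for one further independent $\xi$ (equivalently, for all $\xi$) produces the two components. Here I would use the graphical formula \eqref{FIA} expressing $\xi_{\dot B}F_I^{\dot B}$ as a sum over $\mT^{\meddiamond,\xi}_{N|I}$ together with the inhomogeneous piece $\xi_{\dot 1}(w - z_I\bu) + \xi_{\dot 2}(u - z_I\bw)$, and repeat the path-telescoping argument from the proof of Theorem \ref{thmRz}: since every graph is a tree there is a unique path from the diamond node $I$ to whichever node is ``decorated'' by the derivative inside $L_{\dot A}$, and along this path the numerical factors generated by $\partial_\bu\phi_i = -z_i\partial_w\phi_i$, $\partial_\bw\phi_i = -z_i\partial_u\phi_i$ \eqref{z_diff_relation} and the wiggly-edge $=z_{ij}\times$ straight-edge rule telescope to zero. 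The role played by differentiating $\phi_I$ at the circular node in Theorem \ref{thmRz} is now played by the inhomogeneous coordinate term, which supplies the endpoint contribution that closes the cancellation. A direct check of the $N=0$ (flat) base case, where $L_{\dot 1} = \partial_\bu + z_I\partial_w$, $L_{\dot 2} = \partial_\bw + z_I\partial_u$ and $F_I^{\dot 1} = w - z_I\bu$, $F_I^{\dot 2} = u - z_I\bw$, confirms $L_{\dot A}F_I^{\dot B} = 0$ and anchors the argument.

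The main obstacle is precisely this last step. Unlike Theorem \ref{thmRz}, where one verified a clean \emph{eigenvalue} $-z_I$, here the inhomogeneous coordinate piece of $\xi_{\dot B}F_I^{\dot B}$ mixes into the derivative and must cancel against the graph sum, so the bookkeeping of signs and of the $z_{ij}$ factors along the path — and their interplay with the single $[\xi i]$ numerator carried by the $\xi$-edge attached to $I$ — is more delicate than in the recursion-operator theorem. Once the telescoping identity is verified for general $\xi$, the factorization $e_{A\dot A}^\mu\partial_\mu F_I^{\dot B} = \kappa^I_A (H_I)_{\dot A}^{\;\;\dot B}$ follows and $(H_I)_{\dot A}^{\;\;\dot B}$ is well-defined, completing the proof.
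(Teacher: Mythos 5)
Your setup is sound and in fact coincides with the paper's: solvability of \eqref{H_def} is equivalent to the row relation $e_{1\dot A}^\mu\partial_\mu F_I^{\dot B}=-z_I\,e_{2\dot A}^\mu\partial_\mu F_I^{\dot B}$, and your computation $L_{\dot A}\delta\phi=\partial\,(\mR\delta\phi+z_I\delta\phi)=0$ via \eqref{R_grav} and Theorem \ref{thmRz} correctly disposes of the $\kappat^I$-contracted combination $F_I=\kappat^I_{\dot B}F_I^{\dot B}$. However, the step you yourself flag as "the main obstacle" --- promoting this to each component $F_I^{\dot B}$ separately --- is precisely where the proof lives, and you have not supplied it: the proposed re-run of the telescoping argument of Theorem \ref{thmRz} on \eqref{FIA} does not go through verbatim, because now the inhomogeneous piece $\xi_{\dot 1}(w-z_I\bu)+\xi_{\dot 2}(u-z_I\bw)$ produces, under the Poisson-bracket part of $L_{\dot A}$, terms like $\xi_{\dot 1}\partial_u^2\phi-\xi_{\dot 2}\partial_u\partial_w\phi$ that must cancel against the graph sum; you acknowledge this bookkeeping is unresolved, so the argument as written is incomplete at its decisive point.

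The paper closes this gap without any new diagrammatics, by one structural observation you did not use: every graph in $\mT^{\meddiamond}_{N|I}$ carries exactly one edge attached to the diamond node, so $F_I$ is \emph{linear} in $\kappat^I_{\dot A}=\omega_I(-\bz_I,1)$ and hence $F_I^{\dot B}$ is independent of $\omega_I$ and $\bz_I$. The components can therefore be extracted as $F_I^{\dot 1}=-\tfrac{1}{\omega_I}\partial_{\bz_I}F_I|_{\bz_I=0}$, $F_I^{\dot 2}=\tfrac{1}{\omega_I}F_I|_{\bz_I=0}$ (this is \eqref{diff_zb}); since $\partial_{\bz_I}$ and evaluation at $\bz_I=0$ commute with $e_{A\dot A}^\mu\partial_\mu$ and with multiplication by $z_I$, the eigenvalue relation $e_{1\dot A}^\mu\partial_\mu e^{iF_I}=-z_I\,e_{2\dot A}^\mu\partial_\mu e^{iF_I}$, which holds identically in $\bz_I$ at fixed $z_I$, descends component by component, and $(H_I)_{\dot A}^{\;\;\dot B}\equiv e_{2\dot A}^\mu\partial_\mu F_I^{\dot B}$ does the job. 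Within your own framework the same insight gives a one-line fix: for generic $\xi$ write $\xi_{\dot A}=\omega'(-\bz',1)$ and note that $\xi_{\dot B}F_I^{\dot B}$ is then the $F$-function of a momentum with the same $z_I$ but parameters $(\omega',\bz')$, so Theorem \ref{thmRz} applied to that linear perturbation yields $L_{\dot A}(\xi_{\dot B}F_I^{\dot B})=0$ directly, with no inhomogeneous-term bookkeeping required.
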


\begin{proof}
We begin with \eqref{FIindex}, reproduced below
\begin{equation}
    F_I = \kappat^I_{\dot A} F_I^{\dot A}
\end{equation}
where $\kappat^I_{\dot A} = \omega_I(-\bz_I, 1)$. Because $F_I$ can be written as a sum of graphs where the (diamond) $I$ node only has one edge (see \eqref{FI_graph}), we know that $F_I$ is linear in $\kappat^I$ and $F_I^{\dot A}$ is therefore independent of $\omega_I$ and $\bz_I$. The two components of $F_I^{\dot A}$ are then given by
\begin{align}
    F_I^{\dot 1} = -\frac{1}{\omega_I} \partial_{\bz_I} \eval{F_I}_{\bz_I = 0}\hspace{1 cm}
    F_I^{\dot 2} = \frac{1}{\omega_I} \eval{F_I}_{\bz_I = 0} 
\end{align}
which we can write as
\begin{align}
      F_I^{\dot B} = \frac{1}{\omega_I} \eval{ \begin{pmatrix} -\partial_{\bz_I} \\ 1 \end{pmatrix}^{\dot B} F_I }_{\bz_I = 0}.
\end{align}
We now use the explicit formula for the vierbein of the SD background \eqref{conv_vierbein},
\begin{equation}\label{vector_tetrad}
\begin{aligned}
    e^\mu_{1 \dot 1}\partial_\mu &=  \partial_{\bu} +  \{ \partial_w \phi , \cdot \}  =  \partial_w \mR \, , \\
    e^\mu_{1 \dot 2}\partial_\mu &=  \partial_{\bw} + \{ \partial_u \phi , \cdot \}  = \partial_u \mR \, ,\\
    e^\mu_{2 \dot 1}\partial_\mu &= \partial_w \, , \\
    e^\mu_{2 \dot 2}\partial_\mu &= \partial_u \, .
\end{aligned}
\end{equation}
Note that the $e_{1 \dot A}$ contain an extra application of the recursion operator $\mR$ than the $e_{2 \dot A}$ (see \eqref{R_grav}).

With the vierbein \eqref{vector_tetrad}, we may write
\begin{align}\label{diff_zb}
    e_{A \dot A}^\mu \partial_\mu F_I^{\dot B} = \frac{1}{\omega_I} \eval{ \begin{pmatrix} - \partial_{\bz_I} \\ 1 \end{pmatrix}^{\dot B} \left( \frac{-i}{e^{i F_I}} ( e_{A \dot A}^\mu \partial_\mu ) e^{i F_I} \right) }_{\bz_I = 0}.
\end{align}
The key feature of this equation is that, for the $e_{1 \dot A}$'s, we are acting with a recursion operator on the linear perturbation $e^{i F_I}$ on the RHS, which we know differs from $e_{2 \dot A}$ by a relative factor of $-z_I$ via claim \ref{thmRz}.
\begin{equation}
    e_{1 \dot A}^\mu \partial_\mu  e^{i F_I} = e_{2 \dot A}^\mu \partial_\mu ( \mR e^{i F_I} )= - z_I \, e_{2 \dot A}^\mu \partial_\mu e^{i F_I}
\end{equation}
From \eqref{diff_zb}, this implies
\begin{equation}
    e_{1 \dot A}^\mu \partial_\mu F_I^{\dot B} = - z_I \, e_{2 \dot A}^\mu \partial_\mu F_I^{\dot B}
\end{equation}
meaning we may now write
\begin{align}
    e_{1 \dot A}^\mu \partial_\mu F_I^{\dot B} &= -z_I (H^I)_{\dot A}^{\;\; \dot B} \\
    e_{2 \dot A}^\mu \partial_\mu F_I^{\dot B} &=  (H^I)_{\dot A}^{\;\; \dot B}. \label{secondFH}
\end{align}
for some proportionality matrix $(H^I)_{\dot A}^{\;\; \dot B}$. This completes the proof.
\end{proof}

Note that from $e_{2 \dot 1} = \partial_w$, $e_{2 \dot 2} = \partial_u$, the equation \eqref{secondFH} reduces to the simple expression for $(H_I)_{\dot A}^{\;\; \dot B}$
\begin{equation}\label{H_der}
    (H_I)_{\dot A}^{\;\;\;\dot B} = \begin{pmatrix} \partial_w \\ \partial_u \end{pmatrix}_{\dot A} F_I^{\dot B}.
\end{equation}

\begin{claim}\label{claim_H_sl2c}
    \begin{equation}
        (H_I)_{\dot A}^{\;\; \dot C} (H_I)_{\dot D}^{\; \; \dot B} \vep^{\dot A \dot D} = -\vep^{\dot B \dot C}.
    \end{equation}
\end{claim}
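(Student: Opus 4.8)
The plan is to prove the identity $(H_I)_{\dot A}^{\;\;\dot C}(H_I)_{\dot D}^{\;\;\dot B}\vep^{\dot A\dot D}=-\vep^{\dot B\dot C}$ by directly computing the matrix $(H_I)_{\dot A}^{\;\;\dot B}$ from its explicit form \eqref{H_der} and recognizing that the claimed property is equivalent to the statement that $(H_I)$ has unit determinant (i.e.\ it lies in $SL(2,\mathbb{C})$, as the label of Claim~\ref{claim_H_sl2c} suggests). First I would contract the claimed identity with $\vep_{\dot B\dot C}$ to see what scalar condition it encodes: since $\vep^{\dot B\dot C}\vep_{\dot B\dot C}=2$ (or whatever the convention in appendix~\ref{app_spin} fixes), the right side becomes a fixed number, and the left side becomes (up to sign) $\det(H_I)$. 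So the whole claim reduces to showing $\det (H_I)_{\dot A}^{\;\;\dot B}=1$, and more precisely that $(H_I)_{\dot A}^{\;\;\dot C}(H_I)_{\dot D}^{\;\;\dot B}\vep^{\dot A\dot D}$ is antisymmetric in $\dot B\dot C$ and hence proportional to $\vep^{\dot B\dot C}$, with the constant of proportionality being $-\det H_I$.

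The key computation is then to evaluate $\det(H_I)$ using \eqref{H_der}, which gives
\begin{equation}
    (H_I)_{\dot A}^{\;\;\dot B} = \begin{pmatrix} \partial_w \\ \partial_u \end{pmatrix}_{\dot A} F_I^{\dot B}.
\end{equation}
Writing out the four entries, $\det H_I$ becomes a Jacobian-type expression $(\partial_w F_I^{\dot 1})(\partial_u F_I^{\dot 2}) - (\partial_w F_I^{\dot 2})(\partial_u F_I^{\dot 1})$, which is precisely the Poisson bracket $\{F_I^{\dot 1}, F_I^{\dot 2}\}$ with respect to the $(u,w)$ variables defined in \eqref{pois_def_1}. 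So the task is to show $\{F_I^{\dot 1}, F_I^{\dot 2}\} = \pm 1$. I would establish this by using the defining relations \eqref{diff_zb}--\eqref{secondFH} together with the recursion-operator structure: the functions $F_I^{\dot A}$ are essentially the ``twistor-like'' coordinates conjugate to the background, and the relation $\psi_{1BCD}=\mR\,\psi_{2BCD}$ (equivalently $e_{1\dot A}=e_{2\dot A}\mR$) reflects the integrability that makes this bracket constant. Concretely, I would show that $\{F_I^{\dot 1},F_I^{\dot 2}\}$ is annihilated by both $\partial_u$ and $\partial_w$ (so it is constant in the relevant coordinates), using the heavenly equation \eqref{pleb_1} for $\phi$ and the linearized equation \eqref{linearized_2nd_eq} satisfied by the perturbation; then I would fix the constant to $-1$ by evaluating in the trivial flat-space limit $\phi=0$, where $F_I^{\dot B}$ reduces to the linear expressions $w-z_I\bu$ and $u-z_I\bw$ read off from \eqref{FIA}, whose bracket is immediately computable.

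The main obstacle I anticipate is showing that $\{F_I^{\dot 1},F_I^{\dot 2}\}$ is genuinely constant rather than merely having the right leading behavior; this requires carefully tracking how $\partial_u$ and $\partial_w$ act on the nonlinear graphical corrections encoded in $\sum_{\bt}\phi_{\bt}$ in \eqref{FI_graph}, and using the heavenly equation to cancel the interaction terms. An alternative, possibly cleaner route that sidesteps the explicit graph manipulation is to argue structurally: the vierbein \eqref{conv_vierbein} of any Plebański metric \eqref{eq2} is unimodular on the relevant slots, and the matrix $H_I$ is built from applying $e_{2\dot A}$ to the conjugate coordinates $F_I^{\dot B}$, so $\det H_I$ measures the Jacobian of the change of frame, which is forced to equal $1$ by the fact that the tetrad \eqref{conv_tetrad} satisfies $\Sigma_0^{12}\wedge\widetilde\Sigma_0^{\dot 1\dot 2}$ reproducing the flat volume form. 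I would present the direct bracket computation as the primary argument and flag the unimodularity of the Plebański vierbein as the conceptual reason the constant is forced, fixing its value in the flat limit where \eqref{FIA} gives $F_I^{\dot A}$ explicitly.
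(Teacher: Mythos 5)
Your opening reduction is exactly the paper's own first step: contracting the claimed identity with two arbitrary spinors and inserting \eqref{H_der} converts it into the bracket statement $\{\xi_{\dot A}F_I^{\dot A},\,\chi_{\dot B}F_I^{\dot B}\}=-[\xi\chi]$, i.e.\ $\{F_I^{\dot 1},F_I^{\dot 2}\}=-1$ (your ``$\det H_I=1$''). The gap is in how you then propose to prove this. The mechanism you name --- that $\partial_u$ and $\partial_w$ annihilate $\{F_I^{\dot 1},F_I^{\dot 2}\}$ --- is not a step you can establish: by \eqref{H_der}, $\partial_u\{F_I^{\dot 1},F_I^{\dot 2}\}=\{(H_I)_{\dot 2}^{\;\;\dot 1},F_I^{\dot 2}\}+\{F_I^{\dot 1},(H_I)_{\dot 2}^{\;\;\dot 2}\}$, and no available identity makes this vanish short of already knowing the bracket is constant. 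What the equations \emph{do} give (this is precisely the content of the proof of Claim \ref{claim_H_def}) is $(e_{1\dot A}+z_I\,e_{2\dot A})F_I^{\dot B}=0$, and since $e_{1\dot 1}+z_Ie_{2\dot 1}=\partial_\bu+\{\partial_w\phi,\cdot\}+z_I\partial_w$ and $e_{1\dot 2}+z_Ie_{2\dot 2}=\partial_\bw+\{\partial_u\phi,\cdot\}+z_I\partial_u$ are each a constant-coefficient vector field plus a Hamiltonian vector field, they are derivations of the Poisson bracket and hence annihilate $\{F_I^{\dot 1},F_I^{\dot 2}\}$; these Lax combinations, not $\partial_u,\partial_w$, are the operators your ``recursion-operator structure'' actually supplies. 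A second, related gap: even granted constancy in spacetime, ``evaluate at $\phi=0$'' does not follow, because the bracket depends on the seed data $\epsilon_i$ as well as on $x$. You need an induction on perturbative order: if all corrections below order $k$ vanish, the order-$k$ part $G_k=\sum_S c_S\,e^{iP_S\cdot X}$ obeys $(\partial_\bu+z_I\partial_w)G_k=(\partial_\bw+z_I\partial_u)G_k=0$, and since $(\partial_\bw+z_I\partial_u)e^{iP_S\cdot X}=i\big(\sum_{j\in S}\omega_j z_{Ij}\big)e^{iP_S\cdot X}$ with generically nonvanishing coefficient, each $c_S=0$. Your fallback ``unimodularity of the vierbein'' argument is too vague to stand on its own: $\det H_I$ is not manifestly a frame-change Jacobian, and the honest version of that intuition is exactly the symplectic (Lax-derivation) property above.

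For comparison, the paper's proof is a two-line specialization of machinery it builds anyway: it takes the graphical identity of Claim \ref{claim_FIFJ}, $-\{\xi_{\dot A}F_I^{\dot A},\chi_{\dot B}F_J^{\dot B}\}=\langle IJ\rangle\sum_{\bt\in\mT^{\meddiamond,\xi,\chi}_{N|I,J}}\phi_\bt$, and sets $J=I$; the overall factor $\langle II\rangle=0$ kills every graph except the single exceptional one carrying a $1/\langle II\rangle$ pole, leaving exactly $-[\xi\chi]$. If you repair your plan as indicated (Lax-operator derivations plus the order-by-order genericity induction), you would obtain a genuinely independent proof that bypasses the graphical computation entirely, which would be a nice alternative; but as written, the central constancy step rests on a claim that fails, and the argument fixing the constant is missing.
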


\begin{proof}
The above statement is equivalent to
\begin{equation}
        \xi_{\dot B} \chi_{\dot C}(H_I)_{\dot A}^{\;\; \dot C} (H^I)_{\dot D}^{\; \; \dot B} \vep^{\dot A \dot D} = - \xi_{\dot B} \chi^{\dot B}
    \end{equation}
where $\xi$ and $\chi$ are arbitrary spinors. By \eqref{H_der}, this can be rewritten as
\begin{equation}
    \{ \xi_{\dot B} F_I^{\dot B}, \chi_{\dot C} F_I^{\dot C} \} =  -[ \xi \chi ] .
\end{equation}
The above equation will be proven diagrammatically in Claim \ref{claimFIFI}, completing the proof.

\end{proof}

We are now in a position to rewrite the MHV generating function $\Sigma_0 \wedge \gamma^I \wedge \gamma^J$ as an expression which is easier to work with. We know what $\Sigma_0$ is from \eqref{conv_tetrad}, we know what $\gamma^I$ is from \eqref{gammaIput}, and we know what $H_I$ is from \eqref{H_der}. Plugging in all of these formulas, we get
\begin{equation}\label{to_eval}
    \Sigma_0^{AB} \wedge (\gamma^I)_{AC} \wedge (\gamma^J)^{C}_{\;\;B} =  i \frac{\langle I J \rangle^3 }{[\xi I][\chi J]} \{ \xi_{\dot A} F_I^{\dot A}, \chi_{\dot B} F_J^{\dot B} \} \epsilon_I e^{i F_I} \, \epsilon_J e^{i F_J} \mathrm{d vol}
\end{equation}
where
\begin{equation}
\mathrm{dvol} = 4 i\, du \wedge d \bu \wedge dw \wedge d \bw 
\end{equation}
is the volume form of any SD metric written in Plebański form \eqref{eq2}, as $\mathrm{det}(g) = 16$.

\subsection{Proof of the NSVW formula}

Now that we have an expression for the MHV generating function \eqref{to_eval}, we can use it to calculate the MHV amplitude \eqref{MHV_eq1}. In the end, the dependence on the arbitrary reference spinors $\xi_{\dot A}$ and $\chi_{\dot B}$ will drop out. 

We begin by finding a visual representation of the term $\{ \xi_{\dot A} F_I^{\dot A}, \chi_{\dot B} F_J^{\dot B} \}$. It will end up being expressed in terms of the following graphs.

\begin{definition}
    Let $\mT^{\meddiamond, \xi, \chi}_{N|I, J}$ be the set of graphs which can be formed by taking a graph in $\mT_N$ and appending to it two diamond nodes with labels $I$ and $J$, which are connected by a $\xi$-edge and a $\chi$-edge, respectively. We also include in $\mT^{\meddiamond, \xi, \chi}_{N|I, J}$ the exceptional graph which is just the two diamond nodes $I$ and $J$ connected by a paired $\xi \chi$-edge. 
\end{definition}

\begin{exs}
    Here are some examples of graphs in $\mT^{\meddiamond, \xi, \chi}_{N|I, J}$ for $N = 7$.

    \begin{center}
        \tikzset{every picture/.style={line width=0.75pt}} 

\begin{tikzpicture}[x=0.75pt,y=0.75pt,yscale=-1,xscale=1]

\draw    (110,144.7) -- (166,144.7) ;
\draw    (81,184.7) -- (110,144.7) ;
\draw    (106,86.7) -- (110,144.7) ;
\draw    (106,86.7) -- (153,59.7) ;
\draw    (93,37.7) -- (106,86.7) ;
\draw    (50,86.7) -- (106,86.7) ;
\draw    (324,102.7) -- (379,102.7) ;
\draw    (269,102.7) -- (324,102.7) ;
\draw    (324,102.7) -- (303,57.7) ;
\draw    (324,102.7) -- (345,57.7) ;
\draw    (345,147.7) -- (324,102.7) ;
\draw    (303,147.7) -- (324,102.7) ;
\draw    (472.36,102.7) -- (545.36,102.7) ;
\draw  [fill={rgb, 255:red, 255; green, 255; blue, 255 }  ,fill opacity=1 ] (93.5,86.7) .. controls (93.5,79.8) and (99.1,74.2) .. (106,74.2) .. controls (112.9,74.2) and (118.5,79.8) .. (118.5,86.7) .. controls (118.5,93.6) and (112.9,99.2) .. (106,99.2) .. controls (99.1,99.2) and (93.5,93.6) .. (93.5,86.7) -- cycle ;
\draw  [fill={rgb, 255:red, 255; green, 255; blue, 255 }  ,fill opacity=1 ] (50,69.02) -- (67.68,86.7) -- (50,104.38) -- (32.32,86.7) -- cycle ;
\draw  [fill={rgb, 255:red, 255; green, 255; blue, 255 }  ,fill opacity=1 ] (80.5,37.7) .. controls (80.5,30.8) and (86.1,25.2) .. (93,25.2) .. controls (99.9,25.2) and (105.5,30.8) .. (105.5,37.7) .. controls (105.5,44.6) and (99.9,50.2) .. (93,50.2) .. controls (86.1,50.2) and (80.5,44.6) .. (80.5,37.7) -- cycle ;
\draw  [fill={rgb, 255:red, 255; green, 255; blue, 255 }  ,fill opacity=1 ] (140.5,59.7) .. controls (140.5,52.8) and (146.1,47.2) .. (153,47.2) .. controls (159.9,47.2) and (165.5,52.8) .. (165.5,59.7) .. controls (165.5,66.6) and (159.9,72.2) .. (153,72.2) .. controls (146.1,72.2) and (140.5,66.6) .. (140.5,59.7) -- cycle ;
\draw  [fill={rgb, 255:red, 255; green, 255; blue, 255 }  ,fill opacity=1 ] (97.5,144.7) .. controls (97.5,137.8) and (103.1,132.2) .. (110,132.2) .. controls (116.9,132.2) and (122.5,137.8) .. (122.5,144.7) .. controls (122.5,151.6) and (116.9,157.2) .. (110,157.2) .. controls (103.1,157.2) and (97.5,151.6) .. (97.5,144.7) -- cycle ;
\draw  [fill={rgb, 255:red, 255; green, 255; blue, 255 }  ,fill opacity=1 ] (68.5,184.7) .. controls (68.5,177.8) and (74.1,172.2) .. (81,172.2) .. controls (87.9,172.2) and (93.5,177.8) .. (93.5,184.7) .. controls (93.5,191.6) and (87.9,197.2) .. (81,197.2) .. controls (74.1,197.2) and (68.5,191.6) .. (68.5,184.7) -- cycle ;
\draw  [fill={rgb, 255:red, 255; green, 255; blue, 255 }  ,fill opacity=1 ] (166,127.02) -- (183.68,144.7) -- (166,162.38) -- (148.32,144.7) -- cycle ;
\draw  [fill={rgb, 255:red, 255; green, 255; blue, 255 }  ,fill opacity=1 ] (311.5,102.7) .. controls (311.5,95.8) and (317.1,90.2) .. (324,90.2) .. controls (330.9,90.2) and (336.5,95.8) .. (336.5,102.7) .. controls (336.5,109.6) and (330.9,115.2) .. (324,115.2) .. controls (317.1,115.2) and (311.5,109.6) .. (311.5,102.7) -- cycle ;
\draw  [fill={rgb, 255:red, 255; green, 255; blue, 255 }  ,fill opacity=1 ] (290.5,147.7) .. controls (290.5,140.8) and (296.1,135.2) .. (303,135.2) .. controls (309.9,135.2) and (315.5,140.8) .. (315.5,147.7) .. controls (315.5,154.6) and (309.9,160.2) .. (303,160.2) .. controls (296.1,160.2) and (290.5,154.6) .. (290.5,147.7) -- cycle ;
\draw  [fill={rgb, 255:red, 255; green, 255; blue, 255 }  ,fill opacity=1 ] (332.5,147.7) .. controls (332.5,140.8) and (338.1,135.2) .. (345,135.2) .. controls (351.9,135.2) and (357.5,140.8) .. (357.5,147.7) .. controls (357.5,154.6) and (351.9,160.2) .. (345,160.2) .. controls (338.1,160.2) and (332.5,154.6) .. (332.5,147.7) -- cycle ;
\draw  [fill={rgb, 255:red, 255; green, 255; blue, 255 }  ,fill opacity=1 ] (290.5,57.7) .. controls (290.5,50.8) and (296.1,45.2) .. (303,45.2) .. controls (309.9,45.2) and (315.5,50.8) .. (315.5,57.7) .. controls (315.5,64.6) and (309.9,70.2) .. (303,70.2) .. controls (296.1,70.2) and (290.5,64.6) .. (290.5,57.7) -- cycle ;
\draw  [fill={rgb, 255:red, 255; green, 255; blue, 255 }  ,fill opacity=1 ] (332.5,57.7) .. controls (332.5,50.8) and (338.1,45.2) .. (345,45.2) .. controls (351.9,45.2) and (357.5,50.8) .. (357.5,57.7) .. controls (357.5,64.6) and (351.9,70.2) .. (345,70.2) .. controls (338.1,70.2) and (332.5,64.6) .. (332.5,57.7) -- cycle ;
\draw  [fill={rgb, 255:red, 255; green, 255; blue, 255 }  ,fill opacity=1 ] (379,85.02) -- (396.68,102.7) -- (379,120.38) -- (361.32,102.7) -- cycle ;
\draw  [fill={rgb, 255:red, 255; green, 255; blue, 255 }  ,fill opacity=1 ] (269,85.02) -- (286.68,102.7) -- (269,120.38) -- (251.32,102.7) -- cycle ;
\draw  [fill={rgb, 255:red, 255; green, 255; blue, 255 }  ,fill opacity=1 ] (545.36,85.02) -- (563.03,102.7) -- (545.36,120.38) -- (527.68,102.7) -- cycle ;
\draw  [fill={rgb, 255:red, 255; green, 255; blue, 255 }  ,fill opacity=1 ] (472.36,85.02) -- (490.03,102.7) -- (472.36,120.38) -- (454.68,102.7) -- cycle ;

\draw (508.86,99.3) node [anchor=south] [inner sep=0.75pt]    {$\xi \ \chi $};
\draw (296.5,99.3) node [anchor=south] [inner sep=0.75pt]    {$\xi $};
\draw (351.5,99.3) node [anchor=south] [inner sep=0.75pt]    {$\chi $};
\draw (78,83.3) node [anchor=south] [inner sep=0.75pt]    {$\xi $};
\draw (138,141.3) node [anchor=south] [inner sep=0.75pt]    {$\chi $};
\draw (472.36,102.7) node    {$I$};
\draw (269,102.7) node    {$I$};
\draw (50,86.7) node    {$I$};
\draw (166,144.7) node    {$J$};
\draw (379,102.7) node    {$J$};
\draw (545.36,102.7) node    {$J$};
\draw (110,144.7) node    {$2$};
\draw (153,59.7) node    {$5$};
\draw (106,86.7) node    {$3$};
\draw (93,37.7) node    {$7$};
\draw (81,184.7) node    {$6$};
\draw (324,102.7) node    {$2$};
\draw (345,147.7) node    {$6$};
\draw (345,57.7) node    {$5$};
\draw (303,57.7) node    {$3$};
\draw (303,147.7) node    {$4$};

\end{tikzpicture}
    \end{center}
    
\end{exs}

\begin{claim}\label{claim_FIFJ}
    \begin{equation}\label{eq_claim_FIFJ}
        -\{ \xi_{\dot A} F_I^{\dot A}, \chi_{\dot B} F_J^{\dot B} \} =  \langle I J \rangle \sum_{\bt \, \in \, \mT^{\meddiamond, \xi, \chi}_{N|I, J}} \phi_{\bt } 
    \end{equation}
\end{claim}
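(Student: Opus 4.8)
The plan is to compute the Poisson bracket $\{\xi_{\dot A} F_I^{\dot A}, \chi_{\dot B} F_J^{\dot B}\}$ directly using the graphical expansion \eqref{FIA} for each factor, and then to reorganize the resulting sum of diagrams into the claimed sum over $\mT^{\meddiamond,\xi,\chi}_{N|I,J}$, with the overall $\langle IJ\rangle$ prefactor emerging from the kinematic data at the $I$ and $J$ nodes. First I would write $\xi_{\dot A} F_I^{\dot A}$ and $\chi_{\dot B} F_J^{\dot B}$ each using \eqref{FIA}, so that each factor is a linear ``free'' piece (e.g.\ $\xi_{\dot 1}(w - z_I \bu) + \xi_{\dot 2}(u - z_I \bw)$) plus $-i\sum_{\bt \in \mT^{\meddiamond,\xi}_{N|I}}\phi_{\bt}$. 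The Poisson bracket is then bilinear, and I would expand it using the Leibniz rule exactly as in the proof of Theorem \ref{thm_21}: the bracket of two diagrams produces a sum over all ways of drawing a single wiggly ($D^{ij}$) edge from a node of the $I$-graph to a node of the $J$-graph, subject to the vanishing $(\phi_i)^2 = 0$ constraint forbidding shared nodes.

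The key observation I expect to drive the argument is that the combined $I$-$J$ graph is again a tree, now containing two distinguished diamond nodes $I$ and $J$. The $\xi$-edge attaching $I$ and the $\chi$-edge attaching $J$ are already present from the two factors, and the new wiggly edge generated by the Poisson bracket joins the two previously disconnected trees into a single connected tree. I would argue that after connecting, each term is precisely a graph in $\mT^{\meddiamond,\xi,\chi}_{N|I,J}$. The main subtlety to handle carefully is the derivative structure: the Poisson bracket $\{\cdot,\cdot\}$ carries its own $\partial_u,\partial_w$, and I must verify these combine with the $D^{ij}$ operators so that the new edge indeed corresponds to a single ordinary $D$-type edge rather than the double-derivative structure seen in $\mT^{P^2}_N$. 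Because the bracket here is of two \emph{distinct} perturbiner factors (not the quadratic self-interaction term), only one wiggly line is produced, which is why we land in $\mT^{\meddiamond,\xi,\chi}_{N|I,J}$ and not in a ``$P^2$'' analogue.

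The hardest step will be extracting the overall factor of $\langle IJ\rangle$ and showing that the kinematic weights on the $\xi$- and $\chi$-edges recombine correctly; this is where the linear (non-graph) pieces of \eqref{FIA} and the lowest-order graphs must be tracked. I expect the $\langle IJ\rangle$ to arise from the bracket acting on the two linear pieces, and from the spinor contractions $[\xi i]/\langle Ii\rangle$ and $[\chi j]/\langle Jj\rangle$ combining with the $z_{IJ} = \langle IJ\rangle$ from the direct $I$-$J$ edge in the exceptional graph. In particular I would treat the exceptional graph (the two diamonds $I$ and $J$ joined by a single paired $\xi\chi$-edge) as the ``seed'' term coming from the bracket of the two linear pieces, verify it produces exactly $\langle IJ\rangle$ times the corresponding $\phi_{\bt}$, and then check that all higher graphs follow the same pattern once a spectator tree from $\mT_N$ is attached.

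I would organize the verification, as in the proof of Theorem \ref{thm_21}, around a representative low-$N$ example (say $N=0$ and $N=1$) to fix the normalization and sign conventions unambiguously, then note that the general case follows by the same Leibniz bookkeeping since attaching an arbitrary spectator tree from $\mT_N$ does not interfere with the $I$-$J$ connecting edge. I anticipate appealing to the relations \eqref{z_diff_relation} and to the edge identities in Figures \ref{fig:7452634} and \ref{fig:45635} to convert derivatives acting on seeds into the $z_{ij}$ and bracket factors. The genuinely new input beyond Theorem \ref{thm_21} is simply the presence of the two diamond decorations and their reference spinors, so the core combinatorial mechanism is already in hand.
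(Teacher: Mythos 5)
Your setup is right and matches the paper's: expand both factors via \eqref{FIA}, apply the Leibniz rule, and observe that the bracket produces (i) graph--graph terms with a single wiggly $D^{ij}$ edge joining a node of the $I$-tree to a node of the $J$-tree, (ii) cross terms of the linear pieces against the graph sums, and (iii) the constant $[\xi\chi]$. But there is a genuine gap in how you extract the factor $\langle IJ\rangle$, and an intermediate claim that is false as stated. Individual terms produced by the bracket are \emph{not} ``precisely graphs in $\mT^{\meddiamond,\xi,\chi}_{N|I,J}$'': they carry one wiggly edge, i.e.\ a bare $D^{ij}$ (numerator $[ij]$ with no $1/\langle ij\rangle$), whereas every internal edge of a graph in $\mT^{\meddiamond,\xi,\chi}_{N|I,J}$ is of the $[ij]/\langle ij\rangle$ type. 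The identification therefore cannot be made term by term. The correct mechanism, which your proposal never articulates, is a resummation: fix a target straight-edge tree and collect \emph{all} bracket terms that reduce to it when the wiggly edge is straightened --- one such term for each possible placement of the wiggly edge along the unique path from $I$ to $J$ (the two end placements being exactly your cross terms (ii), the interior placements coming from (i)). Since straightening a wiggly edge between $a$ and $b$ costs a factor $z_{ab}$, this group sums to $\bigl(z_{Ia}+z_{ab}+\cdots+z_{cJ}\bigr)$ times the target graph, and the telescoping identity $z_{Ia}+z_{ab}+\cdots+z_{cJ}=z_{IJ}=\langle IJ\rangle$ (the same trick as in Theorem \ref{thmRz}) is what produces the prefactor.

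Your stated alternative --- that $\langle IJ\rangle$ ``arises from the bracket acting on the two linear pieces'' and from the exceptional graph serving as a seed to which spectator trees are attached --- fails already at $N=1$: the graph $I\!-\!1\!-\!J$ receives two contributions, $[\xi1][\chi1]/\langle I1\rangle$ and $[\xi1][\chi1]/\langle 1J\rangle$, and only their sum equals $\langle IJ\rangle\,[\xi1][\chi1]/\bigl(\langle I1\rangle\langle 1J\rangle\bigr)$ via $\langle I1\rangle+\langle 1J\rangle=\langle IJ\rangle$; nothing here is inherited from the exceptional graph, which separately accounts for the $[\xi\chi]$ term. Likewise, the generic graphs do not arise by ``attaching spectator trees'' to anything --- branches off the $I$--$J$ path are indeed inert, but the path itself is where the whole identity lives, and your proposal treats it as if it required no further argument. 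To repair the proof you need to state and use the grouping-by-topology plus telescoping-sum step explicitly; low-$N$ checks plus ``the same Leibniz bookkeeping'' do not substitute for it.
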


\begin{proof}
    From the graphical equation for $\xi_{\dot A} F_I^{\dot A}$ \eqref{FIA}, we may write the LHS of the above expression as
    \begin{align}\label{poissonFIFJ}
    -\{ \xi_{\dot A} F_I^{\dot A},  \chi_{\dot B} F_J^{\dot B}\} =& \,   \{\sum_{\bt \in \mT^{\meddiamond, \xi }_{N | I}}  \phi_{\bt}, \sum_{\bt' \in \mT^{\meddiamond, \chi }_{N | J}} \phi_{\bt'} \} \\
    & - i (\xi_{\dot 1} \partial_u - \xi_{\dot 2} \partial_w ) \sum_{\bt' \in \mT^{\meddiamond, \chi }_{N | J}}  \phi_{\bt'}  + i ( \chi_{\dot 1} \partial_u - \chi_{\dot 2} \partial_w ) \sum_{\bt \in \mT^{\meddiamond, \xi}_{N | I}}  \phi_{\bt} \nonumber \\
    & + [\xi \chi] \nonumber
\end{align}
where we used the fact that
\begin{equation}
    -\{ \xi_{\dot 1} ( w - z_I \bu) + \xi_{\dot 2}( u - z_I \bw ), \chi_{\dot 1} ( w - z_I \bu) + \chi_{\dot 2}( u - z_I \bw ) \} =  [ \xi \chi].
\end{equation}
There are now four types of graphs to be summed, arising from the four terms in the RHS of \eqref{poissonFIFJ}.

The first type of graph, corresponding to the first term, is constructed by taking two graphs in $\mT^{\meddiamond, \xi}_{N|I}$ and $\mT^{\meddiamond, \chi}_{N|J}$ and drawing one wiggly line from a circle node in the first graph to a circle node in the second graph. A typical example of such a graph we should sum over is shown below.
\begin{center}
    \tikzset{every picture/.style={line width=0.75pt}} 

\begin{tikzpicture}[x=0.75pt,y=0.75pt,yscale=-1,xscale=1]

\draw    (201,33.58) -- (180,66.41) ;
\draw    (180,66.41) -- (201,99.23) ;
\draw    (132.5,66.41) -- (111.5,99.23) ;
\draw    (180,66.41) -- (227.5,66.41) ;
\draw    (76.03,30.41) -- (85,66.41) ;
\draw    (37.5,66.41) -- (85,66.41) ;
\draw    (85,66.41) -- (132.5,66.41) ;
\draw  [fill={rgb, 255:red, 255; green, 255; blue, 255 }  ,fill opacity=1 ] (76.03,66.41) .. controls (76.03,61.45) and (80.05,57.43) .. (85,57.43) .. controls (89.95,57.43) and (93.97,61.45) .. (93.97,66.41) .. controls (93.97,71.36) and (89.95,75.38) .. (85,75.38) .. controls (80.05,75.38) and (76.03,71.36) .. (76.03,66.41) -- cycle ;
\draw    (132.5,66.41) .. controls (134.17,64.74) and (135.83,64.74) .. (137.5,66.41) .. controls (139.17,68.08) and (140.83,68.08) .. (142.5,66.41) .. controls (144.17,64.74) and (145.83,64.74) .. (147.5,66.41) .. controls (149.17,68.08) and (150.83,68.08) .. (152.5,66.41) .. controls (154.17,64.74) and (155.83,64.74) .. (157.5,66.41) .. controls (159.17,68.08) and (160.83,68.08) .. (162.5,66.41) .. controls (164.17,64.74) and (165.83,64.74) .. (167.5,66.41) .. controls (169.17,68.08) and (170.83,68.08) .. (172.5,66.41) .. controls (174.17,64.74) and (175.83,64.74) .. (177.5,66.41) -- (180,66.41) -- (180,66.41) ;
\draw [shift={(160.05,66.41)}, rotate = 180] [fill={rgb, 255:red, 0; green, 0; blue, 0 }  ][line width=0.08]  [draw opacity=0] (8.93,-4.29) -- (0,0) -- (8.93,4.29) -- (5.93,0) -- cycle    ;
\draw  [fill={rgb, 255:red, 255; green, 255; blue, 255 }  ,fill opacity=1 ] (123.53,66.41) .. controls (123.53,61.45) and (127.55,57.43) .. (132.5,57.43) .. controls (137.45,57.43) and (141.47,61.45) .. (141.47,66.41) .. controls (141.47,71.36) and (137.45,75.38) .. (132.5,75.38) .. controls (127.55,75.38) and (123.53,71.36) .. (123.53,66.41) -- cycle ;
\draw  [fill={rgb, 255:red, 255; green, 255; blue, 255 }  ,fill opacity=1 ] (67.06,30.41) .. controls (67.06,25.45) and (71.07,21.43) .. (76.03,21.43) .. controls (80.98,21.43) and (85,25.45) .. (85,30.41) .. controls (85,35.36) and (80.98,39.38) .. (76.03,39.38) .. controls (71.07,39.38) and (67.06,35.36) .. (67.06,30.41) -- cycle ;
\draw  [fill={rgb, 255:red, 255; green, 255; blue, 255 }  ,fill opacity=1 ] (102.53,99.23) .. controls (102.53,94.27) and (106.55,90.26) .. (111.5,90.26) .. controls (116.45,90.26) and (120.47,94.27) .. (120.47,99.23) .. controls (120.47,104.18) and (116.45,108.2) .. (111.5,108.2) .. controls (106.55,108.2) and (102.53,104.18) .. (102.53,99.23) -- cycle ;
\draw  [fill={rgb, 255:red, 255; green, 255; blue, 255 }  ,fill opacity=1 ] (192.03,33.58) .. controls (192.03,28.63) and (196.05,24.61) .. (201,24.61) .. controls (205.95,24.61) and (209.97,28.63) .. (209.97,33.58) .. controls (209.97,38.54) and (205.95,42.55) .. (201,42.55) .. controls (196.05,42.55) and (192.03,38.54) .. (192.03,33.58) -- cycle ;
\draw  [fill={rgb, 255:red, 255; green, 255; blue, 255 }  ,fill opacity=1 ] (192.03,99.23) .. controls (192.03,94.27) and (196.05,90.26) .. (201,90.26) .. controls (205.95,90.26) and (209.97,94.27) .. (209.97,99.23) .. controls (209.97,104.18) and (205.95,108.2) .. (201,108.2) .. controls (196.05,108.2) and (192.03,104.18) .. (192.03,99.23) -- cycle ;
\draw  [fill={rgb, 255:red, 255; green, 255; blue, 255 }  ,fill opacity=1 ] (227.5,53.72) -- (240.19,66.41) -- (227.5,79.09) -- (214.81,66.41) -- cycle ;
\draw  [fill={rgb, 255:red, 255; green, 255; blue, 255 }  ,fill opacity=1 ] (171.03,66.41) .. controls (171.03,61.45) and (175.05,57.43) .. (180,57.43) .. controls (184.95,57.43) and (188.97,61.45) .. (188.97,66.41) .. controls (188.97,71.36) and (184.95,75.38) .. (180,75.38) .. controls (175.05,75.38) and (171.03,71.36) .. (171.03,66.41) -- cycle ;
\draw  [fill={rgb, 255:red, 255; green, 255; blue, 255 }  ,fill opacity=1 ] (37.5,53.72) -- (50.19,66.41) -- (37.5,79.09) -- (24.81,66.41) -- cycle ;

\draw (227.5,66.41) node  [font=\footnotesize]  {$J$};
\draw (61.25,63.01) node [anchor=south] [inner sep=0.75pt]    {$\xi $};
\draw (203.75,62.01) node [anchor=south] [inner sep=0.75pt]    {$\chi $};
\draw (37.5,66.41) node  [font=\footnotesize]  {$I$};
\draw (261.87,78) node [anchor=west] [inner sep=0.75pt]    {$\displaystyle \subset \;\;\; \{\sum_{\bt \, \in \, \mT^{\meddiamond, \xi }_{N | I}}  \phi_{\bt}, \sum_{\bt' \, \in \,  \mT^{\meddiamond, \chi }_{N | J}} \phi_{\bt'} \}$};

\end{tikzpicture}
\end{center}
For the second type of graph corresponding to the second term of \eqref{poissonFIFJ}, if we use the fact that
\begin{equation}
    -i(\xi_{\dot 1} \partial_u - \xi_{\dot 2} \partial_w ) \phi_i = [\xi i] \phi_i
\end{equation}
then we see that this term corresponds to appending one diamond $I$ node with a wiggly $\xi$-edge to a circle node in a graph in $\mT^{\meddiamond, \chi}_{N|J}$. An analogous statement holds for the third term, and typical graphs which contribute to these terms look like
\begin{center}
    \input{figures/gravity_graph_44.tex}
\end{center}
One could consider the equivalence class of graphs which would look identical if this wiggly line were converted into a straight line. Let us consider what happens when we sum over all graphs in an equivalence class.

Because a wiggly line between nodes $i$ and $j$ can be converted into a straight line at the cost of multiplication by $z_{ij}$, the result of this sum would be multiply a converted graph with no wiggly edges by a sum over all $z_{ij}$'s where $i$ and $j$ are all the nodes on the path connecting $I$ and $J$. (We used similar trick in the proof of Theorem \ref{thmRz}.) This is depicted in the figure below.

\begin{center}
    \input{figures/gravity_graph_34}
\end{center}
Note that the final graph in the above equation is indeed contained in $\mT^{\meddiamond, \xi, \chi}_{N|I, J}$, as we wanted to show. 

The above sum naturally included all graphs from the first three terms on the RHS of \eqref{poissonFIFJ}. To conclude the proof, we need to account for the fourth and final term $[\xi \chi]$. This, however, corresponds to the exceptional graph
\begin{center}
    \tikzset{every picture/.style={line width=0.75pt}} 

\begin{tikzpicture}[x=0.75pt,y=0.75pt,yscale=-1,xscale=1]

\draw    (122,79.2) .. controls (123.67,77.53) and (125.33,77.53) .. (127,79.2) .. controls (128.67,80.87) and (130.33,80.87) .. (132,79.2) .. controls (133.67,77.53) and (135.33,77.53) .. (137,79.2) .. controls (138.67,80.87) and (140.33,80.87) .. (142,79.2) .. controls (143.67,77.53) and (145.33,77.53) .. (147,79.2) .. controls (148.67,80.87) and (150.33,80.87) .. (152,79.2) .. controls (153.67,77.53) and (155.33,77.53) .. (157,79.2) .. controls (158.67,80.87) and (160.33,80.87) .. (162,79.2) .. controls (163.67,77.53) and (165.33,77.53) .. (167,79.2) .. controls (168.67,80.87) and (170.33,80.87) .. (172,79.2) .. controls (173.67,77.53) and (175.33,77.53) .. (177,79.2) .. controls (178.67,80.87) and (180.33,80.87) .. (182,79.2) .. controls (183.67,77.53) and (185.33,77.53) .. (187,79.2) .. controls (188.67,80.87) and (190.33,80.87) .. (192,79.2) -- (195,79.2) -- (195,79.2) ;
\draw [shift={(163.5,79.2)}, rotate = 180] [fill={rgb, 255:red, 0; green, 0; blue, 0 }  ][line width=0.08]  [draw opacity=0] (10.72,-5.15) -- (0,0) -- (10.72,5.15) -- (7.12,0) -- cycle    ;
\draw  [fill={rgb, 255:red, 255; green, 255; blue, 255 }  ,fill opacity=1 ] (122,61.52) -- (139.68,79.2) -- (122,96.88) -- (104.32,79.2) -- cycle ;
\draw  [fill={rgb, 255:red, 255; green, 255; blue, 255 }  ,fill opacity=1 ] (195,61.52) -- (212.68,79.2) -- (195,96.88) -- (177.32,79.2) -- cycle ;
\draw    (318,78.2) -- (391,78.2) ;
\draw  [fill={rgb, 255:red, 255; green, 255; blue, 255 }  ,fill opacity=1 ] (318,60.52) -- (335.68,78.2) -- (318,95.88) -- (300.32,78.2) -- cycle ;
\draw  [fill={rgb, 255:red, 255; green, 255; blue, 255 }  ,fill opacity=1 ] (391,60.52) -- (408.68,78.2) -- (391,95.88) -- (373.32,78.2) -- cycle ;

\draw (96.32,79.2) node [anchor=east] [inner sep=0.75pt]    {$[ \xi \chi ] \ =$};
\draw (122,79.2) node    {$I$};
\draw (195,79.2) node    {$J$};
\draw (158.5,72.8) node [anchor=south] [inner sep=0.75pt]  [font=\normalsize]  {$\xi \ \chi $};
\draw (318,78.2) node    {$I$};
\draw (391,78.2) node    {$J$};
\draw (354.5,71.8) node [anchor=south] [inner sep=0.75pt]  [font=\normalsize]  {$\xi \ \chi $};
\draw (298.32,78.2) node [anchor=east] [inner sep=0.75pt]    {$\ =\ \langle IJ\rangle \ \times \ $};

\end{tikzpicture}
\end{center}
completing the proof.

\end{proof}

Claim \ref{claim_FIFJ} also implies the following corollary.

\begin{claim}\label{claimFIFI}
    \begin{equation}
        \{ \xi_{\dot A} F^{\dot A}_I, \chi_{\dot B} F^{\dot B}_I \} = - [ \xi \chi]
    \end{equation}
\end{claim}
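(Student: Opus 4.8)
The plan is to derive this statement, $\{ \xi_{\dot A} F^{\dot A}_I, \chi_{\dot B} F^{\dot B}_I \} = -[\xi\chi]$, as a direct corollary of Claim \ref{claim_FIFJ}, which establishes
\begin{equation}\label{plan_start}
    -\{ \xi_{\dot A} F_I^{\dot A}, \chi_{\dot B} F_J^{\dot B} \} =  \langle I J \rangle \sum_{\bt \, \in \, \mT^{\meddiamond, \xi, \chi}_{N|I, J}} \phi_{\bt }.
\end{equation}
The idea is that Claim \ref{claimFIFI} is the ``coincidence limit'' of Claim \ref{claim_FIFJ}, obtained by setting the second ASD graviton $J$ equal to the first one $I$. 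First I would set $J \to I$ on both sides of \eqref{plan_start}. On the left-hand side, this turns $F_J^{\dot B}$ into $F_I^{\dot B}$, producing exactly the Poisson bracket $-\{ \xi_{\dot A} F_I^{\dot A}, \chi_{\dot B} F_I^{\dot B} \}$ we wish to evaluate. So the task reduces to showing that the right-hand side of \eqref{plan_start} collapses to $[\xi\chi]$ in this limit.

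The key observation for the right-hand side is the behavior of the graph sum as $J \to I$. Each graph $\bt \in \mT^{\meddiamond, \xi, \chi}_{N|I, J}$ contains the two diamond nodes $I$ and $J$. In the generic (non-exceptional) graphs, both diamonds attach via edges to circular nodes in an underlying graph from $\mT_N$; as $J \to I$, these graphs would require two distinct diamond nodes sitting at the same label, which is forbidden by the square-zero nature of the seed functions, $(\epsilon_I)^2 = 0$. Hence every generic graph vanishes in the limit. The \emph{only} surviving contribution is the exceptional graph consisting of the two diamonds $I$ and $J$ joined directly by the paired $\xi\chi$-edge. I would then recall from the end of the proof of Claim \ref{claim_FIFJ} that this exceptional graph evaluates to $[\xi\chi]/\langle IJ\rangle$ (equivalently, $\langle IJ\rangle \times (\text{exceptional graph}) = [\xi\chi]$). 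Thus the right-hand side of \eqref{plan_start} becomes simply $[\xi\chi]$ as $J\to I$, giving
\begin{equation}
    -\{ \xi_{\dot A} F_I^{\dot A}, \chi_{\dot B} F_I^{\dot B} \} = [\xi\chi],
\end{equation}
which is the claim.

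A cleaner and more self-contained alternative, which avoids subtleties about taking limits of expressions built from $\epsilon$-variables, is to observe that $\xi_{\dot A} F_I^{\dot A}$ from equation \eqref{FIA} splits into a ``free'' piece $\xi_{\dot 1}(w - z_I \bu) + \xi_{\dot 2}(u - z_I\bw)$ plus the interacting graph sum. The Poisson bracket $\{\cdot,\cdot\}$ is bilinear, so I would expand $\{\xi_{\dot A}F_I^{\dot A}, \chi_{\dot B}F_I^{\dot B}\}$ into four terms. The free-free term is computed directly using the identity already recorded in the proof of Claim \ref{claim_FIFJ}, namely $-\{\xi_{\dot 1}(w-z_I\bu)+\xi_{\dot 2}(u-z_I\bw),\, \chi_{\dot 1}(w-z_I\bu)+\chi_{\dot 2}(u-z_I\bw)\} = [\xi\chi]$, which contributes $-[\xi\chi]$ to the bracket. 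The remaining three terms (free-interacting, interacting-free, and interacting-interacting) must be shown to cancel among themselves. The main obstacle, and the step requiring the most care, is precisely establishing this cancellation: one must verify that the graphical mechanism of Claim \ref{claim_FIFJ} degenerates correctly when both diamond nodes carry the label $I$, so that no generic graph survives and only the free-free $[\xi\chi]$ remains. Given the structure already built, I expect the limit argument to be the more economical route, but the free/interacting decomposition provides an independent check and makes manifest where the $-[\xi\chi]$ comes from.
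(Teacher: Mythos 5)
Your first route is structurally the same as the paper's own proof: specialize Claim \ref{claim_FIFJ} to $J = I$ and argue that only the exceptional graph survives, which evaluates to $[\xi\chi]/\langle IJ\rangle$. However, your justification for why the generic graphs drop out is wrong. You claim they vanish because two coincident diamond nodes would violate $(\epsilon_I)^2 = 0$, but diamond nodes by definition carry \emph{no} seed-function factor --- that is precisely what distinguishes them from circular nodes --- so the square-zero rule is silent about them. A generic graph with both diamonds labelled $I$ is a perfectly finite, nonzero expression, e.g. $\frac{[\xi a]}{\langle I a\rangle}\frac{[\chi b]}{\langle I b\rangle}(\cdots)\prod_k \phi_k$. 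The actual mechanism, and the one the paper uses, is the explicit prefactor $\langle IJ\rangle$ on the right-hand side of Claim \ref{claim_FIFJ}: as $J \to I$ it becomes $\langle II\rangle = 0$, annihilating every graph that is regular in $\langle IJ\rangle$; the sole survivor is the exceptional graph, whose pole $[\xi\chi]/\langle IJ\rangle$ cancels the vanishing prefactor and leaves $[\xi\chi]$, whence $\{\xi_{\dot A}F_I^{\dot A},\chi_{\dot B}F_I^{\dot B}\} = -[\xi\chi]$. With that one substitution of reasoning, your first route coincides with the paper's argument, signs and all.

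Your second route (splitting $\xi_{\dot A}F_I^{\dot A}$ into free plus interacting pieces and computing the free--free bracket) correctly identifies where $-[\xi\chi]$ comes from, but it leaves unproven exactly the nontrivial content: the cancellation of the free--interacting and interacting--interacting terms. As written it is a sketch, not an independent check, so the claim cannot rest on it.
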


\begin{proof}
    This follows from Claim \ref{claim_FIFJ} if we replace $J$ with $I$. Because $\langle II \rangle = 0$, this expression isolates all graphs containing $1/\langle I I \rangle$. There is only one such graph, the exceptional graph
    \begin{center}
        \tikzset{every picture/.style={line width=0.75pt}} 

\begin{tikzpicture}[x=0.75pt,y=0.75pt,yscale=-1,xscale=1]

\draw    (257.36,72.7) -- (330.36,72.7) ;
\draw  [fill={rgb, 255:red, 255; green, 255; blue, 255 }  ,fill opacity=1 ] (330.36,55.02) -- (348.03,72.7) -- (330.36,90.38) -- (312.68,72.7) -- cycle ;
\draw  [fill={rgb, 255:red, 255; green, 255; blue, 255 }  ,fill opacity=1 ] (257.36,55.02) -- (275.03,72.7) -- (257.36,90.38) -- (239.68,72.7) -- cycle ;

\draw (293.86,69.3) node [anchor=south] [inner sep=0.75pt]    {$\xi \ \chi $};
\draw (257.36,72.7) node    {$I$};
\draw (330.36,72.7) node    {$I$};
\draw (237.68,72.7) node [anchor=east] [inner sep=0.75pt]    {$-\langle II\rangle \; \times $};
\draw (350.03,72.7) node [anchor=west] [inner sep=0.75pt]    {$=-\langle II\rangle \dfrac{[ \xi \chi ]}{\langle II\rangle } =-[ \xi \chi ]$};

\end{tikzpicture}
    \end{center}
    completing the proof.
\end{proof}

Now that we have graphical expressions for $\{ \xi_{\dot A} F_I^{\dot A}, \chi_{\dot B} F_J^{\dot B} \}$ and $\epsilon_I e^{i F_I}$ in \eqref{eq_claim_FIFJ},  \eqref{phi_pert_FI}, and \eqref{lin_graphs}, we will now plug them into the MHV generating function \eqref{to_eval} and simplify the result:

\begin{equation}
    \Sigma_0^{AB} \wedge (\gamma^I)_{AC} \wedge (\gamma^J)^{C}_{\;\;B} = -i \frac{\langle IJ \rangle^4 \mathrm{dvol} }{[\xi I] [\chi J]} \left( \sum_{\bt \, \in \, \mT^{\meddiamond, \xi, \chi}_{N|I, J}} \phi_{\bt } 
 \right) 
 \left( \sum_{\bt \in \mT_{N | I}} \phi_{\bt}\right) 
 \left(\sum_{\bt \in \mT_{N | J}} \phi_{\bt} \right).
\end{equation}

Visually, the above equation can be understood as 
\begin{center}
    \input{figures/gravity_graph_30.tex}
\end{center}

\noindent This product of three graphs can be simplified to 

\begin{center}
    \tikzset{every picture/.style={line width=0.75pt}} 

\begin{tikzpicture}[x=0.75pt,y=0.75pt,yscale=-1,xscale=1]

\draw    (407.3,122.07) -- (364.5,135.01) ;
\draw    (271.47,99.01) -- (269.5,135.01) ;
\draw    (317,135.01) -- (325.97,171.01) ;
\draw    (317,135.01) -- (364.5,135.01) ;
\draw    (222,135.01) -- (269.5,135.01) ;
\draw    (269.5,135.01) -- (317,135.01) ;
\draw  [fill={rgb, 255:red, 255; green, 255; blue, 255 }  ,fill opacity=1 ] (260.53,135.01) .. controls (260.53,130.05) and (264.55,126.04) .. (269.5,126.04) .. controls (274.45,126.04) and (278.47,130.05) .. (278.47,135.01) .. controls (278.47,139.96) and (274.45,143.98) .. (269.5,143.98) .. controls (264.55,143.98) and (260.53,139.96) .. (260.53,135.01) -- cycle ;
\draw    (383.47,168.02) -- (364.5,135.01) ;
\draw    (364.5,135.01) -- (381.77,104.42) ;
\draw    (203,177.51) -- (222,135.01) ;
\draw    (222,135.01) -- (178.5,125.51) ;
\draw    (178.5,125.51) -- (147,144.01) ;
\draw    (159,90.51) -- (178.5,125.51) ;
\draw  [draw opacity=0] (138.49,193.09) .. controls (128.64,176.07) and (123,156.3) .. (123,135.22) .. controls (123,114.14) and (128.64,94.38) .. (138.49,77.36) -- (238.71,135.22) -- cycle ; \draw   (138.49,193.09) .. controls (128.64,176.07) and (123,156.3) .. (123,135.22) .. controls (123,114.14) and (128.64,94.38) .. (138.49,77.36) ;  
\draw  [fill={rgb, 255:red, 255; green, 255; blue, 255 }  ,fill opacity=1 ] (150.03,90.51) .. controls (150.03,85.55) and (154.05,81.54) .. (159,81.54) .. controls (163.95,81.54) and (167.97,85.55) .. (167.97,90.51) .. controls (167.97,95.46) and (163.95,99.48) .. (159,99.48) .. controls (154.05,99.48) and (150.03,95.46) .. (150.03,90.51) -- cycle ;
\draw  [fill={rgb, 255:red, 255; green, 255; blue, 255 }  ,fill opacity=1 ] (169.53,125.51) .. controls (169.53,120.55) and (173.55,116.54) .. (178.5,116.54) .. controls (183.45,116.54) and (187.47,120.55) .. (187.47,125.51) .. controls (187.47,130.46) and (183.45,134.48) .. (178.5,134.48) .. controls (173.55,134.48) and (169.53,130.46) .. (169.53,125.51) -- cycle ;
\draw  [fill={rgb, 255:red, 255; green, 255; blue, 255 }  ,fill opacity=1 ] (194.03,177.51) .. controls (194.03,172.55) and (198.05,168.54) .. (203,168.54) .. controls (207.95,168.54) and (211.97,172.55) .. (211.97,177.51) .. controls (211.97,182.46) and (207.95,186.48) .. (203,186.48) .. controls (198.05,186.48) and (194.03,182.46) .. (194.03,177.51) -- cycle ;
\draw  [fill={rgb, 255:red, 255; green, 255; blue, 255 }  ,fill opacity=1 ] (213.03,135.01) .. controls (213.03,130.05) and (217.05,126.04) .. (222,126.04) .. controls (226.95,126.04) and (230.97,130.05) .. (230.97,135.01) .. controls (230.97,139.96) and (226.95,143.98) .. (222,143.98) .. controls (217.05,143.98) and (213.03,139.96) .. (213.03,135.01) -- cycle ;
\draw  [fill={rgb, 255:red, 255; green, 255; blue, 255 }  ,fill opacity=1 ] (138.03,144.01) .. controls (138.03,139.05) and (142.05,135.04) .. (147,135.04) .. controls (151.95,135.04) and (155.97,139.05) .. (155.97,144.01) .. controls (155.97,148.96) and (151.95,152.98) .. (147,152.98) .. controls (142.05,152.98) and (138.03,148.96) .. (138.03,144.01) -- cycle ;
\draw  [fill={rgb, 255:red, 255; green, 255; blue, 255 }  ,fill opacity=1 ] (308.03,135.01) .. controls (308.03,130.05) and (312.05,126.04) .. (317,126.04) .. controls (321.95,126.04) and (325.97,130.05) .. (325.97,135.01) .. controls (325.97,139.96) and (321.95,143.98) .. (317,143.98) .. controls (312.05,143.98) and (308.03,139.96) .. (308.03,135.01) -- cycle ;
\draw  [fill={rgb, 255:red, 255; green, 255; blue, 255 }  ,fill opacity=1 ] (262.5,99.01) .. controls (262.5,94.05) and (266.52,90.04) .. (271.47,90.04) .. controls (276.43,90.04) and (280.44,94.05) .. (280.44,99.01) .. controls (280.44,103.96) and (276.43,107.98) .. (271.47,107.98) .. controls (266.52,107.98) and (262.5,103.96) .. (262.5,99.01) -- cycle ;
\draw  [fill={rgb, 255:red, 255; green, 255; blue, 255 }  ,fill opacity=1 ] (372.8,104.42) .. controls (372.8,99.46) and (376.82,95.45) .. (381.77,95.45) .. controls (386.73,95.45) and (390.74,99.46) .. (390.74,104.42) .. controls (390.74,109.37) and (386.73,113.39) .. (381.77,113.39) .. controls (376.82,113.39) and (372.8,109.37) .. (372.8,104.42) -- cycle ;
\draw  [fill={rgb, 255:red, 255; green, 255; blue, 255 }  ,fill opacity=1 ] (398.33,122.07) .. controls (398.33,117.11) and (402.35,113.09) .. (407.3,113.09) .. controls (412.25,113.09) and (416.27,117.11) .. (416.27,122.07) .. controls (416.27,127.02) and (412.25,131.04) .. (407.3,131.04) .. controls (402.35,131.04) and (398.33,127.02) .. (398.33,122.07) -- cycle ;
\draw  [draw opacity=0] (416.92,77.43) .. controls (426.77,94.45) and (432.4,114.21) .. (432.4,135.29) .. controls (432.4,156.37) and (426.77,176.13) .. (416.92,193.16) -- (316.7,135.29) -- cycle ; \draw   (416.92,77.43) .. controls (426.77,94.45) and (432.4,114.21) .. (432.4,135.29) .. controls (432.4,156.37) and (426.77,176.13) .. (416.92,193.16) ;  
\draw  [fill={rgb, 255:red, 255; green, 255; blue, 255 }  ,fill opacity=1 ] (355.53,135.01) .. controls (355.53,130.05) and (359.55,126.04) .. (364.5,126.04) .. controls (369.45,126.04) and (373.47,130.05) .. (373.47,135.01) .. controls (373.47,139.96) and (369.45,143.98) .. (364.5,143.98) .. controls (359.55,143.98) and (355.53,139.96) .. (355.53,135.01) -- cycle ;
\draw  [fill={rgb, 255:red, 255; green, 255; blue, 255 }  ,fill opacity=1 ] (374.5,168.02) .. controls (374.5,163.06) and (378.52,159.05) .. (383.47,159.05) .. controls (388.43,159.05) and (392.44,163.06) .. (392.44,168.02) .. controls (392.44,172.97) and (388.43,176.99) .. (383.47,176.99) .. controls (378.52,176.99) and (374.5,172.97) .. (374.5,168.02) -- cycle ;
\draw  [fill={rgb, 255:red, 255; green, 255; blue, 255 }  ,fill opacity=1 ] (317,171.01) .. controls (317,166.05) and (321.02,162.04) .. (325.97,162.04) .. controls (330.93,162.04) and (334.94,166.05) .. (334.94,171.01) .. controls (334.94,175.96) and (330.93,179.98) .. (325.97,179.98) .. controls (321.02,179.98) and (317,175.96) .. (317,171.01) -- cycle ;

\draw (115.91,135.41) node [anchor=east] [inner sep=0.75pt]  [font=\normalsize]  {$=-i\dfrac{\langle IJ\rangle ^{4}\mathrm{dvol}}{[ \xi I][ \chi J]}$};
\draw (222,135.01) node  [font=\footnotesize]  {$I$};
\draw (364.5,135.01) node  [font=\footnotesize]  {$J$};
\draw (245.75,131.61) node [anchor=south] [inner sep=0.75pt]    {$\xi $};
\draw (340.75,131.61) node [anchor=south] [inner sep=0.75pt]    {$\chi $};
\draw (252,217.6) node [anchor=north west][inner sep=0.75pt]   [align=left] {+ all other similarly structured graphs};

\end{tikzpicture}
\end{center}

\noindent where, by our construction, we note that the $\xi$ edge connected to $I$ and the $\chi$ edge connected to $J$ both lie on the unique path from $I$ to $J$.  Note we have made use of the using the pictoral identity 
\begin{center}
    \tikzset{every picture/.style={line width=0.75pt}} 

\begin{tikzpicture}[x=0.75pt,y=0.75pt,yscale=-1,xscale=1]

\draw    (225,92.7) -- (298,92.7) ;
\draw  [fill={rgb, 255:red, 255; green, 255; blue, 255 }  ,fill opacity=1 ] (285.5,92.7) .. controls (285.5,85.8) and (291.1,80.2) .. (298,80.2) .. controls (304.9,80.2) and (310.5,85.8) .. (310.5,92.7) .. controls (310.5,99.6) and (304.9,105.2) .. (298,105.2) .. controls (291.1,105.2) and (285.5,99.6) .. (285.5,92.7) -- cycle ;
\draw  [fill={rgb, 255:red, 255; green, 255; blue, 255 }  ,fill opacity=1 ] (225,75.02) -- (242.68,92.7) -- (225,110.38) -- (207.32,92.7) -- cycle ;
\draw  [fill={rgb, 255:red, 255; green, 255; blue, 255 }  ,fill opacity=1 ] (152.6,92.7) .. controls (152.6,85.8) and (158.2,80.2) .. (165.1,80.2) .. controls (172,80.2) and (177.6,85.8) .. (177.6,92.7) .. controls (177.6,99.6) and (172,105.2) .. (165.1,105.2) .. controls (158.2,105.2) and (152.6,99.6) .. (152.6,92.7) -- cycle ;
\draw    (366,92.7) -- (439,92.7) ;
\draw  [fill={rgb, 255:red, 255; green, 255; blue, 255 }  ,fill opacity=1 ] (426.5,92.7) .. controls (426.5,85.8) and (432.1,80.2) .. (439,80.2) .. controls (445.9,80.2) and (451.5,85.8) .. (451.5,92.7) .. controls (451.5,99.6) and (445.9,105.2) .. (439,105.2) .. controls (432.1,105.2) and (426.5,99.6) .. (426.5,92.7) -- cycle ;
\draw  [fill={rgb, 255:red, 255; green, 255; blue, 255 }  ,fill opacity=1 ] (353.5,92.7) .. controls (353.5,85.8) and (359.1,80.2) .. (366,80.2) .. controls (372.9,80.2) and (378.5,85.8) .. (378.5,92.7) .. controls (378.5,99.6) and (372.9,105.2) .. (366,105.2) .. controls (359.1,105.2) and (353.5,99.6) .. (353.5,92.7) -- cycle ;

\draw (225,92.7) node    {$I$};
\draw (298,92.7) node    {$i$};
\draw (165.1,92.7) node    {$I$};
\draw (331.77,94.2) node  [font=\Large]  {$=$};
\draw (366,92.7) node    {$I$};
\draw (439,92.7) node    {$i$};
\draw (404.22,87) node [anchor=south] [inner sep=0.75pt]    {$\xi $};
\draw (192.77,93.2) node  [font=\Large]  {$\times $};
\draw (261.5,89.3) node [anchor=south] [inner sep=0.75pt]    {$\xi $};

\end{tikzpicture}
\end{center}
to glue the three graphs together.

We will define the set of graph which arise in this sum as follows.

\begin{definition}
    Let $\mT^{\xi, \chi}_{N|I,J}$ denote the set of all marked tree graphs with distinct nodes from the set $\{1, \ldots, N\} \cup \{ I, J\}$ which necessarily have the nodes $I$ and $J$ in graph, and for which the two edges connected to $I$ and $J$ that lie on the path from $I$ to $J$ have the $\xi$ and $\chi$ labels, respectively.
\end{definition}

\begin{ex} \secret

    \begin{figure}[H]
    \begin{center}
        \tikzset{every picture/.style={line width=0.75pt}} 

\begin{tikzpicture}[x=0.75pt,y=0.75pt,yscale=-1,xscale=1]

\draw    (112.5,278.7) -- (134,320.16) ;
\draw    (136.5,320.2) -- (139,369.2) ;
\draw    (136.5,320.2) -- (86.5,329.66) ;
\draw    (182,239.16) -- (185,288.7) ;
\draw    (185,288.7) -- (136.5,320.2) ;
\draw    (218.5,333.2) -- (185,288.7) ;
\draw    (269,311.2) -- (218.5,333.2) ;
\draw    (269,311.2) -- (296,359.7) ;
\draw    (317.5,222.16) -- (310.5,271.2) ;
\draw    (357,292.16) -- (310.5,271.2) ;
\draw    (310.5,271.2) -- (269,311.2) ;
\draw  [fill={rgb, 255:red, 255; green, 255; blue, 255 }  ,fill opacity=1 ] (124,320.2) .. controls (124,313.3) and (129.6,307.7) .. (136.5,307.7) .. controls (143.4,307.7) and (149,313.3) .. (149,320.2) .. controls (149,327.1) and (143.4,332.7) .. (136.5,332.7) .. controls (129.6,332.7) and (124,327.1) .. (124,320.2) -- cycle ;
\draw  [fill={rgb, 255:red, 255; green, 255; blue, 255 }  ,fill opacity=1 ] (172.5,288.7) .. controls (172.5,281.8) and (178.1,276.2) .. (185,276.2) .. controls (191.9,276.2) and (197.5,281.8) .. (197.5,288.7) .. controls (197.5,295.6) and (191.9,301.2) .. (185,301.2) .. controls (178.1,301.2) and (172.5,295.6) .. (172.5,288.7) -- cycle ;
\draw  [fill={rgb, 255:red, 255; green, 255; blue, 255 }  ,fill opacity=1 ] (206,333.2) .. controls (206,326.3) and (211.6,320.7) .. (218.5,320.7) .. controls (225.4,320.7) and (231,326.3) .. (231,333.2) .. controls (231,340.1) and (225.4,345.7) .. (218.5,345.7) .. controls (211.6,345.7) and (206,340.1) .. (206,333.2) -- cycle ;
\draw  [fill={rgb, 255:red, 255; green, 255; blue, 255 }  ,fill opacity=1 ] (256.5,311.2) .. controls (256.5,304.3) and (262.1,298.7) .. (269,298.7) .. controls (275.9,298.7) and (281.5,304.3) .. (281.5,311.2) .. controls (281.5,318.1) and (275.9,323.7) .. (269,323.7) .. controls (262.1,323.7) and (256.5,318.1) .. (256.5,311.2) -- cycle ;
\draw  [fill={rgb, 255:red, 255; green, 255; blue, 255 }  ,fill opacity=1 ] (298,271.2) .. controls (298,264.3) and (303.6,258.7) .. (310.5,258.7) .. controls (317.4,258.7) and (323,264.3) .. (323,271.2) .. controls (323,278.1) and (317.4,283.7) .. (310.5,283.7) .. controls (303.6,283.7) and (298,278.1) .. (298,271.2) -- cycle ;
\draw  [fill={rgb, 255:red, 255; green, 255; blue, 255 }  ,fill opacity=1 ] (344.5,292.16) .. controls (344.5,285.26) and (350.1,279.66) .. (357,279.66) .. controls (363.9,279.66) and (369.5,285.26) .. (369.5,292.16) .. controls (369.5,299.07) and (363.9,304.66) .. (357,304.66) .. controls (350.1,304.66) and (344.5,299.07) .. (344.5,292.16) -- cycle ;
\draw  [fill={rgb, 255:red, 255; green, 255; blue, 255 }  ,fill opacity=1 ] (305,222.16) .. controls (305,215.26) and (310.6,209.66) .. (317.5,209.66) .. controls (324.4,209.66) and (330,215.26) .. (330,222.16) .. controls (330,229.07) and (324.4,234.66) .. (317.5,234.66) .. controls (310.6,234.66) and (305,229.07) .. (305,222.16) -- cycle ;
\draw  [fill={rgb, 255:red, 255; green, 255; blue, 255 }  ,fill opacity=1 ] (169.5,239.16) .. controls (169.5,232.26) and (175.1,226.66) .. (182,226.66) .. controls (188.9,226.66) and (194.5,232.26) .. (194.5,239.16) .. controls (194.5,246.07) and (188.9,251.66) .. (182,251.66) .. controls (175.1,251.66) and (169.5,246.07) .. (169.5,239.16) -- cycle ;
\draw  [fill={rgb, 255:red, 255; green, 255; blue, 255 }  ,fill opacity=1 ] (74,329.66) .. controls (74,322.76) and (79.6,317.16) .. (86.5,317.16) .. controls (93.4,317.16) and (99,322.76) .. (99,329.66) .. controls (99,336.57) and (93.4,342.16) .. (86.5,342.16) .. controls (79.6,342.16) and (74,336.57) .. (74,329.66) -- cycle ;
\draw  [fill={rgb, 255:red, 255; green, 255; blue, 255 }  ,fill opacity=1 ] (283.5,359.7) .. controls (283.5,352.8) and (289.1,347.2) .. (296,347.2) .. controls (302.9,347.2) and (308.5,352.8) .. (308.5,359.7) .. controls (308.5,366.6) and (302.9,372.2) .. (296,372.2) .. controls (289.1,372.2) and (283.5,366.6) .. (283.5,359.7) -- cycle ;
\draw  [fill={rgb, 255:red, 255; green, 255; blue, 255 }  ,fill opacity=1 ] (126.5,369.2) .. controls (126.5,362.3) and (132.1,356.7) .. (139,356.7) .. controls (145.9,356.7) and (151.5,362.3) .. (151.5,369.2) .. controls (151.5,376.1) and (145.9,381.7) .. (139,381.7) .. controls (132.1,381.7) and (126.5,376.1) .. (126.5,369.2) -- cycle ;
\draw  [fill={rgb, 255:red, 255; green, 255; blue, 255 }  ,fill opacity=1 ] (100,278.7) .. controls (100,271.8) and (105.6,266.2) .. (112.5,266.2) .. controls (119.4,266.2) and (125,271.8) .. (125,278.7) .. controls (125,285.6) and (119.4,291.2) .. (112.5,291.2) .. controls (105.6,291.2) and (100,285.6) .. (100,278.7) -- cycle ;

\draw (136.5,320.2) node    {$I$};
\draw (310.5,271.2) node    {$J$};
\draw (285,288.3) node [anchor=south] [inner sep=0.75pt]    {$\chi $};
\draw (154.72,302.5) node [anchor=south] [inner sep=0.75pt]    {$\xi $};
\draw (296,359.7) node    {$1$};
\draw (317.5,222.16) node    {$2$};
\draw (218.5,333.2) node    {$3$};
\draw (185,288.7) node    {$4$};
\draw (86.5,329.66) node    {$5$};
\draw (357,292.16) node    {$6$};
\draw (182,239.16) node    {$7$};
\draw (269,311.2) node    {$8$};
\draw (139,369.2) node    {$9$};
\draw (112.5,278.7) node    {$10$};

\end{tikzpicture}
    \end{center}
    \caption{\label{graph_10_2}  An example of a graph in $\mT^{\xi, \chi}_{10|I,J}$ .  Notice that $\xi$ and $\chi$ lie on the path from $I$ to $J$.} 
    
    \end{figure}
\end{ex}

We now write the generating function as sum over the graphs in $\mT^{\xi, \chi}_{N|I,J}$, as

\begin{equation}\label{Sigma_tree}
    \Sigma_0^{AB} \wedge (\gamma^I)_{AC} \wedge (\gamma^J)^C_{\;\;B} = -i  \,\mathrm{dvol}  \frac{\langle I J \rangle^4 }{[\xi I] [\chi J]} \sum_{\bt \,  \in \, \mT^{\xi, \chi}_{N|I,J}} \phi_{\bt}.
\end{equation}

\noindent Let us now write out $\phi_\bt$ on the RHS of the above equation so that its dependence on the placement of the $I$ and $J$ nodes in each graph is written explicitly. To that end, we define a graph-dependent label $P_I(\bt)$ as follows.

\begin{definition}
    For any graph $\bt \in \mT^{\xi, \chi}_{N|I,J}$, let $P_I(\bt)$ be defined as the label of the node connected to $I$ which lies on the path to node $J$. Likewise, define $P_J(\bt)$ be defined as the node connected to $J$ lying on the path to $I$.
\end{definition}

\begin{ex}
    In the graph $\bt$ from Figure \ref{graph_10_2}, we have
    \begin{equation}
        P_I(\bt) = 4, \hspace{1 cm} P_J(\bt) = 8.
    \end{equation}
\end{ex}

\begin{definition}
    For any graph $\bt \in \mT^{\xi, \chi}_{N|I,J}$, define the factor $K(\bt)$ as
    \begin{equation}
        K(\bt) \equiv \begin{cases}
            \eval{\dfrac{[\xi a]}{[I a]} \dfrac{[\chi b]}{[J b]}}_{\substack{a = P_I(\bt) \\ b = P_J(\bt)  }} & \text{if $I$ and $J$ are not neighbors} \\
            \dfrac{[\xi \chi]}{[I J]} & \text{if $I$ and $J$ are neighbors} \\
        \end{cases}.
    \end{equation}
\end{definition}

The factor $K(\bt)$ has the effect of changing $[I| \to [\xi|$ and $[J| \to [\chi|$ in numerators of the edges connecting $I$ and $J$ to $P_I(\bt)$ and $P_J(\bt)$, respectively. We can use it write
\begin{equation}
    \phi_\bt = K(\bt) \left( \prod_{e_{ij} \, \in \, \text{edges of } \bt}  \frac{[ij]}{\langle i j \rangle } \right) \left( \prod_{k \, \in \, \text{nodes of } \bt } \phi_k \right) \hspace{0.5 cm} \text{ for } \bt \in \mT^{\xi, \chi}_{N|I,J}.
\end{equation}

We are now ready to plug \eqref{Sigma_tree} into \eqref{MHV_eq1} to get the MHV amplitude. The effect of the derivatives with respect to the $\epsilon_i$'s is to extract only the trees which contain all possible $N+2$ nodes.

\begin{definition}
    Let $\mT^{\xi, \chi, \mathrm{max}}_{N|I,J} \subset \mT^{\xi, \chi}_{N|I,J}$ denote all graphs in $\mT^{\xi, \chi}_{N|I,J}$ which contain all $N+2$ nodes in $\{1, \ldots, N \} \cup \{I, J\}$.
\end{definition}

Plugging \eqref{Sigma_tree} into \eqref{MHV_eq1}, we come to the naive expression
\begin{equation}
    \mathcal{A}_{\rm MHV} = - i \, (2 \pi)^4 \delta^{(4)}\left( \sum_{i = 1}^{N+2} p_i \right) \frac{\langle I J \rangle^4}{[\xi I][\chi J] } \sum_{\bt \in \mT^{\xi, \chi, \mathrm{max}}_{N|I,J} } K(\bt)  \left( \prod_{e_{ij} \, \in \, \text{edges of } \bt}  \frac{[ij]}{\langle i j \rangle } \right).
\end{equation}
However, this expression has an issue: it is not invariant under rescalings of the spinor variables $|i\rangle \mapsto t | i \rangle$, $|i] \mapsto t^{-1} |i]$ because the graphs we are summing over have different little group weights for the various spinor variables. The above equation only holds for our particular spinor parameterization \eqref{kappa_def}, and would fail if we were to rescale the spinors.

To remedy this, we shall now add two auxiliary spinors $|\iota\rangle$ and $|o \rangle$ to this expression to make it reparameterization invariant. We start by setting
\begin{equation}
    \iota^A = \begin{pmatrix}
        0 \\ 1
    \end{pmatrix}, \hspace{1 cm} o^A = \begin{pmatrix}
        0 \\ 1
    \end{pmatrix},
\end{equation}
such that in our parameterization \eqref{kappa_def}
\begin{equation}
    \langle \iota i \rangle = \langle o i \rangle = 1.
\end{equation}
We now multiply our amplitude by $1$ in such a way that the final expression has the correct little group weights. The resulting amplitude (with $ \mathcal{A} = i (2 \pi)^4 \delta^{(4)}( \ldots) \mathcal{M}$) is
\begin{align}\label{MHV_to_eval}
    &\mathcal{M}_{\rm MHV} = \\
    &-\frac{\langle I J \rangle^4 \langle I \iota \rangle \langle I o \rangle \langle J \iota \rangle  \langle J o \rangle }{[\xi I][\chi J] } \left( \sum_{\bt \in \mT^{\xi, \chi, \mathrm{max}}_{N|I,J} } K(\bt)   \prod_{e_{ij} \, \in \, \text{edges of } \bt}  \frac{[ij]}{\langle i j \rangle }   \prod_{a = 1 }^{N+2} \Big( \langle a \iota \rangle \langle a o \rangle \Big)^{\mathrm{deg}(a) - 2}\right) \nonumber
\end{align}
where $\mathrm{deg}(a)$ is defined as the number of edges connected to node $a$. The above expression now has the correct little group weights of $+4$ for $I$ and $J$,  and $-4$ for all $i \neq I, J$, and 0 for $\iota$, $o$, $\xi$, and $\chi$. To check the value of all the weights, one must use
\begin{equation}
    \sum_{a \, \in \, \text{nodes of } \bt } (\mathrm{deg}(a) - 2) = -2
\end{equation}
which holds for any connected tree diagram $\bt$.

\eqref{MHV_to_eval} is in fact a new generalization of the NSVW formula. It is expressed as a sum over trees where all $N+2$ gravitons have a corresponding node, instead of just the $N$ positive helicity gravitons as in the NSVW formula. It should be emphasized, however, that the two negative helicity graviton nodes $I$ and $J$ obey different rules from the other nodes, as one of each of their edges depends on a reference spinor $\xi$ and $\chi$ respectively. We now show how it can be reduced to the NSVW formula.

\begin{claim}\label{claim_const}
    Equation \eqref{MHV_to_eval} does not depend on the reference spinors $| \iota \rangle$ and $|o\rangle$.
\end{claim}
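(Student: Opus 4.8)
The plan is to prove independence of $|\iota\rangle$ and of $|o\rangle$ separately; since $|\iota\rangle$ and $|o\rangle$ enter \eqref{MHV_to_eval} in exactly the same way, it suffices to establish independence of $|\iota\rangle$ while holding $|o\rangle$ fixed, and then invoke the symmetry. First I would record that, as a function of $|\iota\rangle$ on the support of the overall momentum-conserving delta function, $\mathcal{M}_{\rm MHV}$ is a rational function that is homogeneous of degree $0$: the explicit factors $\langle I\iota\rangle\langle J\iota\rangle$ in the prefactor contribute weight $+2$, while the product $\prod_a\langle a\iota\rangle^{\mathrm{deg}(a)-2}$ inside each tree contributes weight $\sum_a(\mathrm{deg}(a)-2)=-2$ by the tree identity stated just after \eqref{MHV_to_eval}. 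A degree-$0$ rational function of $|\iota\rangle$ is a rational function on $\mathbb{CP}^1$, and such a function with no poles is necessarily constant, so the whole claim reduces to showing $\mathcal{M}_{\rm MHV}$ has no poles in $|\iota\rangle$.

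Next I would locate the candidate poles. These can only arise from the negative powers in $\prod_a\langle a\iota\rangle^{\mathrm{deg}(a)-2}$, i.e.\ from nodes of degree $1$. For $a=I$ or $a=J$ the prefactor supplies one extra $\langle a\iota\rangle$, turning the total $\iota$-dependence at that node into $\langle a\iota\rangle^{\mathrm{deg}(a)-1}$, whose exponent is $\geq 0$; hence $I$ and $J$ never produce poles. Thus the only candidate poles sit at $\langle k\iota\rangle=0$ for an ordinary node $k\in\{1,\dots,N\}$, and (since $\mathrm{deg}(k)-2=-1$ is the most negative exponent possible) these are at worst simple poles, with only the trees in which $k$ is a leaf contributing to the residue.

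The heart of the argument is the residue computation. Fixing $k$, I would organize the leaf-trees by writing each $\bt\in\mT^{\xi,\chi,\mathrm{max}}_{N|I,J}$ with $k$ a leaf as a reduced tree $\bt''$ on the remaining $N+1$ nodes together with a choice of attachment node $j$. Because $k$ is a leaf it lies off the $I$–$J$ path, so $P_I,P_J$ and hence $K(\bt)=K(\bt'')$ are unchanged, the $\xi$- and $\chi$-edges are untouched, and the new edge carries the ordinary weight $[kj]/\langle kj\rangle$. Tracking the degree bookkeeping $\mathrm{deg}_\bt(j)=\mathrm{deg}_{\bt''}(j)+1$ and $\mathrm{deg}_\bt(k)=1$, and using $\tfrac{[kj]}{\langle kj\rangle}\langle jk\rangle=-[kj]$, the coefficient of $1/\langle k\iota\rangle$ evaluated at $|\iota\rangle=|k\rangle$ factorizes into a $j$-independent ``amputated'' weight (depending only on $\bt''$) times the sum $\sum_{j\ne k}[kj]\langle jo\rangle$. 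This last sum vanishes by momentum conservation: $\sum_{\text{all }j}[kj]\langle jo\rangle=\widetilde{\kappa}^k\!\cdot\!\big(\sum_j p_j\big)\!\cdot\! o=0$, and the omitted $j=k$ term is $[kk]\langle ko\rangle=0$. Summing over $\bt''$ then kills the residue at every $k$, so $\mathcal{M}_{\rm MHV}$ is pole-free and hence $|\iota\rangle$-independent; repeating the argument for $|o\rangle$ completes the claim.

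I expect the residue factorization to be the main obstacle: one must verify carefully that deleting the leaf $k$ yields a legitimate element of $\mT^{\xi,\chi}_{N|I,J}$ on the reduced node set with the path data intact, that the attachment node $j$ genuinely ranges over \emph{all} remaining nodes (including $I$ and $J$, so that the full momentum-conservation sum appears), and that the residue really is simple so that evaluation at $|\iota\rangle=|k\rangle$ captures everything. Everything else is routine little-group bookkeeping together with the standard identity $\sum_a(\mathrm{deg}(a)-2)=-2$ for trees.
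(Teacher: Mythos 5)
Your proposal is correct and follows essentially the same route as the paper: little-group weight zero makes $\mathcal{M}_{\rm MHV}$ a rational function on $\mathbb{CP}^1$, the only candidate poles come from degree-one nodes $k \neq I, J$, and the residue at $\langle k \iota \rangle \to 0$ vanishes by momentum conservation after summing over the $N+1$ possible attachment points of the leaf $k$. The extra bookkeeping you flag (the leaf lying off the $I$--$J$ path so that $K(\bt)$ is untouched, the poles being simple, the attachment node ranging over $I$ and $J$ as well) is precisely what the paper's shorter argument leaves implicit.
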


\begin{proof}
    Without loss of generality we focus on $| \iota \rangle$. Because the expression is invariant under the rescaling of $| \iota \rangle$, we can think of the two components $(\iota_{1}, \iota_{2} )$ as homogenous coordinates and the expression \eqref{MHV_to_eval} as a function on $\mathbb{C}P^1$. We can therefore prove that \eqref{MHV_to_eval} is constant as a function of $| \iota \rangle$ if we show that the expression has no poles in $| \iota \rangle$.

    The only potential poles occur for $\langle i \iota \rangle \to 0$ when $i \neq I, J$ and $i$ is a node with $\mathrm{deg}(i) = 1$. Clearly, all graphs where $\mathrm{deg}(i) = 1$ can be formed by taking a graph without $i$ and appending $i$ to any of the pre-existing $N+1$ nodes the graph. Summing over the $N+1$ possible nodes we could append $i$ to introduces a multiplicative factor of
    \begin{equation}
        \frac{1}{\langle i \iota \rangle \langle i o \rangle } \sum_{\substack{j = 1 \\ j \neq i } }^{N+2} \frac{[i j]}{\langle i j \rangle} \langle j \iota \rangle \langle j o \rangle
    \end{equation}
    to an otherwise nonsingular term. But then notice that the residue of the amplitude is
    \begin{equation}
    \begin{aligned}
    \underset{\langle i \iota \rangle \to 0}{\Res} \mathcal{M}_{\rm MHV} &= 
        \underset{\langle i \iota \rangle \to 0}{\Res} \left( \frac{1}{\langle i \iota \rangle \langle i o \rangle } \sum_{\substack{j = 1 \\ j \neq i } }^{N+2} \frac{[i j]}{\langle i j \rangle} \langle j \iota \rangle \langle j o \rangle \right) (\text{non sing.}) \\&= - \frac{1}{\langle i o \rangle} [ i| \left( \sum_{j = 1 }^{N+2} | j]  \langle j | \right) |o \rangle (\text{non sing.})= 0
    \end{aligned}
    \end{equation}
    and vanishes due to momentum conservation. The overall function therefore has no poles in $| \iota \rangle$, completing the proof.
\end{proof}

Using Claim \ref{claim_const}, we can now redefine the auxiliary spinors to be
\begin{equation}\label{iota_sub}
    | \iota \rangle \to | I \rangle, \hspace{1 cm} | o \rangle \to | J \rangle,
\end{equation}
without changing the amplitude.

Under this change, the only graphs in equation \eqref{MHV_to_eval} which give non-zero contributions are graphs where $\mathrm{deg}(I) = \mathrm{deg}(J) = 1$. An example is shown below.

\begin{center}
    \tikzset{every picture/.style={line width=0.75pt}} 

\begin{tikzpicture}[x=0.75pt,y=0.75pt,yscale=-1,xscale=1]

\draw    (353.5,96.5) -- (369,141.5) ;
\draw    (369,141.5) -- (421,141.5) ;
\draw    (220.8,73.1) -- (259.8,108.1) ;
\draw    (259.8,108.1) -- (300.8,81.1) ;
\draw    (254.8,160.1) -- (259.8,108.1) ;
\draw    (254.8,160.1) -- (215.8,194.1) ;
\draw    (254.8,160.1) -- (294.4,194.1) ;
\draw    (344.8,189.3) -- (294.4,194.1) ;
\draw    (369,141.5) -- (344.8,189.3) ;
\draw    (344.8,189.3) -- (367.8,228.3) ;
\draw  [fill={rgb, 255:red, 255; green, 255; blue, 255 }  ,fill opacity=1 ] (208.3,73.1) .. controls (208.3,66.2) and (213.9,60.6) .. (220.8,60.6) .. controls (227.7,60.6) and (233.3,66.2) .. (233.3,73.1) .. controls (233.3,80) and (227.7,85.6) .. (220.8,85.6) .. controls (213.9,85.6) and (208.3,80) .. (208.3,73.1) -- cycle ;
\draw  [fill={rgb, 255:red, 255; green, 255; blue, 255 }  ,fill opacity=1 ] (247.3,108.1) .. controls (247.3,101.2) and (252.9,95.6) .. (259.8,95.6) .. controls (266.7,95.6) and (272.3,101.2) .. (272.3,108.1) .. controls (272.3,115) and (266.7,120.6) .. (259.8,120.6) .. controls (252.9,120.6) and (247.3,115) .. (247.3,108.1) -- cycle ;
\draw  [fill={rgb, 255:red, 255; green, 255; blue, 255 }  ,fill opacity=1 ] (288.3,81.1) .. controls (288.3,74.2) and (293.9,68.6) .. (300.8,68.6) .. controls (307.7,68.6) and (313.3,74.2) .. (313.3,81.1) .. controls (313.3,88) and (307.7,93.6) .. (300.8,93.6) .. controls (293.9,93.6) and (288.3,88) .. (288.3,81.1) -- cycle ;
\draw  [fill={rgb, 255:red, 255; green, 255; blue, 255 }  ,fill opacity=1 ] (242.3,160.1) .. controls (242.3,153.2) and (247.9,147.6) .. (254.8,147.6) .. controls (261.7,147.6) and (267.3,153.2) .. (267.3,160.1) .. controls (267.3,167) and (261.7,172.6) .. (254.8,172.6) .. controls (247.9,172.6) and (242.3,167) .. (242.3,160.1) -- cycle ;
\draw  [fill={rgb, 255:red, 255; green, 255; blue, 255 }  ,fill opacity=1 ] (203.3,194.1) .. controls (203.3,187.2) and (208.9,181.6) .. (215.8,181.6) .. controls (222.7,181.6) and (228.3,187.2) .. (228.3,194.1) .. controls (228.3,201) and (222.7,206.6) .. (215.8,206.6) .. controls (208.9,206.6) and (203.3,201) .. (203.3,194.1) -- cycle ;
\draw  [fill={rgb, 255:red, 255; green, 255; blue, 255 }  ,fill opacity=1 ] (281.9,194.1) .. controls (281.9,187.2) and (287.5,181.6) .. (294.4,181.6) .. controls (301.3,181.6) and (306.9,187.2) .. (306.9,194.1) .. controls (306.9,201) and (301.3,206.6) .. (294.4,206.6) .. controls (287.5,206.6) and (281.9,201) .. (281.9,194.1) -- cycle ;
\draw  [fill={rgb, 255:red, 255; green, 255; blue, 255 }  ,fill opacity=1 ] (332.3,189.3) .. controls (332.3,182.4) and (337.9,176.8) .. (344.8,176.8) .. controls (351.7,176.8) and (357.3,182.4) .. (357.3,189.3) .. controls (357.3,196.2) and (351.7,201.8) .. (344.8,201.8) .. controls (337.9,201.8) and (332.3,196.2) .. (332.3,189.3) -- cycle ;
\draw  [fill={rgb, 255:red, 255; green, 255; blue, 255 }  ,fill opacity=1 ] (356.5,141.5) .. controls (356.5,134.6) and (362.1,129) .. (369,129) .. controls (375.9,129) and (381.5,134.6) .. (381.5,141.5) .. controls (381.5,148.4) and (375.9,154) .. (369,154) .. controls (362.1,154) and (356.5,148.4) .. (356.5,141.5) -- cycle ;
\draw  [fill={rgb, 255:red, 255; green, 255; blue, 255 }  ,fill opacity=1 ] (355.3,228.3) .. controls (355.3,221.4) and (360.9,215.8) .. (367.8,215.8) .. controls (374.7,215.8) and (380.3,221.4) .. (380.3,228.3) .. controls (380.3,235.2) and (374.7,240.8) .. (367.8,240.8) .. controls (360.9,240.8) and (355.3,235.2) .. (355.3,228.3) -- cycle ;
\draw  [fill={rgb, 255:red, 255; green, 255; blue, 255 }  ,fill opacity=1 ] (408.5,141.5) .. controls (408.5,134.6) and (414.1,129) .. (421,129) .. controls (427.9,129) and (433.5,134.6) .. (433.5,141.5) .. controls (433.5,148.4) and (427.9,154) .. (421,154) .. controls (414.1,154) and (408.5,148.4) .. (408.5,141.5) -- cycle ;
\draw  [fill={rgb, 255:red, 255; green, 255; blue, 255 }  ,fill opacity=1 ] (341,96.5) .. controls (341,89.6) and (346.6,84) .. (353.5,84) .. controls (360.4,84) and (366,89.6) .. (366,96.5) .. controls (366,103.4) and (360.4,109) .. (353.5,109) .. controls (346.6,109) and (341,103.4) .. (341,96.5) -- cycle ;
\draw    (134,102.91) .. controls (163.4,59.79) and (188.96,130.96) .. (225.73,104.65) ;
\draw [shift={(228,102.91)}, rotate = 140.79] [fill={rgb, 255:red, 0; green, 0; blue, 0 }  ][line width=0.08]  [draw opacity=0] (10.72,-5.15) -- (0,0) -- (10.72,5.15) -- (7.12,0) -- cycle    ;
\draw    (468,88.91) .. controls (441.95,80.23) and (424.27,89.24) .. (406.89,107.85) ;
\draw [shift={(405,109.91)}, rotate = 311.99] [fill={rgb, 255:red, 0; green, 0; blue, 0 }  ][line width=0.08]  [draw opacity=0] (10.72,-5.15) -- (0,0) -- (10.72,5.15) -- (7.12,0) -- cycle    ;
\draw  [draw opacity=0] (46,71) .. controls (46,57.19) and (57.19,46) .. (71,46) .. controls (84.81,46) and (96,57.19) .. (96,71) .. controls (96,84.81) and (84.81,96) .. (71,96) .. controls (57.19,96) and (46,84.81) .. (46,71) -- cycle ;

\draw (242.58,88.7) node [anchor=south west] [inner sep=0.75pt]    {$\xi $};
\draw (395,137.1) node [anchor=south] [inner sep=0.75pt]    {$\chi $};
\draw (220.8,73.1) node    {$I$};
\draw (421,141.5) node    {$J$};
\draw (369,141.5) node    {$j$};
\draw (259.8,108.1) node    {$i$};
\draw (75,111.4) node [anchor=north west][inner sep=0.75pt]    {$\dfrac{[ \xi i]}{\cancel{\langle Ii\rangle }}\cancel{\langle Ii\rangle } \langle Ji\rangle $};
\draw (480,70.4) node [anchor=north west][inner sep=0.75pt]    {$\dfrac{[ \chi j]}{\cancel{\langle Jj\rangle }} \langle Ij\rangle \cancel{\langle Jj\rangle }$};

\end{tikzpicture}
\end{center}

\noindent All such graphs can be formed by taking a graph in a set we'll denote $\mT_N^{\rm max}$ and appending the $I$ and $J$ nodes with the $\xi$ and $\chi$ edges. 

\begin{definition}
    Let $\mT^{\rm max}_N \subset \mT_N$ denote the tree graphs which each contain the maximum number of all $N$ nodes in $\{1, \ldots, N \}$. 
\end{definition}

We will now write \eqref{MHV_to_eval} with the substitutions \eqref{iota_sub} as an explicit sum over graphs in $\mT_N^{\rm max}$ with an additional factor included for all the $N \times N$ ways to appropriately attach $I$ and $J$ nodes to those underlying graphs:

\begin{equation}\label{MHV_almost}
\begin{aligned}
    \mathcal{M}_{\rm MHV} = - \frac{\langle I J \rangle^4}{[\xi I][\chi J]} &\left( \sum_{i = 1}^N [\xi i] \langle i J \rangle \right) \left( \sum_{j = 1}^N [\chi j] \langle j I \rangle \right) \\
    &\times \left( \sum_{\bt \in \mT_N^{\rm max}} \prod_{e_{ij}\,  \in \, \text{edges of } \bt} \frac{[ij]}{\langle i j \rangle} \prod_{a = 1}^{N} \Big( \langle a I \rangle \langle a J \rangle \Big)^{\mathrm{deg}(a)-2} \right).
\end{aligned}
\end{equation}

\noindent For our final step, we note that from momentum conservation
\begin{equation}
\begin{aligned}
[ \xi| \left( | I ] \langle I | + | J ] \langle J| +  \sum_{i = 1}^N |i ] \langle i | \right)| J \rangle = 0 \\
[ \chi| \left( | I ] \langle I | + | J ] \langle J| +  \sum_{j = 1}^N |j ] \langle j | \right)| I \rangle = 0
\end{aligned}
\end{equation}
we have
\begin{equation}
    \frac{1}{[\xi I] [\chi J]} \left( \sum_{i = 1}^N [ \xi i ] \langle i J \rangle \right) \left( \sum_{j = 1}^N [ \chi j ] \langle j I \rangle \right) = -\langle I J \rangle^2.
\end{equation}
Plugging the above expression into \eqref{MHV_almost}, we arrive at the NSVW formula \cite{Nguyen:2009jk}, which is our second main result.
\begin{equation}\label{nsvw}
    \mathcal{M}_{\rm MHV} = \langle I J \rangle^6  \sum_{\bt \, \in \, \mT_N^{\rm max}} \left( \prod_{e_{ij} \in \, \text{edges of }\, \bt} \frac{[ij]}{\langle i j \rangle}  \right) \left( \prod_{a = 1}^N \Big( \langle a I \rangle \langle a J \rangle \Big)^{\mathrm{deg}(a)-2} \right)
\end{equation}

\section{Discussion}\label{sec6}

In this work, we developed a perturbiner expansion for Plebański's second heavenly equation which was written as a sum over marked tree graphs, given in Theorem \ref{thm_21}. This allowed us to explicitly write down expressions for spacetime metrics containing arbitrary numbers of positive helicity gravitons in the Plebański gauge \eqref{eq2}. This gauge is particularly natural because it can be thought of as an exact, non-linear version of transverse traceless gauge, satisfying
\begin{equation}
     \eta^{\mu \nu} h_{\mu \nu} = g^{\mu \nu} h_{\mu \nu} = 0 , \hspace{1 cm}
     \eta^{\mu \nu} \partial_\mu h_{\nu \rho} =  g^{\mu \nu} \partial_\mu h_{\nu \rho} =  0.
\end{equation}

Using this perturbiner expansion we were able to write down visual formulae for a variety of mathematical objects familiar to the study of self-dual gravity, and we were able to manipulate and combine these formulae in useful ways using a relatively small number of graphical tricks.  Ultimately, this allowed us to give a self contained proof of the NSVW formula for graviton MHV scattering.

While we have not computed any amplitudes aside from MHV, it seems likely that our marked tree expansion could be a useful tool in the computation of other amplitudes. We leave this for future work.

Our tree diagrams strongly resonate with other tree diagrams that have appeared previously \cite{Adamo:2021bej, Adamo:2012xe, Adamo:2013tja, Adamo:2013cra, Cao:2021dcd, Adamo:2024hme}. Notably, \cite{Adamo:2021bej} proved the NSVW/Hodges' formula utilizing a similar looking set of diagrams which have an unclear relationship to our diagrams. The main difference between their proof of NSVW and our proof is that their proof was based upon the use of the \emph{first} Plebański scalar (also known as the Kähler scalar), whereas our work is based upon the use of the \emph{second} Plebański scalar. Due to their use of the first scalar, they were able to choose a gauge in which the expressions for the ASD spin connections $\gamma^I$ and $\gamma^J$ became trivial, whereas in this work a good deal of effort was expended accounting for the spin connections.

The unique offering of this paper is a new theory of the second scalar.  What interpretation our formulae have in twistor-theoretic terms will be left for future work. Perhaps the use of the second scalar may be advantageous in certain situations.

\section*{Acknowledgements}

This author benefited tremendously from a large number in-depth conversations with Atul Sharma, Tim Adamo, and Eduardo Casali, to whom author is extremely grateful. The author further benefited from the gracious expertise of Lionel Mason, David Skinner, and Simone Speziale, for which he is also grateful. Further insightful conversations were had with Abhay Ashtekar, Adam Ball, Wei Bu, Pratik Chattopadhyay, Erin Crawley, Theo Coyne, Dan Freed, Alfredo Guevara, Elizabeth Himwich, Per Kraus, Jacob McNamara, Walker Melton, Hirosi Ooguri, Andrea Puhm, Marcus Spradlin, Andrew Strominger, Adam Tropper, Anastasia Volovich, and Tianli Wang.  This work was supported by  DOE Grant de-sc/0007870, the Simons Collaboration for Celestial Holography, and NSF GRFP
grant DGE1745303.

\appendix

\section{A binary tree expansion for Plebański's second heavenly equation}\label{appA}

In section \ref{sec2}, we saw the perturbiner expansion for $\phi$ given $N$ seed functions could be written as a sum over trees in $\mT_N$. We will now turn our attention to a different (but related) method of expressing the perturbiner expansion based a \emph{recursion relation}. Essentially, if one knows what the perturbiner expansion is for $N-1$ gravitons, there is a formula which allows one to find the expansion for $N$ gravitons.

In order to write down this recursive formula, we must first break up our perturbiner expansion into a sum of sub-terms labelled by the seed functions present in that term. We will denote these sub-terms as $\phi^{(k)}$, with $1 \leq k \leq N$, where $\phi^{(k)}$ depends linearly on $k$ seed functions appearing in its arguments.
\begin{equation}
    \bigphi\,(\phi_1, \ldots, \phi_N) = \sum_{k = 1}^N \sum_{ \substack{\{i_1, \ldots, i_k\}   \subset \{1, \ldots, N\} }  } \bigphi^{(k)}( \phi_{i_1}, \ldots, \phi_{i_k})
\end{equation}

\begin{ex}
    For $N=1$,
    \begin{equation}
        \bigphi(\phi_1) = \bigphi^{(1)}(\phi_1)
    \end{equation}
    where
    \begin{equation}
        \bigphi^{(1)}(\phi_i) = \phi_i.
    \end{equation}
    For $N=2$,
    \begin{equation}
        \bigphi(\phi_1, \phi_2) = \bigphi^{(1)}(\phi_1)+\bigphi^{(1)}(\phi_2)+\bigphi^{(2)}(\phi_1, \phi_2)
    \end{equation}
    where
    \begin{equation}
        \bigphi^{(2)}(\phi_i, \phi_j) = \frac{D^{ij}}{z_{ij}} \phi_i \phi_j.
    \end{equation}
    For $N=3$,
    \begin{equation}
    \begin{aligned}
        \bigphi(\phi_1, \phi_2, \phi_3) &= \bigphi^{(1)}(\phi_1)+\bigphi^{(1)}(\phi_2)+\bigphi^{(3)}(\phi_3)\\
        &+\bigphi^{(2)}(\phi_1, \phi_2)+ \bigphi^{(2)}(\phi_2, \phi_3)+\bigphi^{(2)}(\phi_1, \phi_3)\\
        &+\bigphi^{(3)}(\phi_1, \phi_2, \phi_3)
    \end{aligned}
    \end{equation}
    where
    \begin{equation}\label{phi_3}
        \bigphi^{(3)}(\phi_i, \phi_j, \phi_k) = \left(\frac{D^{ij}}{z_{ij}}\frac{D^{jk}}{z_{jk}} + \frac{D^{ik}}{z_{ik}}\frac{D^{kj}}{z_{kj}} +  \frac{D^{ji}}{z_{ji}}\frac{D^{ik}}{z_{ik}} \right) \phi_i \phi_j \phi_k.
    \end{equation}
    
\end{ex}

Notice that the $\phi^{(N)}$ expansion of $N$ seed functions is a sum of all marked tree graphs in $\mT^{\rm max}_N$, which are graphs that contain all $N$ labeled nodes.
\begin{equation}
    \bigphi^{(N)}(\phi_1, \ldots, \phi_N) = \sum_{\bt \, \in \, \mT^{\rm max}_N } \phi_{\bt}.
\end{equation}

We are now ready to give the recursive formula which allows one to solve for the perturbiner expansion of $N$ particles in terms of the perturbiner expansion of $N-1$ particles. We note that it is closely related to the ``inverse soft'' construction \cite{Krasnov:2013wsa}.

\begin{thm} 
\begin{equation}\label{gravity_soft_thm_to_prove}
    \tcboxmath{\bigphi^{(N)} ( \phi_1, \ldots, \phi_N ) = \sum_{i = 1}^{N-1} \bigphi^{(N-1)}  ( \phi_1, \ldots, \phi_{N-1} ) \bigg\rvert_{\displaystyle{  \phi_i \mapsto \frac{1}{z_{iN}} \{ \phi_i, \phi_N\} }  } }
\end{equation}

\end{thm}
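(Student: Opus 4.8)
The plan is to prove the recursion purely combinatorially at the level of the tree expansions, using the fact just established that $\bigphi^{(N)}(\phi_1,\ldots,\phi_N)=\sum_{\bt\in\mT^{\rm max}_N}\phi_\bt$ is the sum over all spanning trees on $\{1,\ldots,N\}$, and likewise $\bigphi^{(N-1)}(\phi_1,\ldots,\phi_{N-1})=\sum_{\bt'\in\mT^{\rm max}_{N-1}}\phi_{\bt'}$. First I would unpack the substitution graphically. Since $\tfrac{1}{z_{iN}}\{\phi_i,\phi_N\}=\tfrac{D^{iN}}{z_{iN}}\phi_i\phi_N$ is exactly the term carried by an edge joining $i$ and $N$, the replacement $\phi_i\mapsto\tfrac{1}{z_{iN}}\{\phi_i,\phi_N\}$ hangs a new node $N$ off node $i$. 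The subtlety is that every edge of $\bt'$ incident to $i$ carries an operator $\partial^{(i)}$ which, after the substitution, acts as an ordinary derivative on the whole block $\tfrac{D^{iN}}{z_{iN}}\phi_i\phi_N$ and hence, by the product rule, lands either on $\phi_i$ or on $\phi_N$. When it lands on $\phi_N$ the corresponding edge ``migrates'' its endpoint from $i$ to $N$, converting $\tfrac{D^{ij}}{z_{ij}}$ into $\tfrac{-D^{jN}}{z_{ij}}$ (the sign being $D^{Nj}=-D^{jN}$). A short check shows every monomial produced on the RHS equals $\phi_{T}$ up to prefactors for a genuine spanning tree $T$ on $\{1,\ldots,N\}$ — attaching $N$ as a leaf and then migrating edges onto $N$ preserves connectedness and acyclicity — so the RHS is a sum over such trees with rational prefactors in the $z$'s.

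Next I would organize the RHS by the resulting tree $T$. Let $a_1,\ldots,a_d$ be the neighbors of $N$ in $T$, where $d=\deg_T(N)$, and let $T_1,\ldots,T_d$ be the subtrees obtained by deleting $N$. I claim the monomials building $\phi_T$ come from exactly $d$ triples: for each $m$ one substitutes at $i=a_m$ into the specific $\bt'_m\in\mT^{\rm max}_{N-1}$ whose edges are those of the $T_{m'}$ together with the $d-1$ edges $\{a_m a_{m'}\}_{m'\neq m}$, and one selects the product-rule term in which precisely those $d-1$ edges migrate onto $N$. This is forced: $i$ must be a neighbor of $N$, the substitution supplies one spoke $D^{a_m N}$, the remaining spokes $D^{a_{m'}N}$ can only arise by migrating edges $a_m a_{m'}$, and the intra-subtree edges must not migrate. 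Hence each target $T$ receives exactly $d$ contributions, one per neighbor of $N$, and all of them carry the identical differential structure $\big(\prod_{m} D^{a_m N}\big)\cdot(\text{subtree operators})\cdot\prod_k\phi_k$, which is precisely the differential part of $\phi_T$.

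It then remains to check that the $z$-prefactors sum correctly. Stripping the common differential factor and the unchanged subtree denominators, the contribution of $\bt'_m$ carries $\tfrac{1}{z_{a_m N}}\prod_{m'\neq m}\tfrac{-1}{z_{a_m a_{m'}}}$, whereas $\phi_T$ carries $\prod_{m}\tfrac{1}{z_{a_m N}}$, so the proof reduces to the rational identity
\begin{equation}
\sum_{m=1}^{d}\frac{(-1)^{d-1}}{z_{a_m N}}\prod_{\substack{m'=1\\ m'\neq m}}^{d}\frac{1}{z_{a_m a_{m'}}}=\prod_{m=1}^{d}\frac{1}{z_{a_m N}}.
\end{equation}
This is the clean analytic core, and it follows by summing the residues of $z\mapsto\big((z-z_N)\prod_{m=1}^{d}(z-z_{a_m})\big)^{-1}$: the residues at the $z_{a_m}$ give the left-hand side up to the sign $(-1)^{d-1}$, the residue at $z_N$ gives $\prod_m (z_N-z_{a_m})^{-1}$, and there is no residue at infinity for $d\geq 1$, so the total vanishes and rearranges into the claim. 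This generalizes the $d=2$ instance visible in the $N=3$ example, where $z_{13}-z_{23}=z_{12}$ plays exactly the role of the vanishing residue at infinity.

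The hard part will be the bookkeeping rather than any deep idea: verifying carefully that every RHS monomial is a legitimate spanning-tree term, that the migration sign is uniformly $(-1)^{d-1}$ across all $d$ contributions, and that the partition ``by final tree $T$'' is a genuine bijection onto the triples (tree $\bt'_m$, base node $a_m$, migration set) with no double counting. Once these are pinned down the residue identity closes the argument, in exactly the same spirit as the $\sum_{\rm cycle} z_{ij}=0$ mechanism used in the proof of Theorem \ref{thm_21}.
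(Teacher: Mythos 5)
Your proposal is correct and takes essentially the same route as the paper's own proof: your product-rule ``edge migration'' is exactly the paper's arrowed-edge construction of the set $\mT_N^{\rm max,\rightarrow}$, your grouping of contributions by the final tree $T$ (one contribution per neighbor $a_m$ of $N$, coming from the contracted tree $\bt'_m$) is the paper's grouping of arrowed graphs by their underlying unarrowed tree, and your residue identity is, after cancelling the $(-1)^{d}$ from your sign conventions, precisely the paper's partial-fractions identity $\sum_{a=1}^{m}\frac{1}{z_{Ni_a}}\prod_{b\neq a}\frac{1}{z_{i_a i_b}}=\prod_{a=1}^{m}\frac{1}{z_{Ni_a}}$, proved there by the same poles-and-residues reasoning. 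No gaps.
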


\begin{proof}
    Let us first explain the notation used in the above formula. On the right we have a sum over all $N-1$ original seed functions $\phi_i$ where we replace them with $\tfrac{1}{z_{iN}} \{ \phi_i, \phi_N \}$. This means that if a term in the original expression includes a certain set of seed functions with a certain set of derivatives acting on $\phi_i$, then we must replace $\phi_i$ with $\tfrac{1}{z_{iN}} \{ \phi_i, \phi_N \}$ in the term such that those same derivatives are now to act on $\tfrac{1}{z_{iN}} \{ \phi_i, \phi_N \}$ instead. However, all of the pre-existing factors of $1/{z_{ij}}$, corresponding to nodes $j$ which previously attached to $i$, will remain the same.
    
    In order to understand this recursive operation diagrammatically, so that we may prove it, we will need to introduce two new objects to our graphical dictionary.

    The first new object we define is a straight lined edge with an arrow on it. This edge works quite differently than all previous edges.
    \begin{center}
        \tikzset{every picture/.style={line width=0.75pt}} 

\begin{tikzpicture}[x=0.75pt,y=0.75pt,yscale=-1,xscale=1]

\draw    (109,68.2) -- (146,68.2) ;
\draw [shift={(121,68.2)}, rotate = 0] [fill={rgb, 255:red, 0; green, 0; blue, 0 }  ][line width=0.08]  [draw opacity=0] (8.93,-4.29) -- (0,0) -- (8.93,4.29) -- cycle    ;
\draw   (84,68.2) .. controls (84,61.3) and (89.6,55.7) .. (96.5,55.7) .. controls (103.4,55.7) and (109,61.3) .. (109,68.2) .. controls (109,75.1) and (103.4,80.7) .. (96.5,80.7) .. controls (89.6,80.7) and (84,75.1) .. (84,68.2) -- cycle ;
\draw   (146,68.2) .. controls (146,61.3) and (151.6,55.7) .. (158.5,55.7) .. controls (165.4,55.7) and (171,61.3) .. (171,68.2) .. controls (171,75.1) and (165.4,80.7) .. (158.5,80.7) .. controls (151.6,80.7) and (146,75.1) .. (146,68.2) -- cycle ;
\draw    (257,52.21) .. controls (223.51,46.3) and (222.04,68.53) .. (185.69,69.2) ;
\draw [shift={(184,69.21)}, rotate = 360] [color={rgb, 255:red, 0; green, 0; blue, 0 }  ][line width=0.75]    (10.93,-3.29) .. controls (6.95,-1.4) and (3.31,-0.3) .. (0,0) .. controls (3.31,0.3) and (6.95,1.4) .. (10.93,3.29)   ;

\draw (96.5,68.2) node    {$i$};
\draw (158.5,68.2) node    {$j$};
\draw (273,25) node [anchor=north west][inner sep=0.75pt]   [align=left] {If a node $\displaystyle k$ connects to $\displaystyle j$, \\then the graph is multiplied\\by $\displaystyle \dfrac{1}{z_{ik}}$ rather than $\displaystyle \dfrac{1}{z_{jk}}$.};

\end{tikzpicture}
    \end{center}
    In the edge above, if another node $k$ connects to the node $j$, then one should not multiply the graph by $1/z_{jk}$ as per usual, but one should rather multiply it by $1/z_{ik}$. The numerator, $D^{jk}$, should stay the same.
    
    \begin{ex}\secret
    \begin{center}
        \tikzset{every picture/.style={line width=0.75pt}} 

\begin{tikzpicture}[x=0.75pt,y=0.75pt,yscale=-1,xscale=1]

\draw    (104.28,57.95) -- (73.8,30.9) ;
\draw    (73.8,85) -- (104.28,57.95) ;
\draw    (187.08,85) -- (156.6,57.95) ;
\draw    (156.6,57.95) -- (187.08,30.9) ;
\draw    (104.28,57.95) -- (156.6,57.95) ;
\draw [shift={(123.94,57.95)}, rotate = 0] [fill={rgb, 255:red, 0; green, 0; blue, 0 }  ][line width=0.08]  [draw opacity=0] (8.93,-4.29) -- (0,0) -- (8.93,4.29) -- cycle    ;
\draw  [fill={rgb, 255:red, 255; green, 255; blue, 255 }  ,fill opacity=1 ] (91.78,57.95) .. controls (91.78,51.05) and (97.38,45.45) .. (104.28,45.45) .. controls (111.19,45.45) and (116.78,51.05) .. (116.78,57.95) .. controls (116.78,64.85) and (111.19,70.45) .. (104.28,70.45) .. controls (97.38,70.45) and (91.78,64.85) .. (91.78,57.95) -- cycle ;
\draw  [fill={rgb, 255:red, 255; green, 255; blue, 255 }  ,fill opacity=1 ] (144.1,57.95) .. controls (144.1,51.05) and (149.7,45.45) .. (156.6,45.45) .. controls (163.5,45.45) and (169.1,51.05) .. (169.1,57.95) .. controls (169.1,64.85) and (163.5,70.45) .. (156.6,70.45) .. controls (149.7,70.45) and (144.1,64.85) .. (144.1,57.95) -- cycle ;
\draw  [fill={rgb, 255:red, 255; green, 255; blue, 255 }  ,fill opacity=1 ] (61.3,30.9) .. controls (61.3,24) and (66.9,18.4) .. (73.8,18.4) .. controls (80.7,18.4) and (86.3,24) .. (86.3,30.9) .. controls (86.3,37.8) and (80.7,43.4) .. (73.8,43.4) .. controls (66.9,43.4) and (61.3,37.8) .. (61.3,30.9) -- cycle ;
\draw  [fill={rgb, 255:red, 255; green, 255; blue, 255 }  ,fill opacity=1 ] (61.3,85) .. controls (61.3,78.1) and (66.9,72.5) .. (73.8,72.5) .. controls (80.7,72.5) and (86.3,78.1) .. (86.3,85) .. controls (86.3,91.9) and (80.7,97.5) .. (73.8,97.5) .. controls (66.9,97.5) and (61.3,91.9) .. (61.3,85) -- cycle ;
\draw  [fill={rgb, 255:red, 255; green, 255; blue, 255 }  ,fill opacity=1 ] (174.58,30.9) .. controls (174.58,24) and (180.18,18.4) .. (187.08,18.4) .. controls (193.99,18.4) and (199.58,24) .. (199.58,30.9) .. controls (199.58,37.8) and (193.99,43.4) .. (187.08,43.4) .. controls (180.18,43.4) and (174.58,37.8) .. (174.58,30.9) -- cycle ;
\draw  [fill={rgb, 255:red, 255; green, 255; blue, 255 }  ,fill opacity=1 ] (174.58,85) .. controls (174.58,78.1) and (180.18,72.5) .. (187.08,72.5) .. controls (193.99,72.5) and (199.58,78.1) .. (199.58,85) .. controls (199.58,91.9) and (193.99,97.5) .. (187.08,97.5) .. controls (180.18,97.5) and (174.58,91.9) .. (174.58,85) -- cycle ;

\draw (104.28,57.95) node    {$i$};
\draw (156.6,57.95) node    {$j$};
\draw (73.8,30.9) node    {$1$};
\draw (73.8,85) node    {$2$};
\draw (187.08,30.9) node    {$3$};
\draw (187.08,85) node    {$4$};
\draw (276.1,58.95) node [anchor=west] [inner sep=0.75pt]    {$\phi _{\mathbf{t}} =\dfrac{D^{1i}}{z_{1i}}\dfrac{D^{2i}}{z_{2i}}\dfrac{D^{3j}}{z_{3i}}\dfrac{D^{4j}}{z_{4i}}\dfrac{D^{ij}}{z_{ij}} \phi _{1} \phi _{2} \phi _{3} \phi _{4} \phi _{i} \phi _{j}$};
\draw (56.08,60.2) node [anchor=east] [inner sep=0.75pt]    {$\mathbf{t} =$};

\end{tikzpicture}
    \end{center}
    \end{ex}

    The second new object we will introduce is the ``gray dotted circle.'' If we have some sort of internal sub-graph contained in the circle, as well as some external nodes connecting to the circle, then this graph corresponds to the \emph{sum of all possible connections} from the external nodes to the internal nodes. For instance, if there are $N_i$ nodes inside the circle and $N_e$ external nodes connected to the circle, then there will be $N_i^{N_e}$ graphs to sum.

    \begin{ex}\secret
    \begin{figure}[H]
\begin{center}
    \input{figures/gravity_graph_13}
\end{center}        \caption{An example of how the gray dotted circle is used. The circle is an instruction to sum over all ways for the external nodes to connect to the internal nodes.}
        \label{fig:564t24315}
    \end{figure}
        
    \end{ex}

We will use these two new graphical elements to diagrammatically express the RHS of \eqref{gravity_soft_thm_to_prove}. Taking $N = 5$, consider the action of replacing $\phi_4 \mapsto \frac{1}{z_{45}} \{ \phi_4, \phi_5\}$ in the following graph.
\begin{center}
    \tikzset{every picture/.style={line width=0.75pt}} 

\begin{tikzpicture}[x=0.75pt,y=0.75pt,yscale=-1,xscale=1]

\draw    (412,151.91) -- (462,90.7) ;
\draw    (512,151.91) -- (462,90.7) ;
\draw    (222.5,120.31) -- (191.2,95.49) ;
\draw    (159.9,120.31) -- (191.2,95.49) ;
\draw    (191.2,95.49) -- (191.2,57) ;
\draw  [fill={rgb, 255:red, 255; green, 255; blue, 255 }  ,fill opacity=1 ] (178.7,57) .. controls (178.7,50.1) and (184.3,44.5) .. (191.2,44.5) .. controls (198.1,44.5) and (203.7,50.1) .. (203.7,57) .. controls (203.7,63.9) and (198.1,69.5) .. (191.2,69.5) .. controls (184.3,69.5) and (178.7,63.9) .. (178.7,57) -- cycle ;
\draw  [fill={rgb, 255:red, 255; green, 255; blue, 255 }  ,fill opacity=1 ] (210,120.31) .. controls (210,113.41) and (215.6,107.81) .. (222.5,107.81) .. controls (229.4,107.81) and (235,113.41) .. (235,120.31) .. controls (235,127.21) and (229.4,132.81) .. (222.5,132.81) .. controls (215.6,132.81) and (210,127.21) .. (210,120.31) -- cycle ;
\draw  [fill={rgb, 255:red, 255; green, 255; blue, 255 }  ,fill opacity=1 ] (147.4,120.31) .. controls (147.4,113.41) and (153,107.81) .. (159.9,107.81) .. controls (166.8,107.81) and (172.4,113.41) .. (172.4,120.31) .. controls (172.4,127.21) and (166.8,132.81) .. (159.9,132.81) .. controls (153,132.81) and (147.4,127.21) .. (147.4,120.31) -- cycle ;
\draw  [fill={rgb, 255:red, 255; green, 255; blue, 255 }  ,fill opacity=1 ] (178.7,95.49) .. controls (178.7,88.59) and (184.3,82.99) .. (191.2,82.99) .. controls (198.1,82.99) and (203.7,88.59) .. (203.7,95.49) .. controls (203.7,102.39) and (198.1,107.99) .. (191.2,107.99) .. controls (184.3,107.99) and (178.7,102.39) .. (178.7,95.49) -- cycle ;
\draw  [fill={rgb, 255:red, 241; green, 241; blue, 241 }  ,fill opacity=1 ][dash pattern={on 4.5pt off 4.5pt}] (462,131.7) .. controls (429.06,131.7) and (402.36,113.34) .. (402.36,90.7) .. controls (402.36,68.06) and (429.06,49.7) .. (462,49.7) .. controls (494.94,49.7) and (521.64,68.06) .. (521.64,90.7) .. controls (521.64,113.34) and (494.94,131.7) .. (462,131.7) -- cycle ;
\draw    (440,90.7) -- (484,90.7) ;
\draw [shift={(455.5,90.7)}, rotate = 0] [fill={rgb, 255:red, 0; green, 0; blue, 0 }  ][line width=0.08]  [draw opacity=0] (8.93,-4.29) -- (0,0) -- (8.93,4.29) -- cycle    ;
\draw  [fill={rgb, 255:red, 255; green, 255; blue, 255 }  ,fill opacity=1 ] (430,103.2) .. controls (423.1,103.2) and (417.5,97.6) .. (417.5,90.7) .. controls (417.5,83.8) and (423.1,78.2) .. (430,78.2) .. controls (436.9,78.2) and (442.5,83.8) .. (442.5,90.7) .. controls (442.5,97.6) and (436.9,103.2) .. (430,103.2) -- cycle ;
\draw  [color={rgb, 255:red, 0; green, 0; blue, 0 }  ,draw opacity=1 ][fill={rgb, 255:red, 255; green, 255; blue, 255 }  ,fill opacity=1 ] (494,103.2) .. controls (487.1,103.2) and (481.5,97.6) .. (481.5,90.7) .. controls (481.5,83.8) and (487.1,78.2) .. (494,78.2) .. controls (500.9,78.2) and (506.5,83.8) .. (506.5,90.7) .. controls (506.5,97.6) and (500.9,103.2) .. (494,103.2) -- cycle ;
\draw    (462,20.2) -- (462,49.7) ;
\draw  [fill={rgb, 255:red, 255; green, 255; blue, 255 }  ,fill opacity=1 ] (462,32.7) .. controls (455.1,32.7) and (449.5,27.1) .. (449.5,20.2) .. controls (449.5,13.3) and (455.1,7.7) .. (462,7.7) .. controls (468.9,7.7) and (474.5,13.3) .. (474.5,20.2) .. controls (474.5,27.1) and (468.9,32.7) .. (462,32.7) -- cycle ;
\draw  [fill={rgb, 255:red, 255; green, 255; blue, 255 }  ,fill opacity=1 ] (512,164.41) .. controls (505.1,164.41) and (499.5,158.82) .. (499.5,151.91) .. controls (499.5,145.01) and (505.1,139.41) .. (512,139.41) .. controls (518.9,139.41) and (524.5,145.01) .. (524.5,151.91) .. controls (524.5,158.82) and (518.9,164.41) .. (512,164.41) -- cycle ;
\draw  [fill={rgb, 255:red, 255; green, 255; blue, 255 }  ,fill opacity=1 ] (399.5,151.91) .. controls (399.5,145.01) and (405.1,139.41) .. (412,139.41) .. controls (418.9,139.41) and (424.5,145.01) .. (424.5,151.91) .. controls (424.5,158.82) and (418.9,164.41) .. (412,164.41) .. controls (405.1,164.41) and (399.5,158.82) .. (399.5,151.91) -- cycle ;
\draw    (256,92) -- (351,92) ;
\draw [shift={(353,92)}, rotate = 180] [color={rgb, 255:red, 0; green, 0; blue, 0 }  ][line width=0.75]    (10.93,-4.9) .. controls (6.95,-2.3) and (3.31,-0.67) .. (0,0) .. controls (3.31,0.67) and (6.95,2.3) .. (10.93,4.9)   ;
\draw [shift={(256,92)}, rotate = 180] [color={rgb, 255:red, 0; green, 0; blue, 0 }  ][line width=0.75]    (0,5.59) -- (0,-5.59)   ;
\draw    (256,245.91) -- (351,245.91) ;
\draw [shift={(353,245.91)}, rotate = 180] [color={rgb, 255:red, 0; green, 0; blue, 0 }  ][line width=0.75]    (10.93,-4.9) .. controls (6.95,-2.3) and (3.31,-0.67) .. (0,0) .. controls (3.31,0.67) and (6.95,2.3) .. (10.93,4.9)   ;
\draw [shift={(256,245.91)}, rotate = 180] [color={rgb, 255:red, 0; green, 0; blue, 0 }  ][line width=0.75]    (0,5.59) -- (0,-5.59)   ;

\draw (116,80) node [anchor=north west][inner sep=0.75pt]    {$\mathbf{t} =$};
\draw (191.2,57) node    {$1$};
\draw (159.9,120.31) node    {$2$};
\draw (222.5,120.31) node    {$3$};
\draw (191.2,95.49) node    {$4$};
\draw (341.5,252.9) node    {$\phi _{\mathbf{t}} \ \ \ \ \ \ \ \ \ \ \ \ \ \ \ \ \ \ \ \ \ \ \ \ \ \ \ \ \ \ \ \ \ \ \ \ \ \ \ \ \ \ \ \ \phi _{\mathbf{t}} |_{\phi _{4} \mapsto \frac{1}{z_{45}}\{\phi _{4} ,\phi _{5}\}}$};
\draw (412,151.91) node    {$2$};
\draw (512,151.91) node    {$3$};
\draw (430,90.7) node    {$4$};
\draw (462,20.2) node    {$1$};
\draw (494,90.7) node    {$5$};

\end{tikzpicture}
\end{center}
We can see that the result of the operation $\phi_4 \mapsto \frac{1}{z_{45}} \{ \phi_4, \phi_5\}$ is to replace node 4 with a subgraph of two elements where 4 is connected to 5 by an arrowed edge, with the arrow pointing from 5 to 4.

\begin{definition}
    Let $\mT_N^{\rm max, \rightarrow}$ denote all marked tree graphs containing $N$ distinct labelled nodes in which an arrow points from the node ``$N$'' to exactly one adjacent node.
\end{definition}

\begin{ex}
    Here are some elements of $\mT_5^{\rm max, \rightarrow}.$
    \begin{center}
    \input{figures/gravity_graph_14}
\end{center}

\end{ex}

One can convince oneself that in the process of summing over $i = 1, \ldots, N-1$, and replacing $\phi_i \mapsto \tfrac{1}{z_{iN}} \{\phi_i, \phi_N\}$ in all of the graphs of $\mT_{N-1}^{\rm max}$, we will end up with a sum over all of the graphs in $\mT_N^{\rm max, \rightarrow}$. Therefore, the RHS of \eqref{gravity_soft_thm_to_prove} can be written as

\begin{equation}
    \sum_{i = 1}^{N-1} \bigphi^{(N-1)}  ( \phi_1, \ldots, \phi_{N-1} ) \bigg\rvert_{\displaystyle{  \phi_i \mapsto \frac{1}{z_{iN}} \{ \phi_i, \phi_N\} }  } = \sum_{\bt \in \mT_N^{\rm max, \rightarrow}} \phi_{\bt}.
\end{equation}
Therefore, our proof will be complete once we show that
\begin{equation}
    \sum_{\bt \in \mT_N^{\rm max, \rightarrow}} \phi_{\bt} \overset{?}{=}  \sum_{\bt \in \mT^{\rm max}_N} \phi_{\bt}.
\end{equation}

It is natural to put the graphs in $\mT_N^{\rm max, \rightarrow}$ in groups that look identical if the arrowed edge is converted into a normal edge. In fact, one can show that the sum of all such graphs in this group \emph{is} that very converted graph, as shown below!
\begin{figure}[H]
    \centering
\tikzset{every picture/.style={line width=0.75pt}} 

\begin{tikzpicture}[x=0.75pt,y=0.75pt,yscale=-1,xscale=1]

\draw    (245.5,103.31) -- (214.2,78.49) ;
\draw [shift={(234.94,94.94)}, rotate = 218.42] [fill={rgb, 255:red, 0; green, 0; blue, 0 }  ][line width=0.08]  [draw opacity=0] (8.93,-4.29) -- (0,0) -- (8.93,4.29) -- cycle    ;
\draw    (182.9,103.31) -- (214.2,78.49) ;
\draw    (214.2,78.49) -- (214.2,40) ;
\draw  [fill={rgb, 255:red, 255; green, 255; blue, 255 }  ,fill opacity=1 ] (201.7,40) .. controls (201.7,33.1) and (207.3,27.5) .. (214.2,27.5) .. controls (221.1,27.5) and (226.7,33.1) .. (226.7,40) .. controls (226.7,46.9) and (221.1,52.5) .. (214.2,52.5) .. controls (207.3,52.5) and (201.7,46.9) .. (201.7,40) -- cycle ;
\draw  [fill={rgb, 255:red, 255; green, 255; blue, 255 }  ,fill opacity=1 ] (233,103.31) .. controls (233,96.41) and (238.6,90.81) .. (245.5,90.81) .. controls (252.4,90.81) and (258,96.41) .. (258,103.31) .. controls (258,110.21) and (252.4,115.81) .. (245.5,115.81) .. controls (238.6,115.81) and (233,110.21) .. (233,103.31) -- cycle ;
\draw  [fill={rgb, 255:red, 255; green, 255; blue, 255 }  ,fill opacity=1 ] (170.4,103.31) .. controls (170.4,96.41) and (176,90.81) .. (182.9,90.81) .. controls (189.8,90.81) and (195.4,96.41) .. (195.4,103.31) .. controls (195.4,110.21) and (189.8,115.81) .. (182.9,115.81) .. controls (176,115.81) and (170.4,110.21) .. (170.4,103.31) -- cycle ;
\draw  [fill={rgb, 255:red, 255; green, 255; blue, 255 }  ,fill opacity=1 ] (201.7,78.49) .. controls (201.7,71.59) and (207.3,65.99) .. (214.2,65.99) .. controls (221.1,65.99) and (226.7,71.59) .. (226.7,78.49) .. controls (226.7,85.39) and (221.1,90.99) .. (214.2,90.99) .. controls (207.3,90.99) and (201.7,85.39) .. (201.7,78.49) -- cycle ;
\draw    (127.5,103.31) -- (96.2,78.49) ;
\draw    (64.9,103.31) -- (96.2,78.49) ;
\draw    (96.2,78.49) -- (96.2,40) ;
\draw [shift={(96.2,54.24)}, rotate = 90] [fill={rgb, 255:red, 0; green, 0; blue, 0 }  ][line width=0.08]  [draw opacity=0] (8.93,-4.29) -- (0,0) -- (8.93,4.29) -- cycle    ;
\draw  [fill={rgb, 255:red, 255; green, 255; blue, 255 }  ,fill opacity=1 ] (83.7,40) .. controls (83.7,33.1) and (89.3,27.5) .. (96.2,27.5) .. controls (103.1,27.5) and (108.7,33.1) .. (108.7,40) .. controls (108.7,46.9) and (103.1,52.5) .. (96.2,52.5) .. controls (89.3,52.5) and (83.7,46.9) .. (83.7,40) -- cycle ;
\draw  [fill={rgb, 255:red, 255; green, 255; blue, 255 }  ,fill opacity=1 ] (115,103.31) .. controls (115,96.41) and (120.6,90.81) .. (127.5,90.81) .. controls (134.4,90.81) and (140,96.41) .. (140,103.31) .. controls (140,110.21) and (134.4,115.81) .. (127.5,115.81) .. controls (120.6,115.81) and (115,110.21) .. (115,103.31) -- cycle ;
\draw  [fill={rgb, 255:red, 255; green, 255; blue, 255 }  ,fill opacity=1 ] (52.4,103.31) .. controls (52.4,96.41) and (58,90.81) .. (64.9,90.81) .. controls (71.8,90.81) and (77.4,96.41) .. (77.4,103.31) .. controls (77.4,110.21) and (71.8,115.81) .. (64.9,115.81) .. controls (58,115.81) and (52.4,110.21) .. (52.4,103.31) -- cycle ;
\draw  [fill={rgb, 255:red, 255; green, 255; blue, 255 }  ,fill opacity=1 ] (83.7,78.49) .. controls (83.7,71.59) and (89.3,65.99) .. (96.2,65.99) .. controls (103.1,65.99) and (108.7,71.59) .. (108.7,78.49) .. controls (108.7,85.39) and (103.1,90.99) .. (96.2,90.99) .. controls (89.3,90.99) and (83.7,85.39) .. (83.7,78.49) -- cycle ;
\draw    (366,103.31) -- (334.7,78.49) ;
\draw    (303.4,103.31) -- (334.7,78.49) ;
\draw [shift={(313.96,94.94)}, rotate = 321.58] [fill={rgb, 255:red, 0; green, 0; blue, 0 }  ][line width=0.08]  [draw opacity=0] (8.93,-4.29) -- (0,0) -- (8.93,4.29) -- cycle    ;
\draw    (334.7,78.49) -- (334.7,40) ;
\draw  [fill={rgb, 255:red, 255; green, 255; blue, 255 }  ,fill opacity=1 ] (322.2,40) .. controls (322.2,33.1) and (327.8,27.5) .. (334.7,27.5) .. controls (341.6,27.5) and (347.2,33.1) .. (347.2,40) .. controls (347.2,46.9) and (341.6,52.5) .. (334.7,52.5) .. controls (327.8,52.5) and (322.2,46.9) .. (322.2,40) -- cycle ;
\draw  [fill={rgb, 255:red, 255; green, 255; blue, 255 }  ,fill opacity=1 ] (353.5,103.31) .. controls (353.5,96.41) and (359.1,90.81) .. (366,90.81) .. controls (372.9,90.81) and (378.5,96.41) .. (378.5,103.31) .. controls (378.5,110.21) and (372.9,115.81) .. (366,115.81) .. controls (359.1,115.81) and (353.5,110.21) .. (353.5,103.31) -- cycle ;
\draw  [fill={rgb, 255:red, 255; green, 255; blue, 255 }  ,fill opacity=1 ] (290.9,103.31) .. controls (290.9,96.41) and (296.5,90.81) .. (303.4,90.81) .. controls (310.3,90.81) and (315.9,96.41) .. (315.9,103.31) .. controls (315.9,110.21) and (310.3,115.81) .. (303.4,115.81) .. controls (296.5,115.81) and (290.9,110.21) .. (290.9,103.31) -- cycle ;
\draw  [fill={rgb, 255:red, 255; green, 255; blue, 255 }  ,fill opacity=1 ] (322.2,78.49) .. controls (322.2,71.59) and (327.8,65.99) .. (334.7,65.99) .. controls (341.6,65.99) and (347.2,71.59) .. (347.2,78.49) .. controls (347.2,85.39) and (341.6,90.99) .. (334.7,90.99) .. controls (327.8,90.99) and (322.2,85.39) .. (322.2,78.49) -- cycle ;
\draw    (486.5,103.31) -- (455.2,78.49) ;
\draw    (423.9,103.31) -- (455.2,78.49) ;
\draw    (455.2,78.49) -- (455.2,40) ;
\draw  [fill={rgb, 255:red, 255; green, 255; blue, 255 }  ,fill opacity=1 ] (442.7,40) .. controls (442.7,33.1) and (448.3,27.5) .. (455.2,27.5) .. controls (462.1,27.5) and (467.7,33.1) .. (467.7,40) .. controls (467.7,46.9) and (462.1,52.5) .. (455.2,52.5) .. controls (448.3,52.5) and (442.7,46.9) .. (442.7,40) -- cycle ;
\draw  [fill={rgb, 255:red, 255; green, 255; blue, 255 }  ,fill opacity=1 ] (474,103.31) .. controls (474,96.41) and (479.6,90.81) .. (486.5,90.81) .. controls (493.4,90.81) and (499,96.41) .. (499,103.31) .. controls (499,110.21) and (493.4,115.81) .. (486.5,115.81) .. controls (479.6,115.81) and (474,110.21) .. (474,103.31) -- cycle ;
\draw  [fill={rgb, 255:red, 255; green, 255; blue, 255 }  ,fill opacity=1 ] (411.4,103.31) .. controls (411.4,96.41) and (417,90.81) .. (423.9,90.81) .. controls (430.8,90.81) and (436.4,96.41) .. (436.4,103.31) .. controls (436.4,110.21) and (430.8,115.81) .. (423.9,115.81) .. controls (417,115.81) and (411.4,110.21) .. (411.4,103.31) -- cycle ;
\draw  [fill={rgb, 255:red, 255; green, 255; blue, 255 }  ,fill opacity=1 ] (442.7,78.49) .. controls (442.7,71.59) and (448.3,65.99) .. (455.2,65.99) .. controls (462.1,65.99) and (467.7,71.59) .. (467.7,78.49) .. controls (467.7,85.39) and (462.1,90.99) .. (455.2,90.99) .. controls (448.3,90.99) and (442.7,85.39) .. (442.7,78.49) -- cycle ;

\draw (214.2,40) node    {$i_{1}$};
\draw (182.9,103.31) node    {$i_{3}$};
\draw (245.5,103.31) node    {$i_{2}$};
\draw (214.2,78.49) node    {$N$};
\draw (96.2,40) node    {$i_{1}$};
\draw (64.9,103.31) node    {$i_{3}$};
\draw (127.5,103.31) node    {$i_{2}$};
\draw (96.2,78.49) node    {$N$};
\draw (334.7,40) node    {$i_{1}$};
\draw (303.4,103.31) node    {$i_{3}$};
\draw (366,103.31) node    {$i_{2}$};
\draw (334.7,78.49) node    {$N$};
\draw (455.2,40) node    {$i_{1}$};
\draw (423.9,103.31) node    {$i_{3}$};
\draw (486.5,103.31) node    {$i_{2}$};
\draw (455.2,78.49) node    {$N$};
\draw (155.57,74.19) node  [font=\Large]  {$+$};
\draw (275.07,74.19) node  [font=\Large]  {$+$};
\draw (394.57,74.19) node  [font=\Large]  {$=$};

\end{tikzpicture}
    \caption{A node $N$ connects to $m = 3$ nodes $i_1, \ldots, i_m$. On the LHS we sum over all $m$ options for which one of the $m$ edges is chosen as the arrowed edge. On the RHS we have the graph with only normal edges. These graphs should be understood as a subpart of a larger graph, the details of which do not affect the present calculation.}
    \label{fig:63456}
\end{figure}
This equality is due to the elementary identity\footnote{This identity, which is a generalization of the partial fractions sum formula, is easiest to prove by noting that both sides have the same poles and residues in $z_N$, and have no zeros in $z_N$ aside from $z_N = \infty$.}
\begin{equation}
    \sum_{a = 1}^m \frac{1}{z_{Ni_a}} \prod^m_{\substack{b = 1 \\ b \neq a}} \frac{1}{z_{i_a i_b}} = \prod_{a=1}^m \frac{1}{z_{N i_a}}
\end{equation}
and this concludes the proof.

\end{proof}

\begin{ex}
Let us show how subsequent $\phi^{(N)}$'s can be computed using \eqref{gravity_soft_thm_to_prove}. We start with
\begin{equation}
    \bigphi^{(1)}(\phi_1) = \phi_1
\end{equation}
from which \eqref{gravity_soft_thm_to_prove} implies
\begin{equation}
    \bigphi^{(2)}(\phi_1, \phi_2) = \frac{1}{z_{12}}\{ \phi_1, \phi_2 \}.
\end{equation}
The first non-trivial example arises when computing $N = 3$ from $N=2$,
\begin{align}
    \bigphi^{(3)}(\phi_1, \phi_2, \phi_3) &= \frac{1}{z_{12}} \frac{1}{z_{13}} \{ \{ \phi_1, \phi_3\}, \phi_2 \} + \frac{1}{z_{12}} \frac{1}{z_{23}} \{\phi_1, \{ \phi_2, \phi_3\}\} \\
    &= \left( \frac{1}{z_{12}}\frac{1}{z_{13}} D^{13}(D^{12} + D^{32}) + \frac{1}{z_{12}}\frac{1}{z_{23}} D^{23}(D^{12} + D^{13}) \right) \phi_1 \phi_2 \phi_3 \nonumber
\end{align}
which after a rearrangement of terms can be shown to match \eqref{phi_3}. The expression for $\phi^{(4)}$ is
\begin{equation}
\begin{aligned}
    \bigphi^{(4)}(& \phi_1, \phi_2, \phi_3, \phi_4) \\
    &= \frac{1}{z_{14}} \left( \frac{1}{z_{12}} \frac{1}{z_{13}}  \{ \{ \{\phi_1, \phi_4 \} , \phi_3\}, \phi_2 \} + \frac{1}{z_{12}} \frac{1}{z_{23}} \{ \{ \phi_1, \phi_4 \}, \{ \phi_2, \phi_3\}\} \right) \\
    & + \frac{1}{z_{24}} \left( \frac{1}{z_{12}} \frac{1}{z_{13}} \{ \{ \phi_1, \phi_3\}, \{ \phi_2, \phi_4\} \} + \frac{1}{z_{12}} \frac{1}{z_{23}} \{\phi_1, \{ \{\phi_2, \phi_4\} , \phi_3\}\}  \right) \\
    & + \frac{1}{z_{34}} \left( \frac{1}{z_{12}} \frac{1}{z_{13}} \{ \{ \phi_1, \{\phi_3, \phi_4 \} \}, \phi_2 \} + \frac{1}{z_{12}} \frac{1}{z_{23}} \{\phi_1, \{ \phi_2, \{\phi_3, \phi_4 \} \}\} \right).
\end{aligned}
\end{equation}
In some sense, the recursive formula provides a ``more efficient'' way to calculate perturbiner expansions than the marked tree formula. For instance, the above expression for $\phi^{(4)}$ has 6 terms while the number of trees contained in $\mT_4^{\rm max}$ is 16. In general, number of terms which arise from the recursive formula is $(N-1)!$ while the number of corresponding trees is $N^{N-2}$, which grows much faster.

The way in which the nested Poisson brackets organize themselves after the repeated use of the recursive formula lends itself quite naturally to a representation in terms of \emph{binary} trees. We draw the first set of instances of these trees below.

\begin{center}
\tikzset{every picture/.style={line width=0.75pt}} 

\begin{tikzpicture}[x=0.75pt,y=0.75pt,yscale=-1,xscale=1]

\draw    (254,115.91) -- (254,132.91) ;
\draw    (131.46,133.91) -- (131.46,94.46) ;
\draw    (402,129.91) -- (402,146.91) ;
\draw    (359.09,87) -- (402,129.91) ;
\draw    (444.91,87) -- (402,129.91) ;
\draw    (380.54,108.46) -- (402,87) ;
\draw    (523,129.91) -- (523,146.91) ;
\draw    (480.09,87) -- (523,129.91) ;
\draw    (565.91,87) -- (523,129.91) ;
\draw    (544.46,108.46) -- (523,87) ;
\draw    (254,115.91) -- (275.46,94.46) ;
\draw    (254,115.91) -- (232.54,94.46) ;
\draw    (190,264.91) -- (190,281.91) ;
\draw    (147.09,222) -- (190,264.91) ;
\draw    (232.91,222) -- (190,264.91) ;
\draw    (168.54,243.46) -- (190,222) ;
\draw    (190,379.91) -- (190,396.91) ;
\draw    (147.09,337) -- (190,379.91) ;
\draw    (232.91,337) -- (190,379.91) ;
\draw    (232.91,222) -- (254.37,200.54) ;
\draw    (232.91,337) -- (211.46,315.54) ;
\draw    (147.09,337) -- (168.54,315.54) ;
\draw    (147.09,222) -- (168.54,200.54) ;
\draw    (232.91,337) -- (254.37,315.54) ;
\draw    (190,222) -- (211.46,200.54) ;
\draw    (147.09,337) -- (125.63,315.54) ;
\draw    (147.09,222) -- (125.63,200.54) ;
\draw    (372.46,359.46) -- (351,338) ;
\draw    (351,338) -- (329.54,316.54) ;
\draw    (393.91,222) -- (372.46,200.54) ;
\draw    (308.09,222) -- (329.54,200.54) ;
\draw    (351,264.91) -- (351,281.91) ;
\draw    (308.09,222) -- (351,264.91) ;
\draw    (393.91,222) -- (351,264.91) ;
\draw    (351,380.91) -- (351,397.91) ;
\draw    (308.09,338) -- (351,380.91) ;
\draw    (393.91,338) -- (351,380.91) ;
\draw    (393.91,222) -- (415.37,200.54) ;
\draw    (393.91,338) -- (415.37,316.54) ;
\draw    (308.09,338) -- (286.63,316.54) ;
\draw    (308.09,222) -- (286.63,200.54) ;
\draw    (515,265.29) -- (515,282.29) ;
\draw    (472.09,222.37) -- (515,265.29) ;
\draw    (557.91,222.37) -- (515,265.29) ;
\draw    (557.91,222.37) -- (579.37,200.91) ;
\draw    (472.09,222.37) -- (450.63,200.91) ;
\draw    (515,381.29) -- (515,398.29) ;
\draw    (472.09,338.37) -- (515,381.29) ;
\draw    (557.91,338.37) -- (515,381.29) ;
\draw    (557.91,338.37) -- (579.37,316.91) ;
\draw    (472.09,338.37) -- (450.63,316.91) ;
\draw    (393.91,338) -- (372.46,316.54) ;
\draw    (536.46,359.83) -- (515,338.37) ;
\draw    (515,222.37) -- (493.54,200.91) ;
\draw    (515,338.37) -- (493.54,316.91) ;
\draw    (515,338.37) -- (536.46,316.91) ;
\draw    (515,222.37) -- (536.46,200.91) ;
\draw    (493.54,243.83) -- (515,222.37) ;

\draw (107,115) node [anchor=east] [inner sep=0.75pt]    {$\bigphi ^{( 1)} :$};
\draw (232.54,91.06) node [anchor=south] [inner sep=0.75pt]    {$1$};
\draw (275.46,91.06) node [anchor=south] [inner sep=0.75pt]    {$2$};
\draw (131.46,91.06) node [anchor=south] [inner sep=0.75pt]    {$1$};
\draw (359.09,83.6) node [anchor=south] [inner sep=0.75pt]    {$1$};
\draw (444.91,83.6) node [anchor=south] [inner sep=0.75pt]    {$2$};
\draw (480.09,83.6) node [anchor=south] [inner sep=0.75pt]    {$1$};
\draw (565.91,83.6) node [anchor=south] [inner sep=0.75pt]    {$2$};
\draw (402,83.6) node [anchor=south] [inner sep=0.75pt]    {$3$};
\draw (523,83.6) node [anchor=south] [inner sep=0.75pt]    {$3$};
\draw (125.63,197.14) node [anchor=south] [inner sep=0.75pt]    {$1$};
\draw (254.37,197.14) node [anchor=south] [inner sep=0.75pt]    {$2$};
\draw (211.46,197.14) node [anchor=south] [inner sep=0.75pt]    {$3$};
\draw (168.54,197.14) node [anchor=south] [inner sep=0.75pt]    {$4$};
\draw (125.63,312.14) node [anchor=south] [inner sep=0.75pt]    {$1$};
\draw (168.54,312.14) node [anchor=south] [inner sep=0.75pt]    {$4$};
\draw (211.46,312.14) node [anchor=south] [inner sep=0.75pt]    {$3$};
\draw (254.37,312.14) node [anchor=south] [inner sep=0.75pt]    {$2$};
\draw (286.63,197.14) node [anchor=south] [inner sep=0.75pt]    {$1$};
\draw (329.54,197.14) node [anchor=south] [inner sep=0.75pt]    {$3$};
\draw (372.46,197.14) node [anchor=south] [inner sep=0.75pt]    {$4$};
\draw (286.63,313.14) node [anchor=south] [inner sep=0.75pt]    {$1$};
\draw (372.46,313.14) node [anchor=south] [inner sep=0.75pt]    {$4$};
\draw (329.54,313.14) node [anchor=south] [inner sep=0.75pt]    {$3$};
\draw (415.37,313.14) node [anchor=south] [inner sep=0.75pt]    {$2$};
\draw (415.37,197.14) node [anchor=south] [inner sep=0.75pt]    {$2$};
\draw (450.63,197.51) node [anchor=south] [inner sep=0.75pt]    {$1$};
\draw (536.46,197.51) node [anchor=south] [inner sep=0.75pt]    {$3$};
\draw (493.54,197.51) node [anchor=south] [inner sep=0.75pt]    {$4$};
\draw (579.37,197.51) node [anchor=south] [inner sep=0.75pt]    {$2$};
\draw (450.63,313.51) node [anchor=south] [inner sep=0.75pt]    {$1$};
\draw (579.37,313.51) node [anchor=south] [inner sep=0.75pt]    {$2$};
\draw (493.54,313.51) node [anchor=south] [inner sep=0.75pt]    {$3$};
\draw (536.46,313.51) node [anchor=south] [inner sep=0.75pt]    {$4$};
\draw (218,115) node [anchor=east] [inner sep=0.75pt]    {$\bigphi ^{( 2)} :$};
\draw (345,115) node [anchor=east] [inner sep=0.75pt]    {$\bigphi ^{( 3)} :$};
\draw (107,223.74) node [anchor=east] [inner sep=0.75pt]    {$\bigphi ^{( 4)} :$};
\draw (461.68,115.2) node    {$+$};
\draw (273.68,243.2) node    {$+$};
\draw (273.27,355.41) node    {$+$};
\draw (434.68,242.2) node    {$+$};
\draw (434.27,354.41) node    {$+$};
\draw (120.27,355.41) node    {$+$};

\end{tikzpicture}
\end{center}

Each of these binary trees corresponds to a term comprised of nested Poisson brackets of seed functions, multiplied by a string of $1/z_{ij}$'s. Each binary tree can in fact be related to a sum of (a kind of) ``graph,'' but not quite like the ones we had before with $\mT_N^{\rm max}$. In order to elaborate on the connection between binary trees and this other kind of marked tree graph, let us define yet another graph element comprised of a line and an associated pair of numbers, say $a$ and $b$. These numbers dictate which $z_{ab}$ we should put in the denominator, which will be independent from the $D^{ij}$ in the numerator:
\begin{center}
\tikzset{every picture/.style={line width=0.75pt}} 

\begin{tikzpicture}[x=0.75pt,y=0.75pt,yscale=-1,xscale=1]

\draw    (67,63.2) -- (104,63.2) ;
\draw   (42,63.2) .. controls (42,56.3) and (47.6,50.7) .. (54.5,50.7) .. controls (61.4,50.7) and (67,56.3) .. (67,63.2) .. controls (67,70.1) and (61.4,75.7) .. (54.5,75.7) .. controls (47.6,75.7) and (42,70.1) .. (42,63.2) -- cycle ;
\draw   (104,63.2) .. controls (104,56.3) and (109.6,50.7) .. (116.5,50.7) .. controls (123.4,50.7) and (129,56.3) .. (129,63.2) .. controls (129,70.1) and (123.4,75.7) .. (116.5,75.7) .. controls (109.6,75.7) and (104,70.1) .. (104,63.2) -- cycle ;

\draw (131,63.2) node [anchor=west] [inner sep=0.75pt]    {$\ =\ \dfrac{D^{ij}}{z_{ab}} \phi _{i} \phi _{j}$};
\draw (54.5,63.2) node    {$i$};
\draw (116.5,63.2) node    {$j$};
\draw (85.5,59.8) node [anchor=south] [inner sep=0.75pt]  [font=\footnotesize]  {$a\ b$};

\end{tikzpicture}
\end{center}
If we then take some binary tree graph, we can convert it to a sum of a new kind of marked tree diagram as follows: in the sense that we can construct any binary tree by sequentially combining ``pairs'' of smaller binary sub-trees, we can construct our new kind of diagram by sequentially surrounding smaller sub-diagrams with grey dotted circles and connecting them with an ``$ab$''-labelled edge. The values of $a$ and $b$ on this edge must correspond to the smallest node label in each subgraph. Here is an example.
\begin{figure}[H]
    \begin{center}
    \tikzset{every picture/.style={line width=0.75pt}} 

\begin{tikzpicture}[x=0.75pt,y=0.75pt,yscale=-1,xscale=1]

\draw  [fill={rgb, 255:red, 216; green, 216; blue, 216 }  ,fill opacity=1 ][dash pattern={on 4.5pt off 4.5pt}] (276,90.96) .. controls (276,48.44) and (308.57,13.98) .. (348.75,13.98) .. controls (388.93,13.98) and (421.5,48.44) .. (421.5,90.96) .. controls (421.5,133.47) and (388.93,167.93) .. (348.75,167.93) .. controls (308.57,167.93) and (276,133.47) .. (276,90.96) -- cycle ;
\draw  [fill={rgb, 255:red, 241; green, 241; blue, 241 }  ,fill opacity=1 ][dash pattern={on 4.5pt off 4.5pt}] (337,90.96) .. controls (337,60.05) and (352.89,35) .. (372.5,35) .. controls (392.11,35) and (408,60.05) .. (408,90.96) .. controls (408,121.86) and (392.11,146.91) .. (372.5,146.91) .. controls (352.89,146.91) and (337,121.86) .. (337,90.96) -- cycle ;
\draw    (113,132) -- (113,149) ;
\draw    (70.09,89.09) -- (113,132) ;
\draw    (155.91,89.09) -- (113,132) ;
\draw    (177.37,67.63) -- (155.91,46.17) ;
\draw    (70.09,89.09) -- (91.54,67.63) ;
\draw    (155.91,89.09) -- (177.37,67.63) ;
\draw    (70.09,89.09) -- (48.63,67.63) ;
\draw    (177.37,67.63) -- (198.83,46.17) ;
\draw    (91.54,67.63) -- (113,46.17) ;
\draw    (91.54,67.63) -- (70.09,46.17) ;
\draw    (48.63,67.63) -- (27.17,46.17) ;
\draw  [fill={rgb, 255:red, 255; green, 255; blue, 255 }  ,fill opacity=1 ] (360,59.96) .. controls (360,53.05) and (365.6,47.46) .. (372.5,47.46) .. controls (379.4,47.46) and (385,53.05) .. (385,59.96) .. controls (385,66.86) and (379.4,72.46) .. (372.5,72.46) .. controls (365.6,72.46) and (360,66.86) .. (360,59.96) -- cycle ;
\draw  [fill={rgb, 255:red, 255; green, 255; blue, 255 }  ,fill opacity=1 ] (360,121.96) .. controls (360,115.05) and (365.6,109.46) .. (372.5,109.46) .. controls (379.4,109.46) and (385,115.05) .. (385,121.96) .. controls (385,128.86) and (379.4,134.46) .. (372.5,134.46) .. controls (365.6,134.46) and (360,128.86) .. (360,121.96) -- cycle ;
\draw    (421.5,90.96) -- (454.5,90.96) ;
\draw    (372.5,72.46) -- (372.5,109.46) ;
\draw  [fill={rgb, 255:red, 241; green, 241; blue, 241 }  ,fill opacity=1 ][dash pattern={on 4.5pt off 4.5pt}] (454.5,90.96) .. controls (454.5,60.05) and (470.39,35) .. (490,35) .. controls (509.61,35) and (525.5,60.05) .. (525.5,90.96) .. controls (525.5,121.86) and (509.61,146.91) .. (490,146.91) .. controls (470.39,146.91) and (454.5,121.86) .. (454.5,90.96) -- cycle ;
\draw  [fill={rgb, 255:red, 255; green, 255; blue, 255 }  ,fill opacity=1 ] (477.5,59.96) .. controls (477.5,53.05) and (483.1,47.46) .. (490,47.46) .. controls (496.9,47.46) and (502.5,53.05) .. (502.5,59.96) .. controls (502.5,66.86) and (496.9,72.46) .. (490,72.46) .. controls (483.1,72.46) and (477.5,66.86) .. (477.5,59.96) -- cycle ;
\draw  [fill={rgb, 255:red, 255; green, 255; blue, 255 }  ,fill opacity=1 ] (477.5,121.96) .. controls (477.5,115.05) and (483.1,109.46) .. (490,109.46) .. controls (496.9,109.46) and (502.5,115.05) .. (502.5,121.96) .. controls (502.5,128.86) and (496.9,134.46) .. (490,134.46) .. controls (483.1,134.46) and (477.5,128.86) .. (477.5,121.96) -- cycle ;
\draw    (490,72.46) -- (490,109.46) ;
\draw  [fill={rgb, 255:red, 255; green, 255; blue, 255 }  ,fill opacity=1 ] (290,91.96) .. controls (290,85.05) and (295.6,79.46) .. (302.5,79.46) .. controls (309.4,79.46) and (315,85.05) .. (315,91.96) .. controls (315,98.86) and (309.4,104.46) .. (302.5,104.46) .. controls (295.6,104.46) and (290,98.86) .. (290,91.96) -- cycle ;
\draw    (337,90.96) -- (315,90.96) ;

\draw (27.17,42.77) node [anchor=south] [inner sep=0.75pt]    {$1$};
\draw (113,42.77) node [anchor=south] [inner sep=0.75pt]    {$4$};
\draw (155.91,42.77) node [anchor=south] [inner sep=0.75pt]    {$3$};
\draw (198.83,42.77) node [anchor=south] [inner sep=0.75pt]    {$2$};
\draw (70.09,42.77) node [anchor=south] [inner sep=0.75pt]    {$5$};
\draw (372.5,59.96) node    {$5$};
\draw (372.5,121.96) node    {$4$};
\draw (370.5,90.96) node [anchor=east] [inner sep=0.75pt]  [font=\footnotesize]  {$ \begin{array}{l}
5\\
4
\end{array}$};
\draw (490,59.96) node    {$3$};
\draw (490,121.96) node    {$2$};
\draw (488,90.96) node [anchor=east] [inner sep=0.75pt]  [font=\footnotesize]  {$ \begin{array}{l}
3\\
2
\end{array}$};
\draw (302.5,91.96) node    {$1$};
\draw (324,87.56) node [anchor=south] [inner sep=0.75pt]  [font=\footnotesize]  {$1\ 4$};
\draw (438,87.56) node [anchor=south] [inner sep=0.75pt]  [font=\footnotesize]  {$1\ 2$};

\end{tikzpicture}
    \end{center}
    \caption{An example of how a binary tree graph can be converted into an object built from gray dotted circles, which should be thought of as a sum over a certain type of graph.}
    \label{fig:798934}
\end{figure}
To explain how the gray dotted circle notation works in this context, if an edge is shared between two gray dotted circles, then one is instructed to sum over all possible connections of that edge to all pairs of nodes, where one node is contained in one circle and the other node is contained in the other circle. One example of a graph to we should include in the sum is given below:
\begin{center}
\tikzset{every picture/.style={line width=0.75pt}} 

\begin{tikzpicture}[x=0.75pt,y=0.75pt,yscale=-1,xscale=1]

\draw    (411.5,57.96) -- (529,119.96) ;
\draw    (411.5,57.96) -- (341.5,89.96) ;
\draw  [fill={rgb, 255:red, 216; green, 216; blue, 216 }  ,fill opacity=1 ][dash pattern={on 4.5pt off 4.5pt}] (2,88.96) .. controls (2,46.44) and (34.57,11.98) .. (74.75,11.98) .. controls (114.93,11.98) and (147.5,46.44) .. (147.5,88.96) .. controls (147.5,131.47) and (114.93,165.93) .. (74.75,165.93) .. controls (34.57,165.93) and (2,131.47) .. (2,88.96) -- cycle ;
\draw  [fill={rgb, 255:red, 241; green, 241; blue, 241 }  ,fill opacity=1 ][dash pattern={on 4.5pt off 4.5pt}] (63,88.96) .. controls (63,58.05) and (78.89,33) .. (98.5,33) .. controls (118.11,33) and (134,58.05) .. (134,88.96) .. controls (134,119.86) and (118.11,144.91) .. (98.5,144.91) .. controls (78.89,144.91) and (63,119.86) .. (63,88.96) -- cycle ;
\draw  [fill={rgb, 255:red, 255; green, 255; blue, 255 }  ,fill opacity=1 ] (86,57.96) .. controls (86,51.05) and (91.6,45.46) .. (98.5,45.46) .. controls (105.4,45.46) and (111,51.05) .. (111,57.96) .. controls (111,64.86) and (105.4,70.46) .. (98.5,70.46) .. controls (91.6,70.46) and (86,64.86) .. (86,57.96) -- cycle ;
\draw  [fill={rgb, 255:red, 255; green, 255; blue, 255 }  ,fill opacity=1 ] (86,119.96) .. controls (86,113.05) and (91.6,107.46) .. (98.5,107.46) .. controls (105.4,107.46) and (111,113.05) .. (111,119.96) .. controls (111,126.86) and (105.4,132.46) .. (98.5,132.46) .. controls (91.6,132.46) and (86,126.86) .. (86,119.96) -- cycle ;
\draw    (147.5,88.96) -- (180.5,88.96) ;
\draw    (98.5,70.46) -- (98.5,107.46) ;
\draw  [fill={rgb, 255:red, 241; green, 241; blue, 241 }  ,fill opacity=1 ][dash pattern={on 4.5pt off 4.5pt}] (180.5,88.96) .. controls (180.5,58.05) and (196.39,33) .. (216,33) .. controls (235.61,33) and (251.5,58.05) .. (251.5,88.96) .. controls (251.5,119.86) and (235.61,144.91) .. (216,144.91) .. controls (196.39,144.91) and (180.5,119.86) .. (180.5,88.96) -- cycle ;
\draw  [fill={rgb, 255:red, 255; green, 255; blue, 255 }  ,fill opacity=1 ] (203.5,57.96) .. controls (203.5,51.05) and (209.1,45.46) .. (216,45.46) .. controls (222.9,45.46) and (228.5,51.05) .. (228.5,57.96) .. controls (228.5,64.86) and (222.9,70.46) .. (216,70.46) .. controls (209.1,70.46) and (203.5,64.86) .. (203.5,57.96) -- cycle ;
\draw  [fill={rgb, 255:red, 255; green, 255; blue, 255 }  ,fill opacity=1 ] (203.5,119.96) .. controls (203.5,113.05) and (209.1,107.46) .. (216,107.46) .. controls (222.9,107.46) and (228.5,113.05) .. (228.5,119.96) .. controls (228.5,126.86) and (222.9,132.46) .. (216,132.46) .. controls (209.1,132.46) and (203.5,126.86) .. (203.5,119.96) -- cycle ;
\draw    (216,70.46) -- (216,107.46) ;
\draw  [fill={rgb, 255:red, 255; green, 255; blue, 255 }  ,fill opacity=1 ] (16,89.96) .. controls (16,83.05) and (21.6,77.46) .. (28.5,77.46) .. controls (35.4,77.46) and (41,83.05) .. (41,89.96) .. controls (41,96.86) and (35.4,102.46) .. (28.5,102.46) .. controls (21.6,102.46) and (16,96.86) .. (16,89.96) -- cycle ;
\draw    (63,88.96) -- (41,88.96) ;
\draw  [fill={rgb, 255:red, 255; green, 255; blue, 255 }  ,fill opacity=1 ] (399,57.96) .. controls (399,51.05) and (404.6,45.46) .. (411.5,45.46) .. controls (418.4,45.46) and (424,51.05) .. (424,57.96) .. controls (424,64.86) and (418.4,70.46) .. (411.5,70.46) .. controls (404.6,70.46) and (399,64.86) .. (399,57.96) -- cycle ;
\draw  [fill={rgb, 255:red, 255; green, 255; blue, 255 }  ,fill opacity=1 ] (399,119.96) .. controls (399,113.05) and (404.6,107.46) .. (411.5,107.46) .. controls (418.4,107.46) and (424,113.05) .. (424,119.96) .. controls (424,126.86) and (418.4,132.46) .. (411.5,132.46) .. controls (404.6,132.46) and (399,126.86) .. (399,119.96) -- cycle ;
\draw    (411.5,70.46) -- (411.5,107.46) ;
\draw  [fill={rgb, 255:red, 255; green, 255; blue, 255 }  ,fill opacity=1 ] (516.5,57.96) .. controls (516.5,51.05) and (522.1,45.46) .. (529,45.46) .. controls (535.9,45.46) and (541.5,51.05) .. (541.5,57.96) .. controls (541.5,64.86) and (535.9,70.46) .. (529,70.46) .. controls (522.1,70.46) and (516.5,64.86) .. (516.5,57.96) -- cycle ;
\draw  [fill={rgb, 255:red, 255; green, 255; blue, 255 }  ,fill opacity=1 ] (516.5,119.96) .. controls (516.5,113.05) and (522.1,107.46) .. (529,107.46) .. controls (535.9,107.46) and (541.5,113.05) .. (541.5,119.96) .. controls (541.5,126.86) and (535.9,132.46) .. (529,132.46) .. controls (522.1,132.46) and (516.5,126.86) .. (516.5,119.96) -- cycle ;
\draw    (529,70.46) -- (529,107.46) ;
\draw  [fill={rgb, 255:red, 255; green, 255; blue, 255 }  ,fill opacity=1 ] (329,89.96) .. controls (329,83.05) and (334.6,77.46) .. (341.5,77.46) .. controls (348.4,77.46) and (354,83.05) .. (354,89.96) .. controls (354,96.86) and (348.4,102.46) .. (341.5,102.46) .. controls (334.6,102.46) and (329,96.86) .. (329,89.96) -- cycle ;

\draw (98.5,57.96) node    {$5$};
\draw (98.5,119.96) node    {$4$};
\draw (96.5,88.96) node [anchor=east] [inner sep=0.75pt]  [font=\footnotesize]  {$ \begin{array}{l}
5\\
4
\end{array}$};
\draw (216,57.96) node    {$3$};
\draw (216,119.96) node    {$2$};
\draw (214,88.96) node [anchor=east] [inner sep=0.75pt]  [font=\footnotesize]  {$ \begin{array}{l}
3\\
2
\end{array}$};
\draw (28.5,89.96) node    {$1$};
\draw (50,85.56) node [anchor=south] [inner sep=0.75pt]  [font=\footnotesize]  {$1\ 4$};
\draw (164,85.56) node [anchor=south] [inner sep=0.75pt]  [font=\footnotesize]  {$1\ 2$};
\draw (411.5,57.96) node    {$5$};
\draw (411.5,119.96) node    {$4$};
\draw (409.5,88.96) node [anchor=east] [inner sep=0.75pt]  [font=\footnotesize]  {$ \begin{array}{l}
5\\
4
\end{array}$};
\draw (529,57.96) node    {$3$};
\draw (529,119.96) node    {$2$};
\draw (527,88.96) node [anchor=east] [inner sep=0.75pt]  [font=\footnotesize]  {$ \begin{array}{l}
3\\
2
\end{array}$};
\draw (341.5,89.96) node    {$1$};
\draw (378.87,70.03) node [anchor=south] [inner sep=0.75pt]  [font=\footnotesize,rotate=-335]  {$1\ 4$};
\draw (467.62,84.9) node [anchor=south] [inner sep=0.75pt]  [font=\footnotesize,rotate=-28]  {$1\ 2$};
\draw (271.75,85.68) node [anchor=west] [inner sep=0.75pt]  [font=\normalsize]  {$=$};
\draw (491.45,157) node [anchor=north] [inner sep=0.75pt]   [align=left] {+ 11 other graphs};
\draw (272,228.88) node [anchor=west] [inner sep=0.75pt]    {$ \begin{array}{l}
=\ \dfrac{D^{15}}{z_{14}}\dfrac{D^{45}}{z_{45}}\dfrac{D^{52}}{z_{12}}\dfrac{D^{32}}{z_{32}} \phi _{1} \phi _{2} \phi _{3} \phi _{4} \phi _{5}\\
\ \ \ \ \ \ \ \ \\
\ \ \ \ \ \ \ \ \ \ \ \ \ \ \ \ \ \ \ \ \ \ \ \ \ \ \ \ \ +\ \text{11 other terms}
\end{array}$};

\end{tikzpicture}
\end{center}
We reiterate that the kinds of marked graphs which arise from this procedure are \emph{not} the graphs of $\mT_N^{\rm max}$, due to the special $ab$-labelled edges.

In any case, the full term which Figure \ref{fig:798934} corresponds to is
\begin{align}
    &\frac{1}{z_{14}} \frac{1}{z_{45}} \frac{1}{z_{12}} \frac{1}{z_{32}} \{ \{ \phi_1 , \{ \phi_4, \phi_5\} \}, \{ \phi_3, \phi_2\} \} = \\
    &\frac{1}{z_{14}} \frac{1}{z_{45}} \frac{1}{z_{12}} \frac{1}{z_{32}} D^{45} D^{32} (D^{15} + D^{14})(D^{12} + D^{13} + D^{42} + D^{43} + D^{52} + D^{53} ) \phi_1 \phi_2 \phi_3 \phi_4 \phi_5 . \nonumber 
\end{align}
\end{ex}
It is interesting to note that this recursive formula has a similar flavor to a Ward identity in a conformal field theory.

One might wonder if this binary tree expansion of $\phi^{(N)}$ can be used to prove any facts which are difficult to see using the marked tree graphs of $\mT_N^{\rm max}$. Indeed it can, and here is one such fact we'll give as an application.

\begin{claim}
    \begin{equation}
    \bigphi^{(N)}(\phi_1, \ldots, \phi_N) = 0 \hspace{0.5 cm} \text{if} \hspace{0.5 cm} \sum_{i = 1}^N p_i^\mu = 0
\end{equation}
\end{claim}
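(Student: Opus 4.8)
The plan is to exploit the binary tree expansion of $\bigphi^{(N)}$ established just above (equation \eqref{gravity_soft_thm_to_prove} and the accompanying binary tree diagrams), which writes $\bigphi^{(N)}(\phi_1, \ldots, \phi_N)$ as a sum over binary trees, each of which is a fully nested Poisson bracket of all $N$ seed functions multiplied by a product of $1/z_{ab}$ factors. The structural feature I would exploit is that every such binary tree has a single outermost (root) bracket $\{A, B\}$, where $A$ is a nested bracket built from the seeds in some subset $S_A \subset \{1, \ldots, N\}$ and $B$ from the complementary subset $S_B$, with $S_A \sqcup S_B = \{1, \ldots, N\}$. This is visible already in the small cases: $\bigphi^{(2)} = \tfrac{1}{z_{12}}\{\phi_1, \phi_2\}$ and each term of $\bigphi^{(3)}$ has a single outer bracket.

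First I would record a factorization lemma: any nested Poisson bracket of the plane-wave seeds equals an $X$-independent constant times a single plane wave whose momentum is the sum of the constituent momenta. I would prove this by induction on the depth of the bracket, using
\begin{equation}
\{\, c\, e^{i P \cdot X},\ d\, e^{i Q \cdot X}\, \} = -\, \Omega(P, Q)\, c\, d\; e^{i(P+Q)\cdot X}, \qquad \Omega(P,Q) \equiv P_u Q_w - P_w Q_u,
\end{equation}
where $P_u \equiv \partial_u(P\cdot X)$ and $P_w \equiv \partial_w(P\cdot X)$, and where $c,d$ are constants. The base case is a single seed $\phi_i = \epsilon_i e^{i p_i \cdot X}$, and the inductive step follows immediately because the product of two plane waves is again a plane wave carrying the summed momentum, with the emitted $\Omega$ factor and all $1/z_{ab}$ factors being inert constants. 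The key property is that $\Omega$ is bilinear and antisymmetric, so in particular $\Omega(P, P) = 0$.

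To conclude, I would apply the lemma to the two subtrees at the root of a generic binary tree, giving $A \propto e^{i P_A \cdot X}$ and $B \propto e^{i P_B \cdot X}$ with $P_A = \sum_{i \in S_A} p_i$ and $P_B = \sum_{j \in S_B} p_j$. The root bracket is then $\{A, B\} = -\,\Omega(P_A, P_B)\, A\, B$. Since $S_A \sqcup S_B$ exhausts all $N$ labels, the hypothesis gives $P_A + P_B = \sum_{i=1}^N p_i = 0$, hence $P_B = -P_A$ and
\begin{equation}
\Omega(P_A, P_B) = \Omega(P_A, -P_A) = -\,\Omega(P_A, P_A) = 0 .
\end{equation}
Therefore every binary tree term vanishes individually, and so does their sum $\bigphi^{(N)}(\phi_1, \ldots, \phi_N)$.

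The main obstacle I anticipate is not the momentum argument itself—which is immediate once the factorization lemma is in place—but rather verifying carefully that the binary tree expansion genuinely produces this single-root-bracket structure spanning all $N$ labels, and that all the $1/z_{ab}$ prefactors (together with the auxiliary "gray dotted circle" and arrowed-edge bookkeeping) are $X$-independent and thus never interfere with the vanishing of the overall $\Omega(P_A,P_B)$ factor. I would also note the degenerate exclusion $N=1$, where $\sum_i p_i = 0$ forces $p_1 = 0$ and $\bigphi^{(1)} = \phi_1 = \epsilon_1 \neq 0$, so the statement is understood for $N \geq 2$, exactly the range in which a genuine root bracket exists.
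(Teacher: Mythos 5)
Your proposal is correct and takes essentially the same approach as the paper: both work tree-by-tree in the binary tree expansion, split the $N$ labels at the root bracket into complementary sets $A$ and $B$, and kill the resulting prefactor using momentum conservation plus antisymmetry. Your factorization lemma, which gives the root factor $\Omega(P_A,P_B)=\sum_{i\in A}\sum_{j\in B}\omega_i\omega_j\bz_{ij}$, is just a repackaging of the paper's identity $D^{ij}\phi_1\cdots\phi_N = -(\omega_i\omega_j\bz_{ij})\,\phi_1\cdots\phi_N$ summed over the root edge, so the two arguments coincide term by term (and your $N=1$ caveat is a valid, if minor, point the paper leaves implicit).
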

\begin{proof}
    Before beginning the proof, it is worth noting that the equation $\sum_{i = 1}^N p_i^\mu = 0$, which colloquially one might say corresponds to ``momentum conservation,'' is not a requirement for a perturbiner expansion to solve the classical equations of motion. One is always allowed to perturbatively solve the classical equations of motion using any collection of seed functions as long as $p_i^2 = 0$ for all $i$.
    
    In any case, the condition that the momenta sum to zero is equivalent to the four equations
    \begin{equation}
        \sum_{i = 1}^N \omega_i = 0, \hspace{0.75 cm}\sum_{i = 1}^N \omega_i \bz_i = 0,  \hspace{0.75 cm}\sum_{i = 1}^N \omega_i z_i = 0, \hspace{0.75 cm}\sum_{i = 1}^N \omega_i z_i \bz_i = 0.
    \end{equation}
    Using the first two of the above equations, one can show that, for any $j \in \{1, \ldots, N\}$, we have
    \begin{equation}\label{mom_cons_eq}
        \sum_{i=1}^N \omega_i \bz_{ij} = 0.
    \end{equation}

    \noindent Now, let us think about the final step in evaluating any particular binary tree in the diagrammatic expansion. At the highest level, every graph will always look like two large gray dotted circles with a ``12'' labelled edge connecting them. Let us call the collection of nodes in the two circles $A$ and $B$, such that $1 \in A$, $2 \in B$, and
    \begin{equation}
        A \cup B = \{1 , \ldots, N\}.
    \end{equation}
    A schematic representation of the graph, (where the exact sub-structure contained within each of the two largest dotted circles is supressed) is drawn below.
    \begin{center}
    \tikzset{every picture/.style={line width=0.75pt}} 

\begin{tikzpicture}[x=0.75pt,y=0.75pt,yscale=-1,xscale=1]

\draw  [fill={rgb, 255:red, 241; green, 241; blue, 241 }  ,fill opacity=1 ][dash pattern={on 4.5pt off 4.5pt}] (22,108.96) .. controls (22,66.44) and (54.57,31.98) .. (94.75,31.98) .. controls (134.93,31.98) and (167.5,66.44) .. (167.5,108.96) .. controls (167.5,151.47) and (134.93,185.93) .. (94.75,185.93) .. controls (54.57,185.93) and (22,151.47) .. (22,108.96) -- cycle ;
\draw  [fill={rgb, 255:red, 241; green, 241; blue, 241 }  ,fill opacity=1 ][dash pattern={on 4.5pt off 4.5pt}] (200.5,108.96) .. controls (200.5,66.44) and (233.07,31.98) .. (273.25,31.98) .. controls (313.43,31.98) and (346,66.44) .. (346,108.96) .. controls (346,151.47) and (313.43,185.93) .. (273.25,185.93) .. controls (233.07,185.93) and (200.5,151.47) .. (200.5,108.96) -- cycle ;
\draw  [fill={rgb, 255:red, 255; green, 255; blue, 255 }  ,fill opacity=1 ] (45.5,116.96) .. controls (45.5,110.05) and (51.1,104.46) .. (58,104.46) .. controls (64.9,104.46) and (70.5,110.05) .. (70.5,116.96) .. controls (70.5,123.86) and (64.9,129.46) .. (58,129.46) .. controls (51.1,129.46) and (45.5,123.86) .. (45.5,116.96) -- cycle ;
\draw  [fill={rgb, 255:red, 255; green, 255; blue, 255 }  ,fill opacity=1 ] (82.5,150.96) .. controls (82.5,144.05) and (88.1,138.46) .. (95,138.46) .. controls (101.9,138.46) and (107.5,144.05) .. (107.5,150.96) .. controls (107.5,157.86) and (101.9,163.46) .. (95,163.46) .. controls (88.1,163.46) and (82.5,157.86) .. (82.5,150.96) -- cycle ;
\draw  [fill={rgb, 255:red, 255; green, 255; blue, 255 }  ,fill opacity=1 ] (90.5,80.96) .. controls (90.5,74.05) and (96.1,68.46) .. (103,68.46) .. controls (109.9,68.46) and (115.5,74.05) .. (115.5,80.96) .. controls (115.5,87.86) and (109.9,93.46) .. (103,93.46) .. controls (96.1,93.46) and (90.5,87.86) .. (90.5,80.96) -- cycle ;
\draw  [fill={rgb, 255:red, 255; green, 255; blue, 255 }  ,fill opacity=1 ] (270.5,147.96) .. controls (270.5,141.05) and (276.1,135.46) .. (283,135.46) .. controls (289.9,135.46) and (295.5,141.05) .. (295.5,147.96) .. controls (295.5,154.86) and (289.9,160.46) .. (283,160.46) .. controls (276.1,160.46) and (270.5,154.86) .. (270.5,147.96) -- cycle ;
\draw  [fill={rgb, 255:red, 255; green, 255; blue, 255 }  ,fill opacity=1 ] (296.5,106.96) .. controls (296.5,100.05) and (302.1,94.46) .. (309,94.46) .. controls (315.9,94.46) and (321.5,100.05) .. (321.5,106.96) .. controls (321.5,113.86) and (315.9,119.46) .. (309,119.46) .. controls (302.1,119.46) and (296.5,113.86) .. (296.5,106.96) -- cycle ;
\draw  [fill={rgb, 255:red, 255; green, 255; blue, 255 }  ,fill opacity=1 ] (230.5,128.96) .. controls (230.5,122.05) and (236.1,116.46) .. (243,116.46) .. controls (249.9,116.46) and (255.5,122.05) .. (255.5,128.96) .. controls (255.5,135.86) and (249.9,141.46) .. (243,141.46) .. controls (236.1,141.46) and (230.5,135.86) .. (230.5,128.96) -- cycle ;
\draw  [fill={rgb, 255:red, 255; green, 255; blue, 255 }  ,fill opacity=1 ] (261.5,90.96) .. controls (261.5,84.05) and (267.1,78.46) .. (274,78.46) .. controls (280.9,78.46) and (286.5,84.05) .. (286.5,90.96) .. controls (286.5,97.86) and (280.9,103.46) .. (274,103.46) .. controls (267.1,103.46) and (261.5,97.86) .. (261.5,90.96) -- cycle ;
\draw  [fill={rgb, 255:red, 255; green, 255; blue, 255 }  ,fill opacity=1 ] (226.5,71.96) .. controls (226.5,65.05) and (232.1,59.46) .. (239,59.46) .. controls (245.9,59.46) and (251.5,65.05) .. (251.5,71.96) .. controls (251.5,78.86) and (245.9,84.46) .. (239,84.46) .. controls (232.1,84.46) and (226.5,78.86) .. (226.5,71.96) -- cycle ;
\draw  [fill={rgb, 255:red, 255; green, 255; blue, 255 }  ,fill opacity=1 ] (283.5,63.96) .. controls (283.5,57.05) and (289.1,51.46) .. (296,51.46) .. controls (302.9,51.46) and (308.5,57.05) .. (308.5,63.96) .. controls (308.5,70.86) and (302.9,76.46) .. (296,76.46) .. controls (289.1,76.46) and (283.5,70.86) .. (283.5,63.96) -- cycle ;
\draw  [fill={rgb, 255:red, 255; green, 255; blue, 255 }  ,fill opacity=1 ] (124.5,124.96) .. controls (124.5,118.05) and (130.1,112.46) .. (137,112.46) .. controls (143.9,112.46) and (149.5,118.05) .. (149.5,124.96) .. controls (149.5,131.86) and (143.9,137.46) .. (137,137.46) .. controls (130.1,137.46) and (124.5,131.86) .. (124.5,124.96) -- cycle ;
\draw  [fill={rgb, 255:red, 255; green, 255; blue, 255 }  ,fill opacity=1 ] (49.5,70.96) .. controls (49.5,64.05) and (55.1,58.46) .. (62,58.46) .. controls (68.9,58.46) and (74.5,64.05) .. (74.5,70.96) .. controls (74.5,77.86) and (68.9,83.46) .. (62,83.46) .. controls (55.1,83.46) and (49.5,77.86) .. (49.5,70.96) -- cycle ;
\draw    (167.5,108.96) -- (200.5,108.96) ;

\draw (94.75,28.58) node [anchor=south] [inner sep=0.75pt]  [font=\Large]  {$A$};
\draw (273.25,28.58) node [anchor=south] [inner sep=0.75pt]  [font=\Large]  {$B$};
\draw (184,105.56) node [anchor=south] [inner sep=0.75pt]  [font=\footnotesize]  {$1\ 2$};
\draw (143,199) node [anchor=north west][inner sep=0.75pt]   [align=left] {(schematic)};
\draw (95,150.96) node    {$1$};
\draw (274,90.96) node    {$2$};

\end{tikzpicture}
    \end{center}
    Noting that
    \begin{equation}
        D^{ij} \phi_1 \ldots \phi_N = -( \omega_i \omega_j \bz_{ij})  \phi_1  \ldots \phi_N
    \end{equation}
    we can see that, by implementing the final sum over these pairs of nodes, we have
    \begin{equation}\label{phiNprop}
        \bigphi^{(N)}(\phi_1, \ldots, \phi_N) \propto \frac{1}{z_{12}} \left(  -\sum_{i \in A} \sum_{j \in B} \omega_i \omega_j \bz_{ij} \right) \phi_1 \ldots \phi_N
    \end{equation}
    where the sub-structure of the graphs contained within each of the two largest gray dotted circles only affects $\phi^{(N)}$ by a multiplicative factor.

    Let us now rewrite \eqref{mom_cons_eq} as
    \begin{equation}
        \sum_{i \in A} \omega_i \omega_j \bz_{ij} + \sum_{i \in B} \omega_i \omega_j \bz_{ij} = 0 .
    \end{equation}
    Summing $j$ in the above equation over all values in $B$, we get
    \begin{equation}
        0 = \sum_{j \in B} \left( \sum_{i \in A} \omega_i \omega_j \bz_{ij} + \sum_{i \in B} \omega_i \omega_j \bz_{ij} \right) = \sum_{j \in B} \sum_{i \in A} \omega_i \omega_j \bz_{ij} 
    \end{equation}
    where the second term vanished due to antisymmetry. We finish the proof by plugging the above expression into \eqref{phiNprop}.
\end{proof}

\section{A crash course in self-dual gravity}\label{appB}

The purpose of this section is to provide the background necessary to understand and prove all of the key equations of self-dual gravity used in this paper.  We begin with a collection of spinor conventions in section \ref{app_spin}, review the tetrad formalism for Einstein gravity in section \ref{app_tet},  and conclude with the proof of Plebański's first and second heavenly equations in section \ref{app_pleba}.

We use Greek letters for curved space indices, $\mu, \nu = 0,1,2,3$, Latin letters $a, b = 0,1,2,3$ for vierbein indices, and upper case letters for spin indices, $A, B = 1, 2$, $\dot A, \dot B = \dot 1, \dot 2$.

\subsection{Spinor conventions}\label{app_spin}

We use the $\eta_{a b} = \text{diag}(+$$-$$-$$-)$ convention. We define the 2d antisymmetric tensors $\vep_{AB}$, $\vep^{AB}$, $\vep_{\dot A \dot B}$, $\vep^{\dot A \dot B}$ by
\begin{equation}
    \vep^{1 2} = - \vep_{12} = \vep^{\dot 1 \dot 2} = - \vep_{\dot 1 \dot 2} = 1
\end{equation}
such that
\begin{equation}
    \vep_{AB} \vep^{BC} = \delta_A^C, \hspace{1 cm} \vep_{\dot A \dot B} \vep^{\dot B \dot C} = \delta_{\dot A}^{\dot C}.
\end{equation}
We raise and lower spinors from the left as
\begin{equation}
    \lambda_A = \vep_{AB} \lambda^B = \vep_{AB} (\vep^{BC} \lambda_C).
\end{equation}
We define the four-vector of Pauli matrices as
\begin{align}
    \sigma_a^{A\dot A} &\equiv (1, \sigma_x, \sigma_y, \sigma_z) \, , \\
    \sigma^a_{A\dot A} &= (1, \sigma_x, -\sigma_y, \sigma_z) \, ,
\end{align}
where $\sigma^a_{A \dot A} = \vep_{A B} \vep_{\dot A \dot B} \eta^{ab} \sigma_{b}^{B \dot B}$. Note the identities
\begin{align}
    \sigma^a_{A \dot A} \sigma_a^{B \dot B} = 2 \delta_A^B \delta_{\dot A}^{\dot B}\, , \hspace{1 cm} \sigma^a_{A \dot A} \sigma^{A \dot A}_b = 2 \delta^a_b\, ,
\end{align}
which allow us to go back and forth between objects with flat space indices $V^a$ and their corresponding objects with spinor indices $V^{A \dot A}$ via
\begin{equation}
    V^{A \dot A} = \sigma_a^{A \dot A} V^a, \hspace{1 cm} V^a = \frac{1}{2} \sigma^a_{A \dot A} V^{A \dot A}, \hspace{1cm} V_{A \dot A} = \sigma^a_{A \dot A} V_a, \hspace{1 cm} V_a = \frac{1}{2} \sigma_a^{A \dot A} V_{A \dot A}.
\end{equation}
Note the relation $V^a V_a = \frac{1}{2} V^{A \dot A} V_{A \dot A}$. We further define the matrices
\begin{equation}
\begin{aligned}
    (\sigma_{a b})^{A}_{\;\; B} &\equiv \frac{1}{4} \left( \sigma_a^{A \dot C} \sigma_{b B \dot C} -  \sigma_b^{A \dot C} \sigma_{a B \dot C} \right) \, , \\
    (\widetilde{\sigma}_{a b})_{\dot A}^{\;\; \dot B} &\equiv \frac{1}{4} \left( {\sigma}_{a C \dot A} \sigma_b^{C \dot B} -  {\sigma}_{b C \dot A} \sigma_a^{C \dot B} \right)\, ,
\end{aligned}
\end{equation}
which satisfy the respective anti-self-duality and self-duality equations
\begin{equation}
\begin{aligned}
    (\sigma_{ab})_A^{\;\; B}  &= -\frac{i}{2} \vep_{abcd} (\sigma^{cd})_A^{\;\; B} \, , \\
    (\widetilde{\sigma}_{ab})^{\dot A}_{\;\; \dot B} &= \frac{i}{2} \vep_{abcd} (\widetilde{\sigma}^{cd})^{\dot A}_{\;\; \dot B}\, .
\end{aligned}
\end{equation}
The above identities can be used to decompose any antisymmetric rank-2 tensor $F_{ab} = F_{[ab]}$ into its ASD and SD parts using the following observation. Note that $F_{A \dot A B \dot B}$ can always be written as
\begin{align}
    F_{A \dot A B \dot B} = M_{(AB) (\dot A \dot B)} + N_{[A B] [\dot A \dot B]} + P_{(A B) [\dot A \dot B]} + Q_{[A B] (\dot A \dot B)}.
\end{align}
The antisymmetry condition $F_{A \dot A B \dot B} = - F_{B \dot B A \dot A}$ requires that the first two terms must be zero. Furthermore, because $\vep$ is the unique 2d antisymmetric tensor up to scaling, we know that the above equation may be written as
\begin{equation}\label{F_decomp}
    F_{A \dot A B \dot B} = 2 \, F_{AB} \vep_{\dot A \dot B} + 2 \, \widetilde{F}_{\dot A \dot B} \vep_{AB}
\end{equation}
where $F_{AB} = F_{(AB)}$ and $\widetilde{F}_{\dot A \dot B} = \widetilde{F}_{(\dot A \dot B)}$ are some symmetric tensors. In fact, these objects can be computed from $F_{A \dot A B \dot B}$ via
\begin{equation}
    F_{AB} = -\frac{1}{4} \vep^{\dot A \dot B} F_{A \dot A B \dot B}, \hspace{1 cm} \widetilde{F}_{\dot A \dot B} = -\frac{1}{4} \vep^{A B} F_{A \dot A B \dot B}.
\end{equation}

Writing
\begin{equation}
\begin{aligned}
    F_{ab} &= \frac{1}{2} \sigma_a^{\dot A A} \sigma_b^{\dot B B} (F_{AB} \vep_{\dot A \dot B} + \widetilde{F}_{\dot A \dot B} \vep_{A B} ) \\
    &= (\sigma_{ab})^{AB} F_{AB} -  ( \widetilde{\sigma}_{ab})^{\dot A \dot B} \widetilde{F}_{\dot A \dot B}
\end{aligned}
\end{equation}
we identify $F_{AB}$ as the ASD part of $F_{ab}$ and $\widetilde{F}_{\dot A \dot B}$ as the SD part of $F_{ab}$.

\subsection{The tetrad formalism with spinor indicies}\label{app_tet}

Here we review the vierbein/tetrad formulation of general relativity.

We take the vierbeins $e_a = e_a^\mu \partial_\mu$ to be a set of four vector fields satisfying
\begin{equation}
    g_{\mu \nu} e_a^\mu e_b^\nu = \eta_{a b}.
\end{equation}
If we also define the tetrad 1-forms $\theta^a = \theta^a_\mu \, d x^\mu$ via
\begin{equation}
    \theta^a_\mu \equiv g_{\mu \nu} \eta^{ab} e_b^\nu
\end{equation}
then we can write the metric components as
\begin{equation}
    g_{\mu \nu} = \eta_{ab} \theta^a_\mu \theta^b_\nu.
\end{equation}
Note the identities $\delta^a_b = \theta_\mu^a e_b^\mu $ and $\delta^\mu_\nu = e_a^\mu \theta^a_\nu$.

One can define the object $\Gamma^a_{cb}$ via the parallel transport of the vierbeins along each other as
\begin{equation}\label{B17}
    \nabla_{e_c} e_b = \Gamma^a_{cb} e_a
\end{equation}
where the explicit formula for $\Gamma^a_{cb}$ is given by
\begin{align}
    \Gamma^a_{cb} = e^a_{\; \mu} e_c^{\; \nu} (\partial_\nu e_b^{\; \mu} + \Gamma^\mu_{\nu \rho} e_b^{\; \rho}).
\end{align}
Using $\Gamma^a_{cb}$, we can define the spin-connection 1-form $\Gamma^a_{\;\; b}$ by
\begin{equation}
    \Gamma^a_{\; b} \equiv \Gamma^a_{cb} \theta^c.
\end{equation}
One can show (see for instance \cite{tong}) that
\begin{equation}
    \Gamma_{ab} = - \Gamma_{ba}
\end{equation}
meaning that the components of $\Gamma_{ab}$ represent infinitesimal Lorentz generators which lie in $\mathfrak{so}(1,3)$.

We now define the curvature 2-form
\begin{equation}
    R_{ab} \equiv \frac{1}{2} e_a^\mu e_b^\nu R_{\mu \nu \rho \sigma}  d x^\rho \wedge d x^\sigma.
\end{equation}
Cartan's structure equations relate $\theta^a$ and $R_{ab}$ to the spin connection via
\begin{align}\label{cartan2}
    \dd \theta^a &= - \Gamma^a_{\;\; b} \wedge \theta^b \, ,\\
    R_{ab} &= \dd \Gamma_{ab} + \Gamma_{a c} \wedge \Gamma^{c}_{\;\; b} \, .
\end{align}

In the tetrad formalism, we can write the first Bianchi identity, the vacuum Einstein equation, and the second Bianchi identity, as
\begin{align} \label{bianchi1}
     R_{ab} \wedge \theta^a = 0 \hspace{1 cm} &\Longleftrightarrow \hspace{1 cm} R_{\mu [\nu \rho \sigma]} = 0 \\ \label{einstein1}
    \vep_{abcd} R^{ab} \wedge \theta^c = 0 \hspace{1 cm} &\Longleftrightarrow \hspace{1 cm} R_{\mu \nu} - \frac{1}{2} g_{\mu \nu} R = 0 \\
    \dd R_{ab} + 2 \, \Gamma^{c}_{\;\; (a} \wedge R_{b) c} = 0 \hspace{1 cm} &\Longleftrightarrow \hspace{1 cm} \nabla_{[\mu } R_{\nu \rho ]\sigma \lambda} = 0
\end{align}
where the above conversions require the identity
\begin{equation}\label{epsilon_id}
    \vep^{\mu \nu \alpha \beta} \vep_{\gamma \delta \rho \sigma} = - 4! \, \delta^{[\mu}_{\gamma} \delta^{\nu}_{\delta} \delta^{\alpha}_{\rho} \delta^{\beta]}_{\sigma}.
\end{equation}
as well as
\begin{equation}
    \vep_{abcd} = e_a^\mu e_b^\nu e_c^\rho e_d^\sigma \vep_{\mu \nu \rho \sigma}, \hspace{1 cm} d x^\mu \wedge d x^\nu \wedge d x^\rho \wedge d x^\sigma = - \vep^{\mu \nu \rho \sigma} \mathrm{dvol}.
\end{equation}

Because $R_{ab} = R_{[ab]}$, we can decompose the ASD and SD parts of the $a$, $b$ indices
\begin{equation}
    R_{A \dot A B \dot B} = 2 \, R_{AB} \vep_{\dot A \dot B} + 2 \, \widetilde{R}_{\dot A \dot B} \vep_{AB}
\end{equation}
with $R_{AB} = R_{(AB)}$ and $\widetilde{R}_{\dot A \dot B} = \widetilde{R}_{(\dot A \dot B)}$. We do the same thing with the spin connection
\begin{equation}
    \Gamma_{A \dot A B \dot B} = 2  \, \Gamma_{AB} \vep_{\dot A \dot B} + 2 \,\widetilde{\Gamma}_{\dot A \dot B} \vep_{AB}
\end{equation}
where $\Gamma_{AB} = \Gamma_{(AB)}$ and $\widetilde{\Gamma}_{\dot A \dot B} = \widetilde{\Gamma}_{(\dot A \dot B)}$.

It is worth reviewing what this split of Lorentz generators into ASD and SD parts means explicitly. Recall that $\mathfrak{so}(1,3) = \mathfrak{sl}(2,\mathbb{R}) \times \mathfrak{sl}(2,\mathbb{R})$, which can be seen by redefining the generators in the following way. If $j_i, k_i \in \mathfrak{so}(1,3)$, for $i=1,2,3$, are the usual $4 \times 4$ rotation and boost matrices satisfying
\begin{equation}
    [j_i, j_j] = \vep_{ijk} j_k, \hspace{1 cm} [j_i, k_j] = \vep_{ijk} k_k, \hspace{1 cm} [k_i, k_j] = - \vep_{ijk} j_k 
\end{equation}
then
\begin{equation}\label{sl2c_generators}
  \begin{split}
    \ell_0  &= \frac{1}{2} (-k_3 - i j_3)\,, \\
    \ell_{1} &= \frac{1}{2} (-k_1 + j_2 - i (k_2 + j_1) )\,,  \\
    \ell_{-1} &= \frac{1}{2} (k_1 + j_2 - i (k_2 - j_1) )\,,
  \end{split}
\quad \quad \quad
  \begin{split}
    \Bell_0  &= \frac{1}{2} (-k_3 + i j_3)\,,  \\
    \Bell_{1} &= \frac{1}{2} (-k_1 + j_2 + i (k_2 + j_1 ) )\,,  \\
    \Bell_{-1}  &= \frac{1}{2} (k_1 + j_2 + i (k_2 - j_1) )\,,
  \end{split}
\end{equation}
satisfy
\begin{equation}
    [\ell_m, \ell_n] = (m-n) \ell_{m+n}\,, \hspace{1 cm} [\Bell_m, \Bell_n] = (m-n) \Bell_{m+n}\,,  \hspace{1 cm} 
    [\ell_m, \Bell_n] = 0\,,
\end{equation}
and of course $(\ell_n)_{ab} = - (\ell_n)_{ba}$, $(\Bell_n)_{ab} = - (\Bell_n)_{ba}$. The entries of these matrices are
\footnotesize
\begin{equation*}
\begin{aligned}
    (\ell_0)^a_{\;\; b} = \frac{1}{2} \begin{pmatrix} 
    \ph 0 & \ph0 & \ph0 & -1 \\ 
    \ph 0 & \ph0 & \ph i& \ph0 \\ 
    \ph 0 & -i   & \ph0 & \ph0 \\
    -1    & \ph0 & \ph0 & \ph0
    \end{pmatrix}, \hspace{0.25 cm} (\ell_1)^a_{\;\; b} =  \frac{1}{2} 
    \begin{pmatrix} 
    \ph0 & -1 & -i & \ph0 \\ 
    -1 & \ph0 & \ph0 & \ph1 \\ 
    -i & \ph0 & \ph0 & \ph i \\
    \ph0 & -1 & -i & \ph0
    \end{pmatrix}, \hspace{0.25 cm}
    (\ell_{-1})^a_{\;\; b} =  \frac{1}{2}
    \begin{pmatrix} 
    \ph0 & 1    & -i   & \ph0 \\ 
    \ph1 & \ph0 & \ph0 & \ph1 \\ 
    -i   & \ph0 & \ph0 & -i \\
    \ph0 & -1   & \ph i & \ph0
    \end{pmatrix}, \\
    (\bell_0)^a_{\;\; b} = \frac{1}{2} \begin{pmatrix} 
    \ph0 & \ph0 & \ph0 & -1 \\ 
    \ph0 & \ph0 & -i & \ph0 \\ 
    \ph0 & \ph i & \ph0 & \ph0 \\
    -1 & \ph0 & \ph0 & \ph0
    \end{pmatrix}, \hspace{0.25 cm} (\bell_1)^a_{\;\; b} = \frac{1}{2}
    \begin{pmatrix} 
    \ph0 & -1 & \ph i & \ph0 \\ 
    -1 & \ph0 & \ph0 & \ph1 \\ 
    \ph i & \ph0 & \ph0 & -i \\
    \ph0 & -1 & \ph i & \ph0
    \end{pmatrix}, \hspace{0.25 cm}
    (\bell_{-1})^a_{\;\; b} = \frac{1}{2} \begin{pmatrix} 
    \ph0 & \ph1 & \ph i & \ph0 \\ 
    \ph1 & \ph0 & \ph0 & \ph1 \\ 
    \ph i & \ph0 & \ph0 & \ph i \\
    \ph0 & -1 & -i & \ph0
    \end{pmatrix}.
\end{aligned}
\end{equation*}
\normalsize

The independent $\mathfrak{sl}(2, \mathbb{R})$ algebras generated by $\ell_n$ and $\Bell_n$ correspond to the ``ASD'' and ``SD'' halves of the Lorentz algebra because
\begin{equation}\label{sd_asd_lorentz}
\begin{aligned}
    (\ell_n)_{ab} &= -\frac{i}{2} \vep_{abcd} (\ell_n)^{cd} , \\
    (\Bell_n)_{ab} &= \frac{i}{2} \vep_{abcd} (\Bell_n)^{cd}.
\end{aligned}
\end{equation}

A straightforward exercise in spinor variables shows that the Cartan structure equations can be decomposed as
\begin{align}\label{cartantheta}
    \dd \theta^{A \dot A} = -\Gamma^A_{\;\; B} \wedge \theta^{B \dot A} - \widetilde{\Gamma}^{\dot A}_{\;\; \dot B} \wedge \theta^{A \dot B}
\end{align}
and
\begin{equation} \label{cartan_R_spin}
\begin{aligned}
    R_{AB} &= \dd \Gamma_{AB} + \Gamma_{A C} \wedge \Gamma^{C}_{\;\;\;B}\, , \\
    \widetilde{R}_{\dot A\dot B} &= \dd \widetilde{\Gamma}_{\dot A \dot B} + \widetilde{\Gamma}_{\dot A \dot C} \wedge \widetilde{\Gamma}^{\dot C}_{\;\;\; \dot B}\, .
\end{aligned}
\end{equation}

We now note an identity satisfied by the tetrads \cite{Capovilla:1991qb}
\begin{equation}\label{symtheta}
    \theta^{(A}_{\;\;\; \dot A} \wedge \theta^{B}_{\;\; \dot B} \wedge \theta^{C)}_{\;\;\; \dot C} = 0
\end{equation}
which holds because the above expression is totally antisymmetric in $[\dot A \dot B \dot C]$ and therefore must vanish because spin indices are 2 dimensional. The above expression further implies the if-and-only-if statement\footnote{It is if-and-only-if because the $\theta^{A \dot A}$ comprise a basis of four 1-forms, and only wedge products of three distinct basis 1-forms are non-vanishing. One can check that the object $\theta^{A}_{\;\; \dot A} \wedge \theta^{B}_{\;\; \dot B} \wedge \theta^{C}_{\;\; \dot C}$ is non-zero only for four options of $(A,B,C,\dot A, \dot B, \dot C)$, with, say, $(1,1,2,\dot 1, \dot 2, \dot 2)$ being one such option. Expanding out \eqref{symtheta} for any of these options shows that it simply reduces to the sum of two non-zero 3-forms cancelling out, and therefore corresponds to the \emph{unique} expression characterizing the cancellation of such 3-forms up to scaling.}
\begin{equation}\label{tetrad_sym_statement}
    M_{ABC} \, \theta^{A}_{\;\; \dot A} \wedge \theta^{B}_{\;\; \dot B} \wedge \theta^{C}_{\;\; \dot C} = 0 \hspace{0.75 cm} \Longleftrightarrow \hspace{0.75 cm} M_{ABC} = M_{(ABC)}.
\end{equation}

We now define the ASD 2-forms $\Sigma^{AB} = \Sigma^{(AB)}$ and the SD 2-forms $\widetilde{\Sigma}^{\dot A \dot B} = \widetilde{\Sigma}^{(\dot A \dot B)}$
\begin{equation}\label{sigma_def_both}
\begin{aligned}
    \Sigma^{AB} &\equiv \theta^{A \dot A} \wedge \theta^B_{\;\; \dot A} \\
    \widetilde{\Sigma}^{\dot A \dot B} &\equiv \theta^{A \dot A} \wedge \theta_A^{\;\;\; \dot B}
\end{aligned}
\end{equation}
which can be written in component form as
\begin{equation}
    \Sigma^{AB} = \frac{1}{2} \Sigma^{AB}_{\mu \nu} d x^\mu \wedge d x^\nu, \hspace{1 cm} \widetilde{\Sigma}^{\dot A \dot B} = \frac{1}{2} \widetilde{\Sigma}^{\dot A \dot B}_{\mu \nu} d x^\mu \wedge d x^\nu,
\end{equation}
and can straightforwardly be shown to satisfy 
\begin{equation}\label{sigma_dual}
    \Sigma^{AB}_{\mu \nu} = -\frac{i}{2} \vep_{\mu \nu \rho \sigma} (\Sigma^{AB})^{\rho \sigma}, \hspace{1 cm} \widetilde{\Sigma}^{\dot A \dot B}_{\mu \nu} = \frac{i}{2} \vep_{\mu \nu \rho \sigma} (\widetilde{\Sigma}^{\dot A \dot B})^{\rho \sigma}.
\end{equation}
Equation \eqref{cartantheta} also gives
\begin{align}\label{dsigma}
    \dd \Sigma^{AB} &= - 2 \Gamma^{(A}_{\;\;\;\; C} \wedge \Sigma^{B) C}, \\
    \dd \widetilde{\Sigma}^{\dot A \dot B} &= - 2 \widetilde{\Gamma}^{(\dot A}_{\;\;\;\; \dot C} \wedge \widetilde{\Sigma}^{\dot B) \dot C}. 
\end{align}

Combining the vacuum Einstein equation \eqref{einstein1} and the first Bianchi identity \eqref{bianchi1}, we find that the Einstein equation can be written as
\begin{align}\label{einstein_asd}
    R_{AB} \wedge \theta^{A \dot A} &= 0, \\
    \widetilde{R}_{\dot A \dot B} \wedge \theta^{A \dot A} &= 0.
\end{align}

We are always free to decompose $R_{AB}$, as a general spacetime 2-form, into its spacetime ASD and SD parts
\begin{equation}
    R_{AB} = \Psi_{ABCD} \Sigma^{CD} + \Phi_{AB \dot A \dot B} \widetilde{\Sigma}^{\dot A \dot B}.
\end{equation}
Combining \eqref{einstein_asd} with the statement \eqref{tetrad_sym_statement}, we see that the Einstein equation is only satisfied if $\Psi_{ABCD} = \Psi_{(ABCD)}$ and $\Phi_{AB \dot A \dot B}$ = 0. We therefore have
\begin{equation}\label{R_sigma}
    R_{AB} = \Psi_{ABCD} \Sigma^{CD}, \hspace{1 cm} \widetilde{R}_{\dot A \dot B} = \widetilde{\Psi}_{\dot A \dot B \dot C \dot D} \widetilde{\Sigma}^{\dot C \dot D},
\end{equation}
where $ \widetilde{\Psi}_{\dot A \dot B \dot C \dot D} =  \widetilde{\Psi}_{(\dot A \dot B \dot C \dot D)}$ as well.

The scalars $\Psi_{ABCD}$ and $\widetilde{\Psi}_{\dot A \dot B \dot C \dot D}$ are called the Weyl scalars, which in our conventions give the coefficients in the decomposition of the Riemann tensor $R_{A \dot A B \dot B C \dot C D \dot D} = e_{A \dot A}^\mu e_{B \dot B}^\nu e_{C \dot C}^\rho e_{D \dot D}^\sigma R_{\mu \nu \rho \sigma}$ as
\begin{equation}\label{riemann_weyl}
    R_{A \dot A B \dot B C \dot C D \dot D} = 16 \, \vep_{\dot A \dot B} \vep_{\dot C \dot D} \Psi_{ABCD} + 16 \, \vep_{A B} \vep_{C D} \widetilde{\Psi}_{\dot A \dot B \dot C \dot D}.
\end{equation}
We now define the spin covariant derivative $\nabla_a$, which we define to act on objects with vierbein indices as $\nabla_a V_b \equiv e_a^\mu \, e_b^\nu \, \nabla_\mu V_\nu$. It is straightforward to show it satisfies the natural equation
\begin{equation}
    \nabla_a V_b = e_a^\mu \partial_\mu V_b - \Gamma^c_{\;\; a b} V_c.
\end{equation}
The above equation also generalizes to tensors with a larger number of indices.

We can then take the differential Bianchi identity
\begin{equation}
    \nabla_{[a} R_{bc] de} = 0,
\end{equation}
plug in \eqref{riemann_weyl} (and contract the appropriate indices) to find
\begin{align}
    0 &= \nabla^{A}_{\;\; \dot A} \Psi_{ABCD}
\end{align}
where, using the notation $\Gamma_{BC} = \frac{1}{2} (\Gamma_{BC})_{A \dot A} \theta^{A \dot A}$, we explicitly have
\begin{equation}
\begin{aligned}
    \nabla_{A \dot A} \Psi_{BCDE} = (e_{A \dot A})^\mu \partial_\mu \Psi_{BCDE} &-  (\Gamma_{B}^{\;\;\; F})_{A \dot A} \Psi_{FCDE} -  (\Gamma_{C}^{\;\;\; F})_{A \dot A} \Psi_{BFDE} \\
    &-(\Gamma_{D}^{\;\;\; F})_{A \dot A} \Psi_{BCFE} -  (\Gamma_{C}^{\;\;\; E})_{A \dot A} \Psi_{BCDF}\, .
    \end{aligned}
\end{equation}

\subsection{Proof of Plebański's heavenly equations}\label{app_pleba}

The equation for SDG is
\begin{equation}\label{sdg_app}
    R_{\mu \nu \rho \sigma} = \frac{i}{2} \vep_{\mu \nu \alpha \beta} R^{\alpha \beta}_{\;\;\;\; \rho \sigma}.
\end{equation}
The symmetry of the Riemann tensor then implies
\begin{equation}
\begin{aligned}
    R^{\mu \nu}_{\;\;\;\; \rho \sigma} &= - \frac{1}{4} \vep^{\mu \nu \alpha \beta} \vep_{\rho \sigma \gamma \delta} R_{\alpha \beta}^{\;\;\;\;\; \gamma \delta}
\end{aligned}
\end{equation}
and by identity \eqref{epsilon_id} and the Bianchi identity, we see that \eqref{sdg_app} automatically implies the vacuum Einstein equation $R_{\nu \sigma} = 0$:
\begin{equation}
\begin{aligned}
    R^{\mu }_{\;\;\nu \mu \sigma} &= - \frac{3}{2} \delta^{[\gamma}_\sigma \delta^{\alpha}_{\nu} \delta^{\beta]}_{\delta} R_{\alpha \beta \gamma}^{\;\;\;\;\;\;\; \delta} \\
    &= 0.
\end{aligned}
\end{equation}

Using the decomposition of $R_{AB}$ \eqref{R_sigma}, as well as \eqref{sigma_dual}, the self-duality equation \eqref{sdg_app} then implies that the ASD part of the curvature 2-form vanishes:
\begin{equation}
    R_{AB} = 0.
\end{equation}
We will now argue that the above condition implies that it is always possible to choose a vierbein frame on the spacetime such that $\Gamma_{AB} = 0$.

We can always decompose $\Gamma_a^{\;\; b}$ into SD and ASD parts as
\begin{equation}
    \Gamma_a^{\;\; b} = (\Gamma_{\rm sd})_a^{\;\; b} + (\Gamma_{\rm asd})_a^{\;\; b}.
\end{equation}
We have previously shown (see \eqref{sd_asd_lorentz}) that one can decompose the Lorentz algebra $\mathfrak{so}(1,3)$ into two $\mathfrak{sl}(2, \mathbb{R})$ halves, one SD and one ASD. Crucially, these subalgebras commute with each other, as $[\ell_n, \Bell_m] = 0$. Therefore, any path ordered exponential of the spin connection will factor into two commuting matrices which are path ordered exponentials of $\Gamma_{\rm sd}$ and $\Gamma_{\rm asd}$ respectively.
\begin{equation}
    \mathcal{P} \exp(\int_{x_i}^{x_f} \Gamma )_{\!\! a}^{\;\, b} = \mathcal{P} \exp(\int_{x_i}^{x_f} \Gamma_{\rm sd} )_{\!\! a}^{\;\, c} \, \mathcal{P} \exp(\int_{x_i}^{x_f} \Gamma_{\rm asd} )_{\!\! c}^{\;\, b}
\end{equation}

We can then redefine our vierbein frames at each point $x$ with the ASD path ordered integral
\begin{equation}
    e_{A \dot A}(x) \mapsto e'_{A \dot A}(x) = \mathcal{P} \exp( \int_{x_0}^x \Gamma) _{\!\! A}^{\;\, B} e_{B \dot A}(x_0)
\end{equation}
where here we path-integrate from a chosen basepoint $x_0$ to $x$. The overall integral does not depend on the particular path chosen because $R_{AB} = 0$, and $R_{AB}$ necessarily quantifies the ASD $\mathfrak{sl}(2, \mathbb{R})$ holonomy around a tiny loop.

This vierbein redefinition sends
\begin{equation}
    \Gamma_A^{\;\; B} \mapsto (\Gamma_A^{\;\; B})' = 0.
\end{equation}
as desired. This also shows that the holonomy group of a SD metric is always the restricted holonomy group $SL_{\rm sd}(2, \mathbb{R}) \subset SO(1,3)$, in a complexified sense.

If we set $\Gamma_{AB} = 0$, equation \eqref{dsigma} implies that all three ASD 2-forms are closed when the metric is SD:
\begin{equation}\label{sigma_d_zero}
    \dd \Sigma^{AB} = 0.
\end{equation}
The SD 2-forms also satisfy
\begin{equation}\label{Sigma_sym}
    \Sigma^{(AB} \wedge \Sigma^{CD)} = 0
\end{equation}
which is just a consequence of their tetrad definition \eqref{sigma_def_both} and identity \eqref{symtheta}.

A surprising fact is that knowing of the triplet of ASD 2-forms alone is completely sufficient to reconstruct $g_{\mu \nu}$! This is true for all metrics, not just self-dual ones. This fact follows from Urbantke's remarkable identity \cite{urbantke1984integrability,capovilla1991self}, which using our conventions reads
\begin{equation}\label{Urbantke}
    \sqrt{-g} g_{\mu \nu} = -\frac{i}{3\times 2^6} \sqrt{-g} \vep^{\alpha \beta \gamma \delta} \Sigma^{AB}_{\mu \alpha} \Sigma_{\beta \gamma \; B}^{\;\;\;\;\;\;\;\; C} \Sigma_{\delta \nu C A}.
\end{equation}
Note that the RHS is determined entirely by the $\Sigma^{AB}
$'s (as $\sqrt{-g} \vep^{0123} = -1$ in our definition). The Urbantke formula can be proven by raising the indices of the middle 2-form using \eqref{sigma_dual}.

Let us now regard \eqref{sigma_d_zero} and \eqref{Sigma_sym} as our new starting points of SDG and show how the metric is determined from these equations.

Using the explicit formula
\begin{equation}\label{B64}
    \Sigma^{AB}_{\mu \nu} = 2 \vep_{\dot A \dot B} \theta^{A \dot A}_{[ \mu} \theta^{B \dot B}_{\nu ]}
\end{equation}
we can define the ``almost complex structure'' matrix
\begin{equation}
    J^\mu_{\;\; \nu} \equiv \frac{i}{2} \, g^{\mu \rho}(\Sigma^{12})_{\rho \nu}
\end{equation}
which can readily be shown to satisfy the properties
\begin{equation}\label{J2}
    J^\mu_{\; \; \rho} J^{\rho}_{\; \; \nu} = - \delta^\mu_\nu
\end{equation}
and
\begin{equation}\label{SigmaJ}
\begin{aligned}
    (\Sigma^{11})_{\rho \sigma} J^\rho_{\;\; \mu} J^\sigma_{\; \; \nu} &= - (\Sigma^{11})_{\mu \nu}, \\
    (\Sigma^{22})_{\rho \sigma} J^\rho_{\;\; \mu} J^\sigma_{\; \; \nu} &= - (\Sigma^{22})_{\mu \nu}.
\end{aligned}
\end{equation}
We also note that the 2-forms $\Sigma^{AB}_{\mu \nu}$ satisfy an important property. Adopting the notation $\Gamma_{AB \rho} dx^\rho = \Gamma_{AB}$,  $ \widetilde{\Gamma}_{\dot A \dot B \rho} dx^\rho = \widetilde{\Gamma}_{\dot A \dot B}$, equation \eqref{B17} becomes $\nabla_\rho \theta^{A \dot A}_\mu = - \Gamma^{\;\;\;A}_{C\;\;\; \rho} \theta^{C \dot A}_\mu - \widetilde{\Gamma}_{\dot C\;\;\; \rho }^{\;\;\; \dot A} \theta^{A \dot C}_\mu$, and from this and \eqref{B64} one can calculate
\begin{equation}\label{nablaSigma}
    \nabla_\rho \Sigma^{AB}_{\mu \nu} = - \Gamma^{\;\;\; A}_{C\;\;\; \rho} \Sigma^{CB}_{\mu \nu} - \Gamma^{\;\;\; B}_{C\;\;\; \rho} \Sigma^{AC}_{\mu \nu} = 0
\end{equation}
using $\Gamma_{AB} = 0$. We note the above equation is stronger than $\dd \Sigma^{AB} = 0$ which only gives $\partial_{[\rho} \Sigma^{AB}_{\mu \nu]} = \nabla_{[\rho} \Sigma^{AB}_{\mu \nu]} = 0$. Importantly for us, it implies $\nabla_\rho J^\mu_{\;\; \nu} = 0$.

The Newlander–Nirenberg theorem states that if the Nijenhuis tensor of $J$, defined as
\begin{equation}
    N^\alpha_{\;\; \mu \nu} \equiv  J^\rho_{\;\; \mu}\partial_{[\rho} J^{\alpha}_{\;\; \nu ]} - J^\rho_{\;\; \nu}\partial_{[\rho} J^{\alpha}_{\;\; \mu ]} = J^\rho_{\;\; \mu}\nabla_{[\rho} J^{\alpha}_{\;\; \nu ]} - J^\rho_{\;\; \nu}\nabla_{[\rho} J^{\alpha}_{\;\; \mu ]}
\end{equation}
is equal to 0, then $J$ is an actual complex structure. From $\nabla_\rho J^\mu_{\;\; \nu} = 0$ we can see that $N^\alpha_{\;\; \mu \nu} = 0$ for any self-dual spacetime.

Now that we know $J$ gives a complex structure, we can choose a coordinate basis $(y^1, y^2, \by^1, \by^2)$ that diagonalizes $J$ with eigenvalues $\pm i$ via
\begin{equation}\label{Jcoords}
    J \pdv{y^j} = i \pdv{y^j} \, , \hspace{1 cm}J \pdv{\by^\bj} = -i \pdv{\by^\bj} \, ,
\end{equation}
for $j = 1,2$ and $\bj = 1,2$. \eqref{SigmaJ} can then be understood as the equation
\begin{equation}
    \Sigma^{11}(J v_1, J v_2) = -\Sigma^{11}(v_1, v_2), \hspace{1 cm} \Sigma^{22}(J v_1, J v_2 ) = -\Sigma^{22}(v_1, v_2)
\end{equation}
where $v_1$ and $v_2$ are arbitrary vectors, which from \eqref{Jcoords} implies that the ``mixed'' holomorphic-antiholomorphic components of $(\Sigma^{11})_{\mu \nu}$ and $(\Sigma^{22})_{\mu\nu}$ are 0, with
\begin{equation}
    (\Sigma^{11})_{j \bj } = 0, \hspace{1 cm} (\Sigma^{22})_{j \bj} = 0.
\end{equation}
If we then use $\Sigma^{11} \wedge \Sigma^{11} = 0$, $\Sigma^{22} \wedge \Sigma^{22} = 0$, as well as $\dd \Sigma^{11} = 0$, $\dd \Sigma^{22} = 0$, we know that $\Sigma^{11}$ and $\Sigma^{22}$ must be able to be written in the form
\begin{equation}
    \Sigma^{11} = f(y^1, y^2) d y^1 \wedge d y^2, \hspace{1 cm} \Sigma^{22} = g(\by^1, \by^2) d \by^1 \wedge d \by^2,
\end{equation}
for some arbitrary functions $f(y^j)$ and $g(\by^{\bj})$. However, once again using $\dd \Sigma^{11} = 0$ and $\dd \Sigma^{22} = 0$, Darboux's theorem  tells us there exist coordinate redefinitions $y^j \mapsto \tilde{y}^j(y^1, y^2)$ and $\by^{\bj} \mapsto \tilde{\by}^{\bj}(\by^1, \by^2)$ such that, in these new coordinates,
\begin{equation}
    \Sigma^{11} = 8 \, d y^1 \wedge d y^2, \hspace{1 cm} \Sigma^{22} = 8\, d \by^1 \wedge d \by^2.
\end{equation}
If we now use $\Sigma^{11} \wedge \Sigma^{12} = \Sigma^{22} \wedge \Sigma^{12} = 0$, we know that $\Sigma^{12}$ must be of the form
\begin{align}
    \Sigma^{12} = 4\sum_{j = 1, 2} \sum_{\bj = 1, 2} \Omega_{j \bj} d y^j \wedge d\bar{y}^{\bj}
\end{align}
for some set of functions $\Omega_{j \bj}$. $\dd \Sigma^{12} = 0$ then implies that these functions must be the derivatives of a single scalar $\Omega$ called the Kähler potential,
\begin{equation}
    \Omega_{j\bj} = \pdv{y^j} \pdv{\bar{y}^\bj} \Omega \, .
\end{equation}
The equation $2 \, \Sigma^{12} \wedge \Sigma^{12} + \Sigma^{11} \wedge \Sigma^{22} = 0$ imposes the following constraint on the Kähler potential:
\begin{equation}\label{plebanskis_first}
    \det( \Omega_{j \bj} ) = 1.
\end{equation}
This is known as Plebański's first heavenly equation, and $\Omega$ is known as Plebański's first heavenly scalar.

Now that we have all of the $\Sigma^{AB}$'s, we could in principle use Urbantke's formula \eqref{Urbantke} to find the metric. We can also directly write a corresponding tetrad
\begin{equation}
\begin{aligned}
    \theta^{1 \dot 1} = 2 d \by^1 \, , & & & & & & & & &\theta^{2 \dot 1} = 2\Omega_{1\bar{2}} d y^1 + 2 \Omega_{2\bar{2}} d y^2 \, ,\\
    \theta^{1 \dot 2} = 2 d \by^2 \, , & & & & & & & & &\theta^{2 \dot 2} = -2\Omega_{1\bar{1}} d y^1 - 2\Omega_{2\bar{1}} d y^2 \, ,
\end{aligned}
\end{equation}
which yields the metric
\begin{equation}\label{Kähler}
\begin{aligned}
    ds^2 &= \frac{1}{2} \vep_{A B} \vep_{\dot A \dot B} \theta^{A \dot A} \theta^{B \dot B} = - 4 \, \Omega_{j \bj} \, dy^j d\by^{\bj}.
\end{aligned}
\end{equation}
One can show that $\det(g_{\mu \nu}) = 16$, and that our 2-forms $\Sigma^{AB}$ really are ASD with respect to this metric, after an appropriate choice of orientation. We conclude that \eqref{sigma_d_zero} and \eqref{Sigma_sym} alone suffice as defining equations for SDG.

From \eqref{Kähler}, we can clearly see that all 4d SD metrics are Kähler manifolds. What's more, they are Hyper-Kähler. In addition to $J$, one can also define two other complex structures $I$ and $K$, via
\begin{equation}
    I^\mu_{\;\; \nu} = \frac{1}{4} g^{\mu \rho} (\Sigma^{11} + \Sigma^{22} )_{\rho \nu} \, , \hspace{1 cm} K^\mu_{\;\; \nu} = \frac{i}{4} g^{\mu \rho} (\Sigma^{11} - \Sigma^{22})_{\rho \nu} \, ,
\end{equation}
which satisfy $I^2 = J^2 = K^2 = I J K = -1$.

Let us now give the proof of Plebański's second heavenly equation, repeating Plebański's original derivation \cite{plebanski1975some}. We use new coordinates $(u,\bu, w, \bw)$ satisfying
\begin{equation}
    u = \pdv{\Omega}{\by^1}, \hspace{1 cm} w = \pdv{\Omega}{\by^2}, \hspace{1 cm} \bu = \by^1, \hspace{1 cm} \bw = \by^2 \, ,
\end{equation}
which can always be found because we have the Jacobian $|\partial(u, w)/\partial(y^1, y^2) | = 1$ from \eqref{plebanskis_first}. The tetrad now becomes
\begin{equation}
\begin{aligned}
    \theta^{1 \dot 1} = 2 d \bu \, ,& & & & & & & & &\theta^{2 \dot 1} = 2( d w - B d \bu - A d \bw )\, ,\\
    \theta^{1 \dot 2} =2 d \bw \, ,& & & & & & & & &\theta^{2 \dot 2} = 2( d u + C d \bu + B d \bw)\, ,
\end{aligned}
\end{equation}
with $A = \partial_{\by^2}^2 \Omega$, $B = \partial_{\by^1}\partial_{\by^2} \Omega$, $C = \partial_{\by^1}^2 \Omega$. With this tetrad,
\begin{equation}
\begin{aligned}
    \Sigma^{22} = 2 \theta^{2 \dot 2} \wedge \theta^{2 \dot 1} =&\, 8 (d u \wedge d w - A d u \wedge d \bw - C d \bu \wedge d w \\
    &- B(d u \wedge d \bu + d w \wedge d \bw) - (AC - B^2) d \bu \wedge d \bw )
\end{aligned}
\end{equation}
and $\dd \Sigma^{22} = 0$ implies
\begin{align}\label{pleb2proof1}
    \partial_w A = \partial_u B\, , &\hspace{1 cm} \partial_u C = \partial_w B \, ,  \\
    \partial_u( AC - B^2) - \partial_\bu A+\partial_\bw B = 0 \, , & \hspace{1 cm} \partial_w(AC - B^2) - \partial_{\bu} B + \partial_{\bw} C = 0\, . \label{pleb2proof2}
\end{align}
\eqref{pleb2proof1} implies that there exists a function $\tphi$ such that
\begin{equation}
    A = \partial_u^2 \tphi, \hspace{1 cm} B = \partial_u \partial_w \tphi, \hspace{1 cm} C = \partial_w^2 \tphi.
\end{equation}
Plugging the above equations into \eqref{pleb2proof2} gives
\begin{equation}
    \partial_u ( \Box \tphi - \{ \partial_u \tphi, \partial_w \tphi \} ) = \partial_w ( \Box \tphi - \{ \partial_u \tphi, \partial_w \tphi \} ) = 0.
\end{equation}
This implies $\Box \tphi - \{ \partial_u \tphi, \partial_w \tphi \} =\partial_\bu F(\bu, \bw)$ for some function $F(\bu, \bw)$. If we then define $\phi = \tphi - u F$, we have
\begin{equation}
    \Box \phi - \{ \partial_u \phi, \partial_w \phi \} = 0
\end{equation}
which is Plebański's second heavenly equation and $\phi$ is Plebański's second heavenly scalar. The tetrad is now
\begin{equation}
\begin{aligned}
    \theta^{1 \dot 1} &= 2 d \bu  \\
    \theta^{1 \dot 2} &= 2 d \bw  \\
    \theta^{2 \dot 1} &= 2( d w - (\partial_u \partial_w \phi ) d \bu - (\partial_u^2 \phi) d \bw ) \\
    \theta^{2 \dot 2} &= 2 (d u + ( \partial_w^2 \phi) d \bu + (\partial_u \partial_w \phi) d \bw )
\end{aligned}
\end{equation}
and the metric is
\begin{equation}
\begin{aligned}
    ds^2 &= \theta^{1 \dot 1} \theta^{2 \dot 2} - \theta^{1 \dot 2} \theta^{2 \dot 1} \\
    &= 4 (du \, d \bu - d w \, d \bw + (\partial_w^2 \phi) d \bu^2 + (\partial_u^2 \phi) d \bw^2 + 2(\partial_u \partial_w \phi) d \bu \, d \bw ).
\end{aligned}
\end{equation}
This completes the proof that all SD metrics can always locally be written in the above form for some $\phi$ satisfying the second heavenly equation.

\section{SDYM and the Parke-Taylor formula}\label{appD}

In this appendix we'll develop the theory of perturbiner expansions in SDYM in a way that mirrors our work in SDG. We'll show how it can be used to prove the Parke-Taylor formula for the Yang-Mills tree-level MHV amplitude \cite{Parke:1986gb}.

Perturbiners in SDYM were previously used to prove the Parke-Taylor formula by Selivanov and Rosly \cite{Rosly:1996vr,selivanov1997selfdual,Selivanov:1996gw}. See also \cite{Berends:1987me,Bardeen:1995gk,Cangemi:1996rx,Korepin:1996mm}. The only difference between our work below and that of Selivanov and Rosly is our use of the Chalmers-Siegel scalar, which we'll introduce below. In this appendix alone we redefine the polarization vectors to be
\begin{equation}
    \vep^\mu_{+,i} = - i \, \partial_z p^\mu_i,   \hspace{1 cm} \vep^\mu_{-,i} = \frac{i}{\omega^2} \partial_\bz p^\mu_i .
\end{equation}

\subsection{Perturbiners in SDYM}

Say we have a gauge field $A_\mu = A_\mu^a \bT^a$, where $\bT^a \in \mathfrak{g}$. The field strength is defined to be
\begin{equation}
    F_{\mu \nu} = \partial_\mu A_\nu - \partial_\nu A_\mu + [A_\mu, A_\nu].
\end{equation}
The equation for SDYM is
\begin{equation}
    F_{\mu \nu} = \frac{i}{2} \vep_{\mu \nu \rho \sigma} F^{\rho \sigma}.
\end{equation}
In lightcone coordinates \eqref{lc_coords}, $\vep_{u\bu w\bw} = 4i$ and the above equation reduces to
\begin{equation}\label{three_sdym_eq}
    F_{u w} = 0, \hspace{0.5 cm} F_{u \bu} = F_{w\bw}, \hspace{0.5 cm} F_{\bu \bw} = 0.
\end{equation}
If we set $A_u = 0$, then $F_{uw} = 0$ implies $A_w = 0$. $F_{u \bu} = F_{w \bw}$ then implies $\partial_u A_\bu = \partial_w A_\bw$ which can be solved by
\begin{equation}
    A_\bu = \partial_{w} \Phi, \hspace{1 cm} A_{\bw} =  \partial_u \Phi,
\end{equation}
where $\Phi$ is a $\mathfrak{g}$-valued scalar we shall call the Chalmers-Siegel scalar \cite{Parkes:1992rz, Bardeen:1995gk, Chalmers:1996rq}. We have therefore shown that the connection can be written as
\begin{equation}\label{A_Phi}
    A_\mu = \begin{pmatrix} A_u \\ A_\bu \\ A_w \\A_\bw \end{pmatrix} = \begin{pmatrix}
        0 \\ \partial_w \Phi \\ 0 \\ \partial_u \Phi
    \end{pmatrix}.
\end{equation}
Note this manifestly satisfies Lorenz gauge $\partial_\mu A^\mu = 0$.

The only remaining equation to use from \eqref{three_sdym_eq} is
\begin{equation}\label{Phi_eom}
\begin{aligned}
    F_{\bu \bw} = \Box\, \Phi - [\partial_u \Phi, \partial_w \Phi ] = 0 
\end{aligned}
\end{equation}
which is the equation of motion. Finding solutions perturbatively to the above equation is the goal of this section. We define the seed functions
\begin{equation}
    \Phi_i \equiv \epsilon_i \, \bT^{a_i} e^{i p_i \cdot X}
\end{equation}
which are positive helicity gluons with wavefunction $A_\mu = \epsilon_i( \vep_{+,i})_\mu\bT^{a_i} e^{i p_i \cdot X}$.

\begin{definition}
We denote
\begin{equation}
    \bigPhi ( \Phi_1, \ldots, \Phi_N ) \equiv \;\; \begin{matrix}\text{full perturbiner expansion of Chalmers-Siegel} \\ \text{scalar $\Phi$ with seed functions }\Phi_1, \ldots, \Phi_N. \end{matrix} 
\end{equation}
\end{definition}

\begin{definition}
    For some subset of indices $\{ i_1, \ldots, i_k\} \subset \{1, \ldots, N\}$ for $1 \leq k \leq N$, we define 
    \begin{equation*}
        \bigPhi^{(k)}( \Phi_{i_1}, \ldots, \Phi_{i_k})
    \end{equation*}
    to be the sum of terms in
    \begin{equation*}
        \bigPhi ( \Phi_1, \ldots, \Phi_N )
    \end{equation*}
    which contain the seed functions $\Phi_{i_1}, \ldots, \Phi_{i_k}$. This implies
    \begin{equation}
    \bigPhi\,(\Phi_1, \ldots, \Phi_N) = \sum_{k = 1}^N \sum_{ \substack{\{i_1, \ldots, i_k\}   \subset \{1, \ldots, N\} }  } \bigPhi^{(k)}( \Phi_{i_1}, \ldots, \Phi_{i_k}).
\end{equation}
\end{definition}

\begin{thm}\label{claimD1} The perturbiner expansion is given by
\begin{equation}
    \tcboxmath{ \bigPhi^{(N)}( \Phi_{1}, \ldots, \Phi_{N}) = \sum_{\sigma \, \in \, \mathrm{Sym}(N)} \frac{\Phi_{\sigma(1)} \Phi_{\sigma(2)} \ldots \Phi_{\sigma(N)}}{z_{\sigma(1) \sigma(2)} z_{\sigma(2) \sigma(3)} \ldots z_{\sigma(N-1) \sigma(N)}}. }
\end{equation}
\end{thm}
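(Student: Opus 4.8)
The plan is to mirror the strategy of Theorem~\ref{thm_21}: substitute the proposed expression into the Chalmers--Siegel equation of motion \eqref{Phi_eom}, $\Box\,\Phi = [\partial_u\Phi,\partial_w\Phi]$, and verify it by induction on $N$. The starting observation is that $\Box$ acts diagonally on the multilinear piece $\bigPhi^{(N)}$: each $\Phi_i$ is a plane wave, so any product $\Phi_{i_1}\cdots\Phi_{i_k}$ carrying a spacetime-independent (color and $z$) prefactor is an eigenfunction of $\Box$ with eigenvalue the total Mandelstam $s_{i_1\cdots i_k}=\sum_{a<b}\langle i_a i_b\rangle[i_a i_b]$. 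This is the $k$-fold product-rule generalization of the relation $\Box(\Phi_i\Phi_j)=z_{ij}D^{ij}(\Phi_i\Phi_j)=\langle ij\rangle[ij]\,\Phi_i\Phi_j$, which holds verbatim for the $\Phi_i$ by \eqref{z_diff_relation} since they are the same plane waves as in \eqref{boxphiiphij}. Collecting the terms of the equation of motion multilinear in all of $\Phi_1,\dots,\Phi_N$ then yields a Berends--Giele-type recursion
\begin{equation}
  s_{1\cdots N}\,\bigPhi^{(N)}(\Phi_1,\dots,\Phi_N) = \sum_{A\sqcup B=\{1,\dots,N\}}\big[\,\partial_u\bigPhi^{(|A|)}(\Phi_A),\,\partial_w\bigPhi^{(|B|)}(\Phi_B)\,\big],
\end{equation}
the sum running over ordered pairs of nonempty disjoint sets, with the lower-order $\bigPhi$'s known by the inductive hypothesis.

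The second step is to strip off the color and translate the recursion into a statement about scalar coefficients. Writing $\bigPhi^{(N)}=\sum_{\sigma}J(\sigma)\,\Phi_{\sigma(1)}\cdots\Phi_{\sigma(N)}$, the claim is that $J(\sigma)=1/(z_{\sigma(1)\sigma(2)}\cdots z_{\sigma(N-1)\sigma(N)})$. Because the $\bT^a$ are noncommuting, a fixed color word $\bT^{a_{\sigma(1)}}\cdots\bT^{a_{\sigma(N)}}$ receives contributions only from splittings in which a prefix $\sigma(1)\cdots\sigma(k)$ fills one slot of the commutator and the suffix $\sigma(k+1)\cdots\sigma(N)$ the other. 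Evaluating the $\partial_u,\partial_w$ eigenvalues on each side (namely $i P^u_L$ and $-iP^w_R$ with $P^u_L=\sum_{a\le k}\omega_{\sigma(a)}$, $P^w_R=\sum_{b>k}\omega_{\sigma(b)}\bz_{\sigma(b)}$) and combining the two orderings of each commutator, the recursion collapses to
\begin{equation}
  J(\sigma)=\frac{1}{s_{1\cdots N}}\sum_{k=1}^{N-1}\mathcal{V}_k\,J(\sigma(1)\cdots\sigma(k))\,J(\sigma(k+1)\cdots\sigma(N)),\qquad \mathcal{V}_k=\sum_{a\le k<b}[\sigma(a)\,\sigma(b)],
\end{equation}
where $\mathcal{V}_k=P^u_LP^w_R-P^u_RP^w_L$ is the antisymmetric lightcone vertex across the cut.

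Finally, I would substitute the Parke--Taylor ansatz. Since $J(\sigma(1)\cdots\sigma(k))\,J(\sigma(k+1)\cdots\sigma(N))/J(\sigma)=z_{\sigma(k)\sigma(k+1)}=\langle\sigma(k)\,\sigma(k+1)\rangle$, the recursion reduces to the single scalar identity $\sum_{k=1}^{N-1}\mathcal{V}_k\,\langle\sigma(k)\,\sigma(k+1)\rangle = s_{1\cdots N}$. This follows by exchanging the order of summation and telescoping: for each fixed pair $a<b$ the factor $[\sigma(a)\,\sigma(b)]$ is weighted by $\sum_{k=a}^{b-1}\langle\sigma(k)\,\sigma(k+1)\rangle=\langle\sigma(a)\,\sigma(b)\rangle$, reproducing $\sum_{a<b}\langle\sigma(a)\,\sigma(b)\rangle[\sigma(a)\,\sigma(b)]=s_{1\cdots N}$ by definition. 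The base cases $N=1$ (empty denominator, giving $\Phi_1$) and $N=2$ are immediate.

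I expect the main obstacle to be the color bookkeeping of the second step rather than any hard analysis: one must argue carefully that, because of the noncommutativity of the $\bT^a$, each color-ordered word closes onto a product of exactly two lower-order Parke--Taylor chains cut at a single adjacent pair, and that the two terms of each commutator assemble into the antisymmetric vertex $\mathcal{V}_k$ with the correct sign. Once that is in place, the remaining content is the telescoping identity, which is elementary. I would also note explicitly that the inversion of $\Box$ (division by $s_{1\cdots N}$) is legitimate for generic off-shell momenta, which is precisely the setting in which the perturbiner expansion is defined.
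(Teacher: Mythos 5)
Your proposal is correct, and it rests on the same two pillars as the paper's own proof --- substitution of the ansatz into the Chalmers--Siegel equation \eqref{Phi_eom}, and the telescoping identity $\sum_{k=a}^{b-1} z_{\sigma(k)\sigma(k+1)} = z_{\sigma(a)\sigma(b)}$ --- but you organize the verification along a genuinely different route. The paper never inverts $\Box$: it expands both $\Box\Phi$ and $[\partial_u\Phi,\partial_w\Phi]$ over color-ordered chains, fixes the pair $(s_i,s_j)$ carrying the $D^{s_is_j}$ insertion, and matches coefficients by summing over the position of the ``cut'' between $s_i$ and $s_j$, so the telescoping is applied one pair at a time and the match holds identically, with no induction and no genericity assumption. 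You instead isolate the component multilinear in all $N$ seeds, use that $\Box$ acts on it diagonally with eigenvalue $s_{1\cdots N}=\sum_{a<b}\langle ab\rangle[ab]$, and recast the equation of motion as a Berends--Giele-type recursion with propagator $1/s_{1\cdots N}$, proving the Parke--Taylor form by induction on $N$; the paper's pairwise matching reappears in your last step once you exchange the sums over pairs and over cuts. Your kinematic bookkeeping checks out against the paper's conventions: from $\partial_u\Phi_i=i\omega_i\Phi_i$, $\partial_w\Phi_i=-i\omega_i\bz_i\Phi_i$ and $[ij]=-\omega_i\omega_j\bz_{ij}$ one indeed gets $\mathcal{V}_k=P^u_LP^w_R-P^w_LP^u_R=\sum_{a\le k<b}[\sigma(a)\,\sigma(b)]$, and your color argument (a fixed word closes onto a single adjacent cut of two lower Parke--Taylor chains) is the same one the paper uses implicitly by treating color-ordered words as independent. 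What your route buys is a clean uniqueness statement: the recursion manifestly determines $\bigPhi^{(N)}$ from lower orders, so the ansatz is not merely \emph{a} solution but \emph{the} perturbiner with the prescribed linear data. The cost is the genericity hypothesis $s_{1\cdots N}\neq 0$ needed to divide by the propagator, which the paper's identity-level matching never requires (and which is worth stating precisely: the individual $p_i$ are null, but their partial sums are generically off-shell since momentum conservation is not imposed on perturbiner seeds).
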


\begin{proof}

This proof will work by plugging the proposed perturbiner expansion into the equation of motion \eqref{Phi_eom} and showing that it is satisfied.

Let us begin by creating a visual representation of the terms in the perturbiner expansion. For a string of indices $s_1, \ldots, s_k$, we represent a single term in the expansion with a chain of nodes as follows.
\begin{center}
    \tikzset{every picture/.style={line width=0.75pt}} 

\begin{tikzpicture}[x=0.75pt,y=0.75pt,yscale=-1,xscale=1]

\draw  [fill={rgb, 255:red, 255; green, 255; blue, 255 }  ,fill opacity=1 ] (45.47,66) .. controls (45.47,59.1) and (51.06,53.5) .. (57.97,53.5) .. controls (64.87,53.5) and (70.47,59.1) .. (70.47,66) .. controls (70.47,72.9) and (64.87,78.5) .. (57.97,78.5) .. controls (51.06,78.5) and (45.47,72.9) .. (45.47,66) -- cycle ;
\draw  [fill={rgb, 255:red, 255; green, 255; blue, 255 }  ,fill opacity=1 ] (70.47,66) .. controls (70.47,59.1) and (76.06,53.5) .. (82.97,53.5) .. controls (89.87,53.5) and (95.47,59.1) .. (95.47,66) .. controls (95.47,72.9) and (89.87,78.5) .. (82.97,78.5) .. controls (76.06,78.5) and (70.47,72.9) .. (70.47,66) -- cycle ;
\draw  [fill={rgb, 255:red, 255; green, 255; blue, 255 }  ,fill opacity=1 ] (95.47,66) .. controls (95.47,59.1) and (101.06,53.5) .. (107.97,53.5) .. controls (114.87,53.5) and (120.47,59.1) .. (120.47,66) .. controls (120.47,72.9) and (114.87,78.5) .. (107.97,78.5) .. controls (101.06,78.5) and (95.47,72.9) .. (95.47,66) -- cycle ;
\draw  [fill={rgb, 255:red, 255; green, 255; blue, 255 }  ,fill opacity=1 ] (145.47,66) .. controls (145.47,59.1) and (151.06,53.5) .. (157.97,53.5) .. controls (164.87,53.5) and (170.47,59.1) .. (170.47,66) .. controls (170.47,72.9) and (164.87,78.5) .. (157.97,78.5) .. controls (151.06,78.5) and (145.47,72.9) .. (145.47,66) -- cycle ;
\draw  [fill={rgb, 255:red, 255; green, 255; blue, 255 }  ,fill opacity=1 ] (120.47,66) .. controls (120.47,59.1) and (126.06,53.5) .. (132.97,53.5) .. controls (139.87,53.5) and (145.47,59.1) .. (145.47,66) .. controls (145.47,72.9) and (139.87,78.5) .. (132.97,78.5) .. controls (126.06,78.5) and (120.47,72.9) .. (120.47,66) -- cycle ;

\draw (57.97,67) node  [font=\footnotesize]  {$s_{1}$};
\draw (82.97,67) node  [font=\footnotesize]  {$s_{2}$};
\draw (107.97,67) node  [font=\footnotesize]  {$s_{3}$};
\draw (157.97,67) node  [font=\footnotesize]  {$s_{k}$};
\draw (172.47,66) node [anchor=west] [inner sep=0.75pt]    {$=\dfrac{1}{z_{s_{1} s_{2}}}\dfrac{1}{z_{s_{2} s_{3}}} \dotsc \dfrac{1}{z_{s_{k-1} s_{k}}} \Phi _{s_{1}} \Phi _{s_{2}} \Phi _{s_{3}} \dotsc \Phi _{s_{k-1}} \Phi _{s_{k}}$};
\draw (132.97,66) node  [font=\normalsize]  {$\mycdots$};

\end{tikzpicture}
\end{center}

Let us now consider what happens when we act with the wave operator $\Box$ on the above term. Note that $\Box \Phi_i = 0$ and $\Box ( \Phi_i \Phi_j) = z_{ij} D^{ij} \Phi_i \Phi_j$, see \eqref{boxphiiphij}. From the product rule of differentiation, we have for instance
\begin{equation}
    \Box( \Phi_1 \Phi_2 \Phi_3) = ( z_{12} D^{12} + z_{23} D^{23} + z_{13} D^{13} ) ( \Phi_1 \Phi_2 \Phi_3).
\end{equation}
We can therefore think of the action of $\Box$ on a string of seed functions as a sum over all pairs of seed functions $\Phi_i$ and $\Phi_j$ in the string where we act by $z_{ij} D^{ij}$. We can represent this pairing with a double-line as follows.
\begin{center}
    \tikzset{every picture/.style={line width=0.75pt}} 

\begin{tikzpicture}[x=0.75pt,y=0.75pt,yscale=-1,xscale=1]

\draw    (141.47,58.08) .. controls (141.43,57.35) and (141.41,56.63) .. (141.41,55.93) .. controls (141.41,43.88) and (146.88,35.73) .. (154.22,31.47) .. controls (158.2,29.16) and (162.74,28) .. (167.3,28) .. controls (167.53,28) and (167.75,28) .. (167.98,28.01) .. controls (176.78,28.23) and (185.59,32.76) .. (190.52,41.73) .. controls (192.91,46.08) and (194.39,51.49) .. (194.47,57.98)(144.46,57.92) .. controls (144.43,57.25) and (144.41,56.58) .. (144.41,55.93) .. controls (144.41,45.22) and (149.16,37.87) .. (155.73,34.06) .. controls (159.24,32.02) and (163.26,31) .. (167.3,31) .. controls (167.5,31) and (167.7,31) .. (167.9,31.01) .. controls (175.71,31.2) and (183.52,35.22) .. (187.89,43.17) .. controls (190.07,47.15) and (191.4,52.1) .. (191.47,58.02) ;
\draw  [fill={rgb, 255:red, 255; green, 255; blue, 255 }  ,fill opacity=1 ] (80.47,58) .. controls (80.47,51.1) and (86.06,45.5) .. (92.97,45.5) .. controls (99.87,45.5) and (105.47,51.1) .. (105.47,58) .. controls (105.47,64.9) and (99.87,70.5) .. (92.97,70.5) .. controls (86.06,70.5) and (80.47,64.9) .. (80.47,58) -- cycle ;
\draw  [fill={rgb, 255:red, 255; green, 255; blue, 255 }  ,fill opacity=1 ] (130.47,58) .. controls (130.47,51.1) and (136.06,45.5) .. (142.97,45.5) .. controls (149.87,45.5) and (155.47,51.1) .. (155.47,58) .. controls (155.47,64.9) and (149.87,70.5) .. (142.97,70.5) .. controls (136.06,70.5) and (130.47,64.9) .. (130.47,58) -- cycle ;
\draw  [fill={rgb, 255:red, 255; green, 255; blue, 255 }  ,fill opacity=1 ] (230.47,58) .. controls (230.47,51.1) and (236.06,45.5) .. (242.97,45.5) .. controls (249.87,45.5) and (255.47,51.1) .. (255.47,58) .. controls (255.47,64.9) and (249.87,70.5) .. (242.97,70.5) .. controls (236.06,70.5) and (230.47,64.9) .. (230.47,58) -- cycle ;
\draw  [fill={rgb, 255:red, 255; green, 255; blue, 255 }  ,fill opacity=1 ] (180.47,58) .. controls (180.47,51.1) and (186.06,45.5) .. (192.97,45.5) .. controls (199.87,45.5) and (205.47,51.1) .. (205.47,58) .. controls (205.47,64.9) and (199.87,70.5) .. (192.97,70.5) .. controls (186.06,70.5) and (180.47,64.9) .. (180.47,58) -- cycle ;
\draw  [fill={rgb, 255:red, 255; green, 255; blue, 255 }  ,fill opacity=1 ] (205.47,58) .. controls (205.47,51.1) and (211.06,45.5) .. (217.97,45.5) .. controls (224.87,45.5) and (230.47,51.1) .. (230.47,58) .. controls (230.47,64.9) and (224.87,70.5) .. (217.97,70.5) .. controls (211.06,70.5) and (205.47,64.9) .. (205.47,58) -- cycle ;
\draw  [fill={rgb, 255:red, 255; green, 255; blue, 255 }  ,fill opacity=1 ] (155.47,58) .. controls (155.47,51.1) and (161.06,45.5) .. (167.97,45.5) .. controls (174.87,45.5) and (180.47,51.1) .. (180.47,58) .. controls (180.47,64.9) and (174.87,70.5) .. (167.97,70.5) .. controls (161.06,70.5) and (155.47,64.9) .. (155.47,58) -- cycle ;
\draw  [fill={rgb, 255:red, 255; green, 255; blue, 255 }  ,fill opacity=1 ] (105.47,58) .. controls (105.47,51.1) and (111.06,45.5) .. (117.97,45.5) .. controls (124.87,45.5) and (130.47,51.1) .. (130.47,58) .. controls (130.47,64.9) and (124.87,70.5) .. (117.97,70.5) .. controls (111.06,70.5) and (105.47,64.9) .. (105.47,58) -- cycle ;

\draw (92.97,59) node  [font=\footnotesize]  {$s_{1}$};
\draw (242.97,59) node  [font=\footnotesize]  {$s_{k}$};
\draw (142.97,59) node  [font=\footnotesize]  {$s_{i}$};
\draw (192.97,59) node  [font=\footnotesize]  {$s_{j}$};
\draw (257.47,58) node [anchor=west] [inner sep=0.75pt]    {$=\dfrac{1}{z_{s_{1} s_{2}}} \dotsc \dfrac{1}{z_{s_{k-1} s_{k}}}\left( z_{s_{i} s_{j}} D^{s_{i} s_{j}}\right) \Phi _{s_{1}} \dotsc \Phi _{s_{i}} \dotsc \Phi _{s_{j}} \dotsc \Phi _{s_{k}}$};
\draw (217.97,58) node    {$\mycdots $};
\draw (167.97,58) node    {$\mycdots $};
\draw (117.97,58) node    {$\mycdots $};

\end{tikzpicture}
\end{center}

Let us now define a new graphical element which is just a line (or `cut') drawn in between two seed functions in a chain. This line indicates that the chain of nodes before and after the cut should be evaluated independently and then multiplied together at the end (preserving order). 
\begin{center}
    \tikzset{every picture/.style={line width=0.75pt}} 

\begin{tikzpicture}[x=0.75pt,y=0.75pt,yscale=-1,xscale=1]

\draw  [fill={rgb, 255:red, 255; green, 255; blue, 255 }  ,fill opacity=1 ] (22.97,49) .. controls (22.97,42.1) and (28.56,36.5) .. (35.47,36.5) .. controls (42.37,36.5) and (47.97,42.1) .. (47.97,49) .. controls (47.97,55.9) and (42.37,61.5) .. (35.47,61.5) .. controls (28.56,61.5) and (22.97,55.9) .. (22.97,49) -- cycle ;
\draw  [fill={rgb, 255:red, 255; green, 255; blue, 255 }  ,fill opacity=1 ] (47.97,49) .. controls (47.97,42.1) and (53.56,36.5) .. (60.47,36.5) .. controls (67.37,36.5) and (72.97,42.1) .. (72.97,49) .. controls (72.97,55.9) and (67.37,61.5) .. (60.47,61.5) .. controls (53.56,61.5) and (47.97,55.9) .. (47.97,49) -- cycle ;
\draw  [fill={rgb, 255:red, 255; green, 255; blue, 255 }  ,fill opacity=1 ] (72.97,49) .. controls (72.97,42.1) and (78.56,36.5) .. (85.47,36.5) .. controls (92.37,36.5) and (97.97,42.1) .. (97.97,49) .. controls (97.97,55.9) and (92.37,61.5) .. (85.47,61.5) .. controls (78.56,61.5) and (72.97,55.9) .. (72.97,49) -- cycle ;
\draw  [fill={rgb, 255:red, 255; green, 255; blue, 255 }  ,fill opacity=1 ] (97.97,49) .. controls (97.97,42.1) and (103.56,36.5) .. (110.47,36.5) .. controls (117.37,36.5) and (122.97,42.1) .. (122.97,49) .. controls (122.97,55.9) and (117.37,61.5) .. (110.47,61.5) .. controls (103.56,61.5) and (97.97,55.9) .. (97.97,49) -- cycle ;
\draw  [fill={rgb, 255:red, 255; green, 255; blue, 255 }  ,fill opacity=1 ] (122.97,49) .. controls (122.97,42.1) and (128.56,36.5) .. (135.47,36.5) .. controls (142.37,36.5) and (147.97,42.1) .. (147.97,49) .. controls (147.97,55.9) and (142.37,61.5) .. (135.47,61.5) .. controls (128.56,61.5) and (122.97,55.9) .. (122.97,49) -- cycle ;
\draw  [fill={rgb, 255:red, 255; green, 255; blue, 255 }  ,fill opacity=1 ] (147.97,49) .. controls (147.97,42.1) and (153.56,36.5) .. (160.47,36.5) .. controls (167.37,36.5) and (172.97,42.1) .. (172.97,49) .. controls (172.97,55.9) and (167.37,61.5) .. (160.47,61.5) .. controls (153.56,61.5) and (147.97,55.9) .. (147.97,49) -- cycle ;
\draw [line width=0.75]    (97.97,26) -- (97.97,72) ;
\draw  [fill={rgb, 255:red, 255; green, 255; blue, 255 }  ,fill opacity=1 ] (248.97,179) .. controls (248.97,172.1) and (254.56,166.5) .. (261.47,166.5) .. controls (268.37,166.5) and (273.97,172.1) .. (273.97,179) .. controls (273.97,185.9) and (268.37,191.5) .. (261.47,191.5) .. controls (254.56,191.5) and (248.97,185.9) .. (248.97,179) -- cycle ;
\draw  [fill={rgb, 255:red, 255; green, 255; blue, 255 }  ,fill opacity=1 ] (273.97,179) .. controls (273.97,172.1) and (279.56,166.5) .. (286.47,166.5) .. controls (293.37,166.5) and (298.97,172.1) .. (298.97,179) .. controls (298.97,185.9) and (293.37,191.5) .. (286.47,191.5) .. controls (279.56,191.5) and (273.97,185.9) .. (273.97,179) -- cycle ;
\draw  [fill={rgb, 255:red, 255; green, 255; blue, 255 }  ,fill opacity=1 ] (298.97,179) .. controls (298.97,172.1) and (304.56,166.5) .. (311.47,166.5) .. controls (318.37,166.5) and (323.97,172.1) .. (323.97,179) .. controls (323.97,185.9) and (318.37,191.5) .. (311.47,191.5) .. controls (304.56,191.5) and (298.97,185.9) .. (298.97,179) -- cycle ;
\draw  [fill={rgb, 255:red, 255; green, 255; blue, 255 }  ,fill opacity=1 ] (323.97,179) .. controls (323.97,172.1) and (329.56,166.5) .. (336.47,166.5) .. controls (343.37,166.5) and (348.97,172.1) .. (348.97,179) .. controls (348.97,185.9) and (343.37,191.5) .. (336.47,191.5) .. controls (329.56,191.5) and (323.97,185.9) .. (323.97,179) -- cycle ;
\draw  [fill={rgb, 255:red, 255; green, 255; blue, 255 }  ,fill opacity=1 ] (348.97,179) .. controls (348.97,172.1) and (354.56,166.5) .. (361.47,166.5) .. controls (368.37,166.5) and (373.97,172.1) .. (373.97,179) .. controls (373.97,185.9) and (368.37,191.5) .. (361.47,191.5) .. controls (354.56,191.5) and (348.97,185.9) .. (348.97,179) -- cycle ;
\draw  [fill={rgb, 255:red, 255; green, 255; blue, 255 }  ,fill opacity=1 ] (373.97,179) .. controls (373.97,172.1) and (379.56,166.5) .. (386.47,166.5) .. controls (393.37,166.5) and (398.97,172.1) .. (398.97,179) .. controls (398.97,185.9) and (393.37,191.5) .. (386.47,191.5) .. controls (379.56,191.5) and (373.97,185.9) .. (373.97,179) -- cycle ;

\draw (174.97,49) node [anchor=west] [inner sep=0.75pt]    {$=\left(\dfrac{\Phi _{s_{1}} \dotsc \Phi _{s_{i}}}{z_{s_{1} s_{2}} \dotsc z_{s_{i-1} s_{i}}}\right)\left(\dfrac{\Phi _{s_{i+1}} \dotsc \Phi _{s_{k}}}{z_{s_{i+1} s_{i+2}} \dotsc z_{s_{n-1} ,s_{k}}} \ \right)$};
\draw (35.47,50) node  [font=\footnotesize]  {$s_{1}$};
\draw (85.47,50) node  [font=\footnotesize]  {$s_{i}$};
\draw (110.47,50) node  [font=\footnotesize]  {$s_{i+1}$};
\draw (160.47,50) node  [font=\footnotesize]  {$s_{k}$};
\draw (174.97,115.5) node [anchor=west] [inner sep=0.75pt]    {$=\dfrac{z_{s_{i} s_{i+1}}}{z_{s_{1} s_{2}} \dotsc z_{s_{k-1} s_{k}}} \Phi _{s_{1}} \dotsc \Phi _{s_{k}}$};
\draw (135.47,49) node    {$\mycdots $};
\draw (60.47,49) node    {$\mycdots $};
\draw (174.97,182) node [anchor=west] [inner sep=0.75pt]    {$=z_{s_{i} s_{i+1}} \times $};
\draw (261.47,180) node  [font=\footnotesize]  {$s_{1}$};
\draw (311.47,180) node  [font=\footnotesize]  {$s_{i}$};
\draw (336.47,180) node  [font=\footnotesize]  {$s_{i+1}$};
\draw (386.47,180) node  [font=\footnotesize]  {$s_{k}$};
\draw (361.47,179) node    {$\mycdots $};
\draw (286.47,179) node    {$\mycdots $};

\end{tikzpicture}
\end{center}
As we can see above, a cut drawn in between nodes $s_i$ and $s_{i+1}$ simply corresponds to multiplying the entire chain of nodes by $z_{s_i s_{i+1}}$.

We will define one further graphical element, which is a wiggly line. A wiggly line extending from node $s_i$ to node $s_j$ corresponds to the action of $D^{s_i s_j}$ on the string of seed functions. An arrow is drawn to denote orientation because $D^{s_i s_j} =- D^{s_j s_i }$.
\begin{center}
    \tikzset{every picture/.style={line width=0.75pt}} 

\begin{tikzpicture}[x=0.75pt,y=0.75pt,yscale=-1,xscale=1]

\draw    (107.47,65) .. controls (105.43,63.13) and (105.13,61.31) .. (106.58,59.53) .. controls (108.11,57.96) and (108.02,56.31) .. (106.3,54.56) .. controls (104.67,53.15) and (104.76,51.66) .. (106.57,50.08) .. controls (108.54,48.51) and (108.9,46.82) .. (107.67,45.01) .. controls (106.77,42.68) and (107.46,41.11) .. (109.75,40.28) .. controls (111.98,39.93) and (112.94,38.66) .. (112.63,36.45) .. controls (112.68,34.12) and (113.98,33.05) .. (116.53,33.25) .. controls (118.6,34) and (120.1,33.27) .. (121.03,31.06) .. controls (122.08,29) and (123.73,28.59) .. (125.96,29.84) .. controls (127.63,31.34) and (129.19,31.24) .. (130.66,29.53) .. controls (132.73,27.98) and (134.48,28.15) .. (135.89,30.03) .. controls (137.06,31.98) and (138.62,32.38) .. (140.57,31.21) .. controls (142.71,30.23) and (144.34,30.92) .. (145.45,33.27) .. controls (145.96,35.48) and (147.31,36.32) .. (149.5,35.8) .. controls (151.83,35.53) and (153.12,36.68) .. (153.37,39.25) .. controls (153.06,41.46) and (154,42.7) .. (156.21,42.97) .. controls (158.61,43.93) and (159.34,45.49) .. (158.41,47.65) .. controls (157.2,49.55) and (157.54,51.14) .. (159.43,52.41) .. controls (161.17,54.13) and (161.15,55.87) .. (159.37,57.62) .. controls (157.5,58.75) and (157.14,60.43) .. (158.29,62.66) -- (157.47,65) ;
\draw [shift={(137.31,30.31)}, rotate = 187.14] [fill={rgb, 255:red, 0; green, 0; blue, 0 }  ][line width=0.08]  [draw opacity=0] (10.72,-5.15) -- (0,0) -- (10.72,5.15) -- (7.12,0) -- cycle    ;
\draw  [fill={rgb, 255:red, 255; green, 255; blue, 255 }  ,fill opacity=1 ] (44.97,65) .. controls (44.97,58.1) and (50.56,52.5) .. (57.47,52.5) .. controls (64.37,52.5) and (69.97,58.1) .. (69.97,65) .. controls (69.97,71.9) and (64.37,77.5) .. (57.47,77.5) .. controls (50.56,77.5) and (44.97,71.9) .. (44.97,65) -- cycle ;
\draw  [fill={rgb, 255:red, 255; green, 255; blue, 255 }  ,fill opacity=1 ] (69.97,65) .. controls (69.97,58.1) and (75.56,52.5) .. (82.47,52.5) .. controls (89.37,52.5) and (94.97,58.1) .. (94.97,65) .. controls (94.97,71.9) and (89.37,77.5) .. (82.47,77.5) .. controls (75.56,77.5) and (69.97,71.9) .. (69.97,65) -- cycle ;
\draw  [fill={rgb, 255:red, 255; green, 255; blue, 255 }  ,fill opacity=1 ] (94.97,65) .. controls (94.97,58.1) and (100.56,52.5) .. (107.47,52.5) .. controls (114.37,52.5) and (119.97,58.1) .. (119.97,65) .. controls (119.97,71.9) and (114.37,77.5) .. (107.47,77.5) .. controls (100.56,77.5) and (94.97,71.9) .. (94.97,65) -- cycle ;
\draw  [fill={rgb, 255:red, 255; green, 255; blue, 255 }  ,fill opacity=1 ] (119.97,65) .. controls (119.97,58.1) and (125.56,52.5) .. (132.47,52.5) .. controls (139.37,52.5) and (144.97,58.1) .. (144.97,65) .. controls (144.97,71.9) and (139.37,77.5) .. (132.47,77.5) .. controls (125.56,77.5) and (119.97,71.9) .. (119.97,65) -- cycle ;
\draw  [fill={rgb, 255:red, 255; green, 255; blue, 255 }  ,fill opacity=1 ] (144.97,65) .. controls (144.97,58.1) and (150.56,52.5) .. (157.47,52.5) .. controls (164.37,52.5) and (169.97,58.1) .. (169.97,65) .. controls (169.97,71.9) and (164.37,77.5) .. (157.47,77.5) .. controls (150.56,77.5) and (144.97,71.9) .. (144.97,65) -- cycle ;
\draw  [fill={rgb, 255:red, 255; green, 255; blue, 255 }  ,fill opacity=1 ] (169.97,65) .. controls (169.97,58.1) and (175.56,52.5) .. (182.47,52.5) .. controls (189.37,52.5) and (194.97,58.1) .. (194.97,65) .. controls (194.97,71.9) and (189.37,77.5) .. (182.47,77.5) .. controls (175.56,77.5) and (169.97,71.9) .. (169.97,65) -- cycle ;
\draw  [fill={rgb, 255:red, 255; green, 255; blue, 255 }  ,fill opacity=1 ] (194.97,65) .. controls (194.97,58.1) and (200.56,52.5) .. (207.47,52.5) .. controls (214.37,52.5) and (219.97,58.1) .. (219.97,65) .. controls (219.97,71.9) and (214.37,77.5) .. (207.47,77.5) .. controls (200.56,77.5) and (194.97,71.9) .. (194.97,65) -- cycle ;

\draw (57.47,66) node  [font=\footnotesize]  {$s_{1}$};
\draw (107.47,66) node  [font=\footnotesize]  {$s_{i}$};
\draw (157.47,66) node  [font=\footnotesize]  {$s_{j}$};
\draw (207.47,66) node  [font=\footnotesize]  {$s_{k}$};
\draw (221.97,65) node [anchor=west] [inner sep=0.75pt]    {$=\dfrac{1}{z_{s_{1} s_{2}}} \dotsc \dfrac{1}{z_{s_{k-1} s_{k}}}\left( D^{s_{i} s_{j}}\right) \Phi _{s_{1}} \dotsc \Phi _{s_{i}} \dotsc \Phi _{s_{j}} \dotsc \Phi _{s_{k}}$};
\draw (82.47,65) node    {$\mycdots $};
\draw (132.47,65) node    {$\mycdots $};
\draw (182.47,65) node    {$\mycdots $};

\end{tikzpicture}
\end{center}

Now that we have defined all of these graphical elements, we will begin checking that
\begin{equation*}
    \Box \Phi \overset{?}{=} [ \partial_u \Phi, \partial_w \Phi]
\end{equation*}
for our proposed $\Phi$. To evaluate the commutator on  the RHS, via the product rule of differentiation we are to sum over the action of $\partial_u$ and $\partial_w$ on all possible nodes on either side of a ``cut'' as follows. 
\begin{center}
    \input{figures/gauge_dots_5}
\end{center}
In other words, $[\partial_u \Phi, \partial_w \Phi]$ corresponds to a sum over all strings of nodes where a cut can be placed anywhere within the string and a wiggly line can be placed anywhere that starts on the left of the cut and ends on the right of the cut. Note that this diagram corresponds to the term
\begin{center}
    \tikzset{every picture/.style={line width=0.75pt}} 

\begin{tikzpicture}[x=0.75pt,y=0.75pt,yscale=-1,xscale=1]

\draw    (127.47,85) .. controls (125.48,83.24) and (125.22,81.29) .. (126.68,79.15) .. controls (128.29,77.88) and (128.28,76.35) .. (126.65,74.56) .. controls (125.16,72.59) and (125.38,70.87) .. (127.32,69.39) .. controls (129.33,68.22) and (129.79,66.61) .. (128.71,64.58) .. controls (127.83,62.37) and (128.52,60.88) .. (130.77,60.13) .. controls (133,59.66) and (133.9,58.3) .. (133.45,56.06) .. controls (133.23,53.69) and (134.31,52.45) .. (136.69,52.34) .. controls (139.02,52.44) and (140.27,51.32) .. (140.46,48.99) .. controls (140.87,46.6) and (142.28,45.61) .. (144.69,46.01) .. controls (147.02,46.55) and (148.3,45.81) .. (148.53,43.79) .. controls (149.44,41.48) and (151.09,40.7) .. (153.48,41.47) .. controls (155.79,42.35) and (157.25,41.8) .. (157.85,39.81) .. controls (159.18,37.63) and (160.71,37.16) .. (162.42,38.41) .. controls (164.69,39.54) and (166.58,39.09) .. (168.11,37.05) .. controls (169.1,35.16) and (170.73,34.88) .. (172.99,36.19) .. controls (174.52,37.66) and (176.18,37.46) .. (177.97,35.59) .. controls (179.19,33.81) and (180.88,33.69) .. (183.03,35.23) .. controls (184.42,36.85) and (185.78,36.82) .. (187.1,35.13) .. controls (189.18,33.48) and (190.88,33.51) .. (192.2,35.22) .. controls (194.11,37) and (195.81,37.11) .. (197.28,35.56) .. controls (199.51,34.13) and (201.19,34.32) .. (202.32,36.15) .. controls (204.03,38.11) and (205.69,38.39) .. (207.29,36.99) .. controls (209.62,35.78) and (211.24,36.15) .. (212.16,38.08) .. controls (213.6,40.2) and (215.18,40.65) .. (216.89,39.42) .. controls (219.28,38.46) and (220.81,38.99) .. (221.46,41) .. controls (222.57,43.25) and (224.31,43.99) .. (226.68,43.23) .. controls (228.55,42.29) and (229.92,43) .. (230.79,45.36) .. controls (231.47,47.7) and (233,48.66) .. (235.38,48.24) .. controls (237.33,47.6) and (238.5,48.49) .. (238.88,50.91) .. controls (239.04,53.28) and (240.3,54.45) .. (242.65,54.44) .. controls (245.06,54.63) and (246.15,55.93) .. (245.91,58.32) .. controls (245.45,60.6) and (246.35,62.01) .. (248.6,62.56) .. controls (250.87,63.37) and (251.56,64.9) .. (250.68,67.15) .. controls (249.6,69.22) and (250.07,70.87) .. (252.08,72.1) .. controls (254.03,73.62) and (254.23,75.08) .. (252.7,76.49) .. controls (251.11,78.34) and (251.13,80.2) .. (252.75,82.07) -- (252.47,85) ;
\draw [shift={(194.23,35.33)}, rotate = 181.89] [fill={rgb, 255:red, 0; green, 0; blue, 0 }  ][line width=0.08]  [draw opacity=0] (10.72,-5.15) -- (0,0) -- (10.72,5.15) -- (7.12,0) -- cycle    ;
\draw  [fill={rgb, 255:red, 255; green, 255; blue, 255 }  ,fill opacity=1 ] (64.97,85) .. controls (64.97,78.1) and (70.56,72.5) .. (77.47,72.5) .. controls (84.37,72.5) and (89.97,78.1) .. (89.97,85) .. controls (89.97,91.9) and (84.37,97.5) .. (77.47,97.5) .. controls (70.56,97.5) and (64.97,91.9) .. (64.97,85) -- cycle ;
\draw  [fill={rgb, 255:red, 255; green, 255; blue, 255 }  ,fill opacity=1 ] (89.97,85) .. controls (89.97,78.1) and (95.56,72.5) .. (102.47,72.5) .. controls (109.37,72.5) and (114.97,78.1) .. (114.97,85) .. controls (114.97,91.9) and (109.37,97.5) .. (102.47,97.5) .. controls (95.56,97.5) and (89.97,91.9) .. (89.97,85) -- cycle ;
\draw  [fill={rgb, 255:red, 255; green, 255; blue, 255 }  ,fill opacity=1 ] (114.97,85) .. controls (114.97,78.1) and (120.56,72.5) .. (127.47,72.5) .. controls (134.37,72.5) and (139.97,78.1) .. (139.97,85) .. controls (139.97,91.9) and (134.37,97.5) .. (127.47,97.5) .. controls (120.56,97.5) and (114.97,91.9) .. (114.97,85) -- cycle ;
\draw  [fill={rgb, 255:red, 255; green, 255; blue, 255 }  ,fill opacity=1 ] (139.97,85) .. controls (139.97,78.1) and (145.56,72.5) .. (152.47,72.5) .. controls (159.37,72.5) and (164.97,78.1) .. (164.97,85) .. controls (164.97,91.9) and (159.37,97.5) .. (152.47,97.5) .. controls (145.56,97.5) and (139.97,91.9) .. (139.97,85) -- cycle ;
\draw  [fill={rgb, 255:red, 255; green, 255; blue, 255 }  ,fill opacity=1 ] (164.97,85) .. controls (164.97,78.1) and (170.56,72.5) .. (177.47,72.5) .. controls (184.37,72.5) and (189.97,78.1) .. (189.97,85) .. controls (189.97,91.9) and (184.37,97.5) .. (177.47,97.5) .. controls (170.56,97.5) and (164.97,91.9) .. (164.97,85) -- cycle ;
\draw  [fill={rgb, 255:red, 255; green, 255; blue, 255 }  ,fill opacity=1 ] (189.97,85) .. controls (189.97,78.1) and (195.56,72.5) .. (202.47,72.5) .. controls (209.37,72.5) and (214.97,78.1) .. (214.97,85) .. controls (214.97,91.9) and (209.37,97.5) .. (202.47,97.5) .. controls (195.56,97.5) and (189.97,91.9) .. (189.97,85) -- cycle ;
\draw  [fill={rgb, 255:red, 255; green, 255; blue, 255 }  ,fill opacity=1 ] (214.97,85) .. controls (214.97,78.1) and (220.56,72.5) .. (227.47,72.5) .. controls (234.37,72.5) and (239.97,78.1) .. (239.97,85) .. controls (239.97,91.9) and (234.37,97.5) .. (227.47,97.5) .. controls (220.56,97.5) and (214.97,91.9) .. (214.97,85) -- cycle ;
\draw  [fill={rgb, 255:red, 255; green, 255; blue, 255 }  ,fill opacity=1 ] (239.97,85) .. controls (239.97,78.1) and (245.56,72.5) .. (252.47,72.5) .. controls (259.37,72.5) and (264.97,78.1) .. (264.97,85) .. controls (264.97,91.9) and (259.37,97.5) .. (252.47,97.5) .. controls (245.56,97.5) and (239.97,91.9) .. (239.97,85) -- cycle ;
\draw  [fill={rgb, 255:red, 255; green, 255; blue, 255 }  ,fill opacity=1 ] (264.97,85) .. controls (264.97,78.1) and (270.56,72.5) .. (277.47,72.5) .. controls (284.37,72.5) and (289.97,78.1) .. (289.97,85) .. controls (289.97,91.9) and (284.37,97.5) .. (277.47,97.5) .. controls (270.56,97.5) and (264.97,91.9) .. (264.97,85) -- cycle ;
\draw  [fill={rgb, 255:red, 255; green, 255; blue, 255 }  ,fill opacity=1 ] (289.97,85) .. controls (289.97,78.1) and (295.56,72.5) .. (302.47,72.5) .. controls (309.37,72.5) and (314.97,78.1) .. (314.97,85) .. controls (314.97,91.9) and (309.37,97.5) .. (302.47,97.5) .. controls (295.56,97.5) and (289.97,91.9) .. (289.97,85) -- cycle ;
\draw [line width=0.75]    (189.97,62) -- (189.97,108) ;

\draw (77.47,86) node  [font=\footnotesize]  {$s_{1}$};
\draw (127.47,86) node  [font=\footnotesize]  {$s_{i}$};
\draw (177.47,86) node  [font=\footnotesize]  {$s_{\ell }$};
\draw (202.47,86) node  [font=\footnotesize]  {$s_{\ell +1}$};
\draw (252.47,86) node  [font=\footnotesize]  {$s_{j}$};
\draw (302.47,86) node  [font=\footnotesize]  {$s_{k}$};
\draw (316.97,85) node [anchor=west] [inner sep=0.75pt]    {$=\dfrac{1}{z_{s_{1} s_{2}}} \dotsc \dfrac{1}{z_{s_{n-1} s_{n}}}( z_{s_{\ell } s_{\ell +1}})\left( D^{s_{i} s_{j}}\right) \Phi _{s_{1}} \dotsc \Phi _{s_{k}}.$};
\draw (102.47,85) node    {$\mycdots $};
\draw (152.47,85) node    {$\mycdots $};
\draw (227.47,85) node    {$\mycdots $};
\draw (277.47,85) node    {$\mycdots $};

\end{tikzpicture}
\end{center}

To evaluate $\Box \Phi$ on the LHS, we simply sum over all placements of the double-lined edge that starts and ends on two distinct nodes.

The equality $\Box \Phi = [\partial_u \Phi, \partial_w \Phi]$ is then confirmed by the computation in the figure below, done with an example where the double-line/wiggly-lines start and end on $s_i$ and $s_{i+4}$, respectively. We sum over all placements of the cut on the RHS in between the two end-points of the wiggly line. Because the cut between nodes $s_{i+\ell}$ and $s_{i+\ell+1}$ multiplies the whole expression by $z_{s_{i+\ell} s_{i+\ell+1}}$, when we sum over the chain of cut placements in the example below we end up with the overall factor $z_{s_i s_{i+4}}$, matching the LHS. This computation completes the proof that the perturbiner expansion satisfies the equation of motion.

\begin{center}
    \input{figures/gauge_dots_7}
\end{center}

\end{proof}

\begin{exs}
    For $N=1$:
    \begin{equation}
        \bigPhi(\Phi_1) = \bigPhi^{(1)}(\Phi_1) = \Phi_1.
    \end{equation}
    For $N=2$:
        \begin{equation}
        \begin{aligned}
        \bigPhi(\Phi_1, \Phi_2) &= \bigPhi^{(1)}(\Phi_1)  + \bigPhi^{(1)}(\Phi_2) + \bigPhi^{(2)}(\Phi_1, \Phi_2)
        \end{aligned}
        \end{equation}
    with
    \begin{equation}
        \bigPhi^{(2)}(\Phi_i, \Phi_j) = \frac{1}{z_{ij}}\Phi_i \Phi_j + \frac{1}{z_{ji}}\Phi_j \Phi_i.
    \end{equation}
    For $N = 3$:
    \begin{equation}
    \begin{aligned}
        \bigPhi(\Phi_1, \Phi_2, \Phi_3) = & \, \bigPhi^{(1)}(\Phi_1) + \bigPhi^{(1)}(\Phi_2) +\bigPhi^{(1)}(\Phi_3) \\
        & + \bigPhi^{(2)} (\Phi_1, \Phi_2) + \bigPhi^{(2)} (\Phi_2, \Phi_3)  + \bigPhi^{(2)} (\Phi_1, \Phi_3)  \\
        & + \bigPhi^{(3)}(\Phi_1, \Phi_2, \Phi_3)
    \end{aligned}
    \end{equation}
    with
    \begin{equation}\label{phi3}
    \begin{aligned}
        \bigPhi^{(3)}(\Phi_i, \Phi_j, \Phi_k) = &\, \frac{1}{z_{ij}} \frac{1}{z_{jk}} \Phi_i \Phi_j \Phi_k + \frac{1}{z_{jk}} \frac{1}{z_{ki}} \Phi_j \Phi_k \Phi_i +\frac{1}{z_{ki}} \frac{1}{z_{ij}} \Phi_k \Phi_i \Phi_j \\
        & + \frac{1}{z_{ji}} \frac{1}{z_{ik}} \Phi_j \Phi_i \Phi_k + \frac{1}{z_{kj}} \frac{1}{z_{ji}} \Phi_k \Phi_j \Phi_i +\frac{1}{z_{ik}} \frac{1}{z_{kj}} \Phi_i \Phi_k \Phi_j.
    \end{aligned}
    \end{equation}

\end{exs}

Now that we have a formula for the perturbiner expansion in SDYM for any $N$, we will now present a recursive formula which allows for the perturbiner expansion of $N$ seed functions to be computed if one already knows the expansion for $N-1$. This formula is given below.

\begin{thm}
\begin{equation}
    \tcboxmath{ \bigPhi^{(N)}(\Phi_1, \ldots, \Phi_N) = \sum_{i=1}^{N-1} \eval{\bigPhi^{(N-1)}(\Phi_1, \ldots, \Phi_{N-1} )}_{\displaystyle \Phi_i \mapsto \frac{1}{z_{iN}} [\Phi_i, \Phi_N]} }
\end{equation}
\end{thm}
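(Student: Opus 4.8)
The plan is to prove the recursion by reducing both sides to the explicit chain formula of Theorem~\ref{claimD1} and then reorganizing the substituted sum slot-by-slot. First I would observe that the only part of $\bigPhi(\Phi_1,\dots,\Phi_{N-1})$ which can produce terms containing all $N$ seed functions is its top-degree piece $\bigPhi^{(N-1)}(\Phi_1,\dots,\Phi_{N-1})$: the substitution $\Phi_i\mapsto \tfrac{1}{z_{iN}}[\Phi_i,\Phi_N]$ raises the number of distinct seeds in a term by exactly one, so lower-degree pieces can never reach degree $N$. Hence it suffices to compare $\bigPhi^{(N)}$ with $\sum_{i=1}^{N-1}\bigPhi^{(N-1)}(\Phi_1,\dots,\Phi_{N-1})\big\rvert_{\Phi_i\mapsto \frac{1}{z_{iN}}[\Phi_i,\Phi_N]}$, exactly as in the gravitational analogue \eqref{gravity_soft_thm_to_prove}. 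By Theorem~\ref{claimD1}, $\bigPhi^{(N-1)}$ is a sum over ``chains'' $s_1,\dots,s_{N-1}$ (permutations of $\{1,\dots,N-1\}$) weighted by $1/(z_{s_1 s_2}\cdots z_{s_{N-2}s_{N-1}})$.

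Next I would record the effect of the substitution on one chain. If $\Phi_i$ sits at position $m$, so $i=s_m$, expanding $[\Phi_i,\Phi_N]=\Phi_i\Phi_N-\Phi_N\Phi_i$ inserts $\Phi_N$ immediately to the right of $\Phi_{s_m}$ (with sign $+$) or immediately to its left (with sign $-$), in each case multiplied by $1/z_{iN}$ while every other $1/z$ factor of the original chain is left untouched. In particular the factor $1/z_{s_m s_{m+1}}$ (resp.\ $1/z_{s_{m-1}s_m}$) bridging the newly opened gap is \emph{inherited}, still referencing the original neighbours rather than $N$; this is precisely the SDYM incarnation of the ``arrowed edge'' of Appendix~\ref{appA}, and I would reuse that graphical device to track it.

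The core step is then to group the right-hand side by the resulting length-$N$ word. Every word in $\bigPhi^{(N)}$ arises from a unique underlying chain of $\{1,\dots,N-1\}$ by inserting $\Phi_N$ into one of its $N$ slots, so I sum the contributions landing in each slot. An interior slot between $s_m$ and $s_{m+1}$ receives exactly two contributions — from $i=s_m$ (insert right, coefficient $+1/z_{s_m N}$) and from $i=s_{m+1}$ (insert left, coefficient $-1/z_{s_{m+1}N}$) — both carrying the inherited factor $1/z_{s_m s_{m+1}}$, so their combined weight is
\[
\frac{1}{z_{s_m s_{m+1}}}\left(\frac{1}{z_{s_m N}}-\frac{1}{z_{s_{m+1}N}}\right).
\]
Using $z_{s_m N}-z_{s_{m+1}N}=z_{s_m s_{m+1}}$ this collapses to $1/(z_{s_m N}\,z_{N s_{m+1}})$, which is exactly the pair of standard $z$-factors flanking $\Phi_N$ in the target word. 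This is the two-term specialization of the partial-fractions identity used in the gravity proof, and it is where the inherited bridging factor is converted into the two genuine factors.

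Finally I would dispose of the two boundary slots: $\Phi_N$ inserted before $s_1$ arises only from $i=s_1$, with coefficient $-1/z_{s_1 N}=1/z_{N s_1}$, and inserted after $s_{N-1}$ only from $i=s_{N-1}$, with coefficient $1/z_{s_{N-1}N}$; each already equals its target weight with no cancellation required. Summing over all underlying chains and all slots then reproduces $\bigPhi^{(N)}$ term-by-term. I expect the main obstacle to be combinatorial rather than analytic: one must carry the noncommutative ordering of the $\Phi$'s through each commutator so that $\Phi_N$ genuinely lands in the intended slot, while simultaneously keeping straight which $z$-factors are inherited, and verify that the assignment of contributions to slots is a clean bijection. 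Once that bookkeeping is set up, the single identity $z_{iN}-z_{jN}=z_{ij}$ finishes the argument.
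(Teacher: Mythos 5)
Your proof is correct and takes essentially the same route as the paper's: both reduce everything to the chain formula of Theorem~\ref{claimD1}, expand each commutator into a left and a right insertion of $\Phi_N$, match every length-$N$ word with the pair of substituted terms producing it, and collapse those two contributions using the identity $\frac{1}{z_{ij}}\bigl(\frac{1}{z_{iN}}-\frac{1}{z_{jN}}\bigr)=\frac{1}{z_{iN}z_{Nj}}$. If anything, your write-up is slightly more careful, since you explicitly justify discarding the lower-degree pieces of $\bigPhi$ and treat the two boundary slots, both of which the paper leaves implicit.
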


\begin{exs}
    For $N=2$ we have
    \begin{equation}
        \bigPhi^{(2)}(\Phi_1, \Phi_2) = \frac{1}{z_{12}} [\Phi_1, \Phi_2].
    \end{equation}
    For $N = 3$
    \begin{equation}
        \bigPhi^{(3)}(\Phi_1, \Phi_2, \Phi_3) = \frac{1}{z_{12}} [ \frac{1}{z_{13}}[\Phi_1, \Phi_3], \Phi_2] + \frac{1}{z_{12}}[\Phi_1, \frac{1}{z_{23}} [\Phi_2, \Phi_3]].
    \end{equation}
    If one expands out the above term, one finds agreement with \eqref{phi3}.
\end{exs}

\begin{proof}

In order to prove this recursive formula, let us use theorem \ref{claimD1} to write the $N-1$ case and show that the recursive formula correctly reproduces the $N$ case. 

Let us consider how a certain string of seed functions in the $N$ case can be reached by a previous string in the $N-1$ case via the recursive formula. Without loss of generality, let us consider how we can achieve the mutation of the string
\begin{equation*}
    1, \ldots, i, j, \ldots, N-1 \;\;\; \to \;\;\; 1, \ldots, i , N, j, \ldots, N-1
\end{equation*}
where $N$ is inserted in between some adjacent nodes $i$ and $j$.

If we start with the $1, \ldots, i, j, \ldots, N-1$ string
\begin{equation*}
    \frac{1}{z_{12}} \ldots \frac{1}{z_{ij}} \ldots \frac{1}{z_{N-2,N-1}} \Phi_1 \ldots \Phi_i \Phi_j \ldots \Phi_{N-1}
\end{equation*}
then only way to reach the desired string from the recursive rule is to add the two terms from the substitutions
\begin{equation*}
    \Phi_i \mapsto \frac{1}{z_{iN}}( \underbrace{\Phi_i \Phi_N}_{\substack{\text{only include} \\ \text{this term}}} \!\!\!\! - \;\Phi_N \Phi_i \;\; ), \hspace{0.75 cm} + \hspace{0.75 cm} \Phi_j \mapsto \frac{1}{z_{jN}}( \;\;  \Phi_j \Phi_N - \!\!\! \underbrace{\Phi_N \Phi_j}_{\substack{\text{only include} \\ \text{this term}}} )
\end{equation*}
under which the string transforms as
\begin{equation}
\begin{aligned}
    \frac{1}{z_{12}}& \ldots \frac{1}{z_{ij}} \ldots \frac{1}{z_{N-2,N-1}} \Phi_1 \ldots \Phi_i \Phi_j \ldots \Phi_{N-1} \\
    \mapsto&\, 
    \frac{1}{z_{12}} \ldots \frac{1}{z_{ij}} \ldots \frac{1}{z_{N-2,N-1}} \Phi_1 \ldots \Phi_i \left( \frac{1}{z_{iN}} \Phi_N - \frac{1}{z_{jN}} \Phi_N \right) \Phi_j \ldots \Phi_{N-1} \\
    &= \frac{1}{z_{12}} \ldots \frac{1}{z_{iN}} \frac{1}{z_{Nj}} \ldots \frac{1}{z_{N-2,N-1}} \Phi_1 \ldots \Phi_i \Phi_N \Phi_j \ldots \Phi_{N-1}
\end{aligned}
\end{equation}
giving the desired result. Here we have used the elementary identity
\begin{equation}
    \frac{1}{z_{ij}} \left( \frac{1}{z_{iN}} - \frac{1}{z_{jN}} \right) = \frac{1}{z_{iN}} \frac{1}{z_{Nj}}.
\end{equation}
    
\end{proof}

\subsection{Recursion operator in SDYM}

Linearized perturbations $\Phi + \delta \Phi$ to the Chalmers-Siegel scalar satisfy
\begin{equation}\label{sdym_linear_eom}
    \Box \delta \Phi - [\partial_u \delta \Phi, \partial_w \Phi] - [\partial_u \Phi, \partial_w \delta \Phi] = 0.
\end{equation}
There exists a recursion operator $\mR$ which acts on the space of linear perturbations $\delta \Phi$ and is defined as the solution to the pair of differential equations
\begin{equation}\label{sdym_R_def}
    \begin{aligned}
    \partial_u ( \mR \delta \Phi) &= \partial_\bw \delta \Phi + [ \partial_u \Phi, \delta \Phi ], \\
    \partial_w ( \mR \delta \Phi) &= \partial_\bu \delta \Phi + [ \partial_w \Phi, \delta \Phi ].
    \end{aligned}
\end{equation}
The compatibility of the two equations follows from the fact that $\delta \Phi$ is a linear perturbation of $\Phi$:
\begin{equation}
    \partial_u ( \partial_w ( \mR \delta \Phi) ) - \partial_w ( \partial_u ( \mR \delta \Phi) ) = \Box \delta \Phi - [ \partial_u \delta \Phi, \partial_w \Phi ] - [ \partial_u \Phi, \partial_w \delta \Phi ] = 0.
\end{equation}
Furthermore, $\mR \delta \Phi$ is guaranteed to solve the linearized e.o.m.\ by 
\begin{equation}
    \Box \, \mR \delta \Phi  - [ \partial_u \mR \delta \Phi, \partial_w \Phi] -[ \partial_u \Phi, \partial_w \mR \delta \Phi ]= [\Box \Phi - [\partial_u \Phi, \partial_w \Phi], \delta \Phi] = 0.
\end{equation}

A natural example of a linear perturbation $\delta \Phi$ is the change incurred by adding one extra seed function into the perturbiner expansion. To that end, we define
\begin{definition}
\begin{equation}
    \bigPhi ( \Phi_1, \ldots, \Phi_N | \Phi_I ) \equiv \bigPhi ( \Phi_1, \ldots, \Phi_N , \Phi_I ) - \bigPhi ( \Phi_1, \ldots, \Phi_N )
\end{equation}
\end{definition}
\noindent which just is the change in the perturbiner expansion coming from adding in the seed function $\Phi_I$, for some index $I \notin \{ 1, \ldots, N \}$.

In particular,
\begin{equation}\label{sdym_lin_pert_ex}
    \Phi = \bigPhi ( \Phi_1, \ldots, \Phi_N), \hspace{0.5 cm} \delta \Phi =  \bigPhi ( \Phi_1, \ldots, \Phi_N | \Phi_I)
\end{equation}
will solve the linearized e.o.m.\ \eqref{sdym_linear_eom}.

It turns out that there is a nice formula for the action of $\mR$ on the linear perturbation above, which we give below.

\begin{thm}\label{claim_R_sdym}
    \begin{equation}
        \tcboxmath{ \mR \, \bigPhi ( \Phi_1, \ldots, \Phi_N | \Phi_I) = - z_I \times \bigPhi ( \Phi_1, \ldots, \Phi_N | \Phi_I). }
    \end{equation}
\end{thm}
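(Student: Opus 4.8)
The plan is to mirror the proof of Theorem \ref{thmRz}, the gravitational analogue. By the definition \eqref{sdym_R_def} of $\mR$, it suffices to verify the first of the two defining equations,
\begin{equation*}
\partial_u\left(-z_I\,\bigPhi(\Phi_1,\ldots,\Phi_N\mid\Phi_I)\right) \overset{?}{=} \partial_\bw\,\bigPhi(\Phi_1,\ldots,\Phi_N\mid\Phi_I) + \left[\partial_u\Phi,\ \bigPhi(\Phi_1,\ldots,\Phi_N\mid\Phi_I)\right],
\end{equation*}
where $\Phi = \bigPhi(\Phi_1,\ldots,\Phi_N)$; the second equation follows by the same argument with $u\leftrightarrow w$ and $\bu\leftrightarrow\bw$. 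Here $\delta\Phi = \bigPhi(\Phi_1,\ldots,\Phi_N\mid\Phi_I)$ is, in the chain language introduced in the proof of Theorem \ref{claimD1}, the sum of all ordered chains of seed functions that contain the node $I$.

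First I would identify the diagrams contributing to each of the three terms. In the left-hand side and in the first right-hand term, the product rule turns each into a sum over chains containing $I$ in which a single node is ``decorated'' by $-z_I\partial_u$ or by $\partial_\bw$, respectively. The commutator $[\partial_u\Phi,\delta\Phi]$ is the essential term: since $\Phi$ is a sum of chains avoiding $I$ and $\delta\Phi$ a sum of chains containing $I$, expanding $\partial_u\Phi\,\delta\Phi - \delta\Phi\,\partial_u\Phi$ produces chains built by concatenating an $I$-free chain carrying one $\partial_u$ with an $I$-containing chain. Because concatenation is plain matrix multiplication, the junction between the two pieces carries no $1/z$ factor; as established in the proof of Theorem \ref{claimD1}, this is precisely a full chain with a ``cut'' at that junction, which multiplies the reassembled chain by the $z_{ab}$ of the two junction nodes. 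Thus every contribution is a single underlying chain containing $I$ decorated by one $\partial_u$, and I would group all contributions by their common underlying chain.

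Within a fixed chain I would then use two facts: the relation $\partial_\bw\Phi_i=-z_i\partial_u\Phi_i$ from \eqref{z_diff_relation}, which converts the $\partial_\bw$ decoration on node $i$ into $-z_i\partial_u$; and the cut rule above. Fixing the surviving $\partial_u$ on node $d$, the commutator term sums over all cut positions separating $d$ from $I$ along the chain, and since each cut between adjacent nodes $a,b$ supplies a factor $z_{ab}=z_a-z_b$, the sum telescopes to $z_d-z_I=z_{dI}$. Collecting the coefficients of the fixed decorated chain, the identity reduces to the elementary relation $-z_I = -z_d + z_{dI}$, which holds; the degenerate case $d=I$ (where the commutator contributes nothing and $z_{dI}=0$) is consistent as well.

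The main obstacle I anticipate is the careful bookkeeping of the non-commutative ordering in the commutator: unlike the Poisson bracket in the gravitational case, one must track which of the two factors $\Phi$ and $\delta\Phi$ sits on the left, and verify that the commutator sign combines with the direction-dependent telescoping sum of cut factors to yield $+z_{dI}$ uniformly, whether $d$ lies to the left or to the right of $I$ in the chain. Once this sign analysis is pinned down, the telescoping argument closes exactly as in Theorem \ref{thmRz}, completing the verification of the first equation and hence, by symmetry, the theorem.
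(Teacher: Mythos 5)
Your proposal is correct and follows essentially the same route as the paper's proof: verifying the first defining equation of $\mR$ in the chain language of Theorem \ref{claimD1}, grouping contributions by the underlying chain and its single decorated node, converting $\partial_\bw$ decorations via $\partial_\bw\Phi_i=-z_i\partial_u\Phi_i$, and telescoping the sum over cut positions between the decorated node and $I$ (with the case analysis for the decorated node lying left of $I$, right of $I$, or equal to $I$ matching the paper's treatment of the two commutator orderings). Your sign bookkeeping, yielding $+z_{dI}$ uniformly so that $-z_d+z_{dI}=-z_I$, is exactly what the paper's diagrams establish.
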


\begin{proof} First, we know what $\delta \Phi$ is because it is just a sum over all color-ordered strings which necessarily contain the node $I$. Therefore, we can prove the above formula by checking that \eqref{sdym_R_def} is satisfied if we assume $\mR \delta \Phi = - z_I \delta \Phi$. We will check the first equation of \eqref{sdym_R_def} and the second will follow by similar logic.

The term $\partial_u ( - z_I \delta \Phi)$ will be a sum of terms which each have $\partial_u$ acting on a seed function (which we'll denote $i$) and also contain the seed function $I$.

First, assume $i$ is to the left of $I$ in the color ordering.

On the RHS of \eqref{sdym_R_def} we have $\partial_\bw \delta \Phi + [\partial_u \Phi, \delta \Phi]$. For the $\partial_\bw \delta \Phi$ piece, note that
\begin{equation}\label{z_diff_sdym}
    \partial_{\bw} \Phi_i = - z_i \partial_u \Phi_i\, , \hspace{1 cm} \partial_\bu \Phi_i = - z_i \partial_w \Phi_i \, ,
\end{equation}
(see \eqref{z_diff_relation}). In the commutator $[\partial_u \Phi, \delta \Phi]$, the term $\partial_u \Phi \delta \Phi$ will have $\partial_u$ acting on a node (let's say $i$) to the left of $I$. Furthermore, diagrammatically there will be a `cut' or `line' in between the $i$ and $I$ nodes.

Below we show of a group terms in a particular example where all nodes are presented in the same order and all the derivatives are acting on $i$. Note that we have to sum over all of the positions of the `cut' in between $i$ and $I$.
\begin{center}
    \input{figures/gauge_dots_8}
\end{center}
It turns out that, using \eqref{z_diff_sdym} and summing over the chain of cuts, the diagram on the left does indeed equal the sum of diagrams on the right.

Next, assume that $i$ is to the right of $I$. We now look at the other term $- \delta \Phi \partial_u \Phi$ in the commutator $[\partial_u \Phi, \delta \Phi]$. An analogous computation is shown below.
\begin{center}
    \input{figures/gauge_dots_9}
\end{center}

Finally, if $i = I$, then the equality of diagrams follows in a straightforward manner. Therefore, the LHS of \eqref{sdym_R_def} does equal the RHS if $\mR \delta \Phi = - z_I \delta \Phi$.

\end{proof}

\subsection{Proof of the Parke Taylor formula}

We now present a proof of the Parke Taylor formula for gluon MHV scattering which is analogous to our proof of the graviton MHV formula. We start with the Yang Mills action with an added topological term. The topological term does not affect scattering amplitudes.
\begin{equation}
    S = - \frac{1}{2} \int d^4 X \Tr ( F^{\mu \nu} F_{\mu \nu} - \frac{i}{2} \vep_{\mu \nu \rho \sigma} F^{\mu \nu} F^{\rho \sigma} )
\end{equation}
We now write the basis of flat-space vierbeins $e_{A \dot A}$
\begin{equation}
    e_{1 \dot 1} = \partial_\bu, \hspace{0.5 cm} e_{1 \dot 2} = \partial_\bw, \hspace{0.5 cm} e_{2 \dot 1} = \partial_w, \hspace{0.5 cm} e_{2 \dot 2} = \partial_u.
\end{equation}
Changing to spinor indices by $A_{A \dot A} =e^\mu_{A \dot A} A_\mu$ and $F_{A \dot A B \dot B} = e_{A \dot A}^\mu e_{B \dot B}^\nu F_{\mu \nu}$, and decomposing the field strength into ASD and SD parts $F_{AB}$ and $\widetilde{F}_{\dot A \dot B}$ by \eqref{F_decomp}, the usual kinetic term and topological term then read
\begin{equation}
\begin{aligned}
    F^{\mu \nu}F_{\mu \nu} &=  2 F^{AB} F_{AB} + 2 \widetilde{F}^{\dot A \dot B} \widetilde{F}_{\dot A \dot B} \\
    \vep_{\mu \nu \rho \sigma} F^{\mu \nu} F^{\rho \sigma} &= 4 i \, ( F^{AB} F_{AB} - \widetilde{F}^{\dot A \dot B} \widetilde{F}_{\dot A \dot B})
\end{aligned}
\end{equation}
implying our action can be written purely in terms of the ASD part $F_{AB}$ as
\begin{equation}
    S = - 2 \int d^4 X \Tr( F^{AB} F_{AB} ).
\end{equation}

In a SD background, $F_{AB} = 0$. The above action therefore provides a convenient way to compute the MHV amplitude. Imagine a solution which contains $N$ positive helicity gluons and $2$ negative helicity gluons. The ``linear'' part of the solution should be
\begin{equation}
    A_\mu = \sum_{i = 1}^{N+2} \epsilon_i (\vep_i)_\mu \bT^{a_i} e^{i p_i \cdot X} + \ldots
\end{equation}
and the higher order terms in the infinitesimal parameters $\epsilon_i$ (the $\ldots$ above) are determined by requiring that $A_\mu$ solves the e.o.m.. From
\begin{equation}
    \mathcal{A}_{\rm MHV} =  \frac{\partial^{N+2}}{\partial \epsilon_1 \ldots \partial \epsilon_{N+2}}  \eval{i S}_{\epsilon_i = 0 \; \forall \epsilon_i}
\end{equation}
we can see that to find $\mathcal{A}_{\rm MHV}$ all we must do is compute $f^I_{AB}$ and $f^J_{AB}$, where $I = N+1$ and $J = N+2$ are the labels corresponding to the two negative helicity gluons and $f_{AB} = \delta F_{AB}$ denote linear ASD perturbations to the SD background of the $N$ positive helicity gluons. We then have
\begin{equation}
    \mathcal{A}_{\rm MHV} = - 4 i  \frac{\partial^{N+2}}{\partial \epsilon_1 \ldots \partial \epsilon_{N+2}} \eval{ \int d^4 X (f^I)^{AB} (f^J)_{AB} }_{\epsilon_i = 0 \; \forall \epsilon_i}.
\end{equation}

We now solve for the ASD perturbation $f^I_{AB}$. This can be done as follows. Defining the covariant derivative
\begin{equation}
    \mD_\mu = \partial_\mu + [A_\mu, \cdot \, ]
\end{equation}
the Yang Mills e.o.m.\ and Bianchi identity read
\begin{equation}
    \mD_\mu F^{\mu \nu} = 0 \, ,\hspace{1 cm} \mD_\mu (\star F)^{\mu \nu} = 0.
\end{equation}
These can be added and subtracted to give
\begin{equation}
    \mD^A_{\;\; \dot A} F_{AB} = 0 \, , \hspace{1 cm}\mD_A^{\;\; \dot A} \widetilde{F}_{\dot A \dot B} = 0.
\end{equation}
The first of the above equations gives us an equation of motion for $f_{AB}$, which we write explicitly as 
\begin{equation}
    \mD^A_{\;\; \dot A} f_{AB} = (e^A_{\;\; \dot B})^\mu \partial_\mu f_{AB } + [A^{A}_{\;\; \dot A}, f_{AB}] = 0.
\end{equation}
Note that here the connection $A_{A \dot A}$ denotes the connection of the SD background, given by \eqref{A_Phi}. The above equation can be thought of as two equations, one for $\dot A = 1$ and $\dot A = 2$, which we expand out below as
\begin{equation}\label{f_R}
\begin{aligned}
    \dot A = \dot 1: & \;\;\;\; 0 = - \partial_w f_{1 B} + \partial_\bu f_{2 B} + [\partial_w \Phi, f_{2 B}] \, ,\\
    \dot A = \dot 2: & \;\;\;\; 0 = - \partial_u f_{1 B} + \partial_\bw f_{2 B} + [\partial_u \Phi, f_{2 B}] \, .
\end{aligned}
\end{equation}
Comparing the above with \eqref{sdym_R_def}, we note that this implies that $f_{1 B}$ is given by the recursion operator $\mR$ acting on $f_{2B}$:
\begin{equation}
    f_{1B} = \mR \, f_{2B}.
\end{equation}
In fact, the two equations of \eqref{f_R} can be combined to show that
\begin{equation}
    \Box f_{AB} - [ \partial_u f_{AB}, \partial_w \Phi] - [\partial_u \Phi, \partial_w f_{AB}] = 0.
\end{equation}
Therefore, an ASD perturbation to a SD background solves the same e.o.m.\ as a SD perturbation! Thankfully we have already discussed how to solve for such perturbations using \eqref{sdym_lin_pert_ex}.

Using theorem \ref{claim_R_sdym} we know we can take $f_{1B}^I = - z_I f_{2 B}^I$ and from the symmetry $f_{AB} = f_{(AB)}$ we know we can solve for $f^I_{AB}$ with the formula 
\begin{equation}
    f^I_{AB} = \frac{1}{2} \kappa^I_A \kappa^I_B \bigPhi(\Phi_1, \ldots, \Phi_N | \Phi_I)
\end{equation}
where $\kappa^I_A = ( - z_I, 1)$. The proportionality factor of $1/2$ was chosen so that the negative helicity gluon has wavefunction $A_\mu = \epsilon_I (\vep^{-})_\mu \bT^{a_I} e^{i p_I \cdot X}$, as can be checked straightforwardly in the trivial $N=0$ case where the SD background is 0.

Plugging the above expression for $f_I$ and $f_J$ into the action, we have
\begin{equation}\label{S_sdym_IJ}
\begin{aligned}
    S = -  \langle I J \rangle^2 \int d^4 X \; \Tr \left( \bigPhi(\Phi_1, \ldots, \Phi_N | \Phi_I) \bigPhi(\Phi_1, \ldots, \Phi_N | \Phi_J)  \right).
\end{aligned}
\end{equation}
We can draw the terms arising from this trace in the following way.
\begin{center}
    \tikzset{every picture/.style={line width=0.75pt}} 

\begin{tikzpicture}[x=0.75pt,y=0.75pt,yscale=-1,xscale=1]

\draw  [fill={rgb, 255:red, 255; green, 255; blue, 255 }  ,fill opacity=1 ] (103.97,142) .. controls (103.97,135.1) and (109.56,129.5) .. (116.47,129.5) .. controls (123.37,129.5) and (128.97,135.1) .. (128.97,142) .. controls (128.97,148.9) and (123.37,154.5) .. (116.47,154.5) .. controls (109.56,154.5) and (103.97,148.9) .. (103.97,142) -- cycle ;
\draw  [fill={rgb, 255:red, 255; green, 255; blue, 255 }  ,fill opacity=1 ] (128.97,142) .. controls (128.97,135.1) and (134.56,129.5) .. (141.47,129.5) .. controls (148.37,129.5) and (153.97,135.1) .. (153.97,142) .. controls (153.97,148.9) and (148.37,154.5) .. (141.47,154.5) .. controls (134.56,154.5) and (128.97,148.9) .. (128.97,142) -- cycle ;
\draw  [fill={rgb, 255:red, 255; green, 255; blue, 255 }  ,fill opacity=1 ] (153.97,142) .. controls (153.97,135.1) and (159.56,129.5) .. (166.47,129.5) .. controls (173.37,129.5) and (178.97,135.1) .. (178.97,142) .. controls (178.97,148.9) and (173.37,154.5) .. (166.47,154.5) .. controls (159.56,154.5) and (153.97,148.9) .. (153.97,142) -- cycle ;
\draw  [fill={rgb, 255:red, 255; green, 255; blue, 255 }  ,fill opacity=1 ] (178.97,142) .. controls (178.97,135.1) and (184.56,129.5) .. (191.47,129.5) .. controls (198.37,129.5) and (203.97,135.1) .. (203.97,142) .. controls (203.97,148.9) and (198.37,154.5) .. (191.47,154.5) .. controls (184.56,154.5) and (178.97,148.9) .. (178.97,142) -- cycle ;
\draw  [fill={rgb, 255:red, 255; green, 255; blue, 255 }  ,fill opacity=1 ] (203.97,142) .. controls (203.97,135.1) and (209.56,129.5) .. (216.47,129.5) .. controls (223.37,129.5) and (228.97,135.1) .. (228.97,142) .. controls (228.97,148.9) and (223.37,154.5) .. (216.47,154.5) .. controls (209.56,154.5) and (203.97,148.9) .. (203.97,142) -- cycle ;
\draw  [fill={rgb, 255:red, 255; green, 255; blue, 255 }  ,fill opacity=1 ] (228.97,142) .. controls (228.97,135.1) and (234.56,129.5) .. (241.47,129.5) .. controls (248.37,129.5) and (253.97,135.1) .. (253.97,142) .. controls (253.97,148.9) and (248.37,154.5) .. (241.47,154.5) .. controls (234.56,154.5) and (228.97,148.9) .. (228.97,142) -- cycle ;
\draw  [fill={rgb, 255:red, 255; green, 255; blue, 255 }  ,fill opacity=1 ] (253.97,142) .. controls (253.97,135.1) and (259.56,129.5) .. (266.47,129.5) .. controls (273.37,129.5) and (278.97,135.1) .. (278.97,142) .. controls (278.97,148.9) and (273.37,154.5) .. (266.47,154.5) .. controls (259.56,154.5) and (253.97,148.9) .. (253.97,142) -- cycle ;
\draw  [fill={rgb, 255:red, 255; green, 255; blue, 255 }  ,fill opacity=1 ] (278.97,142) .. controls (278.97,135.1) and (284.56,129.5) .. (291.47,129.5) .. controls (298.37,129.5) and (303.97,135.1) .. (303.97,142) .. controls (303.97,148.9) and (298.37,154.5) .. (291.47,154.5) .. controls (284.56,154.5) and (278.97,148.9) .. (278.97,142) -- cycle ;
\draw [line width=0.75]    (178.97,119) -- (178.97,165) ;
\draw  [fill={rgb, 255:red, 255; green, 255; blue, 255 }  ,fill opacity=1 ] (378.5,107) .. controls (378.5,100.1) and (384.1,94.5) .. (391,94.5) .. controls (397.9,94.5) and (403.5,100.1) .. (403.5,107) .. controls (403.5,113.9) and (397.9,119.5) .. (391,119.5) .. controls (384.1,119.5) and (378.5,113.9) .. (378.5,107) -- cycle ;
\draw  [fill={rgb, 255:red, 255; green, 255; blue, 255 }  ,fill opacity=1 ] (354.81,116.81) .. controls (354.81,109.91) and (360.41,104.31) .. (367.31,104.31) .. controls (374.22,104.31) and (379.81,109.91) .. (379.81,116.81) .. controls (379.81,123.72) and (374.22,129.31) .. (367.31,129.31) .. controls (360.41,129.31) and (354.81,123.72) .. (354.81,116.81) -- cycle ;
\draw  [fill={rgb, 255:red, 255; green, 255; blue, 255 }  ,fill opacity=1 ] (402.19,116.81) .. controls (402.19,109.91) and (407.78,104.31) .. (414.69,104.31) .. controls (421.59,104.31) and (427.19,109.91) .. (427.19,116.81) .. controls (427.19,123.72) and (421.59,129.31) .. (414.69,129.31) .. controls (407.78,129.31) and (402.19,123.72) .. (402.19,116.81) -- cycle ;
\draw  [fill={rgb, 255:red, 255; green, 255; blue, 255 }  ,fill opacity=1 ] (412,140.5) .. controls (412,133.6) and (417.6,128) .. (424.5,128) .. controls (431.4,128) and (437,133.6) .. (437,140.5) .. controls (437,147.4) and (431.4,153) .. (424.5,153) .. controls (417.6,153) and (412,147.4) .. (412,140.5) -- cycle ;
\draw  [fill={rgb, 255:red, 255; green, 255; blue, 255 }  ,fill opacity=1 ] (402.19,164.19) .. controls (402.19,157.28) and (407.78,151.69) .. (414.69,151.69) .. controls (421.59,151.69) and (427.19,157.28) .. (427.19,164.19) .. controls (427.19,171.09) and (421.59,176.69) .. (414.69,176.69) .. controls (407.78,176.69) and (402.19,171.09) .. (402.19,164.19) -- cycle ;
\draw  [fill={rgb, 255:red, 255; green, 255; blue, 255 }  ,fill opacity=1 ] (378.5,174) .. controls (378.5,167.1) and (384.1,161.5) .. (391,161.5) .. controls (397.9,161.5) and (403.5,167.1) .. (403.5,174) .. controls (403.5,180.9) and (397.9,186.5) .. (391,186.5) .. controls (384.1,186.5) and (378.5,180.9) .. (378.5,174) -- cycle ;
\draw  [fill={rgb, 255:red, 255; green, 255; blue, 255 }  ,fill opacity=1 ] (354.81,164.19) .. controls (354.81,157.28) and (360.41,151.69) .. (367.31,151.69) .. controls (374.22,151.69) and (379.81,157.28) .. (379.81,164.19) .. controls (379.81,171.09) and (374.22,176.69) .. (367.31,176.69) .. controls (360.41,176.69) and (354.81,171.09) .. (354.81,164.19) -- cycle ;
\draw  [fill={rgb, 255:red, 255; green, 255; blue, 255 }  ,fill opacity=1 ] (345,140.5) .. controls (345,133.6) and (350.6,128) .. (357.5,128) .. controls (364.4,128) and (370,133.6) .. (370,140.5) .. controls (370,147.4) and (364.4,153) .. (357.5,153) .. controls (350.6,153) and (345,147.4) .. (345,140.5) -- cycle ;
\draw [line width=0.75]    (369.2,191.6) -- (385.8,152.1) ;
\draw [line width=0.75]    (386,128.4) -- (370,90.4) ;

\draw (75,142.39) node [anchor=west] [inner sep=0.75pt]    {$\mathrm{Tr}\Bigl( \ \ \ \ \ \ \ \ \ \ \ \ \ \ \ \ \ \ \ \ \ \ \ \ \ \ \ \ \ \ \ \ \ \ \ \ \ \ \  \Bigr)$};
\draw (191.47,142) node    {$3$};
\draw (216.47,142) node    {$4$};
\draw (116.47,142) node    {$1$};
\draw (166.47,142) node    {$2$};
\draw (141.47,142) node    {$I$};
\draw (241.47,142) node    {$J$};
\draw (266.47,142) node    {$5$};
\draw (291.47,142) node    {$6$};
\draw (309.97,142) node [anchor=west] [inner sep=0.75pt]    {$\ \ =$};
\draw (357.5,140.5) node    {$I$};
\draw (367.31,164.19) node    {$1$};
\draw (367.31,116.81) node    {$2$};
\draw (391,107) node    {$3$};
\draw (414.69,116.81) node    {$4$};
\draw (424.5,140.5) node    {$J$};
\draw (414.69,164.19) node    {$5$};
\draw (391,174) node    {$6$};

\end{tikzpicture}
\end{center}
Here, the `cut' on the LHS is drawn between the two terms in \eqref{S_sdym_IJ} which correspond to $f^I$ and $f^J$. On the RHS, a second cut is drawn to where the beginning and end of the chain are linked together by the trace.

We must sum over the positions of the two cuts, which are each placed between the $I$ and $J$ nodes on opposite sides. If we, say, hold one cut fixed and sum over the position of the other cut, the result is that we can just multiply the whole diagram without said cut by $z_{IJ}$, using the graphical computation shown below.
\begin{center}
    \input{figures/gauge_dots_11}
\end{center}
There are, however, two cuts whose positions must be summed over. Summing over the position of the first cut, as we've shown above, gets us a factor of $z_{IJ}$, and summing over the second cut would similarly get us a factor of $-z_{IJ}$. Therefore, summing over the positions of both cuts gets us a factor of $-(z_{IJ})^2$. Plugging this back into \eqref{S_sdym_IJ}, this will reproduce the overall power of $\langle I J \rangle^4$ out front. The integration over $\int d^4 X \, e^{i (p_1 + \ldots + p_{N+2}) \cdot X}$ then becomes the energy momentum conserving delta function.

In total, the final amplitude is
\begin{equation}
    \mathcal{A}_{\rm MHV} = i \, (2 \pi)^4 \delta^{(4)} \left( \sum_{i = 1}^{N+2}p_i \right)  \langle I J \rangle^4 \sum_{\sigma \in \mathrm{Sym}(N+2) / \mathbb{Z}_{N+2}} \frac{\Tr (\bT^{a_{\sigma(1)}} \ldots \bT^{a_{\sigma(N+2)}}) }{\langle \sigma(1) \sigma(2) \rangle \ldots\langle \sigma(N+2) \sigma(1) \rangle }
\end{equation}
which is the Parke Taylor formula.

\bibliography{sd_bib.bib}
\bibliographystyle{jhep}

\end{document}